\newcommand{\fo}{\kw{FO}}
\newcommand{\types}{\kw{Types}}
\newcommand{\instance}{{\cal F}}
\newcommand{\gnfpup}{\kw{GNFP^{UP}}}
\newcommand{\fgtgd}{\kw{FGTGD}}
\newcommand{\afgtgd}{\kw{BaseFGTGD}}
\newcommand{\acfgtgd}{\kw{BaseCovFGTGD}}
\newcommand{\agnf}{\kw{BaseGNF}}
\newcommand{\acgnf}{\kw{BaseCovGNF}}
\newcommand{\mapsfrom}{:=}
\newcommand{\qr}{\kw{QR}}
\newcommand{\gnf}{\kw{GNF}}
\newcommand{\gf}{\kw{GF}}
\newcommand{\decode}{\kw{decode}}
\newcommand{\dom}{\elems}
\newcommand{\child}{\kw{Child}}
\newcommand{\incd}{\kw{ID}}
\newcommand{\aincd}{\kw{BaseID}}
\newcommand{\did}{\kw{DID}}
\newcommand{\owqa}{\kw{QA}}
\newcommand{\owqalin}{\owqa{\kw{lin}}}
\newcommand{\owqatr}{\owqa{\kw{tr}}}
\newcommand{\owqatrans}{\owqatr}
\newcommand{\owqatc}{\owqa{\kw{tc}}}
\newcommand{\ie}{i.e.,~}
\newcommand{\kw}[1]{{\mathsf{#1}}\xspace}
\newcommand{\kwr}[1]{{\mathrm{#1}}\xspace}
\newtheorem{theorem}{Theorem}[section]
\newtheorem{proposition}[theorem]{Proposition}
\newtheorem{corollary}[theorem]{Corollary}
\newtheorem{definition}[theorem]{Definition}
\theoremstyle{definition}
\newtheorem{example}[theorem]{Example} 
\newtheorem{lemma}[theorem]{Lemma}
\renewcommand{\phi}{\varphi}
\newcommand{\myparagraph}[1]{\paragraph{#1.}}
\newcommand{\mysubparagraph}[1]{\subparagraph{\normalfont \textit{#1.}}}
\newcommand{\myeat}[1]{}
\def\@Opargbegintheorem#1#2#3#4{#4\trivlist
      \item[\hskip\labelsep{#3#1}]{#3#2\@thmcounterend\ }}
\newcommand{\definerep}[2]{%
\spnewtheorem*{#1rp}{#2}{\bf}{\itshape}
\newenvironment{#1rep}[2]{%
  \ifthenelse{\equal{##1}{*}}
  {\begin{#1rp}[\ref{##2}]}
  {\begin{#1}\label{##2}}}
{\ifthenelse{\equal{\@currenvir}{#1}}{\end{#1}}{\end{#1rp}}}
}
\newcommand{\calA}{\mathcal{A}}
\newcommand{\calB}{\mathcal{B}}
\newcommand{\calD}{\mathcal{D}}
\newcommand{\calF}{\mathcal{F}}
\newcommand{\calG}{\mathcal{G}}
\newcommand{\calI}{\mathcal{I}}
\newcommand{\calN}{\mathcal{N}}
\newcommand{\calS}{\mathcal{S}}
\newcommand{\NN}{\mathbb{N}}
\newcommand{\sigmab}{\sigma_{\calB}}
\newcommand{\sigmad}{\sigma_{\calD}}
\newcommand{\sigmaf}{\sigma_{\calF}}
\newcommand{\sigmas}{\sigma_{\calS}}
\newcommand{\sigmaor}{\sigma_{\Or}}
\newcommand{\sigmapb}{\sigma'_{\calB}}
\newcommand{\sigmapd}{\sigma'_{\calD}}
\newcommand{\arity}[1]{\kwr{arity}(#1)}
\newcommand*{\defeq}{\mathrel{\rlap{%
  \raisebox{0.3ex}{$\m@th\cdot$}}%
  \raisebox{-0.3ex}{$\m@th\cdot$}}%
  =}
\newcommand{\guardedb}{\kw{guarded}_{\sigmab}}
\newcommand{\guardedbg}{\kw{guarded}_{\sigmab \cup \set{G}}}
\newcommand{\drel}{S}
\newcommand{\acgnfplus}{\acgnf^{+}}
\newcommand{\acgnfminus}{\acgnf^{-}}
\newcommand{\tree}{T}
\newcommand{\mysize}[1]{|#1|}
\newcommand{\witness}{\kw{Witness}}
\newcommand{\twoexp}{\kw{2EXPTIME}}
\newcommand{\twoexptime}{\twoexp}
\newcommand{\exptime}{\kw{EXPTIME}}
\newcommand{\conp}{\kw{CoNP}}
\newcommand{\np}{\kw{NP}}
\newcommand{\ptime}{\kw{PTIME}}
\newcommand{\trans}{+}
\newcommand{\true}{\mathfrak{t}}
\newcommand{\false}{\mathfrak{f}}
\newcommand{\gn}{\text{GN}}
\newcommand{\fA}{{\cal F}}
\newcommand{\fB}{{\cal G}}
\newcommand{\set}[1]{\{ #1 \}}
\newcommand{\elems}[1]{\kwr{elems}(#1)}
\newcommand{\structureunravelk}[1]{#1^k}
\newcommand{\structureunravelkint}[1]{#1^k_{\cal B}}
\newcommand{\Or}{\mathrm{Or}}
\newcommand{\card}[1]{\left|#1\right|}
\newcommand{\perm}[1]{\langle #1 \rangle}
\newcommand{\cA}{\calA}
\newcommand{\cB}{\calB}
\newcommand{\powerset}[1]{\mathcal{P}(#1)}
\newcommand{\Dir}{\text{Dir}}
\newcommand{\dleft}{\mathsf{left}}
\newcommand{\dright}{\mathsf{right}}
\newcommand{\dup}{\mathsf{up}}
\newcommand{\dstay}{\mathsf{stay}}
\newcommand{\N}{\mathbb{N}}
\newcommand{\sset}{\set}
\newcommand{\paramsk}{U}
\newcommand{\sigcode}[2]{\Sigma^{\text{code}}_{#1,#2}}
\newcommand{\bagnames}[1]{\text{names}(#1)}
\newcommand{\mydecode}[1]{\decode(#1)}
\newcommand{\autsig}{\Gamma}
\newcommand{\fakethm}[3]{
\medskip
\noindent {\bf #1~\ref{#2}.} \emph{#3}
}
\newcommand{\True}{\mathrm{True}}
\newcommand{\ptrue}{{+\true}}
\newcommand{\ptwowayaltinf}{\kw{2APT}}
\newcommand{\ponewayndinf}{\kw{1NPT}}
\renewcommand{\sqcup}{\cup}
\newcommand{\rwto}{\leadsto}
\newcommand{\quoteresult}[2]{
\begin{quote}
{\bf #1.}
#2
\end{quote}
}
\newcommand{\findstart}{\textit{start}}
\newcommand{\findend}{\textit{end}}
\newcommand{\choosenext}{\textit{next}}
\newcommand{\accsink}{\textit{win}}
\newcommand{\rejsink}{\textit{lose}}
\newcommand{\treelab}{\beta}
\title{Query Answering with Transitive and Linear-Ordered Data}
\author{\name Antoine Amarilli
\email antoine.amarilli@telecom-paristech.fr \\
\addr LTCI, T{\'e}l{\'e}com ParisTech, Universit{\'e} Paris-Saclay
\AND
\name Michael Benedikt
\email michael.benedikt@cs.ox.ac.uk \\
\addr University of Oxford
\AND
\name Pierre Bourhis
\email pierre.bourhis@univ-lille1.fr \\
\addr CNRS CRIStAL, Universit\'e Lille 1, INRIA Lille
\AND
\name Michael Vanden Boom
\email michael.vandenboom@cs.ox.ac.uk \\[-.18em]
\addr University of Oxford
}
\begin{document}

\maketitle

\begin{abstract}
We consider entailment problems involving powerful constraint
languages such as \emph{frontier-guarded existential rules}
in which we impose additional semantic restrictions
on a set of distinguished relations. We consider 
restricting a relation to be transitive,  restricting a relation  to be the transitive closure of another relation, 
and restricting a relation to be a linear order. We give some natural variants of guardedness
that allow inference to be decidable in each case, and isolate the complexity of the corresponding decision problems. Finally
we show that slight changes in these conditions lead to undecidability.

This version of the paper includes one modification from the version originally
published at JAIR in 2018: we fix an incorrect proof for one of our undecidability results.

\end{abstract}

\section{Introduction} \label{sec:intro}

The \emph{query answering problem} (or certain
answer problem), abbreviated here as $\owqa$, 
is a fundamental reasoning problem
in both knowledge representation and databases.
It asks
whether a query (e.g.,~given by an existentially-quantified conjunction of atoms) is entailed
by a set of constraints and a set of facts. That is, we generalize the standard
querying problem in databases to take into account not only the explicit
information (the facts) but additional ``implicit information''  given by
constraints.
A common class of constraints used
for $\owqa$ are \emph{existential
rules}, also known as \emph{tuple generating dependencies}
(TGDs). 

Although query answering is known to be undecidable  for general TGDs,
there are a number of subclasses 
that admit decidable $\owqa$, such as those
based on \emph{guardedness}.
For instance, \emph{guarded} TGDs
 require all variables in the body
 of the dependency to appear in a single body atom (the \emph{guard}).
 \emph{Frontier-guarded} TGDs ($\fgtgd$s)
relax this condition
and require only that some
guard atom 
contains the variables that occur in both head and body \cite{bagetcomplexityfg}.
These classes include standard
SQL referential constraints as well as important constraint
classes (e.g., role inclusions) arising in knowledge representation.
Guarded existential rules
can be generalized to
\emph{guarded logics} that allow disjunction and negation and still
enjoy decidable $\owqa$, e.g., the
guarded fragment of first-order logic ($\gf$)
of \citeA{andreka1998modal} which captures guarded TGDs,
and the guarded negation fragment
($\gnf$) of \citeA{gnficalp} which captures $\fgtgd$s.

A key challenge is to extend these decidability results to capture
additional semantics of the relations that are important in practice but cannot
be expressed in these classes. For example, the property
that a binary relation is \emph{transitive} or is the \emph{transitive closure} of another relation
cannot be expressed
directly in guarded logics.  Yet, transitive
relations, such as the ``part-of'' relationship among components,
are common in data modelling and have received significant attention (see, e.g., \citeR{oldhorrocks,mugnier15}).
For example,
using standard reasoning
$Q = \exists xy ( \kw{broken}(x) \wedge \kw{part\text{-}of}(x,y) \wedge
\kw{car}(y) )$,
is not entailed by the set of facts $\set{\kw{car}(c), \kw{motor}(m), \kw{sparkplug}(p), \kw{part\text{-}of}(p,m), \kw{part\text{-}of}(m,c), \kw{broken}(p) }$.
But enforcing that $\kw{part\text{-}of}$ is transitive---a very natural condition
when modelling this situation---makes a difference, and means that $Q$ is entailed.
Hence, we would like to be able to
capture these special semantics when reasoning.

A semantic restriction related to transitivity is the fact
that a binary
relation is a strict \emph{linear order}:
a transitive relation
which is also irreflexive and total.
For example, when the data stored in a relation is numerical, it may happen
that the data satisfies integrity constraints involving the standard linear order $<$ on integers;
e.g., for every
tuple in  a ternary relation the  value in the first
position of a binary relation  is  less than the  value in the third.
Again, the ability to reason about the additional semantics of $<$
may be crucial in inference, but it is not possible to express
this in guarded logics.
Query answering with additional semantic relations, e.g., linear orders or arithmetic, 
has been studied in the description logic
and semantic web communities \cite{datatypes1,datatypes2,datatypes3}. However 
query answering with linear orders 
has  not received much attention for arbitrary arity relations.

In this work we look at conditions that make query
answering decidable.
We study the three semantic restrictions above: transitivity, transitive
closure, and linear ordering. We will show that there are common techniques
that can be used to analyze all three cases, but also significant differences.

\subsection{State of the Art}
There has been extensive work on decidability results for guarded logics
extended with such semantic restrictions.

We first review known results for 
the
\emph{satisfiability problem},
which asks whether some logical constraints are
satisfiable.
\citeA{undecidgf2} showed that satisfiability is not
decidable for $\gf$ when two relations
are restricted to be transitive, even on arity-two signatures
(i.e., with only unary and binary relations).
For linear orders, \citeA{kieronski2011decidability} showed that $\gf$ is
undecidable when three relations are restricted to be non-strict linear
orders, even with only two variables (so on arity-two signatures). 
\citeA{ottofo2order} showed that satisfiability is decidable for
two-variable first-order logic with
one relation restricted to be a linear order.
For transitive relations,
one way to regain decidability for $\gf$ satisfiability
is the \emph{transitive guards} condition introduced by \citeA{gftgdecid}:
allow transitive
relations \emph{only} in guards.

We now turn to the \emph{$\owqa$ problem}, where we also consider a query and an
initial set of facts.
\citeA{andreaslidia} showed that query answering for
$\gf$ with transitive relations only in guards
is undecidable,
even on arity-two signatures.
\shortciteA{mugnier15} studied
$\owqa$ 
with respect to a collection
of linear TGDs (those with
only a single atom in the body).
They showed that the query
answering problem is decidable with such TGDs and transitive relations, if the
signature is arity-two
or if other additional restrictions are obeyed.

The case of TGDs mentioning relations with a restricted interpretation 
has been studied in the database community mainly in the setting of acyclic schemas, such as those
that map source data to target data. Transitivity restrictions have not been studied,
but there has been work on inequalities \cite{abdus} and TGDs with arithmetic \cite{afratiarith}.
Due to the acyclicity assumptions, $\owqa$ is decidable in these settings, and has data complexity in $\conp$.
The fact that the data complexity can be $\conp$-hard
was shown by \citeA{abdus}, while polynomial cases
were isolated by \citeA{abdus} with inequalities,
and by \citeA{afratiarith}
with arithmetic.

Query answering 
that features transitivity restrictions
has also been studied for constraints expressed in description
logics, i.e., in an arity-two setting where
the signature contains unary relations (concepts) and
binary relations (roles). 
$\owqa$ is then decidable for many description logics
featuring 
transitivity, such as $\mathcal{SHIQ}$ \cite{glimmetal}, $\mathcal{ZIQ}$,
$\mathcal{ZOQ}$, $\mathcal{ZOI}$ \cite{calvanese2009regular}, 
Horn-$\mathcal{SROIQ}$ \cite{ortiz2011query}, OWL2 EL with the regularity restriction
\cite{stefanoni},
or
regular-$\mathcal{EL}^{++}$ \cite{kroetzsch2007conjunctive}.
All of these logics are incomparable to the ones we consider.
For example, the language considered by \cite{stefanoni} includes
powerful features beyond transitive closure operators, such as
role composition. On the other hand, it allows only arity $2$ relations, and
further restricts the use of inverse roles, which has a significant impact on complexity.
For even more expressive description  logics with transitivity,
such as 
$\mathcal{ALCOIF^*}$ \cite{ortiz2010query2}
and $\mathcal{ZOIQ}$ \cite{ortiz2010query}, $\owqa$ becomes undecidable. Decidability of $\owqa$ is open for
$\mathcal{SROIQ}$ and $\mathcal{SHOIQ}$
\cite{ortiz2012reasoning}.

\subsection{Contributions}
The main contribution
of this work is to introduce a broad class of constraints over arbitrary-arity
vocabularies where query answering is decidable even when we assert that some distinguished relations
follow one of three semantics:
being transitive,
being the transitive closure of another relation, or being a linear order.

\begin{itemize}
\item We provide new results on $\owqa$ with transitivity and transitive closure
  assertions.
We show that
  query answering is decidable
    with guarded and frontier-guarded constraints,
as long as these distinguished relations are \emph{not} used as guards.
We call this new kind of restriction \emph{base-guardedness}, and similarly
    extend frontier-guarded to \emph{base-frontier-guardedness}, and so forth.
    The base-guarded restriction is orthogonal to
the prior decidable cases such as transitive guards \cite{gftgdecid} for
    satisfiability, or linear rules
\cite{mugnier15}.  

On the one hand, we show that 
our restrictions make query answering decidable even with very expressive and flexible decidable logics, capable
of expressing  not only guarded existential rules, but also guarded rules with negation and disjunction in the head.
These logics can express integrity constraints, as well as conjunctive queries and their negations.
On the other hand, as a by-product of our results we obtain  new query answering schemes for some previously-studied
classes of guarded existential rules with extra semantic restrictions.
For example, our base-frontier-guarded constraints encompass
all \emph{frontier-one TGDs}
\cite{baget2009extending},
where at most one variable is shared between the body and head.
Hence, our results imply that $\owqa$ 
    with transitivity assertions  (or even transitive closure assertions)
    is decidable with
frontier-one TGDs, which answers a question of
\citeA{mugnier15}.

Our results are shown by arguing that it is enough to consider entailment over  ``tree-like'' sets of facts. By representing
the set of witness representations
as a  tree automaton,
we derive upper bounds for the combined complexity of the problem. 
The sufficiency of tree-like examples also enables a refined analysis of \emph{data complexity} (when the query and
constraints are fixed).
Further, we use a set of coding techniques 
to  show  matching lower bounds
within our fragment.
We also show that loosening our conditions leads to undecidability.

\item We provide both upper and lower bounds on
 the $\owqa$ problem
when the distinguished relations are \emph{linear orders}.

We show that it is undecidable 
    even assuming base-frontier-guardedness, so we introduce a
stronger condition called \emph{base-coveredness}:
not only are distinguished relations never used as guards, they are always \emph{covered}
by a non-distinguished atom.
Under these conditions, our decidable technique for $\owqa$ works by ``compiling away'' linear order restrictions, obtaining an entailment problem
without any special restrictions.
The correctness proof for our reduction to classical $\owqa$ again relies
on the ability to  restrict reasoning to  sets of facts with tree-like representations.
To our knowledge,
these are the first
decidability results for the $\owqa$ problem with linear orders, and
again we provide tight complexity bounds for the problem.
\end{itemize}
Both classes of results apply  to the motivating scenarios for distinguished
relations mentioned earlier.
Our results on transitivity
show that $\owqa$ with distinguished relations that are transitive or are the transitive closure of a base relation is decidable for $\agnf$,
the restriction of $\gnf$ that follows our base-guardedness requirement. 
In particular  this means that  in query answering with rules 
\begin{inparaenum} 
\item in the \emph{head} of a rule, or in the query, one can freely restrict some relations
to be transitive, or to be the transitive closure of some other relation;
\item in the \emph{body} of a rule, one can restrict some relations to be transitive
or to be the transitive closure, provided that there is a frontier-guard  available
that is not restricted.
\end{inparaenum}
In particular, we can use a transitive relation
such as ``part-of'' (or even its transitive closure) whenever
only one variable is to be exported to the head: that is in ``frontier-$1$'' rules. This latter condition
holds in the translations of many classical description logics.
For example,
$\forall x y ( \kw{part\text{-}of}(x,y) \wedge \kw{broken}(x) \rightarrow \kw{broken}(y) )$
and $\forall p (\kw{sparkplug}(p) \rightarrow \exists m (\kw{motor}(m) \wedge \kw{part\text{-}of}(p,m) ) )$
can be rewritten in $\agnf$.

Our results on  $\owqa$
with linear orders
show that the problem
is decidable for $\acgnf$,
the base-covered version of $\gnf$.
This allows constraints  that arise from data integration and
data exchange over attributes with linear orders---e.g., views
defined by selecting rows of a table where some order constraint involving
the attributes is satisfied.

\subsection{Organization}
In Section~\ref{sec:prelims}, we formally define
the query answering problems that we study,
and the constraint languages that we use.
We present our main decidability results on
query answering with transitive 
data in Section~\ref{sec:decid},
and with linear-ordered data in Section~\ref{sec:decidlin};
we analyze both the combined complexity and data complexity
of these decidable cases.
We prove lower bounds for these problems in Section~\ref{sec:hardness},
and show that slight changes to the conditions lead
to undecidability in Section~\ref{sec:undecid}. Section \ref{sec:undecid}
also compares the undecidability results with prior results in the literature.

Our main results are summarized in Figure~\ref{tab:complexity}, and
the languages that we study are illustrated in Figure~\ref{fig:tax}
(please see Section~\ref{sec:prelims} for the definitions).
Some technical material that is not essential for understanding
our main results can be found in the appendices.

\section{Preliminaries} \label{sec:prelims}

We work on a \emph{relational signature}  $\sigma$, where each relation $R \in\nolinebreak
\sigma$ has an associated \emph{arity}
written $\arity{R}$; we write $\arity{\sigma} \defeq \max_{R \in \sigma}
\arity{R}$.
A \emph{fact} $R(\vec{a})$, or \emph{$R$-fact}, consists of a relation $R\in \sigma$ and
elements $\vec{a}$, with $\card{\vec{a}} = \arity{R}$. $\vec{a}$ is the \emph{domain}
of the fact. Queries and constraints
will be evaluated over a (finite or infinite) set of facts over $\sigma$.
We will often use $\instance$ to denote a 
 set of facts.
We write $\elems{\instance}$ for the set of elements
that appear as arguments in the facts in $\instance$. We also refer to this
as the \emph{domain of $\instance$}.

We consider \emph{constraints} and \emph{queries} given in fragments of first-order
logic with equality ($\fo$) without constants.
Given
a set of facts $\instance$ 
and 
a sentence $\phi$ in $\fo$, we talk of $\instance$ \emph{satisfying} $\phi$ in the
usual way.
The \emph{size} of $\varphi$, written $\mysize{\varphi}$, is defined
to be the number of symbols in $\varphi$. 

A \emph{conjunctive query} (CQ) 
is a 
first-order formula of the form $\exists \vec{x} \, \varphi$ for $\varphi$ a conjunction of atomic formulas
using equality or a relation from $\sigma$.
Likewise, a \emph{union of
conjunctive queries} (UCQs) is a disjunction of CQs.
We will only use queries that are \emph{Boolean} CQs or UCQs (\ie CQs or UCQs with no free variables).

\begin{figure}
\centering
  \begin{tabularx}{\linewidth}{X@{\quad}l@{~~~~}l@{\quad\quad}l@{~~~~}l@{\quad\quad}l@{~~~~}l}
\toprule
\multirow{ 2}{*}{\begin{tabular}[t]{@{}l@{}}{\bf Fragment}\\\null\end{tabular}} & \multicolumn{2}{c}{\bf $\!\!\!\!\!\!\!\!\!\!$$\owqatr$} &
\multicolumn{2}{c}{\bf $\!\!\!\!\!\!\!\!\!\!$$\owqatc$} & \multicolumn{2}{c}{\bf
$\!\!\!\!\!\!\!\!\!\!$$\owqalin$}\\
& \bf data & \bf combined
& \bf data & \bf combined
& \bf data & \bf combined \\
\midrule
{$\agnf$}
& coNP-c & 2EXP-c
& coNP-c & 2EXP-c
& \multicolumn{2}{c}{{$\!\!\!\!\!\!\!$undecidable}} \\
{$\acgnf$}
& coNP-c & 2EXP-c
& coNP-c & 2EXP-c
& coNP-c & 2EXP-c \\
{$\afgtgd$}
& in coNP & 2EXP-c
& coNP-c & 2EXP-c
& \multicolumn{2}{c}{{$\!\!\!\!\!\!\!$undecidable}} \\
{$\acfgtgd$}
& P-c & 2EXP-c
& coNP-c & 2EXP-c
& coNP-c & 2EXP-c \\
\bottomrule
\end{tabularx}
\caption{Summary of $\owqa$ results. On the rows that concern
base-covered fragments, the queries are also assumed to be base-covered.
For complexity class $X$, we write ``$X$-c'' for ``$X$-complete''.
Please refer to Sections~\ref{sec:decid}~and~\ref{sec:decidlin} for upper bounds,
Section~\ref{sec:hardness} for lower bounds,
and Section~\ref{sec:undecid} for undecidability results.}
\label{tab:complexity}
\end{figure}

\subsection{Problems Considered}\label{subsec:problemsconsidered}
Given a 
\emph{finite} set of facts $\instance_0$, constraints $\Sigma$ and query $Q$ (given as $\fo$
sentences), we say that 
$\instance_0$ and $\Sigma$ \emph{entail} $Q$  if for  every 
$\instance \supseteq \instance_0$
satisfying $\Sigma$ (including infinite~$\instance$),
we have that $\instance$ satisfies~$Q$.
This amounts to asking whether $\calF_0 \wedge \Sigma \wedge \neg Q$
is unsatisfiable, over all finite and infinite sets of facts.
We write $\owqa(\instance_0, \Sigma, Q)$ for this decision problem, called
the \emph{query answering} problem.

In this paper, we study the $\owqa$ problem when imposing semantic constraints
on some \emph{distinguished} relations.
We thus work with signatures of the form $\sigma \defeq \sigmab \sqcup \sigmad$, where $\sigmab$ is the \emph{base signature}
(its relations are the \emph{base} relations), and
$\sigmad$ is the \emph{distinguished} signature (with \emph{distinguished}
relations),
and $\sigmab$ and $\sigmad$ are disjoint.
All distinguished relations are
required to be binary,
and they will be
assigned special semantics.

We study three kinds of special semantics:

\begin{itemize}
\item We say \emph{$\instance_0, \Sigma$ entails $Q$ over transitive relations}, and
write $\owqatrans(\instance_0, \Sigma, Q)$ for the corresponding problem,
if there is no set of facts $\instance$ where $\instance_0 \wedge \Sigma \wedge
    \neg Q$ holds and 
each relation $R_i^\trans \in \sigmad$ is
\emph{transitive}.\footnote{Note that we work with \emph{transitive}
relations, which may not be \emph{reflexive}, unlike, e.g., $R^*$  roles in
$\mathcal{ZOIQ}$ description logics \cite{calvanese2009regular}. This being
    said, all our
    results extend with the same complexity to relations that are both reflexive
    and transitive.
Please refer to Section~\ref{sec:reflexive-transitive} for more information.}

\item We say \emph{$\instance_0, \Sigma$ entails $Q$ over transitive closure}, and
write
$\owqatc(\instance_0, \Sigma, Q)$ for this problem,
if there is no set of facts $\instance$ where
    $\instance_0 \wedge \Sigma \wedge \neg Q$ holds and 
for each relation $R_i \in \sigmab$, the relation $R^\trans_i \in \sigmad$
is interpreted as the transitive closure of $R_i$.

\item We say \emph{$\instance_0, \Sigma$ entails $Q$ over linear orders},
and write $\owqalin(\instance_0, \Sigma, Q)$ for this problem,
if there is no set of facts $\instance$ where $\instance_0 \wedge \Sigma \wedge
    \neg Q$ holds and
each relation ${<_i} \in \sigmad$ is
a strict linear order on the elements
of~$\instance$.
\end{itemize}

\begin{example}
  Referring back to the example used in the introduction, we consider the query 
  $Q = \exists xy ( \kw{broken}(x) \wedge \kw{part\text{-}of}(x,y) \wedge
\kw{car}(y) )$,
and the set of facts $\instance_0 = \set{\kw{car}(c), \kw{motor}(m), \kw{sparkplug}(p), 
  \allowbreak \kw{part\text{-}of}(p,m), \kw{part\text{-}of}(m,c), \kw{broken}(p) }$.
  The query $Q$ is not entailed by~$\instance_0$ in general, but it is entailed
  in both $\owqatr$, $\owqatc$, and $\owqalin$ when the distinguished relation
  $\kw{part\text{-}of}$ is asserted to be transitive.

  The difference in semantics between $\owqatr$ and $\owqatc$ can be exemplified
  with the rule
  \[\kw{same\text{-}supplier}(x, y), \kw{part\text{-}of}^+(x,y),
  \neg \kw{part\text{-}of}(x,y) \rightarrow \kw{indirect\text{-}pair}(x, y),\]
  which applies to all pairs of objects $x, y$ produced by the same supplier
  such that $x$ is \emph{indirectly} a part of~$y$. This rule can be expressed in
  the $\owqatc$ setting, whereas in $\owqatr$ we cannot distinguish between 
  $\kw{part\text{-}of}$ and its transitive closure
  $\kw{part\text{-}of}^+$.

  As for the semantics of $\owqalin$, it differs from both $\owqatr$ and
  $\owqatc$. Consider the set of facts $\instance_0' = \set{
    \kw{motor}(m), \kw{sparkplug}(p),
    \kw{prod\text{-}date}(m,d),
  \kw{prod\text{-}date}(p,d') }$ and the UCQ~$Q'$ with the following disjuncts:
\begin{align*}
  \exists xyzz' &
  (
  \kw{motor}(x)\wedge \kw{sparkplug}(y)\wedge
  \kw{prod\text{-}date}(x,z) \wedge 
  \kw{prod\text{-}date}(y,z') \wedge z < z'
  )\\
  \exists xyzz' &
  (
  \kw{motor}(x)\wedge \kw{sparkplug}(y)\wedge
  \kw{prod\text{-}date}(x,z) \wedge 
  \kw{prod\text{-}date}(y,z') \wedge z' < z
  )\\
  \exists xyz &
  (
  \kw{motor}(x)\wedge \kw{sparkplug}(y)\wedge
  \kw{prod\text{-}date}(x,z) \wedge 
  \kw{prod\text{-}date}(y,z)
  ).
\end{align*}
  The query $Q$ is entailed in $\owqalin$ over~$\instance_0'$ where~$<$ is asserted to be a total
  order, but it is not entailed in either $\owqatr$ or $\owqatc$.
\end{example}

We now define the constraint languages 
for which we study these $\owqa$ problems.
We will also give some examples of sentences in these languages in Example~\ref{ex:syntax}.

\subsection{Dependencies}
\label{sec:dependencies}

The first constraint languages that we study are
restricted classes of
\emph{tuple-generating dependencies} (TGDs).
A TGD is an $\fo$ sentence $\tau$ of the form
$\forall \vec x~ \big(\bigwedge_i \gamma_i(\vec x) \rightarrow \exists \vec y ~ \bigwedge_i
\rho_i(\vec x, \vec y)\big)$
where $\bigwedge_i \gamma_i$ and $\bigwedge_i \rho_i$ are non-empty conjunctions
of atoms,
respectively called the \emph{body} and \emph{head} of~$\tau$.
 
We will be interested in TGDs that are \emph{guarded}
in various ways.
A \emph{guard} 
for a tuple $\vec{x}$ of variables, or for an atom $A(\vec x)$,
is an atom from~$\sigma$ or an equality
using (at least) every variable in $\vec{x}$. That is, an atom where every variable of $\vec{x}$ appears as an argument.
For example, $R(z,y)$, $C(y,w,z)$, and $y=z$
are all guards for $\vec{x} = (y, z)$.
In this work, we will be particularly interested in \emph{base-guards} (sometimes denoted $\sigmab$-guards),
which are guards coming from the base relations in~$\sigmab$ or equality.

 A \emph{frontier-guarded TGD} or $\fgtgd$
 \cite{bagetcomplexityfg}
 is a TGD $\tau$ whose body contains a guard
for the \emph{frontier variables}, i.e., the variables that occur in both head and body.
We introduce the \emph{base frontier-guarded TGDs} ($\afgtgd$s) as the TGDs with
a \emph{base frontier guard}, i.e., 
an equality or \mbox{$\sigmab$-atom}
including all  the frontier variables.
We allow equality atoms $x=x$ to be guards, so $\afgtgd$ subsumes 
\emph{frontier-one TGDs}
 \cite{bagetcomplexityfg}, which have
one frontier variable.
We also introduce the more restrictive class of 
\emph{base-covered frontier-guarded TGDs} ($\acfgtgd$):
they are the $\afgtgd$s
where, for every $\sigmad$-atom $A$ in the body,
there is a base-guard $A'$ of~$A$
in the body; note that this time $A'$ may be different for each~$A$.

\emph{Inclusion dependencies} ($\incd$) are
an important special case of frontier-guarded TGDs used in many applications.
An $\incd$ is a $\fgtgd$ 
of the form $\forall \vec x ~ R(\vec x) \rightarrow \exists \vec y ~ S(\vec x,
  \vec y)$, i.e., where the body and head
contain a single atom, and where we further impose that no variable occurs twice in the same atom. A
\emph{base inclusion dependency} ($\aincd$) is an $\incd$ where
the body atom is in~$\sigmab$,
so the body atom serves as  the base-guard for the frontier
variables, and the constraint is trivially base-covered.

\subsection{Guarded Logics}
\label{sec:guardedlogics}
Moving beyond TGDs, we also study constraints coming from
\emph{guarded logics}. In particular, the \emph{guarded negation fragment} ($\gnf$) over a signature $\sigma$ is the fragment of $\fo$
given by the grammar
\[
\varphi ::= 
A(\vec{x})
~|~
x = y
~|~
\varphi \vee \varphi
~|~
\varphi \wedge \varphi
~|~
\exists \vec{x}  \ \varphi
~|~
\alpha \wedge \neg \varphi
\]
where
$A$ ranges over relations in $\sigma$
and
$\alpha$ is a guard for the free variables in $\neg \phi$.

Note that since a guard $\alpha$ in $\alpha \wedge \neg \varphi$ is allowed
to be an equality, $\gnf$  can express 
all formulas of the form $\neg \phi$
when $\phi$ has at most one free variable. In particular, if a sentence
$\phi$ is in $\gnf$, then $\neg \phi$ is expressible in $\gnf$, as
$\exists x ~ x=x \wedge \neg \phi$.

The use of these ``equality-guards'' is convenient in the proofs. But
in the presentation of examples within the paper, we do not wish to write out
these ``dummy guards'', and thus as a convention we allow in examples
unguarded subformulas $\neg \phi$ where $\phi$ has at most one free variable.

$\gnf$ can express all $\fgtgd$s since
an $\fgtgd$ of the form
$\forall \vec{x} (\bigwedge \gamma_i \rightarrow \exists \vec{y} \bigwedge \rho_i)$ can be written in $\agnf$
as
$\neg \exists \vec{x} ( \bigwedge \gamma_i  \wedge \alpha \wedge \neg \exists \vec{y} \bigwedge \rho_i )$
where $\alpha$ is the guard for the frontier variables in $\bigwedge \gamma_i$.
It can also express non-TGD constraints and UCQs.
For instance, as it allows disjunction, $\gnf$ can express
\emph{disjunctive
inclusion dependencies}, $\did$s
\cite{bourhispieris}, which  generalize $\incd$s: a $\did$ is a first-order sentence of the
  form $\forall \vec x ~ R(\vec x) \rightarrow \bigvee_{1 \leq i \leq n} \exists
  \vec{y_i} ~ S_i(\vec x,
  \vec{y_i})$ such that, for every~$1 \leq i \leq n$, the sentence $\forall \vec x ~ R(\vec x)
  \rightarrow \exists \vec{y_i} ~ S_i(\vec x, \vec{y_i})$ is an $\incd$. In particular, any
  $\incd$ is a $\did$, as is seen by taking $n=1$ in the disjunction.

In this work, we introduce the \emph{base-guarded negation fragment} $\agnf$ over $\sigma$: it
is defined
like $\gnf$, but requires \emph{base-guards} instead of guards.
The \emph{base-covered guarded negation fragment} $\acgnf$
over $\sigma$
consists of $\agnf$ formulas such that
every $\sigmad$-atom $A$ that appears negatively (i.e., under the scope of an odd
number of negations) appears in conjunction with a
base-guard for its variables.
This condition is designed to generalize $\acfgtgd$s; indeed, any
$\acfgtgd$ can be expressed in $\acgnf$.

We call a CQ $Q$ \emph{base-covered} if, for each $\sigmad$-atom $A$ in $Q$,
there is a base-guard $A'$ of~$A$ in~$Q$.
A UCQ is \emph{base-covered} if each disjunct is.
Note that every base-covered UCQ can easily be rewritten in $\acgnf$.

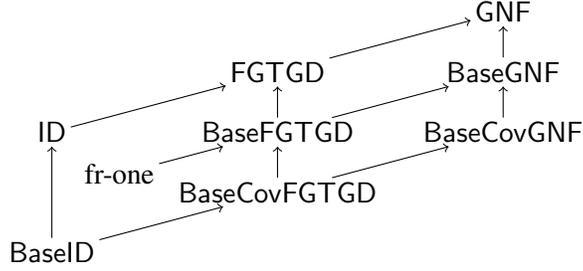
\begin{figure}[t]
\centering
\begin{tikzpicture}[xscale=3,yscale=.8,inner sep=1.5pt]
\node (GNF) at (4, 0) {$\gnf$};
\node (AGNF) at (4, -1) {$\agnf$};
\node (ACGNF) at (4, -2) {$\acgnf$};
\node (FGTGD) at (3, -1) {$\fgtgd$};
\node (AFGTGD) at (3, -2) {$\afgtgd$};
\node (ACFGTGD) at (3, -3) {$\acfgtgd$};
\node (ID) at (2, -2) {$\incd$};
\node (AID) at (2, -4) {$\aincd$};
\node (FR1) at (2.3, -2.7) {fr-one};

\draw[->] (ACGNF) -- (AGNF);
\draw[->] (AGNF) -- (GNF);
\draw[->] (ACFGTGD) -- (AFGTGD);
\draw[->] (AFGTGD) -- (FGTGD);
\draw[->] (FGTGD) -- (GNF);
\draw[->] (AFGTGD) -- (AGNF);
\draw[->] (ACFGTGD) -- (ACGNF);
\draw[->] (ID) -- (FGTGD);
\draw[->] (AID) -- (ACFGTGD);
\draw[->] (AID) -- (ID);
\draw[->] (FR1) -- (AFGTGD);
\end{tikzpicture}
\caption{Taxonomy of fragments}
\label{fig:tax}
\end{figure}

We illustrate the different constraint languages and queries by giving a few examples.

\begin{example}\label{ex:syntax}
Consider a signature with a binary base relation $B$,
a ternary base relation $C$, and a distinguished relation $D$.
\begin{itemize}
\item $\forall x y z \big( (B(x,y) \wedge B(y,z)) \rightarrow D(x,z) \big)$ is a TGD,
but is not a $\fgtgd$ since the frontier variables $x,z$ are not guarded.
It cannot even be expressed in $\gnf$.
\item $\forall x y  \big(D(x,y) \rightarrow B(x,y) \big)$
is an $\incd$, hence a $\fgtgd$.  It is not a $\aincd$ or even in $\agnf$, since the frontier
variables are not base-guarded.
\item $\forall x y z \big((B(z,x) \wedge D(x,y) \wedge D(y,z)) \rightarrow D(x,z) \big)$
is a $\afgtgd$.
However, it is not a $\acfgtgd$
since there are no base atoms in the body to cover $x,y$ and $y,z$.
\item $\exists w x y z \big( D(w,x) \wedge D(x,y) \wedge D(y,z) \wedge D(z,w) \wedge C(w,x,y) \wedge C(y,z,w) \big)$
is a base-covered CQ.
\item $\exists x y
\big(
B(x,y) \wedge \neg (D(x,y) \wedge D(y,x) ) \wedge  (D(x,y) \vee D(y,x) )
\big)$ cannot be rewritten as a TGD. But it can be rewritten in $\acgnf$
as
\begin{align*}
\exists x y
\big( \quad
&\big[ (B(x,y) \wedge \neg D(x,y))
\vee
( B(x,y) \wedge \neg D(y,x) ) \big] \\
\ \wedge \
&\big[ (B(x,y) \wedge  D(x,y) )
\vee
( B(x,y) \wedge D(y,x) ) \big]
\qquad \big) .
\end{align*}
\end{itemize}
\end{example}
\subsection{Normal Form}\label{app:nf}
\newcommand{\nf}{\text{normal form}\xspace}

The fragments of $\gnf$
that we consider can be converted into a normal form
that is related to the $\gn$ normal form
introduced in the original paper on $\gnf$ \cite{gnficalp}.
The idea is that $\gnf$ formulas can be seen as being built
up by nesting UCQs using guarded negation.
We introduce this normal form here,
and discuss related notions
that we will use in the proofs.

The \emph{$\nf$ for $\agnf$} over $\sigma$
can be defined recursively as the formulas of the form
$\delta[Y_1 \mapsfrom \alpha_1 \wedge \neg \phi_1, \ldots, Y_n \mapsfrom \alpha_n \wedge \neg \phi_n]$
where
\begin{itemize}
\item $\delta$ is a UCQ over signature $\sigma \cup \{Y_1,\ldots,Y_n\}$
for some fresh relations $Y_1,\dots,Y_n$,
\item $\phi_1, \ldots, \phi_n$ are in $\nf$ $\agnf$ over $\sigma$, and
\item $\alpha_1, \ldots, \alpha_n$ are base-guards for
the free variables in $\phi_1, \ldots, \phi_n$ such that the number of free variables in each $\alpha_i \wedge \neg \phi_i$ matches the arity of $Y_i$, and
\item $\delta[Z \mapsfrom \psi]$ is the result of replacing every occurrence of $Z(\vec{x})$ in $\delta$ with $(\psi(\vec{x}))$.
\end{itemize}
The base case of this recursive definition is a UCQ over $\sigma$ (take $n=0$ above).

In other words, formulas in $\nf$ $\agnf$ are built up from UCQs (over both base and distinguished relations) using base-guarded negation.
We also refer to these as \emph{UCQ-shaped formulas}.
A single disjunct of a UCQ-shaped formula is called a \emph{CQ-shaped formula}.
Note that an atomic formula can be seen as a simple UCQ with no disjunction, projection or negation.

The \emph{$\nf$ for $\acgnf$}
over $\sigma$
consists of $\nf$ $\agnf$ formulas such that
for every CQ-shaped subformula $\delta[Y_1 \mapsfrom \alpha_1 \wedge \neg \phi_1, \ldots, Y_n \mapsfrom \alpha_n \wedge \neg \phi_n]$ that appears negatively
(in the scope of an odd number of negations),
and for every distinguished atom $\alpha'$ that appears as a conjunct in $\delta$,
there must be some base-guard for the free variables of $\alpha'$ that appears in $\alpha_1,\dots,\alpha_n$ or as a conjunct in $\delta$.
Formulas in $\nf$ $\acgnf$ might not syntactically satisfy the condition that
every distinguished atom appears in (direct) conjunction with a base atom using its variables,
but it can always be converted into one satisfying this condition with only a linear blow-up in size by duplicating guards:
e.g., if $D$ is the only distinguished relation,
$\neg \exists x y z ( D(x,y) \wedge D(y,z) \wedge C(x,y,z) )$
could be converted to $\neg \exists x y z ( C(x,y,z) \wedge D(x,y) \wedge C(x,y,z) \wedge D(y,z) )$.
We allow this slightly more relaxed definition for $\nf$ $\acgnf$ since it is more natural
when talking about formulas built up using UCQ-shaped subformulas.

We revisit some of the sentences from Example~\ref{ex:syntax} to see how to
rewrite them in \nf.
\begin{example}
The $\afgtgd$ $\forall x y z \big((B(z,x) \wedge D(x,y) \wedge D(y,z)) \rightarrow D(x,z) \big)$
can be expressed in $\nf$ $\agnf$ as
$\neg \exists x y z \big(D(x,y) \wedge D(y,z) \wedge (B(z,x) \wedge \neg D(x,z)) \big) .$
We now explain how this is built following the definition of \nf $\agnf$.
We first build the inner CQ-shaped formula by
taking the CQ $\exists x y z \big(D(x,y) \wedge D(y,z) \wedge Y(x,z) \big)$
and substituting $B(z,x) \wedge \neg D(x,z)$ for $Y(x,z)$.
We can then build the final formula by substituting
$\exists x y z \big(D(x,y) \wedge D(y,z) \wedge (B(z,x) \wedge \neg D(x,z)) \big)$
for the 0-ary relation $Z$
in
$\neg Z$.

The sentence $\exists x y
\big(
B(x,y) \wedge \neg (D(x,y) \wedge D(y,x) ) \wedge  (D(x,y) \vee D(y,x) )
\big)$ can
be expressed in $\nf$ $\acgnf$ as
$\delta[Y_1 := B(x,y) \wedge \neg D(x,y), Y_2 := B(x,y) \wedge \neg D(y,x)]$
where
\begin{align*}
\delta = \qquad &\exists x y (B(x,y) \wedge D(x,y) \wedge Y_1(x,y))
\vee
\exists x y (B(x,y) \wedge D(y,x) \wedge Y_1(x,y)) \\
\vee \ &\exists x y (B(x,y) \wedge D(x,y) \wedge Y_2(x,y))
\vee
\exists x y (B(x,y)  \wedge D(y,x) \wedge Y_2(x,y)) .
\end{align*}
\end{example}

\subsubsection{Width, CQ-rank, and Negation Depth}
For $\varphi$ in $\nf$ $\agnf$, we define the
\emph{width} of $\varphi$ to be the maximum number
of free variables in any subformula of $\varphi$.
Note that by reusing variable names, a formula of width $k$ can always be written using only $k$ variable names.
  The \mbox{\emph{CQ-rank}} of~$\varphi$ is
the maximum number of conjuncts in any CQ-shaped subformula
$\exists \vec{x} ( \bigwedge \gamma_i )$.
The \emph{negation depth} is the maximal nesting depth of negations in the syntax tree.
These parameters will be important in later proofs.

\subsection{Conversion into Normal Form}
Observe that formulas in $\afgtgd$ or $\acfgtgd$ are of the form
$\forall \vec{x} (\bigwedge \gamma_i \rightarrow \exists \vec{y} \bigwedge \rho_i)$
and so can be naturally written in $\nf$ $\agnf$ or $\acgnf$ as
$\neg \exists \vec{x} ( \bigwedge \gamma_i  \wedge (\alpha \wedge \neg \exists \vec{y} \bigwedge \rho_i) )$
where $\alpha$ is the base-guard from $\bigwedge \gamma_i$ for the frontier-variables.
In this case, there is a linear blow-up in the size.
In general, $\agnf$ formulas can be converted into $\nf$,
but with an exponential blow-up in size:

\begin{proposition}\label{prop:nf}
Let $\varphi$ be a formula in $\agnf$.
We can construct an equivalent $\varphi'$ in $\nf$
in $\exptime$ such that
\begin{inparaenum}[(i)]
\item $\mysize{\varphi'}$ is at most exponential in $\mysize{\varphi}$,
\item the width of $\varphi'$ is at most $\mysize{\varphi}$,
\item the CQ-rank of $\varphi'$ is at most $\mysize{\varphi}$,
\item if $\varphi$ is in $\acgnf$, then $\varphi'$ is in $\nf$ $\acgnf$.
\end{inparaenum}
\end{proposition}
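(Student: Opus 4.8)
The plan is to proceed by a standard induction on the structure of the $\agnf$ formula $\varphi$, converting it bottom-up into the nested-UCQ normal form described in Section~\ref{app:nf}. The key insight driving the normal form is that $\agnf$ formulas are built by alternating two operations: forming UCQs (disjunction, conjunction, and existential projection of positive subformulas) and applying base-guarded negation. So the conversion amounts to repeatedly ``flattening'' each maximal positive (negation-free) layer into a single UCQ-shaped formula, with the immediately-nested guarded-negation subformulas abstracted away as fresh relation symbols $Y_i$.

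First I would handle the base case: an atomic formula $A(\vec x)$ or equality $x=y$ is already a UCQ with a single conjunct, so it is trivially in normal form. For the inductive step, I would assume each immediate guarded-negation subformula $\alpha_j \wedge \neg\psi_j$ has already been placed in normal form, introduce a fresh relation $Y_j$ of matching arity for each such subformula, and then treat the surrounding positive part as a first-order formula over $\sigma \cup \set{Y_1,\dots,Y_n}$ built only from atoms, $\vee$, $\wedge$, and $\exists$. The core of the argument is then the classical fact that any positive-existential formula can be rewritten into an equivalent UCQ $\delta$ by distributing $\wedge$ over $\vee$ (pushing disjunctions to the top) and pulling existential quantifiers outward. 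Substituting $\alpha_j \wedge \neg\phi_j$ back in for each $Y_j$ via the substitution operator $\delta[Y_j \mapsfrom \alpha_j \wedge \neg\phi_j]$ yields the required normal form, and this is exactly the recursive shape demanded by the definition.

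The bounds in items (i)--(iii) follow by tracking the effect of this rewriting. Distributing conjunction over disjunction is the only source of blow-up: a positive formula of size $m$ can expand to a UCQ with exponentially many disjuncts, giving the exponential size bound in (i) and the $\exptime$ running time, while leaving the variable set and the per-disjunct conjunct count controlled. For the width bound (ii), I would observe that each subformula's free variables are a subset of the variables appearing in $\varphi$, and no new variables are introduced by the rewriting, so the width is at most $\mysize{\varphi}$; the remark that a width-$k$ formula needs only $k$ variable names lets us reuse names across disjuncts. For the CQ-rank bound (iii), the number of conjuncts in any single CQ-shaped disjunct comes only from the atoms already present in the corresponding positive layer (with each $Y_j$ contributing one conjunct), so it too is bounded by $\mysize{\varphi}$. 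Finally, for (iv), I would check that the base-coverage condition is preserved: if $\varphi$ is in $\acgnf$, every $\sigmad$-atom occurring negatively already comes with a base-guard, and since the positive-to-UCQ rewriting only duplicates and rearranges existing atoms without stripping guards, each negatively-occurring distinguished atom retains a base-guard among the conjuncts or the $\alpha_i$, matching the definition of $\nf$ $\acgnf$.

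The main obstacle I anticipate is verifying the bounds simultaneously rather than each in isolation. The exponential blow-up from disjunctive normalization is unavoidable for size, but I must confirm it does not inflate the width or CQ-rank beyond linear: the delicate point is that distribution multiplies the \emph{number} of disjuncts without increasing the number of free variables or conjuncts \emph{within} any single disjunct, so the per-disjunct parameters stay linear even as the global size grows exponentially. Being careful about how existential quantifiers are pulled out (so that the free-variable count of each subformula is not accidentally increased) and about the bookkeeping for the fresh $Y_i$ relations' arities is where the proof needs the most attention; the preservation of base-coverage in (iv) is then a comparatively routine syntactic check once the rewriting is fixed.
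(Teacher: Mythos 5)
Your proposal is correct and follows essentially the same route as the paper: the paper's proof sketch performs exactly the rewriting you describe (distributing $\wedge$ over $\vee$, splitting $\exists$ over $\vee$, and pulling existentials out of conjunctions with fresh variable names), applied within each guarded-negation layer, and justifies base-coveredness by the same observation that the rules preserve polarity and never separate a distinguished atom from its conjoined guard. Your bottom-up induction with the fresh $Y_j$ abstraction is just a more explicit organization of the same argument, and your attention to the per-disjunct width/CQ-rank bounds versus the global exponential size matches the paper's (terser) accounting.
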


\begin{proof}[Proof sketch]
The conversion works by using the rewrite rules given by \citeA{gnficalp}:
\begin{align*}
\exists x (\theta \vee \psi) &\rwto (\exists x \theta) \vee (\exists x \psi)
  \qquad &\theta \wedge (\psi \vee \chi) &\rwto (\theta \wedge \psi) \vee (\theta \wedge \chi) \\
\exists x (\theta) \wedge \psi &\rwto \exists x' (\theta[x'/x] \wedge \psi ) \text{ where $x'$ is fresh}
\end{align*}
It is straightforward to check the bounds on the size, width, and CQ-rank after performing this rewriting.
Coveredness is also preserved since
\begin{inparaenum}[(i)]
\item the rewrite rules do not change the polarity of any subformulas,
so any distinguished atom in $\varphi'$ that appears negatively---and hence requires a base-guard---can be associated
with a distinguished atom in $\varphi$ that appears negatively and appears in conjunction
with a base-guard for its free variables,
and
\item the rewrite rules do not separate a conjunction of two atoms,
so the base-guard that appears in conjunction with a distinguished atom in $\varphi$ is propagated to any occurrence
of this distinguished atom in $\varphi'$.
\end{inparaenum}
\end{proof}

\subsection{Automata-Related Tools}\label{sec:automata-tools}
In Section~\ref{sec:decid},
we will use automata running on infinite binary trees,
so we briefly recall some definitions and key properties.
Our presentation is partially adapted from \citeA{lics16-gnfpup};
for more background please refer to Appendix~\ref{app:closure}
or surveys by \citeA{Thomas97} and \citeA{Loding-unpublished}.
In particular, we will need to use 2-way automata
that can move both up and down as they process
the tree, so we highlight some
less familiar properties
about the relationship between 2-way and 1-way versions
of these automata.
For readers not interested in the details of the automaton construction in Section~\ref{sec:aut-construction},
this section can be skipped.

\subsubsection{Trees}
The input to the automata will be infinite full binary trees $\tree$
over some \emph{tree signature} $\autsig$ consisting of a set of unary relations.
That is, each node~$v$
has exactly two children (one left child, and one right),
and has a label $\tree(v) \in \powerset{\autsig}$ that indicates the set of unary relations that hold
at~$v$.
We will assume that $\Gamma$ always includes unary relations $\dleft$ and $\dright$,
and the label of each node correctly indicates whether the node is the left or right child of its parent
(with the root being the unique node with neither $\dleft$ nor $\dright$ in its label).
We will also identify each node in a binary tree with a finite string over $\{ 0 , 1 \}$,
with $\epsilon$ identifying the root,
and $u 0$ and $u 1$ identifying the left child and right child of node $u$.

\subsubsection{Tree Automata}
We define a set of \emph{directions} 
$\Dir \colonequals \set{ \dleft,\dright,\dup,\dstay}$,
and write $\cB^+(X)$ for any set~$X$ to denote the set of positive Boolean
formulas over~$X$.
A \emph{2-way alternating parity tree automaton} ($\ptwowayaltinf$) $\cA$
is then a tuple $\perm{\autsig,Q,q_0,\delta,\Omega}$
where
$\autsig$ is a tree signature,
$Q$ is a finite set of states,
$q_0 \in Q$ is the initial state,
$\delta : Q \times \powerset{\autsig} \to \cB^{+}(\Dir \times Q)$
is the transition function,
and $\Omega : Q \to P$
is the priority function with a finite set of \emph{priorities} $P \subseteq \N$.
Intuitively, the transition function $\delta$ maps a state and a set of unary relations from $\autsig$
to a positive Boolean formula
over $\Dir \times Q$
that indicates possible next moves for the automaton.

Running the automaton $\cA$ on some input tree $\tree$
is best thought of in terms of an \emph{acceptance game} or
\emph{membership game} (see \citeR{Loding-unpublished} for more information).
The positions in the game are of the form $(q,v) \in Q \times \tree$.
In position $(q,v)$,
Eve and Adam play a subgame based on $\delta(q,\tree(v))$,
with Eve resolving disjunctions and Adam resolving conjunctions
until an atom $(d,q')$ in $\delta(q,\tree(v))$ is selected.
Then the game continues from
position $(q',v')$
where $v'$ is the node in direction $d$ from~$v$; in particular
$v' \colonequals v$ if $d = \dstay$.
For example,
if $\delta(q,\tree(v)) = (\dup,s_1) \vee ( (\dstay, s_2) \wedge (\dright, s_3) )$,
then the acceptance game starting from $(q,v)$ would work as follows:
Eve would select one of the disjuncts;
if she selects the first disjunct then the game would continue from $(s_1, u)$
where $u$ is the parent of $v$,
otherwise,
Adam would choose one of the conjuncts
and the game would continue from $(s_2,v)$ or $(s_3, v1)$
depending on his choice.

A play $(q_0,v_0) (q_1,v_1) \dots$
is a sequence of positions in such a game.
The play is winning for Eve if it
satisfies the \emph{parity condition}: the maximum priority occurring infinitely often
in $\Omega(q_0) \Omega(q_1) \dots$ is even.
A \emph{strategy} for Eve is a function that, given the history of the play
and the current position $(q,v)$ in the game, determines Eve's choices in the subgame
based on $\delta(q,\tree(v))$.
Note that we allow the automaton to be started from arbitrary positions in the tree,
rather than just the root.
We say that $\cA$ \emph{accepts $\tree$ starting from $v_0$} if Eve has a strategy
such that all plays consistent with the strategy starting from $(q_0,v_0)$ are winning.
$L(\cA)$ denotes the \emph{language} of trees accepted by $\cA$ starting from the root.

A 1-way alternating automaton is an automaton that processes the tree in a top-down fashion,
using only directions $\dleft$ and $\dright$.
A (1-way) nondeterministic automaton is a 1-way alternating automaton
such that every transition function formula is of the form $\bigvee_j \, (\dleft,q_j) \wedge (\dright,r_j)$.
We call such automata \emph{1-way nondeterministic parity tree automata},
and write it $\ponewayndinf$.

We review some closure properties of these automata in Appendix~\ref{app:automata}.

\subsubsection{Connections between 2-way and 1-way Automata}
It was shown by \citeA{Vardi98} that every $\ptwowayaltinf$ can be converted to an equivalent
$\ponewayndinf$,
with an exponential blow-up.

\begin{theorem}[\citeR{Vardi98}]\label{thm:2way-to-nd}
Let $\cA$ be a $\ptwowayaltinf$.
We can construct a $\ponewayndinf$ $\cA'$
such that $L(\cA) = L(\cA')$.
The number of states of $\cA'$ is exponential in the number of states of $\cA$,
and the number of priorities of $\cA'$ is polynomial in the number of states and priorities of $\cA$.
The running time of the algorithm is polynomial in the size of $\cA'$, and is
  hence in $\exptime$ in the input.
\end{theorem}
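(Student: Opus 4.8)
The plan is to eliminate the two problematic features of $\cA$ at once—its two-way motion and its alternation—while being careful to incur only a single exponential blow-up. I would exploit the fact that acceptance for $\cA$ is defined by a parity game on positions $(q,v) \in Q \times \tree$, and that parity games admit positional (memoryless) winning strategies. This positionality is what makes the construction possible: a positional strategy at a node $v$ depends only on the state and the label $\tree(v)$, so it can be guessed locally by a one-way device reading the tree top-down, with nondeterminism used to resolve Eve's disjunctions.

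The core step is to encode at each node $v$ a finite \emph{summary} of how Eve's strategy lets $\cA$ interact with the subtree below $v$ and with its ancestors. Concretely, I would annotate each node with a relation $\rho_v \subseteq Q \times Q$ recording the pairs $(q,q')$ such that, under the strategy, a partial play entering $v$ in state $q$ can travel into the subtree rooted at $v$ and return to the parent of $v$ in state $q'$, together with the maximal priority seen along each such ``loop''. The one-way nondeterministic automaton $\cA'$ then guesses these annotations as it descends and checks their \emph{local consistency} against $\delta$: every prescribed $\dleft$/$\dright$/$\dstay$ move is a legal move of $\cA$, every claimed loop $(q,q')$ at $v$ is witnessed either by a move that returns immediately or by composing loops recorded at the children, and Adam's conjunctions are handled by requiring all guessed obligations to be met simultaneously. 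This is exactly where the exponential blow-up in states arises, since a state of $\cA'$ must remember a subset of $Q \times Q$ together with bounded priority data; the number of priorities stays polynomial because the priorities decorating loops range only over those already present in $\Omega$.

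The remaining difficulty—and the step I expect to be the main obstacle—is the \emph{global} parity condition. An infinite play of $\cA$ need not descend forever: it may oscillate up and down through a node infinitely often, and the maximum priority seen infinitely often must still be even. Plays that eventually descend forever along one branch are caught directly by equipping $\cA'$ with a parity condition derived from $\Omega$. The oscillating plays are delicate: I would argue that the priority decoration on the guessed loops lets $\cA'$ summarize each finite up-and-down excursion by its dominant priority, so that an infinite oscillation becomes an infinite sequence of loop-summaries along the finite path between the relevant ancestor and descendant, from which the true parity of the oscillation can be recovered. Establishing the equivalence here—that a positional winning strategy for Eve yields a consistent, parity-respecting annotation accepted by $\cA'$ (soundness), and conversely that any accepting run of $\cA'$ can be unfolded into a genuine winning strategy in the acceptance game of $\cA$ (completeness)—is the technical heart, and requires an induction on the structure of plays together with a decomposition of each play into maximal downward segments and return-loops.

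Once $\cA'$ is shown to be a $\ponewayndinf$ with $L(\cA') = L(\cA)$, the stated bounds follow by inspecting the construction: the state set has size $2^{O(|Q|^2)}$, the priority set is polynomial in the parameters of $\cA$, and since the full transition table of $\cA'$ can be written out in time polynomial in its (exponential) size, the whole procedure runs in $\exptime$.
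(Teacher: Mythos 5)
The paper never proves this statement: Theorem~\ref{thm:2way-to-nd} is imported verbatim from Vardi (1998), so your attempt has to be judged against the known construction rather than against anything in the text. Your sketch does recover the correct skeleton of that construction: positional determinacy of the parity membership game, a guessed labelling of the tree by Eve's strategy together with loop annotations that summarize finite excursions by their dominant priority, local consistency checking of these annotations against $\delta$, and the case split between plays that eventually descend along a branch and plays that eventually oscillate at a fixed node.

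There is, however, a genuine gap at exactly the point you call the technical heart, and the state space you propose cannot close it. Along a single branch, the guessed strategy/annotation labelling induces not one play but a whole family of \emph{threads} (Adam's conjunctions split plays, and distinct plays merge at the same state--node pairs), and $\cA'$ must accept only if \emph{every} such thread has even maximal priority occurring infinitely often. A one-way nondeterministic automaton whose states are ``subsets of $Q \times Q$ together with bounded priority data'' cannot verify this universal condition: tracking the active threads with their current dominant priorities is precisely the naive subset construction, and this is the classical point where subset constructions fail for B\"uchi/parity conditions --- threads can merge and re-separate so that every pointwise aggregate of priorities looks even while some individual thread is odd. Vardi's proof resolves this by building a small nondeterministic word automaton (of size roughly the number of states times the number of priorities of $\cA$) for the language ``some thread along this branch violates the parity condition,'' determinizing and complementing it via a Safra--Piterman construction into a \emph{deterministic} parity word automaton, and having the tree automaton run that deterministic automaton down every branch while guessing the labelling; this determinization step is what actually produces the exponential blow-up in states and the polynomially many priorities. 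Relatedly, your claim that the priorities of $\cA'$ ``range only over those already present in $\Omega$'' is false: they are new priorities arising from the determinization, polynomial in the parameters of $\cA$ but not drawn from $\Omega$. Without this step (or an equivalent rank/progress-measure argument), the completeness direction of your equivalence --- unfolding an accepting run of $\cA'$ into a winning strategy for Eve, and conversely certifying all plays at once --- cannot be carried out.
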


1-way nondeterministic tree automata can be seen as a special case of 2-way alternating automata,
so the previous theorem shows that $\ponewayndinf$ and $\ptwowayaltinf$ are
equivalent, in terms of their ability to recognize trees starting from the root.

We need another conversion from 1-way nondeterministic automata
to 2-way alternating automata,
that we call \emph{localization}.
This process takes a 1-way nondeterministic automaton that runs on trees
with extra information about membership in certain relations (annotated on the tree), 
and converts that automaton to an equivalent 2-way alternating automaton
that operates on trees without these annotations,
under the assumption that these relations hold only locally at the
position the 2-way automaton is launched from.
A similar localization theorem is present in prior work \cite{BourhisKR15,lics16-gnfpup}.
We sketch the idea here, and provide more details about the construction in Appendix~\ref{app:automata}.

\newcommand{\localization}
{
Let $\autsig' := \autsig \cup \sset{P_1,\dots,P_j}$.
Let $\cA'$ be a $\ponewayndinf$ on $\autsig'$-trees.
We can construct a $\ptwowayaltinf$ $\cA$ on $\autsig$-trees
such that
for all $\autsig$-trees $\tree$ and nodes $v$ in the domain of~$\tree$,
\begin{align*}
&\text{$\cA'$ accepts $\tree'$ from the root}
\ \text{ iff } \
\text{$\cA$ accepts $\tree$ from $v$}
\end{align*}
where $\tree'$ is the $\autsig'$-tree obtained from $\tree$ by setting $P_1^{\tree'} =  \dots = P_j^{\tree'} = \set{v}$.
The number of states of $\cA$ is linear in the number of states of $\cA'$,
the number of priorities of $\cA$ is linear in the number of priorities of
  $\cA'$,
and the overall size of $\cA$ is linear in the size of $\cA'$.
The running time is polynomial in the size of $\cA'$,
and hence is in $\ptime$.
}
\begin{theorem}\label{thm:localization}
\localization
\end{theorem}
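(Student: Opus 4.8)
The goal is to convert a 1-way nondeterministic automaton $\cA'$ reading trees annotated with extra unary predicates $P_1,\dots,P_j$ (all interpreted as the singleton $\set{v}$) into a 2-way alternating automaton $\cA$ that reads the unannotated tree but is launched from the node $v$, simulating $\cA'$'s run on the whole annotated tree.

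Let me think about the plan.

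The key tension: $\cA'$ runs top-down from the root, but $\cA$ starts at $v$ (which could be deep in the tree) and sees no annotations. So $\cA$ needs to (1) figure out what $\cA'$'s run looks like on the path from the root down to $v$, using its 2-way movement to walk up to the root, and (2) simulate $\cA'$ going down into all the other subtrees, knowing that the $P_i$ annotations hold only at $v$.

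Since $\cA'$ is nondeterministic, each transition is of the form $\bigvee_j (\dleft, q_j)\wedge(\dright,r_j)$: at each node $\cA'$ picks a pair of states to send to the two children. A run is a labeling of tree nodes by states consistent with these transitions. The crucial observation is that the part of the run *off the path* to $v$ is completely standard — it's just $\cA'$ descending into unannotated subtrees. The only place annotations matter is along the path to $v$, and specifically only at $v$ itself.

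So my plan:

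\begin{itemize}
\item[(1)] \emph{Guess-and-walk-up.} Starting at $v$ in state $q_0$ (the initial state of $\cA'$), $\cA$ uses $\dstay$ to existentially guess the state $p_v$ that $\cA'$ assigns to $v$, and the pair of states $\cA'$ would send to $v$'s two children, using $\delta'(p_v, \tree(v)\cup\set{P_1,\dots,P_j})$ — i.e., reading the node label \emph{as if} all $P_i$ held. It then branches: one branch verifies the downward simulation into $v$'s two subtrees (step (2) below), and another branch moves $\dup$ to $v$'s parent, carrying the guessed state $p_v$ as the ``required state of the child I just came from.''
\item[(2)] \emph{Downward simulation.} From a node $u$ with a ``required state'' $p$, $\cA$ simulates $\cA'$'s ordinary run on the subtree below $u$, reading node labels \emph{without} the $P_i$'s (since $v$ is not below $u$ once we've left it). This is a direct embedding of $\cA'$ as a 2-way automaton using only $\dleft$/$\dright$.
\item[(3)] \emph{Upward reconstruction of the path.} When $\cA$ is at a node $u$ (an ancestor of $v$) carrying the required state $p$ for the child $c$ it ascended from, it guesses a state $p_u$ for $u$ and a transition disjunct of $\delta'(p_u,\tree(u))$ that sends $p$ in the correct direction ($\dleft$ if $c$ is the left child, $\dright$ if the right — readable from the $\dleft$/$\dright$ label of $c$), spawns a downward-simulation branch (step (2)) for the \emph{sibling} of $c$ with the sibling state, and moves $\dup$ carrying $p_u$. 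The root is detected by the absence of $\dleft,\dright$ in its label; there $\cA$ checks $p_{\text{root}}=q_0$, closing the loop.
\end{itemize}

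The acceptance condition is the main subtlety. Along the infinite upward path and in the spawned downward subtrees, $\cA$ must reproduce $\cA'$'s parity acceptance. The finite walk up to the root contributes only finitely many states, so it does not affect the parity condition (max priority seen infinitely often); the downward-simulation branches inherit $\cA'$'s priority function directly. Thus I set $\Omega$ on the simulation states to agree with $\Omega'$, and assign the upward-walk states an arbitrary (even, dominated) priority, so that the path-reconstruction phase is ``transparent'' to parity. I would argue correctness by exhibiting a bijection between accepting runs of $\cA'$ on $\tree'$ and winning strategies for Eve in $\cA$'s game from $(q_0,v)$: a run of $\cA'$ restricts to states along the root-to-$v$ path (reconstructed upward in step (3)) and to the off-path subtrees (simulated downward in step (2)), and conversely Eve's guesses assemble into a consistent run. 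The state count is $\cA'$'s states times a small constant (for the mode flags and the single remembered state), and priorities are those of $\cA'$ plus one, so all sizes are linear as claimed.

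The hard part will be step (3): getting the bookkeeping exactly right so that $\cA$, while walking up blindly, correctly reconstructs the \emph{unique} root-to-$v$ path and correctly reads the $P_i$ annotation at $v$ but nowhere else. In particular I must ensure that the $P_i$'s are consulted precisely once — in the $\dstay$ guess at $v$ in step (1) — and that every other node label is read without them, matching the hypothesis $P_i^{\tree'}=\set{v}$. I also need the $\dleft/\dright$ labels to correctly tell $\cA$ which child it ascended from, so the upward transition picks the right branch of $\cA'$'s disjunction; this is exactly why the tree signature is assumed to contain $\dleft$ and $\dright$.
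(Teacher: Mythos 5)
Your construction is essentially identical to the paper's proof: the paper likewise has a fresh initial state that guesses $\cA'$'s state at $v$ reading the label augmented with $P_1,\dots,P_j$, upward-mode states $(d,q)$ that reconstruct the run backwards along the path to the root (spawning ordinary downward simulations into the siblings, and checking $q^0_{\cA'}$ at the root), and downward-mode states that simulate $\cA'$ verbatim, with the conjunctive branching realized as Adam's challenges. Your priority choice for the upward phase (arbitrary, since that phase is finite) is also sound; the paper instead inherits $\Omega_{\cA'}$ on upward states, but both yield the same linear bounds and the same correctness argument.
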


\begin{proof}[Proof sketch]
$\cA$ simulates $\cA'$ by guessing in a backwards fashion an initial part of a run of $\cA'$
on the path from $v$ to the root and then processing the rest of the tree in a normal downwards fashion.
The subtlety is that the automaton $\cA$
is reading a tree without valuations for $P_1,\dots,P_j$
so once the automaton leaves node $v$, if it were to cross this node again,
it would be unable to correctly simulate~$\cA'$.
To avoid this issue, we only send downwards copies of the automaton in directions
that are not on the path from the root to $v$.
\end{proof}

\section{Decidability Results for Transitivity} \label{sec:decid}

We are now ready to explore query answering for $\agnf$
when the distinguished relations are transitively closed or are the transitive closure
of certain base relations.
We show that these query answering problems
can be reduced to tree automata emptiness testing.

\subsection{Deciding $\owqatc$ Using Automata}
\label{sec:decidtransautomata}

We first consider $\owqatc$,
where
$\sigmab$ includes binary relations $R_1, \dots, R_n$,
and $\sigmad$ consists of binary relations $R^\trans_1, \dots, R^\trans_n$
such that $R^\trans_i$ is the transitive
closure of $R_i$ for each~$i$.

\begin{theorem}
  \label{thm:decidtransautomata}
We can decide $\owqatc(\instance_0, \Sigma,Q)$ in $\twoexp$,
where $\instance_0$ ranges over finite sets of facts,
$\Sigma$ over $\agnf$ constraints, and $Q$ over UCQs.
In particular, this holds when $\Sigma$ consists of $\afgtgd$s.
\end{theorem}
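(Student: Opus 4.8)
The plan is to reduce $\owqatc(\instance_0,\Sigma,Q)$ to emptiness of a tree automaton. Since every $\afgtgd$ is an $\agnf$ formula, it suffices to handle $\agnf$ constraints. I would first use Proposition~\ref{prop:nf} to put $\Sigma$ into $\nf$ $\agnf$, with an exponential blow-up in size but with width and CQ-rank polynomial in the input. The problem is then equivalent to deciding whether $\instance_0 \wedge \Sigma \wedge \neg Q$ is satisfiable over sets of facts in which each $R^\trans_i$ is the transitive closure of the base relation $R_i$; the query $Q$ is entailed precisely when this is unsatisfiable. I would decide the latter by constructing an automaton whose language is exactly the set of encodings of such witnessing models and testing it for emptiness.

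The first real step is a tree-like model property for this setting. I would show that if a witnessing model exists, then one exists whose \emph{base} reduct admits a tree decomposition of width $k$ bounded polynomially in the normal-form parameters of $\instance_0, \Sigma, Q$. The decisive use of base-guardedness is that the distinguished relations $R^\trans_i$ are never used to guard a negation, so they may be treated as \emph{derived}: they contribute nothing to the treewidth, and a subformula $R^\trans_i(x,y)$ is read as ``there is a finite $R_i$-path from $x$ to $y$ in the base structure.'' A guarded unravelling of the base structure then produces the tree-like model, with the transitive closure \emph{recomputed} in the unravelled structure. Verifying that this recomputation is compatible with guarded bisimulation---so that satisfaction of $\Sigma$ and the failure of $Q$ are preserved---is the technical heart of this step, and the part I expect to be most delicate, because reachability is a global, non-first-order property that must nonetheless interact correctly with the local unravelling.

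I would then encode width-$k$ tree decompositions as labelled infinite binary trees (bags named from a bounded pool of symbols, labelled with their $\sigmab$-atoms together with markers that locate the finite set $\instance_0$) and build the accepting automaton as a product of three pieces: an automaton checking that the labelling encodes a consistent decomposition and embeds $\instance_0$; an automaton recognising models of $\Sigma$, built recursively over the UCQ-shaped normal form following the standard $\gnf$-to-automaton translation; and an automaton checking that no disjunct of $Q$ has a match. The new ingredient common to the last two pieces is the evaluation of a distinguished atom $R^\trans_i(x,y)$, whose two elements may sit in bags arbitrarily far apart in the tree. I would handle this with the $2$-way machinery of Section~\ref{sec:automata-tools}: a reachability test for an $R_i$-path between two marked elements is naturally a $2$-way alternating computation that walks up and down the decomposition, launched at the relevant bag via the localization of Theorem~\ref{thm:localization}; a positive occurrence asks for such a path and a negative occurrence for the absence of all such paths, the latter obtained by complementation. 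These $2$-way components are then converted into $1$-way nondeterministic automata by Theorem~\ref{thm:2way-to-nd} so that they can enter the product.

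Finally I would test emptiness of the resulting $\ponewayndinf$. The bookkeeping matches the standard $\gnf$ analysis: the normal form is at worst exponential with polynomial width and CQ-rank, so the automaton for $\Sigma$ (and the one checking that $Q$ fails) has doubly-exponentially many states but a number of priorities that is only polynomial in the input (governed by the negation depth). The reachability sub-automata operate on bounded-width bags and are only singly exponential, even after the conversions of Theorems~\ref{thm:2way-to-nd} and~\ref{thm:localization}, so taking products keeps the state count doubly exponential and the priorities polynomial. Since parity emptiness is polynomial in the number of states and exponential only in the number of priorities, the whole procedure runs in $\twoexp$. The $\afgtgd$ case is then immediate, since $\afgtgd \subseteq \agnf$.
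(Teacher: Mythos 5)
Your proposal follows essentially the same route as the paper's proof: reduce $\owqatc$ to satisfiability of $\instance_0 \wedge \Sigma \wedge \neg Q$ under the transitive-closure interpretation, establish tree-like witnesses via a guarded-negation-bisimulation unravelling in which only base facts populate the decomposition (Proposition~\ref{prop:transdecomp}), encode bounded-width decompositions as labelled infinite binary trees with a bounded name pool, and test emptiness of an automaton assembled from a consistency/$\instance_0$ check together with a recursive construction over the $\nf$ formula in which $R^\trans_i$-atoms are evaluated by $2$-way sub-automata walking the decomposition, using localization (Theorem~\ref{thm:localization}, as in Example~\ref{ex:aut}) and the conversion of Theorem~\ref{thm:2way-to-nd}. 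The only deviations are bookkeeping-level: in the paper the $2$-way-to-$1$-way conversion is needed for projection (guessing valuations of quantified variables and bag-local annotations) rather than ``to enter the product,'' and the localized-automata induction (Lemmas~\ref{lemma:localsubstitution} and~\ref{lemma:aut}) keeps the state count singly exponential where you claim doubly exponential states with few priorities---but both parameter regimes yield the $\twoexp$ bound, so the discrepancy is harmless.
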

In order to prove Theorem~\ref{thm:decidtransautomata},
we give a decision procedure to determine whether
$\instance_0 \wedge  \Sigma \wedge
\neg Q$
is unsatisfiable,
when $R^\trans_i$ is interpreted as the transitive closure of $R_i$.
When $\Sigma \in \agnf$ and $Q$ is a Boolean UCQ, then
$\Sigma \wedge \neg Q$ is in $\agnf$.
So it suffices to show that $\agnf$ satisfiability is decidable in $\twoexptime$,
when properly interpreting $R^\trans_i$.

As mentioned in the introduction, our proofs rely heavily on the fact that
in query answering problems for the constraint languages that we study, one
can restrict to sets of facts that have a ``tree-like'' structure. We now make this notion precise.
A \emph{tree decomposition} of $\instance$ consists
of a directed graph $(T, \child)$, where $\child \subseteq T \times T$ is the
edge relation, and of a labelling function $\lambda$ associating
each node of $T$ to a set of elements of $\instance$ and a set of facts over
these elements, called the \emph{bag} of that node. We impose the following
conditions on tree decompositions:
\begin{inparaenum}[(i)]
\item $(T, \child)$ is a tree;
\item each fact of $\instance$ must be in the image of $\lambda$;
\item for each element $e \in \elems{\instance}$,
the set of nodes that is associated with $e$ by $\lambda$ forms a connected
subtree of~$T$.
\end{inparaenum}
A tree decomposition is \emph{$\instance_0$-rooted} if the root node is associated with $\instance_0$.
It has \emph{width} $k-1$
if each bag other than the root is associated with at most $k$ elements.

For a number $k$, a 
sentence~$\phi$ over~$\sigma$ is said to have
\emph{transitive-closure friendly $k$-tree-like witnesses} if:
for every finite set of facts $\instance_0$,
if there is a set of facts $\instance$
(finite or infinite)
extending $\instance_0$ with additional $\sigmab$-facts
such that $\instance$ satisfies $\phi$
when each $R^\trans$ is interpreted as the transitive closure of $R$,
then there is such an $\instance$
that has an $\instance_0$-rooted $(k-1)$-width
tree decomposition with countable branching
(i.e., each node has a countable number of children).
Note that, in such an~$\instance$,
the only $\sigmad$-facts explicitly appearing in $\instance$
are from $\instance_0$,
and we require of~$\instance$ that these distinguished facts are actually
part of the transitive closure of the corresponding base relation.
Other $\sigmad$-facts may be implied by $\sigmab$-facts,
and both the explicit and implicit $\sigmad$-facts
must be considered when reasoning about~$\varphi$.
However, we emphasize that besides the $\sigmad$-facts in $\instance_0$,
there are no $\sigmad$-facts appearing in the tree decomposition---the explicit inclusion of such $\sigmad$-facts could make
it impossible to find a $k$-tree-like witness.

\begin{figure}[t]
\centering
\begin{tikzpicture}
  [
  yscale=1.18,xscale=1.1,
  mosaic/.style={draw,rectangle,rounded corners,minimum height = 2em,minimum width = 2em,font=\small},
  elim/.style={draw,cross out,color=red,minimum height=2em,minimum width=2em,thick},
  link/.style={->},
  linkdots/.style={-,dotted}
  ]
\node[mosaic,minimum size=3em,minimum height = 3.5em] (t1) at (0,0) {};
\node[mosaic,minimum size=3em,minimum height = 3.5em ] (t2) at (2,-2) {};
\node[mosaic,minimum size=3em,minimum height = 3.5em ] (t3) at (-2,-2) {};
\node[mosaic,minimum size=3em,minimum height = 3.5em ] (t3p) at (0,-2) {};
\node[mosaic,minimum size=3em,minimum height = 3.5em ] (t4) at (-2,-4) {};
\node[minimum size=3em] (tdot1) at (-1.5,-4) {};
\node[minimum size=3em] (tdot2) at (-.5,-4) {};
\node[minimum size=3em] (tdot3) at (1.5,-4) {};
\node[minimum size=0em] (tdot4) at (.25,-5.5) {};
\node[minimum size=0em] (tdot5) at (.75,-5.5) {};
\node[anchor=north west] (t1name) at (t1.north west) {};
\node[font=\tiny] at (t1) {
\begin{tabular}{c}
$B(a,c)$ \\
$B(b,c)$ \\
$R(a,b)$ \\
$B(a,d)$ \\
$R^\trans(a,c)$
\end{tabular}
};
\node[font=\tiny] at (t3p) {
\begin{tabular}{c}
$C(g,h,c)$ \\
$R(c,h)$ \\
$R(g,h)$
\end{tabular}
};
\node[font=\tiny] at (t2) {
\begin{tabular}{c}
$B(j,j)$ \\
$R(h,h)$ \\
$R(j,h)$
\end{tabular}
};
\node[font=\tiny] at (t3) {
\begin{tabular}{c}
$B(c,c)$ \\
$R(b,e)$
\end{tabular}
};
\node[font=\tiny] at (t4) {
\begin{tabular}{c}
$B(c,f)$ \\
$R(e,f)$ \\
$R(f,c)$
\end{tabular}
};
\draw[link] (t1.south) -- (t2.north);
\draw[link] (t1.south) -- (t3.north);
\draw[link] (t3.south) -- (t4.north);
\draw[link] (t1.south) -- (t3p.north);
\end{tikzpicture}
\hfill
\begin{tikzpicture}
  [
  xscale=.9,
  yscale=.8,
  mosaic/.style={draw,rectangle,rounded corners,minimum height = 2em,minimum width = 2em,font=\small},
  elim/.style={draw,cross out,color=red,minimum height=2em,minimum width=2em,thick},
  link/.style={->},
  linkdots/.style={-,dotted}
  ]
\node[mosaic,minimum size=3em,minimum height=3.5em ] (t) at (0,0) {};
\node[mosaic,minimum size=3em,minimum height = 3.5em ] (t0) at (-2,-2) {};
\node[mosaic,minimum size=3em,minimum height = 3.5em ] (t1) at (2,-2) {};
\node[mosaic,minimum size=3em,minimum height = 3.5em ] (t10) at (1,-4) {};
\node[mosaic,minimum size=3em,minimum height = 3.5em ] (t11) at (3,-4) {};
\node[mosaic,minimum size=3em,minimum height = 3.5em] (t110) at (2.15,-6) {};
\node[mosaic,minimum size=3em,minimum height = 3.5em] (t111) at (3.85,-6) {};
\node[mosaic,minimum size=3em,minimum height = 3.5em] (t00) at (-3,-4) {};
\node[mosaic,minimum size=3em,minimum height = 3.5em] (t01) at (-1,-4) {};
\node[minimum size=5em] (tdot) at ($(t00.south)+(0,-.25)$) {$\vdots$};
\node[minimum size=5em] (tdot) at ($(t01.south)+(0,-.25)$) {$\vdots$};
\node[minimum size=5em] (tdot) at ($(t10.south)+(0,-.25)$) {$\vdots$};
\node[minimum size=5em] (tdot) at ($(t10.south)+(0,-.25)$) {$\vdots$};
\node[minimum size=5em] (tdot) at ($(t110.south)+(0,-.25)$) {$\vdots$};
\node[minimum size=5em] (tdot) at ($(t111.south)+(0,-.25)$) {$\vdots$};
\node[font=\tiny] at (t) {
\begin{tabular}{c}
$B_{a,c}$ \\
$B_{b,c}$ \\
$R_{a,b}$ \\
$B_{a,d}$ \\
$R^\trans_{a,c}$
\end{tabular}
};
\node[font=\tiny] at (t1) {
\begin{tabular}{c}
$B_{a,c}$ \\
$B_{b,c}$ \\
$R_{a,b}$ \\
$B_{a,d}$ \\
$R^\trans_{a,c}$
\end{tabular}
};
\node[font=\tiny] at (t11) {
\begin{tabular}{c}
$B_{a,c}$ \\
$B_{b,c}$ \\
$R_{a,b}$ \\
$B_{a,d}$ \\
$R^\trans_{a,c}$
\end{tabular}
};
\node[font=\tiny] at (t111) {
\begin{tabular}{c}
$B_{a,c}$ \\
$B_{b,c}$ \\
$R_{a,b}$ \\
$B_{a,d}$ \\
$R^\trans_{a,c}$
\end{tabular}
};
\node[font=\tiny] at (t10) {
\begin{tabular}{c}
$C_{1,2,c}$ \\
$R_{c,2}$ \\
$R_{1,2}$
\end{tabular}
};
\node[font=\tiny] at (t0) {
\begin{tabular}{c}
$B_{c,c}$ \\
$R_{b,1}$
\end{tabular}
};
\node[font=\tiny] at (t01) {
\begin{tabular}{c}
$B_{c,c}$ \\
$R_{b,1}$
\end{tabular}
};
\node[font=\tiny] at (t00) {
\begin{tabular}{c}
$B_{c,3}$ \\
$R_{1,3}$ \\
$R_{3,c}$
\end{tabular}
};
\node[font=\tiny] at (t110) {
\begin{tabular}{c}
$B_{1,1}$ \\
$R_{2,2}$ \\
$R_{1,2}$
\end{tabular}
};
\draw[link] (t.south) -- (t0.north);
\draw[link] (t.south) -- (t1.north);
\draw[link] (t0.south) -- (t00.north);
\draw[link] (t0.south) -- (t01.north);
\draw[link] (t1.south) -- (t10.north);
\draw[link] (t1.south) -- (t11.north);
\draw[link] (t11.south) -- (t110.north);
\draw[link] (t11.south) -- (t111.north);
\end{tikzpicture}
\caption{An $\instance_0$-rooted tree decomposition, and part of its encoding (see Examples~\ref{ex:tree}~and~\ref{ex:encoding})}
\label{fig:tree}
\end{figure}

\begin{example}\label{ex:tree}
  Let $\instance_0 = \set{B(a,c), B(b,c), R(a,b), B(a,d), R^\trans(a,c)}$.
Figure~\ref{fig:tree} shows an $\instance_0$-rooted tree decomposition
for a set $\instance$ of facts extending $\instance_0$.
The width of the tree decomposition is~2.
$\instance$ satisfies $\instance_0 \wedge \Sigma \wedge \neg Q$
where
\begin{align*}
\Sigma &= \set{\exists x y z (R(x,y) \wedge R(y,z) \wedge \neg R^\trans(z,y) )} \\
Q &= \exists x y  (B(x,y) \wedge R(x,y)) .
\end{align*}
For instance, $R^\trans(a,c) \in \instance_0$ is satisfied in $\instance$ because of the chain of facts
$R(a,b)$, $R(b,e)$, $R(e,f)$, $R(f,c)$.
Observe that the only transitive facts that explicitly appear
in the tree decomposition are at the root, but other transitive facts
are implied by the facts (e.g., $R^\trans(b,f)$) and must be taken into account
when reasoning about $\instance_0 \wedge \Sigma \wedge \neg Q$.
\end{example}

We can show that $\agnf$ sentences have these transitive-closure friendly $k$-tree-like witnesses
for an easily computable $k$. 
The proof uses a standard technique,
involving an unravelling based on a form of ``guarded negation bisimulation'',
so we defer the proof of this result to Appendix~\ref{app:tctreelike}.
The result does not follow directly from the fact that $\gnf$ has tree-like witnesses \cite{gnficalp} since
we must show that this unravelling preserves $\agnf$ sentences
even when interpreting each distinguished $R^\trans$ as the transitive closure of $R$,
rather than just preserving $\gnf$ sentences without these special interpretations.
However, adapting their proof to our setting is straightforward.

\newcommand{\transdecomp}{
  Every sentence $\phi$ in $\agnf$ has  transitive-closure friendly $k$-tree-like witnesses,
    where $k \leq \mysize{\phi}$.
  }
\begin{proposition}
  \label{prop:transdecomp}
  \transdecomp
\end{proposition}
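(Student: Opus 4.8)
The plan is to adapt the standard tree-like model property for $\gnf$ \cite{gnficalp} to our base-guarded setting, while making sure that the interpretation of each $R^\trans_i$ as the transitive closure of $R_i$ is respected by the construction. Recall that $\gnf$ sentences, and hence $\agnf$ sentences, are invariant under guarded-negation bisimulation; because every guard occurring in an $\agnf$ formula is a $\sigmab$-guard, the bisimulation only has to relate tuples guarded by a single $\sigmab$-atom (or singletons) and only has to preserve their atomic type over the \emph{full} signature $\sigmab \cup \sigmad$. I would start from an arbitrary (finite or infinite) set of facts $\instance \supseteq \instance_0$ obtained by adding $\sigmab$-facts, in which each $R^\trans_i$ is the transitive closure of $R_i$ and which satisfies $\phi$, and then unravel $\instance$ into a tree-like witness $\instance^*$.

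First I would perform the $k$-guarded unravelling of $\instance$, with $k$ equal to the width of $\phi$, which is at most $\mysize{\phi}$. The root bag is $\instance_0$ itself, which is what makes the decomposition $\instance_0$-rooted and which is exempt from the width bound; every other bag is a copy of a set of at most $k$ elements of $\instance$ that is either base-guarded or is the image of the variables of a CQ-shaped subformula of $\phi$, so that each positive conjunctive requirement can be witnessed inside a single bag. Arranging these bags in a tree by the usual back-and-forth procedure yields a map $h\colon \instance^* \to \instance$ that is the identity on $\instance_0$ and is a homomorphism for all $\sigmab$-relations. This step adds only $\sigmab$-facts outside the root, and the branching can be kept countable, since at each bag it suffices to realize one child per relevant base-guarded neighborhood and per CQ-shaped witness, of which there are only countably many.

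The main obstacle, and the only place where the transitive-closure semantics really enters, is to define the distinguished relations of $\instance^*$ so that they coincide with the transitive closures of the base relations and, simultaneously, match via $h$ the distinguished relations of $\instance$ on every base-guarded tuple (which is all the bisimulation requires, since no $R^\trans_i$ ever guards a negation). One direction is automatic: as $h$ is an $R_i$-homomorphism, every $R_i$-path in $\instance^*$ maps to an $R_i$-path in $\instance$, so the transitive closure of $R_i$ in $\instance^*$ is carried into $R^\trans_i$ in $\instance$, and no spurious distinguished edge can falsify a guarded-negated subformula. The delicate direction is to guarantee that whenever two elements sharing a bag are related by $R^\trans_i$ in $\instance$ (including the explicit $\sigmad$-facts at the root, which hold as genuine transitive-closure edges in $\instance$), their copies are connected by an actual $R_i$-path in $\instance^*$. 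I would achieve this by hanging below each bag fresh \emph{path gadgets}: for every required edge I attach a chain of new elements realizing a fresh $R_i$-path between the two copies, pulled back from a witnessing finite path in $\instance$. These gadgets use only $\sigmab$-facts, each of their bags holds at most two elements, and since they are glued on as disjoint subtrees they create no $R_i$-path between elements not already $R^\trans_i$-related in $\instance$; hence they realize exactly the needed transitive-closure edges without ever introducing an explicit $\sigmad$-fact outside $\instance_0$.

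With the distinguished relations fixed in this way, $\instance^*$ interprets each $R^\trans_i$ as the transitive closure of $R_i$, extends $\instance_0$ by $\sigmab$-facts only, and is guarded-negation bisimilar to $\instance$, so invariance of $\agnf$ gives $\instance^* \models \phi$. As $\instance^*$ carries an $\instance_0$-rooted tree decomposition of width $k-1 \le \mysize{\phi}-1$ with countable branching, this is the required transitive-closure friendly $k$-tree-like witness. The point that needs careful verification is precisely that the gadgets neither perturb the bisimulation on the base-guarded tuples already present nor inflate any bag beyond $k$ elements; both hold because each gadget is a bounded-width subtree attached strictly below an existing bag. This is also exactly why the statement does not follow from the plain $\gnf$ tree-like model property: the unravelling must re-realize the transitive-closure edges by base paths rather than retaining them as explicit distinguished facts.
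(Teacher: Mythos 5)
Your construction breaks at the path-gadget step, and the failure is not a matter of missing verification but of the claim being false. You attach below a bag a fresh ``bare'' $R_i$-chain between two copies $u,v$ whenever their images are $R^\trans_i$-related in $\instance$, and then assert that the resulting $\instance^*$ is still guarded-negation bisimilar to $\instance$, so that $\agnf$-invariance yields $\instance^* \models \phi$. But the interior elements of such a gadget are genuinely new elements that $\agnf$ can inspect: every gadget edge $R_i(p_j,p_{j+1})$ is itself a base atom, hence a base guard, so base-guarded negation can be evaluated at gadget pairs, and the gadget interiors have strictly poorer local structure than any element of $\instance$ (they lack the side neighbours, unary facts, and return paths that the witnessing elements of $\instance$ may have). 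Concretely, take $\sigmab = \set{R,S}$, $\sigmad = \set{R^\trans}$, let $\phi$ be the $\agnf$ sentence $\neg \exists x y \, \big( R(x,y) \wedge \neg \exists z \, S(x,z) \big)$, let $\instance_0 = \set{R^\trans(a,b)}$, and let $\instance$ consist of $R(a,c)$, $R(c,b)$, $S(a,d)$, $S(c,d)$. Then $\instance \models \phi$, and $R^\trans(a,b)$ is witnessed by the path $a,c,b$. Your gadget realizes this edge by a fresh chain $a,p,b$ with $R(a,p)$, $R(p,b)$; the witness $S(c,d)$ cannot be pulled back into the gadget because $d$ lies outside the path, so $p$ has an outgoing $R$-edge but no outgoing $S$-edge, and $\instance^* \not\models \phi$. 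The root cause is that realizing transitive-closure edges requires the witnessing paths to consist of full-fledged unravelled copies, each carrying its entire neighbourhood beneath it, not bolt-on chains glued as disjoint subtrees. (A secondary error: gadget bags of size two violate connectivity of the tree decomposition, since the far endpoint $v$ must appear in every bag along the gadget, forcing size-three bags and hence $k \geq 3$.)

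The paper avoids gadgets entirely. Its $\gn^k$-unravelling takes as bags \emph{all} subsets of $\elems{\instance}$ of size at most $k$ (with $3 \leq k \leq \mysize{\phi}$), so for any two elements $a,b$ sharing a bag and any witnessing $R$-path $c_0,\dots,c_n$ in $\instance$, the bags $\set{c_j,c_{j+1},b}$ already occur below that bag and supply an $R$-path between the correct copies of $a$ and $b$ --- and every element on that path is an ordinary unravelled copy with complete local structure. Preservation of the transitive-closure semantics is then proved not by construction but through the bisimulation game: ``there is an $R$-path of length exactly $n$'' is expressible in $\agnf$ over $\sigmab$ using only three variables (the formulas $\psi_n$ in the proof of Proposition~\ref{prop:bisim-game}), and an induction whose primary measure is the number of $\sigmad$-atoms lets the game transfer each $R^\trans$-fact by transferring some $\psi_n$; the same device handles the explicit $\sigmad$-facts of $\instance_0$ at the root. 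Finally, note that the paper needs a preliminary L\"owenheim--Skolem step (Theorem~\ref{thm:countable}, via $\kw{LFP}$) to pass to a countable witness before unravelling; your closing claim that countable branching ``can be kept'' is unjustified when $\instance$ is uncountable.
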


Hence,
it suffices to test satisfiability for $\agnf$ restricted
to sets of facts with tree decompositions
of width $\mysize{\phi}-1$.
It is well known that sets of facts of bounded tree-width
can be encoded as trees over a finite alphabet.
This makes the satisfiability problem amenable to tree automata
techniques,
since we can design a tree automaton that runs on
representations of these tree decompositions
and checks whether some sentence holds in
the corresponding set of facts.

\begin{theorem}\label{thm:automata}
Let $\phi$ be a sentence in $\agnf$ over signature $\sigma$,
and let $\instance_0$ be a finite set of $\sigma$-facts.
We can construct in $\twoexp$
a $\ptwowayaltinf$ $\calA_{\phi,\instance_0}$
such that\\[.3em]
\null\hfill$ \text{$\instance_0 \wedge \phi$ is satisfiable
\quad iff \quad
$L(\calA_{\phi,\instance_0}) \neq \emptyset$}$\hfill\null\\[.3em]
when each $R_i^\trans \in \sigmad$ is interpreted as the transitive closure
of $R_i \in \sigmab$.
The number of states and priorities\footnote{
With some additional work, it is possible to construct
an automaton with only priorities $\set{1,2}$ (also known as a B\"uchi automaton),
but this optimization is not important for our results and is omitted.}
  of $\calA_{\phi,\instance_0}$ is
at most exponential in $\mysize{\phi} \cdot \mysize{\instance_0}$.
\end{theorem}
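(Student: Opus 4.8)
The plan is to reduce satisfiability of $\instance_0 \wedge \phi$ (with each $R_i^\trans$ read as the transitive closure of $R_i$) to emptiness of a tree automaton, exploiting the tree-like witness property. By Proposition~\ref{prop:transdecomp}, whenever $\instance_0 \wedge \phi$ has such a model it has one with an $\instance_0$-rooted tree decomposition of width at most $\mysize{\phi}-1$ and countable branching. I would first fix an encoding of these decompositions as infinite full binary trees over a finite tree signature $\autsig$: each node carries a bag, whose at-most-$k$ elements (with $k \le \mysize{\phi}$) are named from a fixed pool of $k$ local names, together with the base facts holding among them and an ``interface'' recording which local names are shared with the parent bag; countable branching is simulated by a first-child/next-sibling binary encoding, and the root node is required to carry exactly $\instance_0$ (including its $\sigmad$-facts). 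The size of $\autsig$ is then exponential in $k \cdot \arity{\sigma} \le \mysize{\phi}$ and in $\mysize{\instance_0}$.

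Next I would put $\phi$ into the normal form of Proposition~\ref{prop:nf}, obtaining an equivalent $\varphi'$ of width and CQ-rank at most $\mysize{\phi}$ that is built from UCQ-shaped formulas by nesting base-guarded negation. The automaton $\calA_{\phi,\instance_0}$ is then assembled inductively over this syntax tree. For a positive UCQ-shaped subformula I would build a $\ptwowayaltinf$ that guesses a homomorphism of each CQ into the encoded structure: atoms and equalities over base relations are checked locally within a bag (using the shared-name interfaces to carry a matched element between adjacent bags), while a transitive-closure atom $R_i^\trans(x,y)$ is checked by a reachability search that walks along $R_i$-facts through the tree. Since a witnessing $R_i$-path can wander arbitrarily far up and down, this is exactly where two-way motion is needed; treating $R_i^\trans$ as a least fixpoint, I would give the ``searching'' states an odd priority so that the parity condition rejects infinite non-terminating searches and accepts precisely when a finite path is found.

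For a base-guarded negation $\alpha \wedge \neg\psi$ I would use the inductively constructed automaton for $\psi$ to obtain one recognizing $\neg\psi$, and then relaunch it at the local node carrying the guarded tuple. Concretely, I would convert the two-way alternating automaton for $\psi$ into an equivalent $\ponewayndinf$ via Theorem~\ref{thm:2way-to-nd}, complement it, and apply the localization of Theorem~\ref{thm:localization} so that the resulting two-way automaton can be started from an arbitrary node $v$ whose bag contains the free variables of $\psi$. Base-guardedness is what makes this sound: the free variables of $\psi$ are covered by the base guard $\alpha$, so they lie in a single bag and can be marked locally; in particular, because distinguished relations are never base-guards, no transitive-closure atom is ever used to guard a negation, and the endpoints of a negated $R_i^\trans$-atom are always confined to one bag. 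The final automaton is the conjunction of the automaton for $\varphi'$ launched at the root with an automaton checking that the encoding is a well-formed decomposition and that every $R_i^\trans$-fact of $\instance_0$ is realized by an actual $R_i$-path; emptiness of $\calA_{\phi,\instance_0}$ then holds iff a tree-like witness exists, which by Proposition~\ref{prop:transdecomp} is equivalent to satisfiability of $\instance_0 \wedge \phi$.

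The main obstacle is the treatment of transitive closure, and with it the size bound. Unlike a base atom, an $R_i^\trans$-atom is a genuinely long-range constraint, so the automaton must reason about reachability across the whole tree: positively by the fixpoint/parity search above, and negatively (under a base guard) by the dual requirement that no such path exists. The second difficulty is keeping $\calA_{\phi,\instance_0}$ single-exponential despite the nested negations: naively converting from two-way to one-way and complementing at every level of negation would give an iterated exponential. I would avoid this by building a single automaton whose states combine a bag type with a bounded record of which subformulas are being verified at the current position---since the width and CQ-rank are at most $\mysize{\phi}$ and alternating automata dualize for free, the negations are absorbed by alternation and only one powerset/determinization step is paid. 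This yields a state and priority count exponential in $\mysize{\phi} \cdot \mysize{\instance_0}$ and an overall construction running in $\twoexp$.
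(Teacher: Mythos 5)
Your overall architecture is the same as the paper's: tree-like witnesses via Proposition~\ref{prop:transdecomp}, encoding of bounded-width decompositions as infinite binary trees over a finite signature, conversion to normal form via Proposition~\ref{prop:nf}, an inductive automaton construction that checks positive $R^\trans$-atoms by a two-way parity search and handles base-guarded negation by localization (Theorem~\ref{thm:localization}) plus dualization, and finally a consistency automaton ensuring the root codes $\instance_0$ and that its $R^\trans$-facts are witnessed by actual $R$-paths. Up to that level of description, you and the paper agree.

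The genuine gap is in the size analysis, which is the crux of getting $\twoexp$ rather than a non-elementary bound. As you describe the negation case, you take the \emph{inductively constructed} $\ptwowayaltinf$ for $\psi$ and feed it through Theorem~\ref{thm:2way-to-nd}; that conversion is exponential in the number of states of its input, so applied to already-exponential subformula automata it compounds level by level. Your proposed remedy---a single automaton whose states pair bag types with subformula records, so that ``negations are absorbed by alternation and only one powerset step is paid''---does not work as stated, because the powerset step is not there to handle negation at all: it is there to enable \emph{projection}, i.e., Eve's on-the-fly guessing of where the existentially quantified variables of a CQ are realized (these witnesses can be spread across far-apart bags). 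Such guessing is sound only for one-way \emph{nondeterministic} automata, whose single run carries a globally consistent annotation; under alternation, Eve could guess inconsistent valuations in the different branches spawned by Adam---the paper makes exactly this remark after Lemma~\ref{lemma:autcq}. Since every level of UCQ nesting has its own existential variables, a conversion to $\ponewayndinf$ is needed at every negation depth and cannot be deferred to one global step. The paper's actual mechanism, which is the idea missing from your proposal, is to write $\psi = \delta[Y_1 := \alpha_1 \wedge \neg\psi_1, \ldots, Y_s := \alpha_s \wedge \neg\psi_s]$ and build the UCQ automaton for the skeleton $\delta$ \emph{over the extended signature} $\sigma \cup \set{Y_1,\dots,Y_s}$ from scratch, so the exponential conversion is only ever applied to an automaton of size polynomial in the number of names times the CQ-rank, never to the subformula automata; the inductively obtained localized automata for the $\alpha_i \wedge \neg\psi_i$ are then plugged in by a substitution lemma (Lemma~\ref{lemma:localsubstitution}) whose cost is merely the \emph{sum} of the component state counts. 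This is what keeps the state count at $f(m_\psi)\cdot 2^{f(K r_\psi)}$ at every depth. A further minor slip: you complement the $\ponewayndinf$ before localizing, but complementation is free only for alternating automata, so the correct order is to localize the $\ponewayndinf$ into a $\ptwowayaltinf$ first and then dualize.
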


We present the details of this construction in the next section.
It can be viewed
as an extension of work by \citeA{CalvaneseGV05},
and incorporates ideas from automata for guarded logics
by, e.g., \citeA{GradelW99}.
It can also be viewed as an optimization of the construction  by
\citeA{lics16-gnfpup},
which we discuss in Section~\ref{sec:gnfpup}.

Theorem~\ref{thm:automata} implies that
the language of the automaton $\cA_{\Sigma \wedge \neg Q,\instance_0}$ is empty iff $\owqatc(\instance_0,\Sigma,Q)$ holds.
Because 2-way tree automata emptiness is decidable in time
polynomial in the overall size and exponential in the number of states and priorities
\cite{Vardi98},
this yields the $\twoexp$ bound
for Theorem~\ref{thm:decidtransautomata}.

\subsection{Automata for $\agnf$ (Proof of Theorem~\ref{thm:automata})}\label{sec:aut-construction}

Fix the signature $\sigma = \sigmab \sqcup \sigmad$.
As described above in Proposition~\ref{prop:transdecomp},
in order to test satisfiability of sentences in $\agnf$ over $\sigma$,
it suffices to consider only
sets of facts with tree decompositions of
some bounded width.
We first describe how to encode such sets of facts
using trees over a finite tree signature,
and then describe how to construct the automaton
to prove Theorem~\ref{thm:automata}.

\subsubsection{Tree Encodings}
Consider a set of $\sigma$-facts $\instance$,
with an $\instance_0$-rooted tree decomposition of width $k-1$
with countable branching,
specified by a tree $(T,\child)$ and a function $\lambda$.
For technical reasons in the automaton construction,
it is more convenient to use binary trees,
so we want to convert to an alternative tree decomposition of $\instance$
based on $(T',\child')$ with a labelling $\lambda'$
such that $(T',\child')$ is a \emph{full binary tree}, i.e., every node has exactly two children.
This can be done by duplicating and rearranging parts of the tree.
First, for each node $u$, we add infinitely many new children to~$u$, each child
being the root of an 
infinite full binary tree where each node has the same label as~$u$ in~$T$.
This ensures that each node of~$T$ now has infinitely many (but still countably many)
children.
Second, we convert~$T$
into a full binary tree:
starting from the root, each node $u$ with children $(v_i)_{i \in \mathbb{N}}$
is replaced by
the subtree consisting of $v_1,v_2,\dots$ and new nodes $u_1, u_2, \dots$
such that
the label at each $u_i$ is the same as the label at $u$,
the left child of $u_i$ is $v_i$ and the right child of $u_i$ is $u_{i+1}$.
In other words, instead of having a node $u$ with infinitely many children $(v_i)_{i \in \mathbb{N}}$,
we create an infinite spine of nodes with the same label as $u$,
and attach each $v_i$ to a different copy $u_i$ of $u$ on this spine.

Now we can start encoding this infinite full binary tree decomposition.
To achieve this, we specify a finite set $\paramsk$ of \emph{names} that
can be used to describe the possibly infinite number of elements in $\instance$;
we will fix the size of this set momentarily.
We include $\elems{\instance_0}$ in $\paramsk$;
these are precisely the names used to describe elements from the initial set of facts $\instance_0$.
Then we map the elements in $\elems{\instance} \setminus \elems{\instance_0}$ to a name in $\paramsk$
such that the following condition is satisfied:
if $u$ and $v$ are neighboring nodes of $T$,
then distinct elements of
$\elems{\lambda(u)} \cup \elems{\lambda(v)}$
are mapped to distinct names in $u$ and $v$.
Note that every bag either has names $\elems{\instance_0}$
or has at most $k$ names.
Hence,
letting $l \colonequals \mysize{\elems{\instance_0}}$, we know that
$2k+l$ possible names suffice
to be able to choose different names
for distinct elements in neighboring nodes,
in a way that does not conflict with the names of elements in~$\instance_0$.
So we choose $U$ to be of size $2k + l$.
This assignment of names is encoded
using unary relations $D_a$ for each $a \in \paramsk$,
so that
$D_a(v)$ holds iff $a$ is a name that was assigned to an element in $v$.
Facts in $\instance$ are encoded
using unary relations $R_{\vec{a}}$ for each
$R \in \sigma$ of arity $n$ and each $n$-tuple $\vec{a} \in \paramsk^n$,
so that
$R_{\vec{a}}(v)$ holds iff
$R$ holds of the tuple of elements named by $\vec{a}$ at $v$.

\begin{example}\label{ex:encoding}
Figure~\ref{fig:tree} shows an encoding of the tree decomposition of $\instance$
  from Example~\ref{ex:tree} (omitting the $D_a$-relations).
\end{example}

The encodings will sometimes need to specify a valuation
for free variables in a formula,
so we also introduce relations for this purpose.
Recall that we can assume that formulas of width $k$
use some fixed set of $k$ variable names.
For each such variable $z$ and each $c \in \paramsk$,
we introduce a relation $V_{c / z}$;
if $V_{c / z}(v)$ holds, then this indicates that the valuation for $z$
is the element named by $c$ at~$v$.
We refer to these relations that give a valuation for the free variables
as \emph{free variable markers}.

As mentioned in Section~\ref{sec:automata-tools}, we also assume that there are unary relations
$\dleft$ and $\dright$ to indicate whether a node is a left or right child of its parent.

This concludes the definition of our encoding scheme.
We let $\sigcode{\sigma}{k,l}$ denote the \emph{encoding signature} containing the relations described above,
and we use the term \emph{$\sigcode{\sigma}{k,l}$-tree}
to refer to an infinite full binary tree over the tree signature $\sigcode{\sigma}{k,l}$.

\subsubsection{Tree Decodings}
If a $\sigcode{\sigma}{k,l}$-tree satisfies certain consistency properties,
then it can be decoded into a set of $\sigma$-facts
that extends $\instance_0$.

Formally, let $\bagnames{v} := \set{ a \in \paramsk : \text{$D_a (v)$ holds}}$ be the
set of names used for elements in bag~$v$ in some tree;
we will abuse notation and write $\vec{a} \subseteq \bagnames{v}$ to
mean that $\vec{a}$ is a tuple over names from $\bagnames{v}$.
Then a \emph{consistent tree} $\tree$ with respect to $\sigcode{\sigma}{k,l}$ and $\instance_0$
is a $\sigcode{\sigma}{k,l}$-tree such that:
\begin{enumerate}[(i)]
\item coded facts respect the domain:
for all $R_{\vec{a}} \in \sigcode{\sigma}{k,l}$ and for all nodes $v$, if $R_{\vec{a}} (v)$ holds then $\vec{a} \subseteq \bagnames{v}$;
\item there is a bijection between the elements and facts represented at the root node and
the elements and facts in $\instance_0$:
\begin{inparaenum}[]
\item $\bagnames{\epsilon} = \elems{\instance_0}$;
\item for each fact $R(c_1\dots c_n) \in \instance_0$,
the fact $R_{c_1 \dots c_n}(\epsilon)$ holds in the tree;
\item for every $R_{c_1 \dots c_n}(\epsilon)$,
the fact $R(z_1 \dots z_n)$ is in $\instance_0$;
\end{inparaenum}
\item if there is a free variable marker for $z$, then it is unique:
for each variable $z$, there is at most one node $v$ and one name $c \in \paramsk$
such that $V_{c/z}(v)$ holds.
\end{enumerate}

Given a consistent tree $\tree$,
we say nodes $u$ and $v$ are \emph{$a$-connected}
if
there is a sequence of nodes $u = w_0, w_1, \dots, w_j = v$
such that $w_{i+1}$ is a neighbor (child or parent) of $w_i$,
and $a \in \bagnames{w_{i}}$ for all $i \in \set{0,\dots,j}$.
We write $[v,a]$ for the equivalence class of $a$-connected nodes of $v$.
For $\vec{a} = a_1 \dots a_n$,
we often abuse notation and write $[v,\vec{a}]$ for the tuple 
$[v,a_1],\dots,[v,a_n]$.

The \emph{decoding} of $\tree$ is the
set of $\sigma$-facts $\mydecode{\tree}$
using elements $\set{ [v,a] : \text{$v \in \tree$, $a \in \bagnames{v}$}}$,
where we identify $c \in \elems{\instance_0}$ with $[\epsilon,c]$.
For each relation $R \in \sigma$ of arity $j$,
we have $R([v_1,a_1],\dots,[v_j,a_j]) \in \mydecode{\tree}$ iff
there is some $w \in \tree$ such that
$R_{a_1,\dots,a_j}(w)$ holds and $[w,a_i] = [v_i,a_i]$ for all~$1 \leq i \leq j$.
Note that a single fact $R([v_1,a_1],\dots,[v_j,a_j])$ in $\decode(\tree)$ might be coded in multiple nodes in $\tree$,
but there is no requirement that this fact is coded in all nodes that
represent $[v_1,a_1], \dots, [v_j,a_j]$.

\subsubsection{Automaton Construction}
The goal is to build an automaton for a sentence $\varphi$ in $\agnf$
as needed for Theorem~\ref{thm:automata}.
An \emph{automaton for a formula $\psi(x_1,\dots,x_n)$} in $\agnf$ is an automaton $\cA_\psi$ such that
for all $\sigcode{\sigma}{k,l}$-trees $\tree$
with free variable markers $V_{a_1/x_1}(v_1), \dots, V_{a_n/x_n}(v_n)$,
the automaton $\cA_\psi$ accepts $\tree$ iff
$\mydecode{\tree}$ satisfies
$\psi([v_1,a_1],\dots,[v_n,a_n])$
when each $R^\trans \in \sigmad$ is interpreted as the transitive closure of $R \in \sigmab$.
As a warm-up and an example,
we start by constructing a $\ptwowayaltinf$ $\cB_{R^\trans(x_1,x_2)}$
for an atomic formula $R^\trans(x_1,x_2)$ using $R^\trans \in \sigmad$.

\begin{example}\label{ex:aut}
The $\ptwowayaltinf$ $\cB_{R^\trans(x_1,x_2)}$
runs on $\sigcode{\sigma}{k,l}$-trees
with free variable markers for $x_1$ and $x_2$.

We define the state set to be
$(U \times \set{\choosenext,\findend}) \cup \set{\findstart,\accsink,\rejsink}$.
The idea is that Eve must navigate to the node $v_1$
carrying the free variable marker for~$x_1$,
find a series of $R$-facts,
and then show that the last element on this guessed $R$-path
corresponds to the element with the free variable marker for $x_2$.
The initial state is $\findstart$, when she is finding the marker for $x_1$.
The states of the form $(a,\choosenext)$
are used to track that $a$ is the name of the furthest element on the $R$-path that has been found so far.
The state
$(a,\findend)$
is used when she is trying to show that the element with name $a$
is identified by the free variable marker for $x_2$.
If the automaton is in a state of the form $(a,\findend)$ or $(a,\choosenext)$
and moves to a node where $a$ is not represented ($D_a$ is not in the label),
then she moves to a sink state $\rejsink$.
The state $\accsink$ is a sink state used when she successfully finds an $R$-path.

This is implemented by the following transition function, which describes how the automaton should behave in a state
from $(U \times \set{\choosenext,\findend}) \cup \set{\findstart,\accsink,\rejsink}$ and in a node with label $\treelab$:
\begin{align*}
\delta(\findstart,\treelab) &:=
\begin{cases}
(\dstay, (a,\choosenext)) \quad \text{if $V_{a/x_1} \in \treelab$ } \\
(\dup,\findstart) \vee (\dleft,\findstart) \vee (\dright,\findstart) \quad \text{otherwise}
\end{cases}
\\
\delta((a,\choosenext),\treelab) &:=
\begin{cases}
(\dstay,\rejsink) \quad \text{if $D_a \notin \treelab$}
\\
\begin{aligned}
&\textstyle\bigvee \set{ (\dstay, (a',\choosenext)) \vee  (\dstay,(a',\findend)) : \text{$R_{a,a'} \in \treelab$}} \\
&\quad \vee (\dup,(a,\choosenext)) \vee (\dleft,(a,\choosenext)) \vee (\dright,(a,\choosenext))
\end{aligned} \quad \text{otherwise}
\end{cases}
\\
\delta((a,\findend),\treelab) &:=
\begin{cases}
(\dstay,\rejsink)  \quad \text{if $D_a \notin \treelab$} \\
(\dstay,\accsink) \quad \text{if $D_a \in \treelab$ and $V_{a/x_2} \in \treelab$ } \\
(\dup,(a,\findend)) \vee (\dleft,(a,\findend)) \vee (\dright,(a,\findend)) \quad \text{otherwise}
\end{cases}
\\
\delta(\accsink,\treelab) &:= (\dstay,\accsink) \qquad
\delta(\rejsink,\treelab) := (\dstay,\rejsink)
\end{align*}
Only two priorities are needed.
The state $\accsink$ is assigned priority 0;
all of the other states are assigned priority~1.
This prevents Eve from cheating and forever delaying
her choice of the elements in the $R$-path.
\end{example}

As another building block, the next lemma describes how to construct an automaton for a CQ.
This lemma is stated for a CQ over signature $\sigma'$ rather than just $\sigma$;
the reason for this will become clear when we use this in the inductive case in Lemma~\ref{lemma:aut}.
As usual, we assume that $\sigma'$ is partitioned into base relations $\sigmapb$ and distinguished relations $\sigmapd$.

\begin{lemma}\label{lemma:autcq}
Given a CQ $\chi(x_1,\dots,x_j) = \exists \vec{y} ( \eta(x_1,\dots,x_j,\vec{y}) )$
of width at most $k$
over signature $\sigma'$,
and given a natural number $l$,
we can construct a
  $\ponewayndinf$ $\calN_{\chi}$ for the formula $\chi$ (over
  $\sigcode{\sigma'}{k,l}$-trees).

Furthermore, there is a polynomial function $g$ independent of~${\chi}$
such that the number of states of~$\calN_{\chi}$
is at most $2^{g(K r_{\chi})}$
and the number of priorities is at most $g(K r_{\chi})$,
where
$r_{\chi}$ is the CQ-rank of~${\chi}$ (i.e., the number of conjuncts in $\eta$), and $K = 2k+l$.
The overall size of the automaton
and the running time of the construction
is at most exponential in $\mysize{\sigma'} \cdot 2^{g(K r_{\chi})}$.
\end{lemma}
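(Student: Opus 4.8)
The plan is to reduce the construction to the building blocks already available: the transitive-closure gadget of Example~\ref{ex:aut} and the $2$-way-to-$1$-way conversion of Theorem~\ref{thm:2way-to-nd}. First I would build, for each conjunct of $\chi$, a \emph{small} $\ptwowayaltinf$ that checks that single atom against the decoded structure; then intersect these automata over a common set of free-variable markers; then convert the result to a nondeterministic automaton; and finally project away the markers for the existentially quantified variables $\vec y$. Routing the entire exponential blow-up through the single application of Theorem~\ref{thm:2way-to-nd} is what yields the claimed bounds.

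Concretely, write $\eta = \bigwedge_{i=1}^{r_\chi} A_i$. For each atom $A_i$ I construct a $\ptwowayaltinf$ $\cB_{A_i}$ running on $\sigcode{\sigma'}{k,l}$-trees that carry free-variable markers for \emph{all} of $x_1,\dots,x_j,\vec y$, accepting iff $A_i$ holds of the designated elements in $\decode(\tree)$. If $A_i$ is a distinguished atom $R^\trans(z,z')$, then $\cB_{A_i}$ is essentially the automaton of Example~\ref{ex:aut}, which navigates from the marker of $z$ along a guessed $R$-path to the marker of $z'$ and uses priorities $\set{0,1}$ to forbid Eve from postponing her path choices forever. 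If $A_i$ is a base atom $R(\vec z)$ or an equality $z=z'$, then $\cB_{A_i}$ is a small automaton that, starting from the marker nodes, roams within the relevant $a$-connected regions (recall elements of $\decode(\tree)$ are $a$-connected classes $[v,a]$, so a fact may be coded at a node different from where its markers sit) in order to locate a common witnessing node where the coded fact $R_{\vec a}$ appears, or to verify that the two marked elements coincide. Each $\cB_{A_i}$ has $O(K)$ states and a constant number of priorities.

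I then take the intersection $\cA_\eta \defeq \bigwedge_i \cB_{A_i}$, all sub-automata reading the \emph{same} markers. Because markers are unique (the consistency condition on trees), every occurrence of a variable $z$ refers to the same element $[v_z,a_z]$, so the assignment witnessed by $\cA_\eta$ is automatically globally consistent and no cross-atom synchronisation need be stored. Intersection of alternating automata costs only the sum of the state sets and the union of the priority sets, so $\cA_\eta$ still has $\text{poly}(K r_\chi)$ states and $O(1)$ priorities. Applying Theorem~\ref{thm:2way-to-nd} turns $\cA_\eta$ into an equivalent $\ponewayndinf$ with $2^{g(K r_\chi)}$ states and $\text{poly}(K r_\chi)$ priorities. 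Finally, to account for $\exists \vec y$, I obtain $\calN_\chi$ by projecting away the markers for $y \in \vec y$: the nondeterministic automaton simply guesses their (unique) placement on the fly, remembering in its state which $y$-markers have already been placed, which multiplies the state count by at most $2^{\card{\vec y}} \le 2^k$ and leaves the priorities untouched. An accepting run then corresponds exactly to a choice of witnesses for $\vec y$ together with a homomorphism of $\eta$ into $\decode(\tree)$ (with each $R^\trans$ read as a transitive closure), i.e., to a match of $\chi$.

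The main obstacle is the bookkeeping of the state and priority budgets through these transformations: the whole point of decomposing atom-by-atom and keeping each $\cB_{A_i}$ of size $\text{poly}(K)$ with few priorities is that the \emph{only} exponential blow-up occurs in the single application of Theorem~\ref{thm:2way-to-nd}, giving the single-exponential state bound $2^{g(Kr_\chi)}$ and the polynomial priority bound; storing an explicit assignment of all $k$ variables at any intermediate $2$-way stage would already cost $K^k$ states and break the bound after conversion. The secondary delicate point is checking that the local base-atom and $R^\trans$ automata faithfully reflect the equivalence-class decoding semantics—in particular that the transitive-closure gadget's parity condition rules out spurious infinite ``paths'', so that acceptance matches genuine membership in the transitive closure of $R$.
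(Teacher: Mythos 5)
Your proposal follows essentially the same route as the paper's proof: per-atom two-way alternating automata (with the gadget of Example~\ref{ex:aut} for $R^\trans$-atoms, and marker-chasing automata for base atoms and equalities), intersection of these $\ptwowayaltinf$, a single application of Theorem~\ref{thm:2way-to-nd}, and then projection of the $\vec{y}$-markers on the nondeterministic side, with the exponential blow-up deliberately confined to that one conversion. The only difference is bookkeeping for validity of the guessed $\vec{y}$-markers: the paper intersects an explicit uniqueness-checking automaton \emph{before} the conversion, so the final projection is a pure projection with no state increase, whereas you enforce at-most-one placement by tracking (and splitting between subtrees) the pending markers in the projected automaton's states at a cost of $2^{\card{\vec{y}}}$; both mechanisms are sound and stay within the claimed bounds.
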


\begin{proof}
Each conjunct of $\eta$ is an atomic formula.
For each such atomic formula $\psi$,
we will first describe a
$\ptwowayaltinf$ $\cB_\psi$ that runs
on trees with the free variable markers for $\vec{x}$ and $\vec{y}$
written on the tree.
We describe these informally (in terms of choices by Adam and Eve),
but they could all be translated into formal automaton definitions
in the style of Example~\ref{ex:aut}.

  \begin{itemize}
    \item \emph{Base atom.}
Suppose $\psi$ is a $\sigmapb$-atom $A(\vec{z})$,
where $\vec{z}$ is a tuple of variables from $\vec{x}$ and $\vec{y}$.
Eve tries to
navigate to a node $v$ whose label includes fact $A_{\vec{b}}$.
If she is able to do this,
Adam can then challenge Eve to show that $[v,\vec{b}]$ is the valuation for $\vec{x}$.
Say he challenges her on $b_i \in \vec{b}$.
Then Eve must navigate from $v$ to
the node carrying the marker $b_i/x_i$.
However, she must do this
by passing through a series of nodes
that also contain $b_i$.
If she is able to do this,
$\cB_\psi$ enters a sink state with priority 0, so she wins.
The other states are assigned priority~1
to force Eve to actually witness $A(\vec{z})$.
The number of states of $\cB_\psi$ is linear in $K$,
since the automaton must remember the name $b_i$ that
Adam is challenging.
Only two priorities are needed.
\item \emph{Equality.}
Suppose $\psi$ is an equality $x_1 = x_2$.
Eve navigates to the node $v$ with a free variable marker $a/x_1$.
She is then required to navigate from $v$ to
the node carrying the marker for $x_2$.
She must do so by passing through a series of nodes
that also contain $a$.
If she is able to reach the marker $a/x_2$ in this way,
then $x_1$ and $x_2$ are marking the same element
in the underlying set of facts,
so $\cB_\psi$ moves to a sink state with priority 0 and she wins.
The other states have priority 1, so if Eve is not able to do this,
then Adam wins.
The state set is of size linear in $K$, in order to remember the name~$a$.
There are two priorities.
\item \emph{Distinguished atom.}
Suppose $\psi$ is a $\sigmapd$-atom $R^\trans(x_1,x_2)$.
Then we use the construction in Example~\ref{ex:aut}.
The number of states in $\cB_\psi$ is again linear in $K$,
since it must remember the name $a$
that is currently being processed along this path.
There are only two priorities.
  \end{itemize}
This means that we can construct a $\ptwowayaltinf$ for each conjunct in $\eta$.
We can also easily construct a $\ptwowayaltinf$ that checks
that there is a unique free variable marker for each variable in $\vec{y}$;
the number of states of this automaton is linear in $K$,
since we must remember which variable and name we are checking.
By using closure under intersection of $\ptwowayaltinf$ (see Proposition~\ref{prop:closure-union-intersection} in Appendix~\ref{app:automata}),
we can construct an automaton $\cB_{\eta}$ that checks that
$\eta(\vec{x},\vec{y})$ holds and that there is a unique free variable marker for each variable in $\vec{y}$.
This can be converted to a $\ponewayndinf$ $\calN_{\eta}$ using Theorem~\ref{thm:2way-to-nd},
such that the number of states is exponential in $Kr_{\eta}$,
and the number of priorities is polynomial in $Kr_{\eta}$.

Finally, the desired $\ponewayndinf$ $\calN_{\chi}$ for $\chi(\vec{x}) = \exists \vec{y} ( \eta( \vec{x},\vec{y}) )$
can be constructed from $\calN_{\eta}$ as follows:
it runs on trees with the free variable markers for $\vec{x}$,
and simulates $\calN_{\eta}$ while allowing Eve to guess the free variable marker for each variable in $\vec{y}$
(in other words, it is the projection of $\calN_{\eta}$ with respect to the free variable markers for $\vec{y}$;
see Proposition~\ref{prop:closure-projection} in Appendix~\ref{app:automata}).
The idea is that Eve tries to guess a valuation for $\vec{y}$ that satisfies $\eta$.
Note that for this guessing procedure to be correct,
it is essential that $\calN_{\eta}$ is a 1-way nondeterministic automaton,
rather than an alternating automaton.
The construction of $\calN_{\chi}$ from $\calN_{\eta}$ can be done in
polynomial time, with no increase in the number of states or priorities.

Overall, this means that there is some polynomial function $g$ independent of~${\chi}$
such that the number of states of~$\calN_{\chi}$
is at most $2^{g(K r_{\chi})}$
and the number of priorities is at most $g(K r_{\chi})$
where 
$r_{\chi}$ is the CQ-rank of~${\chi}$, and $K = 2k+l$.
It can be checked that the overall size of the automaton
and the running time of the construction
is at most exponential in $\mysize{\sigma'} \cdot 2^{g(K r_{\chi})}$.
\end{proof}

Note that in the base cases of our construction we utilized a simple
kind of $\ptwowayaltinf$ with only two priorities.
However, in applying Theorem~\ref{thm:2way-to-nd} we will increase the number of priorities,
and thus we are really utilizing the power of $\ptwowayaltinf$ in this construction.

This shows that we can construct an automaton for a CQ over $\sigma$,
and this could be used to achieve the desired $\twoexptime$ bound
for satisfiability testing of a CQ.
This $\twoexptime$ bound can be extended to a UCQ,
simply by using closure under union for $\ponewayndinf$
(see Proposition~\ref{prop:closure-union-intersection} in Appendix~\ref{app:automata}).
However, $\agnf$ allows nesting of UCQs with base-guarded negation.
The fear is that if we iterate this process for each level of nesting,
we will get an exponential blow-up each time,
which would lead to non-elementary complexity for satisfiability testing
of $\agnf$.

In order to avoid these additional exponential blow-ups,
we take advantage of the fact that the nesting of UCQs
allowed in $\agnf$ is restricted:
the free variables in a nested UCQ-shaped formula
must be base-guarded,
and hence must be represented locally
in a single node in the tree code.
Recall that a UCQ-shaped formula in $\agnf$ with negation depth greater than 0 is of the form
$\delta[Y_1 := \alpha_1 \wedge \neg \psi_1, \dots , Y_s := \alpha_s \wedge \neg \psi_s]$
where $\delta$ is a UCQ over the extended signature~$\sigma'$ obtained from
$\sigma$ by adding fresh base relations $Y_1, \dots, Y_s$,
each $\psi_i$ is a UCQ-shaped formula in $\agnf$,
and each $\alpha_i$ is a base-guard in $\sigma$ for the free variables of $\psi_i$.
We first construct an automaton for $\delta$, over the extended signature $\sigma'$.
The automaton for $\delta[Y_1 := \alpha_1 \wedge \neg \psi_1, \dots , Y_s := \alpha_s \wedge \neg \psi_s]$ can then
simulate the automaton for $\delta$
while allowing Eve to guess the valuations
for each $Y_i$ (i.e., valuations for
each base-guarded subformula $\alpha_i \wedge \neg \psi_i$).
In order to prevent Eve from cheating
and just guessing that every tuple satisfies these subformulas,
Adam is allowed to challenge her on these guesses
by launching automata for these subformulas.

The technical difficulty here is that the free variable markers for these
subformulas are not written on the tree code any more---they are being guessed on-the-fly by Eve.
In order to cope with this, the inductive process will construct an
automaton for $\psi$
that can be launched from some internal node $v$ in the tree 
to test whether or not $\psi$ holds with a local valuation $[v,\vec{a}]$ for $\vec{x}$.
The automaton will not specify a single initial state.
Instead,
there will be a designated initial state for each
polarity $p \in \set{+,-}$ and
each
possible ``local assignment''
for the free variables $\vec{x}$.
A \emph{local assignment} $\vec{a}/\vec{x}$
for $\vec{a} = a_1 \dots a_n \in \paramsk^n$ and $\vec{x} = x_1 \dots x_n$ is a mapping
such that $x_i \mapsto a_i$.
A node~$v$ in a consistent tree $\tree$ with $\vec{a} \subseteq \bagnames{v}$
and a local assignment $\vec{a}/\vec{x}$,
specifies a valuation for~$\vec{x}$.
We say it is local since the free variable markers for $\vec{x}$
would all appear locally in $v$.
Given a polarity $p \in \set{+,-}$,
we write $p \psi$ for $\psi$ if $p = +$
and $\neg \psi$ if $p = -$.

We will write $\cA_\psi$ for the automaton for $\psi$
(without specifying the initial state),
and will write $\cA_{\psi}^{p,\vec{a}/\vec{x}}$ for $\cA_{\psi}$
with the designated initial state for $p$ and $\vec{a}/\vec{x}$.
We call $\cA_\psi$ a \emph{localized automaton for $\psi$},
since when it is launched from a node $v$ starting from the designated state for $p$ and $\vec{a}/\vec{x}$,
it is testing whether $p\psi$ holds when the valuation for $\vec{x}$ is
$[v,\vec{a}]$, which is represented locally at $v$.
The point of localized automata is that they can test
whether a tuple of elements that appear together in a node satisfy some
property, but without having the markers for this tuple explicitly written on the
tree. This allows us to ``plug-in'' localized automata for base-guarded subformulas,
as described in the following lemma.

\newcommand{\localsubstitution}{
Let $\eta$ be in $\agnf$ over $\sigma \cup \set{Y_1,\dots,Y_s}$.
Let $\cA_\eta$ be a localized $\ptwowayaltinf$ for $\eta$
over $\sigcode{\sigma \cup \set{Y_1,\dots,Y_s}}{k,l}$-trees.

For $1 \leq i \leq s$, let $\chi_i := \alpha_i \wedge \neg \psi_i$ be a formula in $\agnf$ over $\sigma$
with the number of free variables in $\chi_i$ matching the arity of $Y_i$,
and let $\cA_{\chi_i}$ be a localized $\ptwowayaltinf$ for $\chi_i$
over $\sigcode{\sigma}{k,l}$-trees.

We can construct a localized $\ptwowayaltinf$ $\cA_{\psi}$
for $\psi := \eta[Y_1 := \chi_1,\dots,Y_s := \chi_s]$
over $\sigcode{\sigma}{k,l}$-trees
in linear time
such that the number of states (respectively, priorities) is
the sum of the number of states (respectively, priorities)
of $\cA_\eta, \cA_{\chi_1}, \dots, \cA_{\chi_s}$.
}
\begin{lemma}\label{lemma:localsubstitution}
  \localsubstitution
\end{lemma}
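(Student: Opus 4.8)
The plan is to run $\cA_\eta$ essentially unchanged, but to intercept every point at which it would consult a $Y_i$-fact in the tree label and instead resolve that fact by launching the localized automaton $\cA_{\chi_i}$ at the current node. It is cleanest to first fix the intended semantics. Given a $\sigcode{\sigma}{k,l}$-tree $\tree$, let $\tree_Y$ be the $\sigcode{\sigma \cup \set{Y_1,\dots,Y_s}}{k,l}$-tree obtained by adding, at each node $w$ and for each $\vec b \subseteq \bagnames{w}$, the label $Y_{i,\vec b}(w)$ exactly when $\mydecode{\tree} \models \chi_i([w,\vec b])$. Because each $\chi_i := \alpha_i \wedge \neg \psi_i$ carries a base-guard $\alpha_i$, any tuple satisfying $\chi_i$ is guarded by a base atom and hence co-occurs in some bag; a short check then shows that $\mydecode{\tree_Y}$ interprets each $Y_i$ as exactly the set of tuples satisfying $\chi_i$. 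Consequently $\mydecode{\tree} \models p\,\psi([v,\vec a])$ iff $\mydecode{\tree_Y} \models p\,\eta([v,\vec a])$ for every polarity $p$ and local assignment $\vec a/\vec x$. So it suffices to build $\cA_\psi$ so that it accepts $\tree$ from $v$ (in the state for $p$ and $\vec a/\vec x$) iff $\cA_\eta$ accepts $\tree_Y$ from $v$ (in the same state), without $\cA_\psi$ having access to the $Y$-labels of $\tree_Y$.

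For the construction I would take the state set of $\cA_\psi$ to be the disjoint union $Q_\eta \sqcup Q_{\chi_1} \sqcup \dots \sqcup Q_{\chi_s}$, with the designated initial states of $\cA_\psi$ being those of $\cA_\eta$ (the two formulas have the same free variables), and with the priority function the union of the component priority functions (relabelled to be disjoint, giving the stated sum). Every transition of each $\cA_{\chi_i}$ is kept verbatim; note these run on $\sigcode{\sigma}{k,l}$-trees and, since $\chi_i$ does not mention any $Y_j$, never re-enter $Q_\eta$. For a state $q \in Q_\eta$ at a $\sigcode{\sigma}{k,l}$-label $\beta_0$, I would let Eve guess the $Y$-part $\beta_Y$ of the label (ranging over the facts $Y_{i,\vec b}$ whose names are present at the current node, as indicated by the $D$-markers in $\beta_0$), play out $\delta_\eta(q, \beta_0 \cup \beta_Y)$, and conjoin a verification obligation: a $\dstay$-move into $\cA_{\chi_i}^{+,\vec b/\vec x_i}$ for each $Y_{i,\vec b} \in \beta_Y$, and a $\dstay$-move into $\cA_{\chi_i}^{-,\vec b/\vec x_i}$ for each present fact $Y_{i,\vec b} \notin \beta_Y$. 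This only rewrites the transitions of $\cA_\eta$ and adjoins the $\cA_{\chi_i}$ blocks unchanged, so the state and priority counts are the claimed sums and the construction is linear.

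Correctness then splits into two directions, using that $\cA_\eta$ and the $\cA_{\chi_i}$ are correct localized automata. If Eve wins on $\tree$, the verification conjuncts force each guessed $\beta_Y$ at $w$ to agree with the true $\chi_i$-labels of $\tree_Y$: a guessed-true fact that fails spawns a branch in $\cA_{\chi_i}^{+}$ that $\cA_{\chi_i}$ rejects, and symmetrically for guessed-false facts, so any wrong guess is losing for Eve; hence the $Q_\eta$-part of her play is exactly a winning play of $\cA_\eta$ on $\tree_Y$. Conversely, from a winning strategy of $\cA_\eta$ on $\tree_Y$, Eve guesses the genuine $Y$-labels everywhere, wins every verification subgame by correctness of $\cA_{\chi_i}$, and wins the main simulation. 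The point I expect to be the main obstacle is the parity bookkeeping under the union acceptance condition: I would argue that every infinite play eventually remains inside a single component (a verification play enters some $Q_{\chi_i}$ and never leaves, while the simulating play stays in $Q_\eta$), so that the union priority function evaluates each branch through the correct component and Eve's winning condition decomposes exactly into ``$\cA_\eta$ accepts $\tree_Y$'' together with all verification obligations. Combining this with the semantic equivalence $\mydecode{\tree}\models p\,\psi \Leftrightarrow \mydecode{\tree_Y}\models p\,\eta$ established above yields that $\cA_\psi$ is a localized automaton for $\psi$.
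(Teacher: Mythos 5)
Your proposal is correct and follows essentially the same route as the paper's proof: the same disjoint-union construction in which Eve guesses the local $Y_i$-valuations (sound because base-guardedness forces $Y_i$-tuples to live in a single bag) and Adam may verify each guess by launching $\cA_{\chi_i}^{+,\vec{b}/\vec{z}}$ or $\cA_{\chi_i}^{-,\vec{b}/\vec{z}}$, together with the same semantic bridge between $\tree$ and the $Y$-annotated tree. The only differences are cosmetic: the paper argues one direction via the contrapositive (building Adam's winning strategy), whereas you argue directly that a winning Eve must guess truthfully, and you spell out the parity bookkeeping (plays settle in one component) that the paper leaves implicit.
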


We defer the proof of this lemma to Appendix~\ref{app:automataproof}.
With the help of Lemma~\ref{lemma:autcq} and
Lemma~\ref{lemma:localsubstitution},
we can now prove our main lemma.

\begin{lemma}\label{lemma:aut}
Given a $\nf$ formula $\psi(\vec{x})$ in $\agnf$ of width at most $k$ over signature $\sigma$
and given a natural number $l$, we can construct a localized
$\ptwowayaltinf$ $\calA_\psi$
such that for all consistent $\sigcode{\sigma}{k,l}$-trees~$\tree$,
for all polarities $p \in \set{+,-}$, for all local assignments $\vec{a}/\vec{x}$,
and for all nodes $v$ in $\tree$ with $\vec{a} \subseteq \bagnames{v}$,
\begin{align*}
&\text{$\calA_\psi^{p,\vec{a}/\vec{x}}$ accepts $\tree$
starting from $v$} \quad
\text{iff} \quad \text{$\decode(T)$ satisfies $p\psi([v,\vec{a}])$}
\end{align*}
when each $R^\trans \in \sigmad$ is interpreted as the transitive closure
of $R \in \sigmab$.

Further, there is a polynomial function $f$ independent of~$\psi$
such that the number of states of~$\calA_\psi$
is at most $N_\psi := f(m_\psi) \cdot 2^{f(K r_\psi)}$
and the number of priorities is at most $f(K m_\psi)$,
where $m_\psi = \mysize{\psi}$, 
$r_\psi$ is the CQ-rank of~$\psi$, and $K = 2k+l$.
The overall size of the automaton
and the running time of the construction
is at most exponential in $\mysize{\sigma} \cdot N_\psi$.
\end{lemma}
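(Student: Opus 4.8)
The plan is to induct on the negation depth of the \nf formula $\psi$, building the localized $\ptwowayaltinf$ $\calA_\psi$ out of an automaton for its top UCQ-shaped layer together with the automata for its base-guarded-negation subformulas. Throughout I would maintain both polarities simultaneously: for each local assignment $\vec a/\vec x$ I designate one initial state for $p=+$ and one for $p=-$, the latter obtained from the former by the standard dualization of alternating parity automata (dualize the transition formula, shift the priorities; see the closure properties in Appendix~\ref{app:automata}). Recording which local assignment a designated state stands for costs at most $2\card{U}^{k}$ designated states, which the bounds below will absorb.

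For the base case, where $\psi$ is a UCQ over $\sigma$, I would first invoke Lemma~\ref{lemma:autcq} to obtain, for each CQ disjunct $\chi$, a $\ponewayndinf$ $\calN_\chi$ running on trees carrying the free-variable markers; these already read each $R^\trans$ as a transitive closure through the atomic gadget of Example~\ref{ex:aut}. Taking their union (Proposition~\ref{prop:closure-union-intersection}) gives a $\ponewayndinf$ for $\psi$, and applying the localization of Theorem~\ref{thm:localization}---with the markers $V_{a_i/x_i}$ in the role of the relations $P_i$ pinned to the launch node $v$---turns it into a localized $\ptwowayaltinf$ whose acceptance from $v$ coincides with $\decode(\tree)\models\psi([v,\vec a])$. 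Complementing this automaton supplies the $p=-$ component.

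For the inductive step, write $\psi=\delta[Y_1\mapsfrom\chi_1,\dots,Y_n\mapsfrom\chi_n]$ with $\chi_i=\alpha_i\wedge\neg\phi_i$. I would construct $\calA_\delta$ exactly as in the base case, but over the extended signature $\sigma'=\sigma\cup\set{Y_1,\dots,Y_n}$, treating each $Y_i$ as a fresh base relation. For each $i$ the inductive hypothesis yields a localized $\calA_{\phi_i}$; intersecting its $p=-$ part (which computes $\neg\phi_i$) with the trivial localized automaton for the base atom $\alpha_i$ (Proposition~\ref{prop:closure-union-intersection}) produces a localized $\calA_{\chi_i}$ in both polarities. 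Feeding $\calA_\delta$ and the $\calA_{\chi_i}$ into the substitution of Lemma~\ref{lemma:localsubstitution} then yields $\calA_\psi$. Correctness follows by structural induction: the substitution lemma routes a positive (resp.\ negative) demand on $Y_i$ raised at a node $w$ to $\calA_{\chi_i}$ launched from $w$ with polarity $+$ (resp.\ $-$) and the matching local assignment, so acceptance from $v$ tracks exactly the semantics of $\delta[\,\cdots]$ under the $R^\trans$-as-transitive-closure reading.

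The hard part is the complexity bookkeeping---specifically, arguing that the exponential blow-up does not compound across the up to $m_\psi$ negation levels. The single exponential arises only inside Lemma~\ref{lemma:autcq}, via the $\ptwowayaltinf$-to-$\ponewayndinf$ conversion of Theorem~\ref{thm:2way-to-nd}, and it is paid just once per UCQ-shaped layer, in that layer's CQ-rank: $\calA_\delta$ has at most $f(m_\delta)\cdot 2^{f(Kr_\delta)}$ states and at most $f(Km_\delta)$ priorities. The decisive point is that localization together with Lemma~\ref{lemma:localsubstitution} merely \emph{adds} the sub-automata $\calA_{\chi_i}$ (their state and priority counts accumulate additively) rather than re-determinizing them, so climbing one level incurs no further exponential. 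Since $m_\delta+\sum_i m_{\phi_i}\le m_\psi$ while $r_\delta,r_{\phi_i}\le r_\psi$, superadditivity of a suitably chosen polynomial $f$ makes the state count telescope to $f(m_\psi)\cdot 2^{f(Kr_\psi)}=N_\psi$, with the common factor $2^{f(Kr_\psi)}$ dominating (and absorbing the $\card{U}^{k}$ designated initial states), and the summed priorities stay within $f(Km_\psi)$. The stated bounds on overall size and running time then follow from the per-state cost of each building block.
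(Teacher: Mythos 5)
Your proposal is correct and follows essentially the same route as the paper's proof: induction on negation depth, with the base case assembled from Lemma~\ref{lemma:autcq}, union, localization via Theorem~\ref{thm:localization}, and dualization for the negative polarity, and the inductive step plugging the inductively obtained localized automata into Lemma~\ref{lemma:localsubstitution}, together with the same additive state/priority bookkeeping that confines the exponential blow-up to each layer's CQ-rank. The only cosmetic difference is that where you invoke generic closure under intersection to combine $\alpha_i$ with $\neg\phi_i$, the paper builds that automaton explicitly as a disjoint union plus fresh designated initial states in which Adam (resp.\ Eve) resolves the conjunction (resp.\ its negation), which is precisely the additive construction your telescoping argument implicitly relies on.
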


\begin{proof}
We proceed by induction on the negation depth $d$ of the $\nf$ formula $\psi(\vec{x})$ in $\agnf$.
We write $m_\psi$ for $\mysize{\psi}$,
write $r_\psi$ for the CQ-rank of $\psi$,
  and write \mbox{$N_\psi\colonequals f(m_\psi) \cdot 2^{f(Kr_\psi)}$}
  for some suitably chosen (in particular, non-constant) polynomial $f$
independent of $\psi$
(we will not define $f$ explicitly).

During each case of the inductive construction,
we will describe informally how to build the desired automaton,
and we will analyze the number of priorities
and the number of states required.
We defer the analysis of the overall size
of the automaton until the end of this proof.

\mysubparagraph{Negation depth 0}

For the base case of a UCQ $\psi(\vec{x})$ (negation depth 0),
we apply Lemma~\ref{lemma:autcq}
to obtain a $\ponewayndinf$ for each CQ,
and then use closure under union to obtain a $\ponewayndinf$ $\calN_{\psi}$ for $\psi$.
This automaton has number of priorities at most polynomial in $K m_\psi$
and number of states at most polynomial in $m_\psi$ and exponential in $Kr_\psi$,
as desired.
However, this automaton runs on trees with the free variable markers for $\vec{x}$,
so it remains to show that we can construct the automaton $\cA_\psi$ required by
the lemma,
that runs on trees without these markers.

For each local assignment $\vec{a}/\vec{x}$,
we can apply the localization theorem (Theorem~\ref{thm:localization})
to the set of relation symbols of the form $V_{a_i/x_i}$,
and eliminate the dependence on any other
$V_{c/x_i}$ for $c \neq a_i$, by always assuming these relations do not hold.
This results in an automaton $\cA_{\psi}^{+,\vec{a}/\vec{x}}$ that
no longer relies on free variable markers for $\vec{x}$.
By Theorem~\ref{thm:localization},
there is only a linear blow-up in the number of states and number of priorities.
Then let $\cA_{\psi}^{-,\vec{a}/\vec{x}}$ be the dual of $\cA_{\psi}^{+,\vec{a}/\vec{x}}$,
obtained using closure under complement of $\ptwowayaltinf$ (see Proposition~\ref{prop:dual} in Appendix~\ref{app:automata}).
Finally, we take $\cA_\psi$ to be the disjoint union of~$\cA_{\psi}^{+,\vec{a}/\vec{x}}$
and $\cA_{\psi}^{-,\vec{a}/\vec{x}}$
over all local assignments $\vec{a}/\vec{x}$;
the designated initial state for each $p \in \set{+,-}$ and each localization $\vec{a}/\vec{x}$
is the initial state for $\cA_\psi^{p,\vec{a}/\vec{x}}$.
All of these automata can be seen to use priorities from the same set,
which is of size at most polynomial in $K m_\psi$.
For a suitably chosen $f$, the number of states in $\cA_\psi$ can be bounded by $f(m_\psi) \cdot 2^{f(K r_\psi)}$
(since there are at most $2K^k$ subautomata being combined)
and the number of priorities can be bounded by $f(K m_\psi)$
(since all of the different localized automata $\cA_\psi^{p,\vec{a}/\vec{x}}$ being combined use the same set of
priorities, which is polynomial in $K m_\psi$).

\mysubparagraph{Negation depth $d > 0$}
The inductive step is for a UCQ-shaped formula $\psi$ with negation depth $d > 0$.
Suppose $\psi$ is of the form
$\delta[Y_1 := \alpha_1 \wedge \neg \psi_1, \dots , Y_s := \alpha_s \wedge \neg \psi_s]$
where $\delta$ is a UCQ over the extended signature $\sigma'$ obtained from
$\sigma$ by adding fresh base relations $Y_1, \dots, Y_s$,
each $\psi_i$ is a UCQ-shaped formula in $\agnf$ with negation depth strictly less than $d$
and each $\alpha_i$ is a base-guard in~$\sigma$ for the free variables of $\psi_i$.

Since each $\alpha_i$ and $\psi_i$ have strictly smaller negation depth,
the inductive hypothesis yields localized $\ptwowayaltinf$ $\cA_{\alpha_i}$ and $\cA_{\psi_i}$ for $0 \leq i \leq s$.
As a step on the way to constructing an automaton for $\psi$,
we can use these inductively-defined automata to construct a localized $\ptwowayaltinf$ $\cA_{\phi_i}$
for each $\phi_i := \alpha_i \wedge \neg \psi_i$. Let us fix $0 \leq i \leq s$.
We start by taking the disjoint union of $\cA_{\alpha_i}$ and $\cA_{\psi_i}$.
We then add 
a fresh state $q_0^{p,\vec{a}/\vec{x}}$ to the new automaton
for each polarity $p \in \set{+,-}$ and each localization $\vec{a}/\vec{x}$, 
which becomes the designated initial state for $p$ and $\vec{a}/\vec{x}$ in the new automaton.
We keep the transition functions from the subautomata, but add in the following rules for the new states:
in state~$q_0^{+,\vec{a}/\vec{x}}$,
Adam is given the choice of staying in the current node and switching to the designated initial state for $+,\vec{a}/\vec{x}$
in $\cA_{\alpha_i}$,
or staying in the current node and switching to the designated initial state for $-, \vec{a}/\vec{x}$ in $\cA_{\psi_i}$;
likewise, in state $q_0^{-,\vec{a}/\vec{x}}$
Eve is given the choice of staying in the current node and switching to the designated initial state for $-,\vec{a}/\vec{x}$
in $\cA_{\alpha_i}$ or
or staying in the current node and switching to the designated initial state for $+, \vec{a}/\vec{x}$ in $\cA_{\psi_i}$.
The number of states is at most
\begin{align*}
&f(m_{\alpha_i})\cdot 2^{f(K r_{\alpha_i})} + f(m_{\psi_i})\cdot 2^{f(K r_{\psi_i})} + 2K^k 
\leq (f(m_{\alpha_i}) + f(m_{\psi_i}) + 1) \cdot 2^{f(K r_{\alpha_i \wedge \neg \psi_i})}
\end{align*}
which is at most $N_{\alpha_i \wedge \neg \psi_i}$ by the choice of $f$.
The number of priorities is at most the sum of the number of priorities in $\cA_{\alpha_i}$ and $\cA_{\psi_i}$,
so it is bounded by $f(K m_{\alpha_i \wedge \neg \psi_i})$.

Now let $\calA_\delta$ be the $\ptwowayaltinf$ for the underlying UCQ $\delta$
obtained as described in the base case.
Even though this is over the extended alphabet $\sigma'$,
the overall size of the automaton
and the running time of the construction
is still at most exponential in $\mysize{\sigma} \cdot N_{\psi}$,
since the number of additional relations in $\sigma'$ is $s \leq \mysize{\psi}$.
We can then apply
Lemma~\ref{lemma:localsubstitution}
to obtain $\cA_\psi$
from $\cA_\delta, \cA_{\phi_1},\dots,\cA_{\phi_s}$.
Since the automata $\cA_\delta, \cA_{\phi_1},\dots,\cA_{\phi_s}$ all satisfy the desired bounds on the number of states,
the number of states is at most
\begin{align*}
&f(m_\delta) \cdot 2^{f(Kr_\delta)} + f(m_{\phi_1}) \cdot 2^{f(Kr_{\phi_1})} + \dots + f(m_{\phi_s}) \cdot 2^{f(Kr_{\phi_s})} \\
\leq \
&(f(m_\delta)+ f(m_{\phi_1}) + \cdots + f(m_{\phi_s}) ) \cdot 2^{f(Kr_\psi)}
\end{align*}
which is at most $N_\psi$.
Likewise, since the number of priorities is at most the sum of the priorities of $\cA_\delta, \cA_{\phi_1},\dots,\cA_{\phi_s}$,
the number of priorities in $\cA_\psi$ can still be bounded by $f(K m_\psi)$.

This concludes the inductive case.

\subsubsection{Overall Size}
We have argued that each automaton has at most $N_\psi$ states
and the number of priorities is polynomial in the size of $\psi$.
It remains to argue that the overall size of $\cA_\psi$ is at most
exponential in $\mysize{\sigma} \cdot N_\psi$.
The size of the priority mapping is at most polynomial in $N_\psi$.
The size of the alphabet is exponential in $\mysize{\sigma} \cdot K^k$,
which is at most exponential in~$\mysize{\sigma} \cdot N_\psi$.
For each state and alphabet symbol,
the size of the corresponding transition function formula
can always be kept of size at most exponential in $N_\psi$.
Hence, the overall size of the transition function is at most exponential 
in~$\mysize{\sigma} \cdot N_\psi$.
Together, this means that the overall size of $\cA_\psi$ is
at most exponential in $\mysize{\sigma} \cdot N_\psi$.

It can also be checked that
the running time of the construction is polynomial
in the size of the constructed automaton,
and hence is also exponential in $\mysize{\sigma} \cdot N_\psi$.
\end{proof}

We must also construct an automaton that checks that
the input tree is consistent,
and actually represents a set of facts $\instance$
such that $\instance \supseteq \instance_0$
and where every $R^\trans$-fact in $\instance_0$ is actually witnessed
by some path of $R$-facts in $\instance$.

\begin{lemma}\label{lemma:consistency-distinguished-relations}
Given a finite set of $\sigma$-facts $\instance_0$
and natural numbers $k$ and $l$,
we can construct
a $\ptwowayaltinf$ $\calA_{\instance_0}$
in time doubly exponential in $\mysize{\sigma} \cdot K$ (for $K = 2k+l$),
such that for all $\sigcode{\sigma}{k,l}$-trees~$\tree$,
\begin{align*}
\text{$\calA_{\instance_0}$ accepts $\tree$} 
\quad \text{iff} \quad 
\text{$T$ is consistent and for all facts $S(\vec{c}) \in \instance_0$: \ }
\text{$\decode(\tree), [\epsilon,\vec{c}]$ satisfies $S(\vec{x})$}
\end{align*}
when $R^\trans \in \sigmad$ is interpreted as the transitive closure
of $R \in \sigmab$.
The number of states is at most exponential in $\mysize{\sigma} \cdot K$,
the number of priorities is two,
and
the overall size is at most doubly exponential in~$\mysize{\sigma} \cdot K$.
\end{lemma}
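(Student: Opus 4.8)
The plan is to build $\calA_{\instance_0}$ as the intersection of several simpler $\ptwowayaltinf$: one for each of the three consistency conditions (i)--(iii) from the definition of a consistent tree, and one for the satisfaction of each fact of $\instance_0$. I would then invoke closure under intersection (Proposition~\ref{prop:closure-union-intersection}). Each component will be designed to use only two priorities, and since intersecting alternating automata merely adds a conjunctive branching at a fresh initial state while every infinite play thereafter stays inside a single component, the two-priority bound is inherited by $\calA_{\instance_0}$. The states of the components add up, so the state bound will follow from bounding the number of components and the size of each.

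Checking consistency is largely local. Condition (i) is a safety property: the automaton branches universally to every node and verifies at each node $v$ that for every coded fact $R_{\vec a}$ present in the label we have $\vec a \subseteq \bagnames{v}$; a violation sends the play to a rejecting sink, with the running states receiving the even priority and the sink the odd one, so that an infinite good run is winning. Condition (ii) concerns only the root, and since $\calA_{\instance_0}$ is launched from the root it can read the root label directly and test, against the fixed finite $\instance_0$, that $\bagnames{\epsilon}=\elems{\instance_0}$ and that the coded root-facts are exactly the $\instance_0$-facts. Condition (iii) is the only genuinely global consistency check: for each of the at most $k$ variables $z$ we must ensure that at most one marker $V_{c/z}$ occurs in the whole tree. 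This is a regular property realizable with two priorities, either by building a $\ponewayndinf$ that detects the existence of two distinct $z$-markers (a reachability-style guess of two marker nodes, splitting into the ancestor case and the two-subtree case) and dualizing it (Proposition~\ref{prop:dual}), or, using the two-way power, by universally visiting each marker node and launching a safety sweep of the remaining tree that checks for the absence of a second $z$-marker.

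The interesting component is the verification, for each fact $S(\vec c)\in\instance_0$, that $\decode(\tree),[\epsilon,\vec c] \models S(\vec x)$ under the transitive-closure semantics. For a base fact this is free: condition (ii) already forces $S_{\vec c}(\epsilon)$ to hold, so $S([\epsilon,\vec c])\in\decode(\tree)$ by the definition of the decoding. For a distinguished fact $R^\trans(c_1,c_2)$, however, the explicit presence of $R^\trans_{c_1,c_2}(\epsilon)$ is not enough: the semantics demand that $[\epsilon,c_1]$ and $[\epsilon,c_2]$ be genuinely linked by a path of $R$-facts in $\decode(\tree)$. This is exactly what the automaton of Example~\ref{ex:aut} tests, so I would reuse that construction, launched from the root with $c_1$ and $c_2$ hard-wired as the start and target names in place of free-variable markers (equivalently, take the localized automaton $\calA_{R^\trans(x_1,x_2)}^{+,c_1 c_2/x_1 x_2}$ of Lemma~\ref{lemma:aut} and start it at $\epsilon$). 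Following such a path requires the two-way navigation between $a$-connected nodes and across $R$-edges that Example~\ref{ex:aut} already provides, and it uses only two priorities.

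The main obstacle is precisely this last point: because the interpretation overrides the explicit $\sigmad$-facts, the automaton must not merely confirm that the initial $R^\trans$-facts are coded at the root but must \emph{witness} each of them by an actual $R$-path that may wander arbitrarily deep into the tree decomposition, and the two-way reachability machinery of Example~\ref{ex:aut} is what makes this checkable with a bounded number of priorities. For the bookkeeping, each component has polynomially many states in $K$, the number of facts in $\instance_0$ is at most exponential in $\mysize{\sigma}\cdot K$, and the three consistency automata are small, so the total number of states is at most exponential in $\mysize{\sigma}\cdot K$ with two priorities. Finally, the alphabet $\sigcode{\sigma}{k,l}$ has doubly-exponentially many letters, which pushes the overall size of the transition table and the construction time to doubly exponential in $\mysize{\sigma}\cdot K$, matching the claimed bounds.
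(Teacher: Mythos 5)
Your proposal is correct and follows essentially the same route as the paper's proof: a conjunctive combination (the paper phrases the intersection as Adam choosing which condition or fact to challenge) of automata for the consistency conditions and for each fact of $\instance_0$, with each distinguished fact $R^\trans(c_1,c_2)$ verified by the automaton of Example~\ref{ex:aut} localized to $c_1/x_1, c_2/x_2$ at the root — exactly the point where the two-way navigation and the deep witnessing $R$-path matter, as you identify. The only (harmless) divergence is that the paper also launches $\cA_{S(\vec{x})}^{\vec{c}/\vec{x}}$ for each base fact, whereas you observe that this check is already implied by consistency condition (ii) and the definition of $\decode(\tree)$; both are fine.
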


\begin{proof}
The automaton is designed to allow
Adam to challenge some consistency condition
or a particular fact $S(\vec{c})$
in~$\instance_0$.
It is straightforward to design automata for each of the
consistency conditions,
so we omit the details.
To check some base fact $S(\vec{c})$ from $\instance_0$,
the automaton launches $\cA_{S(\vec{x})}^{\vec{c}/\vec{x}}$
(obtained from Lemma~\ref{lemma:aut}) from the root.
To check some distinguished fact $R^\trans(c_1,c_2)$,
we launch the localized version of $\cA_{R^\trans(x_1,x_2)}$ from Example~\ref{ex:aut}
(localized to $c_1/x_1, c_2/x_2$ in the root).
Although transitive facts can, in general, mention elements that are far apart in the tree code,
we know that $c_1,c_2$ are both elements of $\instance_0$, and hence are both represented locally in the root;
hence, it is possible to localize based on this.
Note that the $R$-path witnessing this fact may require elements outside of $\elems{\instance_0}$
even though $c_1$ and $c_2$ are names of elements in $\instance_0$.

It can be checked that the automaton only needs two priorities,
the number of states is exponential in $\mysize{\sigma} \cdot K$,
and the overall size is at most doubly exponential in $\mysize{\sigma} \cdot K$.
\end{proof}

\subsubsection{Concluding the Proof}

We can now conclude the proof of Theorem~\ref{thm:automata}.
We are given some sentence $\varphi$ in $\agnf$ over signature $\sigma$
and some finite set of facts $\instance_0$.
Without loss of generality, we can assume that
$\mysize{\varphi} \cdot \mysize{\instance_0}
\geq \mysize{\sigma}$.
We construct the $\nf$ $\varphi'$ equivalent to $\varphi$ in exponential time
using Proposition~\ref{prop:nf}.
Although the size of $\varphi'$ can be exponentially larger than $\varphi$,
the CQ-rank and width is at most~$\mysize{\varphi}$.
By Proposition~\ref{prop:transdecomp}, we know that it suffices to
consider only sets of facts with full, binary $\instance_0$-rooted tree decompositions of width $\mysize{\varphi}-1$,
so we can restrict to considering automata on $\sigcode{\sigma}{k,l}$-trees
for $k := \mysize{\varphi}$ and $l = \mysize{\elems{\instance_0}}$.

Hence, we apply Lemma~\ref{lemma:aut} to
$\varphi'$, $k$, and $l$,
and construct a $\ptwowayaltinf$
$\cA_{\varphi'}$ for~$\varphi'$ (and hence $\varphi$)
in time exponential in $\mysize{\sigma} \cdot N_{\varphi'}$,
which is at most doubly exponential in $\mysize{\varphi} \cdot \mysize{\instance_0}$.
The number of states and priorities in this automaton is at most
singly exponential in $\mysize{\varphi} \cdot \mysize{\instance_0}$.

Next, we apply Lemma~\ref{lemma:consistency-distinguished-relations}
to $\instance_0$, $k$, and $l$,
to get a $\ptwowayaltinf$ $\cA_{\instance_0}$.
This can be done in time doubly exponential in $\mysize{\sigma} \cdot (2k+l)$,
which is at most doubly exponential in $\mysize{\varphi} \cdot \mysize{\instance_0}$.
The automaton has two priorities and
the number of states is at most singly exponential in $\mysize{\varphi} \cdot \mysize{\instance_0}$.

Finally, we construct the desired $\ptwowayaltinf$
$\cA_{\varphi,\instance_0}$
by taking the disjoint union of
the automaton $\cA_{\instance_0}$
and $\cA_{\varphi'}$,
and giving Adam an initial choice of which of these automata to simulate.
This automaton has a non-empty language
iff
$\varphi$ is satisfiable.
Moreover, the number of states and priorities in this automaton is still at most
singly exponential in $\mysize{\varphi} \cdot \mysize{\instance_0}$.
This concludes the proof of Theorem~\ref{thm:automata}.

\subsection{Consequences for $\owqatr$ and Other Variants}\label{sec:reflexive-transitive}
We can derive results
for $\owqatrans$
by observing that the
$\owqatc$ problem subsumes it:
to enforce that $R^\trans \in \sigmad$ is transitive, simply interpret it as the
transitive closure of a relation~$R$ that is never otherwise used.
Hence:

\begin{corollary} \label{cor:decidetransgnf}
We can decide $\owqatr(\instance_0, \Sigma,Q)$ in $\twoexp$,
where $\instance_0$ ranges over finite sets of facts,
$\Sigma$ over $\agnf$ constraints (in particular, $\afgtgd$),
and $Q$ over UCQs.
\end{corollary}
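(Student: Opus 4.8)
The plan is to reduce $\owqatr$ to the already-solved $\owqatc$ problem and then invoke Theorem~\ref{thm:decidtransautomata}. Given an instance $\owqatr(\instance_0, \Sigma, Q)$ over a signature $\sigma = \sigmab \sqcup \sigmad$ in which each $R^\trans \in \sigmad$ is to be read as transitive, I would augment the base signature with one fresh binary relation $R$ for each distinguished $R^\trans \in \sigmad$, chosen so that $R$ appears neither in $\instance_0$, nor in $\Sigma$, nor in $Q$. The resulting instance is then read under the $\owqatc$ semantics, with each $R^\trans$ interpreted as the transitive closure of its companion $R$. The reduction leaves $\instance_0$, $\Sigma$, and $Q$ syntactically unchanged, so $\Sigma$ is still an $\agnf$ sentence (now over the slightly larger signature) and $Q$ is still a UCQ; the only thing that changes is the special semantics imposed on the distinguished relations.

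For correctness I would prove that $\owqatr(\instance_0,\Sigma,Q)$ holds if and only if its $\owqatc$ image holds, by transforming counterexamples in both directions. The guiding observation is the elementary fact that a binary relation is transitive exactly when it coincides with its own transitive closure. For the forward contrapositive, suppose the $\owqatc$ image fails, witnessed by some $\instance' \supseteq \instance_0$ satisfying $\Sigma \wedge \neg Q$ in which each $R^\trans$ is the transitive closure of $R$. Since every transitive closure is transitive, each $R^\trans$ is transitive in $\instance'$, so the reduct of $\instance'$ to $\sigma$ (forgetting the fresh $R$-facts) witnesses the failure of $\owqatr$; forgetting $R$-facts cannot affect $\Sigma$ or $Q$, as these do not mention $R$. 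Conversely, given an $\instance \supseteq \instance_0$ witnessing failure of $\owqatr$ with each $R^\trans$ transitive, I would define $\instance'$ by additionally asserting $R(a,b)$ for exactly those pairs with $R^\trans(a,b) \in \instance$. Because $R^\trans$ is transitive, the transitive closure of this newly defined $R$ is precisely $R^\trans$, so $\instance'$ respects the $\owqatc$ semantics; and since $R$ is fresh, adding these facts preserves $\instance' \models \Sigma$ and $\instance' \models \neg Q$, witnessing failure of the $\owqatc$ image.

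With the reduction and its correctness in hand, the $\twoexp$ upper bound transfers directly from Theorem~\ref{thm:decidtransautomata}, since the reduction is computable in polynomial time and enlarges the signature only by a number of fresh relations linear in $\card{\sigmad}$. There is no real obstacle here: the only point requiring a little care is checking that asserting $R := R^\trans$ yields a relation whose transitive closure is again $R^\trans$, which is immediate from transitivity, and that the one remaining $\owqatc$ side condition---namely that each $R^\trans$-fact of $\instance_0$ be witnessed by an $R$-path---is automatically met, since each such fact is witnessed by the length-one $R$-path created by this assertion.
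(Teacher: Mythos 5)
Your proposal is correct and is essentially the same argument the paper gives: the paper proves this corollary in one line by observing that $\owqatc$ subsumes $\owqatr$, interpreting each $R^\trans$ as the transitive closure of a fresh relation $R$ that is never otherwise used. Your write-up simply makes explicit the two-directional correctness check (including the key point that setting $R := R^\trans$ makes $R^\trans$ its own transitive closure by transitivity) that the paper leaves implicit.
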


In particular, this result holds for 
\emph{frontier-one TGDs} (those with a single frontier variable),
as a single variable is always base-guarded.
This answers a question posed  in prior work \cite{mugnier15}.

As mentioned in the preliminaries,
we have defined $\owqatr$ and $\owqatc$
based on transitive relations, not reflexive and transitive relations.
However, the decidability and combined complexity results described
in Theorem~\ref{thm:decidtransautomata} and
Corollary~\ref{cor:decidetransgnf}
(as well as the data complexity results that
will be described later in Theorem~\ref{thm:conptransdataupper}
and Theorem~\ref{thm:ptimetransdataupper})
also apply to the corresponding
query answering problems
when the distinguished relations
are reflexive transitive relations
or the reflexive transitive closure of some base relation.
Adapting the proofs to this case is a straightforward exercise: the only points
that need to be changed are the precise handling of distinguished atoms in
Proposition~\ref{prop:bisim-game}, and the construction of the automaton in Example~\ref{ex:aut}
(which is used in Lemmas~\ref{lemma:autcq}~and~\ref{lemma:consistency-distinguished-relations}).

\subsection{Relationship to $\gnfpup$}\label{sec:gnfpup}

It is well-known that the transitive closure of a binary relation
can be expressed in \emph{least fixpoint logic} ($\kw{LFP}$),
the extension of $\fo$ with a least fixpoint operator.
$\kw{LFP}$ can also express that a relation is transitively closed,
or is the transitive closure of another relation.
Unfortunately, satisfiability is undecidable for $\fo$ and hence $\kw{LFP}$, so
it is not possible to rely on this connection to prove decidability of
$\owqatr$ or $\owqatc$.
On the other hand,
the fixpoint extension of $\gnf$ (called $\kw{GNFP}$) is decidable,
but it is unable to express transitive closure
(see \citeR{gnficalp,lics16-gnfpup}),
so it also cannot be used to decide $\owqatr$ or $\owqatc$.

Recently, a new fixpoint logic called
$\gnfpup$---\emph{guarded negation fixpoint logic with unguarded parameters}---was introduced 
\cite{lics16-gnfpup}.
This logic subsumes $\gnf$ and $\kw{GNFP}$,
and is expressive enough to define the transitive closure of a binary relation.
It also subsumes a number of other highly expressive Datalog-like languages
introduced by \citeA{BourhisKR15}.
However, unlike $\kw{LFP}$, satisfiability for $\gnfpup$ is decidable
\cite{lics16-gnfpup}.
Hence, $\owqatc(\instance_0,\Sigma,Q)$ for $\Sigma \in \agnf$
can be decided by converting $\instance_0 \wedge \Sigma \wedge \neg Q$
to an equivalent $\gnfpup$ sentence~$\varphi$,
and then testing for unsatisfiability of $\varphi$.

Each $\agnf$ sentence with distinguished relations $R_i^+$
can be converted to an equivalent $\gnfpup$ sentence
with only base relations,
since each occurrence of $R_i^{+}$ is replaced with a fixpoint formula
that describes the transitive closure of the corresponding base relation $R_i$.
The fixpoints in this formula are not nested in complicated ways;
using the terminology of \citeA{lics16-gnfpup},
they have ``parameter-depth'' 1.
Applying Theorem 20 in \citeA{lics16-gnfpup},
this means that $\owqatc$ is decidable in $\kw{3}\exptime$.
Thus, the approach using $\gnfpup$
gives an alternative proof of the decidability
of query answering with transitivity,
but without the optimal $\twoexptime$ complexity bound
presented here.
The automaton construction in this paper
can be viewed as an optimization of the automaton construction for $\gnfpup$
by \citeA{lics16-gnfpup}.
The results on $\gnfpup$, however, imply that
query answering is decidable for $\agnf$
not only when we have distinguished  $R^{+}_i$,
but when the distinguished relations are defined
by regular expressions over base binary relations and their inverses,
in the spirit of $\kw{C2RPQ}$s
(see Example 4 in \citeR{lics16-gnfpup}).

Due to the syntactic restrictions in $\gnfpup$,
the translation described above
would not produce a $\gnfpup$ formula if distinguished relations
were used as guards (i.e.~if we started with $\gnf$,
rather than $\agnf$).
This makes sense, since we will see in Section~\ref{sec:undecid} that 
our query answering problems
become undecidable when the distinguished relations
are allowed as guards.

\subsection{Data Complexity}
Our results in Theorem~\ref{thm:decidtransautomata} and
Corollary~\ref{cor:decidetransgnf} show upper bounds on the \emph{combined
complexity} of the $\owqatr$ and $\owqatc$ problems. We now turn to the complexity when
the query and constraints are fixed but the initial set of facts varies---the \emph{data complexity}.

We first show
a $\conp$ data complexity upper bound for $\owqatc$ for $\agnf$ constraints.

\newcommand{\conptransdataupper}{
  For any fixed $\agnf$ constraints $\Sigma$ and UCQ $Q$, given a finite
  set of facts $\instance_0$, we can decide $\owqatc(\instance_0, \Sigma, Q)$ in
  $\conp$ data complexity.
}
\begin{theorem}
  \label{thm:conptransdataupper}
  \conptransdataupper
\end{theorem}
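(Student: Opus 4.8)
The plan is to establish the bound on the complement problem: $\owqatc(\instance_0,\Sigma,Q)$ fails exactly when $\instance_0 \wedge \Sigma \wedge \neg Q$ is satisfiable (with each $R^\trans$ read as the transitive closure of $R$), and since $\Sigma \in \agnf$ and $\neg Q$ is the negation of a UCQ, the sentence $\phi := \Sigma \wedge \neg Q$ lies in $\agnf$ and is \emph{fixed}. So it suffices to show that satisfiability of such a fixed $\phi$ over a varying $\instance_0$ is in $\np$ (in the size of $\instance_0$), which yields the desired $\conp$ bound. By Proposition~\ref{prop:transdecomp}, I would first reduce to tree-like counterexamples: if a counterexample exists, there is one with an $\instance_0$-rooted tree decomposition of width at most $\mysize{\phi}-1$, a \emph{constant} since $\phi$ is fixed.

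The central observation driving the $\conp$ bound is that one must treat the root bag (which carries all of $\instance_0$, of unbounded size) separately from the subtrees hanging below it. In the automaton of Lemma~\ref{lemma:aut} the state count $N_\psi = f(m_\psi)\cdot 2^{f(K r_\psi)}$ is exponential in $l = \mysize{\elems{\instance_0}}$ through $K = 2k+l$, which rules out a direct use of Theorem~\ref{thm:automata}; but this $l$-dependence comes \emph{only} from the root, since every non-root bag uses at most $2k$ names. Restricting attention to the subtrees below the root, the relevant automaton/type machinery therefore ranges over a number of states and priorities that depends only on $\phi$, hence is constant. Consequently there is a fixed finite set of ``decoration types'' describing the possible behaviours of a subtree attached at an element (or base-guarded tuple) of $\instance_0$, and deciding whether a given type is realizable by some bounded-width subtree is a constant-time emptiness check independent of $\instance_0$. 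The $\np$ certificate for non-entailment would then consist of: (i) the additional $\sigmab$-facts the model places among $\instance_0$-elements (there are only polynomially many candidates, since $\arity{\sigma}$ is fixed); (ii) a decoration type attached to each $\instance_0$-element and each base-guarded tuple that needs a witness (polynomially many tuples, each labelled by a constant-size type); and (iii) the set of $R^\trans$-facts holding among pairs of $\instance_0$-elements (at most $\mysize{\sigmad}\cdot\mysize{\elems{\instance_0}}^2$ pairs). I would then verify in $\ptime$ that these guesses are locally coherent with the facts of $\instance_0$, that they make the top-level sentence $\phi$ true (in particular that $Q$ has no match), and that every attached decoration type is realizable.

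The main obstacle is item (iii): enforcing the transitive-closure semantics across the boundary between $\instance_0$ and the subtrees using only a polynomial certificate. The difficulty is that $R^\trans(c_1,c_2)$ for $c_1,c_2 \in \elems{\instance_0}$ may be witnessed by an $R$-path that repeatedly excursions out of $\instance_0$ into the decorations and returns, and such a decoration can only connect the (at most $k$) $\instance_0$-elements that occur together in its interface bag. The plan here is to have the certificate additionally record, for each decoration, which reachability pairs between its interface $\instance_0$-elements it provides, to check that the decoration type is consistent with providing exactly those reachabilities, and then to compute the transitive closure of the union of the guessed intra-$\instance_0$ $R$-facts with these decoration-provided reachability edges in $\ptime$. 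I would then verify that this computed closure coincides with the guessed $R^\trans$-relation, that it contains every $\sigmad$-fact explicitly asserted in $\instance_0$, and that it is the relation used when evaluating $\phi$ at the root. Proving soundness and completeness of this reachability bookkeeping is the technical heart of the argument; once it is in place, all guesses are polynomial and all checks (including the decoration-realizability checks, which are constant) run in polynomial time, so non-entailment is in $\np$ and $\owqatc$ is in $\conp$ in data complexity.
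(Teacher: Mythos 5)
Your overall architecture is the same as the paper's (Appendix~C.2): by Proposition~\ref{prop:transdecomp} a counterexample can be taken ``squid-shaped''---the root $\instance_0$ plus constant-width tentacles hanging off tuples of $\instance_0$-elements (Proposition~\ref{prop:treelike})---and the $\np$ certificate for non-entailment guesses a constant-size summary for each attachment tuple and then runs a polynomial check over the resulting abstraction. Where you differ is the technology for the summaries: the paper uses satisfiable sets of quantifier-rank-$j$ first-order sentences with constants naming the interface, and proves the required compositionality (Lemma~\ref{lem:decomp}, Corollary~\ref{cor:composition}) by a pebble-game argument, whereas you propose automaton behaviours with constant-time emptiness checks built on Lemma~\ref{lemma:aut}; either can be made to work, and your explicit bookkeeping for transitive closure across the root/tentacle boundary (each decoration declares which reachability pairs among its interface elements it provides, and the verifier closes these off with the guessed root $R$-facts in $\ptime$) is, if anything, more explicit than the paper, which buries this point inside the partial-isomorphism check of its game. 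Note, though, that the ``make $\phi$ true at the top level'' step is not a routine check: it is exactly the content of the paper's decomposition lemma, so your automaton-based route owes an analogous composition argument, not just the reachability one.

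There is, however, one concrete error: you attach decorations only to single $\instance_0$-elements and to \emph{base-guarded} tuples. Base-guarded interfaces are a feature of the linear-order setting (Proposition~\ref{prop:guarded-interface-dec-appendix}, Lemma~\ref{lemma:guarded-interface-dec}); the transitive-closure-friendly witnesses of Proposition~\ref{prop:transdecomp} have interfaces that are \emph{arbitrary} sets of at most $k$ elements, and this cannot be avoided. Concretely, take $\sigmab = \set{A, R}$, $\sigmad = \set{R^\trans}$, $\instance_0 = \set{A(a), A(c), R^\trans(a,c)}$, $\Sigma = \emptyset$, and $Q = \exists x \exists y\, (A(x) \wedge A(y) \wedge R(x,y))$. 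Entailment fails: $\instance_0$ together with $R(a,m), R(m,c)$ for a fresh $m$ (and the implied $R^\trans$-facts) is a counterexample. But in any counterexample, $\set{a,c}$ cannot be base-guarded (the only candidate guards $R(a,c)$, $R(c,a)$ would make $Q$ match), and the $R$-path witnessing $R^\trans(a,c)$ must pass through elements outside $\elems{\instance_0}$; since a tentacle meets the rest of the structure only in its interface, an easy induction shows some single tentacle must contain both $a$ and $c$ in its interface---an unguarded pair. Under your certificate format no decoration can supply the reachability pair $(a,c)$ and no root $R$-fact may be guessed, so your computed closure can never match the mandatory fact $R^\trans(a,c)$: the verifier rejects every certificate and wrongly reports entailment. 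The fix is what the paper does: guess a type for \emph{every} $k$-tuple of $\instance_0$-elements (still polynomially many, since $k \leq \mysize{\Sigma \wedge \neg Q}$ is a constant), with consistency checks on overlapping tuples; relatedly, realizability of a type must be satisfiability \emph{under the transitive-closure interpretation}, and explicit $\sigmad$-facts in $\instance_0$ need the paper's preprocessing via fresh base relations $R'$ with axioms $\forall \vec x\, (R'(\vec x) \rightarrow R^\trans(\vec x))$.
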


The algorithm is based on an idea found in earlier $\conp$ data complexity
bounds, used in particular 
for a guarded variant of fixpoint logic by \citeA{vldb12}.
From Proposition~\ref{prop:transdecomp},
we know that a counterexample to $\owqatc$ for
any sentence  $\phi$ and any initial set of facts $\instance_0$
can be taken to have an $\instance_0$-rooted tree  decomposition.
While such a decomposition could be large, we can show
that to determine whether it satisfies $\phi$  it suffices
to look at a small amount of information concerning it, in the form of 
 annotations describing, for each $\mysize{\phi}$-tuple $\vec c$ in 
$\instance_0$,  sufficiently many formulas holding in the  subtree that interfaces with $\vec c$.  We can show a ``decomposition theorem''  showing that 
checking~$\phi$ on a decomposition is equivalent to checking
another sentence~$\phi'$ on the ``abstract description of
the tentacles'' with these annotations. Thus instead of guessing
a witness structure, we simply guess a consistent
set of  annotations and check that $\phi'$ holds. We defer the details
to Appendix~\ref{apx:conptransdataupperproof}.

For $\fgtgd$s, the data complexity of $\owqa$ is in $\ptime$ \cite{bagetcomplexityfg}.
We can show that the same holds, but only for $\acfgtgd$s, and for
$\owqatrans$ rather than $\owqatc$:
\newcommand{\ptimetransdataupper}{
  For any fixed $\acfgtgd$ constraints $\Sigma$ and base-covered UCQ $Q$, given a finite
  set of facts $\instance_0$, we can decide $\owqatrans(\instance_0, \Sigma, Q)$ in
  $\ptime$ data complexity.
}
\begin{theorem}
  \label{thm:ptimetransdataupper}
  \ptimetransdataupper
\end{theorem}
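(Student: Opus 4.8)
The plan is to build a single \emph{canonical} (universal) tree-like model by chasing with $\Sigma$, and then to read off the answer to $Q$ from a bounded amount of information about this model, computed by a monotone fixpoint rather than by a nondeterministic guess. The starting point is the same as for the $\conp$ bound of Theorem~\ref{thm:conptransdataupper}: by Proposition~\ref{prop:transdecomp} (together with its transitivity analogue obtained through the reduction of Corollary~\ref{cor:decidetransgnf}), any counterexample to $\owqatrans(\instance_0,\Sigma,Q)$ can be taken to have an $\instance_0$-rooted tree decomposition of width $k-1$ with $k \le \mysize{\Sigma \wedge \neg Q}$, a \emph{constant} in data complexity. The crucial difference from the general $\agnf$ case is that $\acfgtgd$s are positive existential rules: they contain no disjunction and no genuine negation in their bodies. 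Consequently there is a least tree-like model, namely the chase of $\instance_0$ under $\Sigma$ with the distinguished relations kept transitive; it is universal for the positive query $Q$, and so $\owqatrans(\instance_0,\Sigma,Q)$ holds iff this chase satisfies $Q$.

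Next I would recast the ``decomposition theorem'' used for Theorem~\ref{thm:conptransdataupper} in this deterministic setting. There, checking a sentence on a decomposition is reduced to checking a derived sentence on $\instance_0$ equipped with \emph{annotations} that record, for each $\mysize{\phi}$-tuple $\vec c$ drawn from $\instance_0$, which relevant subformulas hold in the subtree interfacing with $\vec c$. For general $\agnf$ the correct annotation must be guessed, hence the $\conp$ upper bound; but because $\acfgtgd$ rules are monotone, the set of annotations realized by the chase is exactly the least fixpoint of a monotone operator on the lattice of annotation types, which is \emph{finite} since $\Sigma$ and $Q$ are fixed. I would therefore compute this least fixpoint by a bottom-up saturation: repeatedly fire rules and propagate annotation types until stabilization. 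Since the number of annotation types is bounded by a constant and each round touches polynomially many tuples over $\elems{\instance_0}$, the computation runs in $\ptime$ in $\mysize{\instance_0}$, exactly as in the known $\ptime$ bound for plain $\fgtgd$s \cite{bagetcomplexityfg}.

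The one genuinely new ingredient is transitivity, and here base-coveredness does the work. Because every distinguished atom in a rule body and in $Q$ is guarded by a base atom, every $R^\trans$-fact that can trigger a rule or witness a query atom relates two elements that co-occur in a bag; and the directly asserted $R^\trans$-edges (those from $\instance_0$ and from rule heads) are likewise local, since each head atom connects frontier elements to freshly created elements within a parent-child pair of bags. Hence the transitive closure imposed by the semantics only ever needs to be evaluated \emph{between co-occurring elements}, and reachability along these local edges is itself a monotone, tree-respecting property that I fold into the same fixpoint: an annotation type records, beyond the usual guarded-subformula information, which pairs of its named elements are $R^\trans$-connected through the subtree below. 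Newly derived transitive connections may unlock further rule firings, but since everything is monotone the combined saturation still converges to the least fixpoint. This is precisely where transitivity ($\owqatr$) behaves better than transitive closure ($\owqatc$): with only transitivity the asserted edges are generated locally and monotonically, so reachability is a plain least fixpoint, whereas an explicit base relation whose transitive closure is taken can be manipulated to encode the $\conp$-hard behaviour reflected in the complexity table.

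The main obstacle I expect is exactly the interaction between the transitive-closure computation and the chase inside a single monotone fixpoint: one must verify that adding a transitively derived local $R^\trans$-fact, which may then fire an $\acfgtgd$ and create new elements that in turn contribute new $R^\trans$-edges, neither breaks monotonicity nor prevents termination of the type-level saturation, and that the bounded-type abstraction of reachability through an infinite regular subtree is faithful. Establishing this faithfulness---that the finite annotation types correctly summarise both the guarded subformulas and the long-range $R^\trans$-reachability of the possibly infinite canonical model---together with checking that $Q$, being base-covered, is correctly evaluated from these local annotations, is where the careful argument lies; the surrounding $\ptime$ accounting is then routine given that $\Sigma$ and $Q$ are fixed.
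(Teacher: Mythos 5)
Your key insight coincides with the paper's, but your proof architecture is genuinely different, and it is worth spelling out the trade. The paper proves this theorem by a short reduction (Proposition~\ref{prop:reducetr}): it adds to $\Sigma$ the $k$-guardedly transitive axioms---which, crucially, are expressible as $\fgtgd$s---and adds a clique of $G$-facts over $\elems{\instance_0}$, thereby turning $\owqatrans$ into a classical $\owqa$ instance for $\fgtgd$s, whose $\ptime$ data complexity is inherited from \citeA{bagetcomplexityfg}; correctness of the reduction rests on exactly the locality phenomenon you identify, packaged as the Transitivity Lemma (Lemma~\ref{lem:guarded-transitivity}: in a base-guarded-interface tree decomposition, guarded transitivity already derives every $R^\trans$-fact between base-guarded pairs) and the Base-Coveredness Lemma (Lemma~\ref{lemma:acov}: the remaining $R^\trans$-facts are invisible to $\acgnf$ constraints and base-covered queries). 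You instead work with the canonical chase model and evaluate $Q$ on it by a monotone fixpoint over finitely many annotation types; this is viable---it essentially re-derives, enriched with reachability information, the machinery that the cited $\fgtgd$ result provides---and its payoff is a self-contained algorithm plus a transparent explanation of why transitivity (a Horn condition) stays monotone while transitive closure (an exactness condition) does not, matching the $\conp$-hardness of $\owqatc$ in Figure~\ref{tab:complexity}. Two points in your sketch would need repair in a full write-up. First, you cannot literally ``recast the decomposition theorem'' of the $\conp$ proof: its annotations are full quantifier-rank FO types, which carry negative information and admit no useful monotone lattice structure; for the chase you must switch to positive-existential types (partial matches of $Q$ together with $R^\trans$-connectivity of interface tuples, in both the downward and the through-the-context direction, since a path between two elements of a bag may leave the subtree and return via the parent), which suffices precisely because $\Sigma$ and $Q$ are positive. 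Second, the tree-likeness you need is not that of Proposition~\ref{prop:transdecomp}, which concerns arbitrary counterexample models of $\Sigma \wedge \neg Q$; what your argument requires is that the $\acfgtgd$-chase itself is tree-like with base-guarded interfaces, which holds because every rule head attaches fresh elements to a base-guarded frontier. With those corrections your route goes through, at the cost of proving by hand the faithfulness of the finite type abstraction that the paper obtains by citation.
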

The proof uses a reduction to the standard $\owqa$ problem for
$\fgtgd$s, and then applies the $\ptime$ result of \citeA{bagetcomplexityfg}. The reduction again makes use of tree-likeness to show that we can replace the requirement  that the 
$R_i^\trans$ are transitive by the weaker requirement of transitivity within small sets (intuitively, within 
bags of a decomposition).
We will also use this idea for linear orders (see Proposition~\ref{prop:rewritelin}),
so we defer the proof of this result to
Appendix~\ref{apx:ptimetransdataupperproof}.

As we will see in Section~\ref{sec:hardness},
restricting to $\owqatrans$ in this result is in fact essential to make
data complexity tractable, as hardness holds otherwise.

\section{Decidability Results for Linear Orders} \label{sec:decidlin}

\renewcommand{\drel}{<}

We now move to $\owqalin$,
the setting where the distinguished relations $<_i$ of~$\sigmad$ are \emph{linear}
(total) strict orders, i.e., they are transitive, irreflexive, and total.

Unlike the previous section which used base-frontier-guarded constraints,
we restrict to \emph{base-covered constraints and queries} in this section.
We do this because we will see in Section~\ref{sec:undecidlin}
that $\owqalin$ is undecidable if we allow base-frontier-guarded constraints,
in contrast to the decidability results for $\owqatc$ and $\owqatr$ with 
such constraints.

Our main result will again be decidability of $\owqalin$ with these
additional conditions, but
the proof techniques differ from the previous section: instead of using automata,
we reduce $\owqalin$ to a traditional query answering problem (without distinguished relations),
by approximating the linear order axioms in $\gnf$.

\subsection{Deciding $\owqalin$ by Approximating Linear Order Axioms}
\label{sec:decidingapprox}

We prove the following result about the decidability
and combined complexity of $\owqalin$:

\begin{theorem}
  \label{thm:decidelindirect}
We can decide $\owqalin(\instance_0, \Sigma, Q)$ in $\twoexp$,
where
$\instance_0$ ranges over finite sets of facts,
$\Sigma$ over $\acgnf$,
and $Q$ over base-covered UCQs.
In particular, this holds when $\Sigma$ consists of $\acfgtgd$s.
\end{theorem}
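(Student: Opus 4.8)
The plan is to \emph{compile away} the linear-order semantics, reducing $\owqalin$ to an ordinary query-answering problem for $\acgnf$ and then invoking the combined-complexity bound already established for that fragment. Recall that $\owqalin(\instance_0,\Sigma,Q)$ fails exactly when $\instance_0 \wedge \Sigma \wedge \neg Q$ has a model in which each ${<_i} \in \sigmad$ is a genuine strict linear order. I would introduce a $\gnf$ sentence $\Lambda$ that \emph{approximates} the linear-order axioms using only base-guards, treating the ${<_i}$ henceforth as ordinary relations, and prove (this is the content of Proposition~\ref{prop:rewritelin}) that
\[
\text{$\instance_0 \wedge \Sigma \wedge \neg Q$ is satisfiable over genuine linear orders} \iff \text{$\instance_0 \wedge \Sigma \wedge \Lambda \wedge \neg Q$ is satisfiable}.
\]
Since $\Sigma$, $\neg Q$, and $\Lambda$ are all in $\acgnf$ over a signature now free of any special semantics, the right-hand side is an instance of $\agnf$ satisfiability, which Theorem~\ref{thm:decidtransautomata} (in the degenerate case $\sigmad = \emptyset$) decides in $\twoexp$; as $\Lambda$ is only polynomially larger than the input, this yields the claimed bound. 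The statement for $\acfgtgd$s is then immediate, since every $\acfgtgd$ lies in $\acgnf$.

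Concretely, $\Lambda$ would assert, for each distinguished $<_i$: irreflexivity ($\neg\exists x\,(x <_i x)$); and, \emph{relativised to each base-guard} $\beta$, antisymmetry, comparability, and transitivity — e.g.\ transitivity as $\neg\exists xyz\,(x <_i y \wedge y <_i z \wedge \beta(x,z) \wedge \neg (x <_i z))$. Every negatively occurring $<_i$-atom here is base-guarded, so $\Lambda \in \acgnf$, with size polynomial in $\mysize{\sigma}$. The forward (soundness) direction of the equivalence is immediate: a genuine linear order satisfies each conjunct of $\Lambda$, so any genuine-linear counterexample is already a model of the compiled problem.

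The hard direction is to turn a model $M$ of $\instance_0 \wedge \Sigma \wedge \Lambda \wedge \neg Q$ into a genuine-linear counterexample. Here I would first invoke tree-likeness: by the analogue of Proposition~\ref{prop:transdecomp} for $\acgnf$, I may take $M$ to have an $\instance_0$-rooted bounded-width tree decomposition over the base relations, with the order facts not stored in the bags. The construction then keeps the base structure and the order on base-guarded pairs untouched and only \emph{adds} $<_i$-facts on pairs that are not base-guarded, so as to obtain a total order. The transfer of $\Sigma \wedge \neg Q$ is then a monotonicity argument that exploits base-coveredness: every negatively occurring $<_i$-atom is evaluated on a base-guarded pair, whose order is unchanged, so these atoms retain their truth value; freezing them, the remaining formula is monotone in the positively occurring $<_i$-atoms, and since each $<_i$ is only enlarged, satisfaction of $\Sigma$ transfers from $M$ to the new structure, while no new match of the base-covered $Q$ can be created (its atoms are base-guarded, hence unaffected), so $\neg Q$ survives.

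The \textbf{main obstacle} is guaranteeing that the base-guarded order actually \emph{extends} to a total order, i.e.\ that it is acyclic. This is delicate precisely because $\gnf$ cannot express global transitivity: $\Lambda$ can only demand transitivity across a base-guarded ``shortcut'', so a priori a cycle $a <_i b <_i c <_i a$ whose vertices are never simultaneously base-guarded escapes the axioms — and if only such cyclic models satisfied the compiled problem, the reduction would be unsound. The resolution must exploit tree-likeness. By a convexity (median-node) argument on the decomposition, the connected occurrences of $a$, $b$, $c$ are forced into a common bag, so any cycle is confined to a bounded region; the task is then to choose the notion of tree-like witness (recording enough order information locally) and the axioms $\Lambda$ so that local consistency within bags propagates to global acyclicity — the same ``transitivity within bags is enough'' phenomenon exploited for Theorem~\ref{thm:ptimetransdataupper}. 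Making this ``local linearity suffices on tree-like models'' argument precise, while verifying that the resulting $\Lambda$ stays base-covered and polynomial in size, is the heart of Proposition~\ref{prop:rewritelin} and hence of the theorem.
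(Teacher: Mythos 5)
You follow the same architecture as the paper's proof (the reduction of Proposition~\ref{prop:rewritelin}: approximate the order axioms in a base-guarded way, restrict to tree-like countermodels, take a transitive closure followed by a Szpilrajn extension, and transfer $\Sigma \wedge \neg Q$ by the monotonicity argument of the Base-coveredness Lemma~\ref{lemma:acov}), and you correctly identify global acyclicity as the crux. But you leave exactly that crux unresolved, and the concrete axioms you propose cannot support it. Single-step guarded transitivity is too weak: inside an internal bag of a width-$(k-1)$ decomposition, the intermediate pairs of a $<$-path or $<$-cycle need not be base-guarded, so your shortcut axiom never fires, and an unguarded three-element cycle lying in one bag satisfies your entire $\Lambda$ (irreflexivity included). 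The paper's $k$-guardedly transitive axioms instead speak about $<$-paths of every length $l \leq k$ with base-guarded endpoints, and---this is the trick you are missing---include the degenerate instances in which the two endpoints coincide and the guard is the equality $x = x$ (a legal base-guard for a single variable); together with irreflexivity, these forbid every cycle of length at most $k$ outright, with no base atoms required. Your Helly/median-node observation also only works for cycles of length three; the paper handles arbitrary cycles by an induction over the decomposition (Transitivity Lemma~\ref{lem:guarded-transitivity} and Cycles Lemma~\ref{lem:nobadcycle}), and that induction needs the witness decomposition both to have base-guarded interfaces (Lemma~\ref{lemma:guarded-interface-dec}) and to store the $<$-facts in its bags---whereas you stipulate that the order facts are \emph{not} in the bags, which destroys all locality of cycles and defeats even your own three-cycle argument.

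There is a second concrete gap: the reduction must modify $\instance_0$, not only $\Sigma$. The paper adds a fresh base relation $G$ together with all facts $G(a,b)$ for $a, b \in \elems{\instance_0}$. Without this, your reduction is unsound already in the constraint-free case: take $\instance_0 = \{a < b,\; b < c,\; c < a\}$ with no base facts, $\Sigma = \emptyset$, and any base-covered UCQ $Q$. Over genuine linear orders no superset of $\instance_0$ exists, so $\owqalin(\instance_0, \emptyset, Q)$ holds vacuously; yet $\instance_0$ itself satisfies your $\Lambda$ (all guard-relativised axioms are vacuous) and violates $Q$, so your compiled problem answers ``not entailed''. Note also that the root bag of an $\instance_0$-rooted decomposition has size $\card{\elems{\instance_0}}$, which is not bounded by $k$, so even $k$-step transitivity would not help there; it is the full pairwise $G$-guarding of the root that lets the guarded axioms iterate at the root (this is case~(i) in the proofs of Lemmas~\ref{lem:guarded-transitivity} and~\ref{lem:nobadcycle}). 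In short, the skeleton of your reduction is right, but the pieces you defer---the exact axioms, the $G$-completion of $\instance_0$, the correct notion of tree-like witness, and the Transitivity/Cycles Lemmas establishing that these suffice (Lemma~\ref{lemma:reducelin})---are the actual mathematical content of Proposition~\ref{prop:rewritelin}, and the instantiation you do give would produce wrong answers.
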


Our technique here is to reduce
$\owqalin$ with $\acgnf$ constraints to traditional $\owqa$ for $\gnf$
constraints.
This implies decidability in $\twoexp$ using prior results
on $\owqa$ \cite{vldb12}.
The reduction is quite simple, and hence could be applicable to   other
constraint classes: it simply adds additional constraints
that enforce the linear order conditions.
However, as we cannot express transitivity or totality in $\gnf$,
we only add
a weakening of these properties that is expressible in $\gnf$,
and then argue that this is sufficient for our purposes.
The reduction is described in the following proposition.
\begin{proposition}
  \label{prop:rewritelin}
  For any finite set of facts $\instance_0$,
  constraints $\Sigma \in \acgnf$,
  and
  base-covered UCQ $Q$,
  we
  can compute $\instance_0'$ and $\Sigma' \in \agnf$ in $\ptime$
  such that we have
  $\owqalin(\instance_0, \Sigma, Q)$ iff we have
  $\owqa(\instance_0', \Sigma', Q)$.
\end{proposition}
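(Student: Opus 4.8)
The plan is to set $\Sigma' \mapsfrom \Sigma \wedge \Lambda$, where $\Lambda \in \agnf$ is a conjunction approximating the strict-linear-order axioms as tightly as base-guarded negation permits, and to take $\instance_0'$ to be $\instance_0$ with the ${<_i}$-facts it already contains saturated under transitivity within $\elems{\instance_0}$ (a $\ptime$ step; if this produces some $a <_i a$ then no linear order extends $\instance_0$, so I output a trivially true $\owqa$ instance). The guiding observation is that, because $\Sigma$ and $Q$ are base-covered, the order is \emph{observable} only on pairs of elements lying in a common base fact (\emph{co-occurring} pairs): every ${<_i}$-atom tested negatively by $\Sigma$, and every ${<_i}$-atom of $Q$, comes with a base-atom guard and is therefore matched to a co-occurring pair. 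Thus $\Lambda$ need only pin down the order on such pairs. Concretely I would include irreflexivity $\neg \exists x\,(x <_i x)$ (expressible exactly), together with the base-guarded forms of totality, antisymmetry, and transitivity: for each base relation and choice of positions, force two co-occurring elements to be comparable, forbid a co-occurring pair from being ordered both ways, and---when the three pairs of a triple all co-occur---force transitivity. These are polynomially many base-covered $\agnf$ sentences.

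For correctness, the \textbf{forward direction} is immediate: a genuine strict linear order satisfies every conjunct of $\Lambda$ and the saturation of $\instance_0'$ is sound by transitivity, so any counterexample to $\owqalin(\instance_0,\Sigma,Q)$ is at once a counterexample to $\owqa(\instance_0',\Sigma',Q)$. For the \textbf{converse}, given a counterexample $\instance$ to $\owqa(\instance_0',\Sigma',Q)$, I would first invoke the tree-like model property for $\agnf$ (the analogue of Proposition~\ref{prop:transdecomp}, but for ordinary satisfiability, since in the $\owqa$ problem the ${<_i}$ carry no special semantics) to assume $\instance$ has an $\instance_0'$-rooted tree decomposition of width at most $\mysize{\Sigma' \wedge \neg Q}$. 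I would then build a genuine linear-order counterexample $\instance^\star$ by extending each ${<_i}$ to a strict total order on $\elems{\instance}$ that leaves every co-occurring comparison untouched and only adds comparisons between elements sharing no base fact. That $\instance^\star$ still satisfies $\Sigma \wedge \neg Q$ is then a routine polarity argument resting on base-coveredness: the added comparisons are invisible both to the base-guarded (negative) occurrences of ${<_i}$ and to $Q$, while positive occurrences are monotone under adding facts, and the co-occurring comparisons that $\Sigma$ and $Q$ actually inspect are unchanged.

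The main obstacle is proving that such an extension exists at all, i.e.\ that the observable order---the union of the co-occurring comparisons with the ${<_i}$-facts of $\instance_0'$---is acyclic, so that Szpilrajn's extension theorem applies. The base-guarded antisymmetry and transitivity conjuncts of $\Lambda$ directly kill only \emph{short} order-cycles; a weak model can a priori contain a cycle spread across several bags, e.g.\ a rule head asserting $x <_i y_1 <_i y_2 <_i y_3 <_i x$ along a $B$-cycle, whose only chord $\{x,y_2\}$ lies in a common bag but in no common base fact, so that guarded (triangle) transitivity never fires. Ruling such cycles out is exactly where the bounded-width tree decomposition must be used: the intended mechanism is that transitivity holds \emph{within each bag}, so that along the decomposition every observable cycle collapses to an irreflexivity violation. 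Turning this per-bag transitivity---which is not itself base-guarded, since its conclusion pair need not co-occur in a base fact---into a sound and polynomially-sized family of $\agnf$ constraints, and verifying that the resulting $\Lambda$ makes every weak witness completable, is the delicate heart of the argument, which I would develop in the appendix.
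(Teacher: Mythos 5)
Your high-level plan coincides with the paper's: add guarded approximations of the linear-order axioms to $\Sigma$, pass to a tree-like counterexample, show the asserted order is acyclic, extend it to a strict linear order by Szpilrajn's theorem, and use base-coveredness to argue the extension is invisible to $\Sigma \wedge \neg Q$. The forward direction and the preservation step (your ``routine polarity argument'', the paper's Base-coveredness Lemma, Lemma~\ref{lemma:acov}) are fine. But the step you yourself flag as ``the delicate heart'' and defer to an appendix is exactly where the real content of the proof lies, and what you sketch for it would not work. Your axioms enforce only triangle transitivity on triples all of whose pairs co-occur in base facts, which---as your own four-element cycle example shows---cannot rule out cycles asserted by rule heads across several bags; and your proposed repair, ``transitivity within each bag'', is not expressible in $\agnf$ at all, since a bag is not a base-guarded set (you note this yourself). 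So the proposal stops precisely at the open problem rather than solving it.

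The paper closes this gap with three ingredients absent from your proposal. First, its guarded transitivity axioms are path-based, not triangle-based: for every $l \leq k-1$ (with $k = \max(\mysize{\Sigma \wedge \neg Q}, \arity{\sigma \cup \{G\}})$) they assert that if there is a $<$-path of length $l$ from $x$ to $y$ and the pair $\{x,y\}$ is guarded by an atom over $\sigmab \cup \{G\}$, then $x < y$; the intermediate points of the path need not be guarded at all. Second, $\instance_0'$ is not your transitive saturation but $\instance_0$ plus a fresh base relation $G$ holding of every pair of elements of $\instance_0$; this makes the entire root bag base-guarded, which is indispensable because the root bag has unbounded size, so cycles through it must be collapsed pair-by-pair using guards rather than by a length-$\leq k$ path axiom. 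Third---and this is the key structural idea you are missing---the counterexample is taken not with an ordinary bounded-width tree decomposition (your invocation of the analogue of Proposition~\ref{prop:transdecomp}) but with a \emph{base-guarded-interface} decomposition (Lemma~\ref{lemma:guarded-interface-dec}), in which adjacent bags overlap only in base-guarded sets. The Transitivity Lemma (Lemma~\ref{lem:guarded-transitivity}) then works by induction: a $<$-path between a base-guarded pair that leaves a bag must exit and re-enter through an interface, whose elements are base-guarded by the decomposition property, so the excursion collapses by induction and the path shortens until a path axiom applies; the Cycles Lemma (Lemma~\ref{lem:nobadcycle}) then yields acyclicity of the \emph{full} asserted order. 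Note also that it is the full order relation, not just your ``observable'' part, that must be acyclic and then extended, since order facts on non-co-occurring pairs (forced by positive, uncovered occurrences of $<$ in $\Sigma$) must be preserved in the extension. In short, the needed insight is that one never axiomatizes transitivity inside bags; one axiomatizes path-collapse on base-guarded pairs only, and chooses the witness so that bag interfaces are base-guarded.
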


Specifically,
$\instance_0'$ is $\instance_0$ together with facts $G(a,b)$ for every pair $a,b \in \elems{\instance_0}$,
where $G$ is some fresh binary base relation.
We define $\Sigma'$ as $\Sigma$ together with the \emph{$k$-guardedly linear axioms}
  for each distinguished relation~$\drel$, where $k$ is $\max(\mysize{\Sigma \wedge \neg
  Q}, \arity{\sigma \cup \{G\}})$; namely:

\begin{itemize}
\item guardedly total: 
$\forall x y ~ ( (\guardedbg(x,y) \wedge \neg (x = y)) \rightarrow x \drel y \vee y \drel x )$
\item irreflexive: 
$\neg \exists x ~ (x \drel x)$
\item $k$-guardedly transitive:
for $1 \leq l \leq k-1$: 
$\neg \exists x y ~ ( \psi_l(x,y) \wedge \guardedbg(x,y) \wedge \neg (x \drel
    y))$, and for $1 \leq l \leq k$:
$\neg \exists x ~ ( \psi_l(x,x) \wedge x = x \wedge \neg (x \drel x) )$
\end{itemize}
where:
\begin{itemize}
  \item $\guardedbg(x,y)$ is the formula expressing that $x,y$ is guarded by a relation in $\sigmab \cup \set{G}$
  (so it is an existentially-quantified disjunction over all possible atoms using a relation from $\sigmab \cup \set{G}$ and containing $x$ and $y$);
  \item $\psi_1(x,y)$ is just $x < y$; and
  \item $\psi_l(x,y)$ for $l \geq 2$ is:
$\exists x_2 \dots x_{l} ( x \drel x_2 \wedge \dots \wedge x_{l} \drel y
)$.
\end{itemize}
Unlike the property of being a linear order,
the $k$-guardedly linear axioms can be expressed in $\agnf$.
The idea is that these axioms are strong enough to enforce conditions about
transitivity and irreflexivity within ``small'' sets of elements---intuitively,
within sets of at most $k$ elements
that appear together in some bag of a $(k-1)$-width tree decomposition.

We now sketch the argument for the correctness
of the reduction.
The easy direction is where we assume $\owqa(\instance_0',\Sigma',Q)$ holds,
so any $\instance' \supseteq \instance_0'$ satisfying $\Sigma'$ must satisfy $Q$.
In this case, consider $\instance \supseteq \instance_0$ that satisfies $\Sigma$
and where all $\drel$ in $\sigmad$
are strict linear orders.
We must show that $\instance$ satisfies~$Q$.
First, observe that $\instance$ satisfies $\Sigma'$
since the \mbox{$k$-guardedly linear} axioms for $\drel$
are clearly satisfied for all~$k$
when $\drel$ is a strict linear order.
Now consider the extension of $\instance$ to $\instance'$
with facts $G(a,b)$ for all $a,b \in \elems{\instance_0}$.
This must still satisfy $\Sigma'$:
adding these facts
means there are additional $k$-guardedly linear requirements
on the elements from~$\instance_0$,
but these requirements already hold
since $\drel$ is a strict linear order.
Hence, by our initial assumption, $\instance'$ must satisfy $Q$.
Since $Q$ does not mention $G$,
the restriction of $\instance'$ back to~$\instance$ still satisfies $Q$ as well.
Therefore, $\owqalin(\instance_0,\Sigma,Q)$ holds.

For the harder direction, we prove the contrapositive of the implication,
namely, we suppose 
that $\owqa(\instance_0',\Sigma',Q)$ does not hold and
show that $\owqalin(\instance_0,\Sigma,Q)$ does not hold either.
From our assumption, there is some counterexample $\instance' \supseteq \instance_0'$ such that
$\instance'$ satisfies $\Sigma' \wedge \neg Q$.
We will again rely on the ability to  restrict to tree-like $\instance'$,
but with
a slightly different notion of tree-likeness.

We say a set $E$ of elements from $\elems{\instance}$
are \emph{base-guarded}
in $\instance$
if $\mysize{E} \leq 1$ or there is some $\sigmab$-fact or $G$-fact in $\instance$ that mentions all of the elements in $E$.
A \emph{base-guarded-interface tree decomposition} $(T, \child ,\lambda)$ for $\instance$
is a tree decomposition satisfying the following additional property:
for all nodes $n_1$ that are not the root of $T$,
if $n_2$ is a child of $n_1$
and $E$ is the set of elements mentioned in both $n_1$ and $n_2$,
then $E$ is base-guarded in $\instance$.

\begin{example}
  \label{ex:baseguarded}
The tree decomposition in Figure~\ref{fig:tree} is not base-guarded since
the set of elements in the interface between
the bag with $\set{B(c,c),R(b,e)}$
and the bag with $\set{B(c,f),R(e,f),R(f,c)}$ is $\set{c,e}$,
which is not base-guarded in the pictured set $\instance$ of facts.
\end{example}

A sentence $\phi$ has \emph{base-guarded-interface $k$-tree-like witnesses} if
for any finite
set of facts $\instance_0$,
  if there is some
  $\instance \supseteq \instance_0$ satisfying $\phi$
 then there  is
 such an $\instance$
  with an $\instance_0$-rooted $(k-1)$-width
   base-guarded-interface tree decomposition.

By adapting the proof of tree-like witnesses for $\gnf$ in \cite{gnficalp}
we can show
that $\phi$ in $\agnf$ have base-guarded-interface $k$-tree-like witnesses
for $k \leq \mysize{\phi}$.
This is stated as Proposition~\ref{prop:guarded-interface-dec-appendix} in Appendix~\ref{app:guarded-interface}.
By generalizing this slightly, we can also show the following (see
Appendix~\ref{app:guarded-interface} again for details of the proof):

\newcommand{\guardedinterfacedec}{
  The sentence $\Sigma' \wedge \neg Q$ has base-guarded-interface $k$-tree-like
  witnesses when taking
   $k \colonequals \max(\mysize{\Sigma \wedge \neg Q}, \arity{\sigma \cup \{G\}})$.
 }
\begin{lemma}
  \label{lemma:guarded-interface-dec}
  \guardedinterfacedec
\end{lemma}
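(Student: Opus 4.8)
The plan is to recognize $\Sigma' \wedge \neg Q$ as a single $\agnf$ sentence and then feed it into the unravelling argument behind Proposition~\ref{prop:guarded-interface-dec-appendix}, but extracting from that argument the finer bound in which the bag size of the produced decomposition is governed by the \emph{width} of the sentence (the maximal number of free variables over all subformulas) rather than by its overall size. First I would check membership in $\agnf$: we have $\Sigma \in \acgnf \subseteq \agnf$, the sentence $\neg Q$ lies in $\agnf$ via the equality-guard convention for negating a Boolean sentence, and each $k$-guardedly linear axiom was already observed to be expressible in $\agnf$ (the only negated distinguished atoms, $\neg (x \drel y)$ and $\neg(x \drel x)$, are base-guarded by $\guardedbg(x,y)$ and by $x = x$ respectively). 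Hence $\Sigma' \wedge \neg Q \in \agnf$, so that for any finite $\instance_0$, any model $\instance \supseteq \instance_0$ of $\Sigma' \wedge \neg Q$ is preserved, in the relevant direction, under the base-guarded-negation unravelling used to prove Proposition~\ref{prop:guarded-interface-dec-appendix}, yielding a base-guarded-interface tree-like model rooted at $\instance_0$.

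The heart of the argument is a width computation showing that, although $\mysize{\Sigma'}$ is larger than $\mysize{\Sigma}$ (polynomially larger in $k$), the width of $\Sigma' \wedge \neg Q$ is still at most $k = \max(\mysize{\Sigma \wedge \neg Q}, \arity{\sigma \cup \{G\}})$. Indeed, the width of $\Sigma \wedge \neg Q$ is at most $\mysize{\Sigma \wedge \neg Q} \le k$. For the new axioms: the guardedly total and irreflexive axioms place at most $\arity{\sigma \cup \{G\}} \le k$ variables in any single subformula (this maximum being reached inside $\guardedbg$, an existential disjunction over base atoms of arity at most $\arity{\sigma \cup \{G\}}$). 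The only subformulas of the $k$-guardedly transitive axioms with many free variables are the chains $\psi_l$, and the ranges of $l$ are chosen precisely to stay within budget: $\psi_l(x,y) = \exists x_2 \dots x_l (x \drel x_2 \wedge \dots \wedge x_l \drel y)$ has a matrix with $l+1$ free variables and is used only for $l \le k-1$, while $\psi_l(x,x)$ has a matrix with $l$ free variables and is used only for $l \le k$. In both cases every subformula has at most $k$ free variables, so the width of each axiom, and hence of $\Sigma' \wedge \neg Q$, is at most $k$.

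With the width bounded by $k$, the refined form of Proposition~\ref{prop:guarded-interface-dec-appendix} produces an $\instance_0$-rooted base-guarded-interface tree decomposition of width at most $k - 1$, which is exactly the claim. The main obstacle is this first step: re-inspecting the unravelling proof to confirm that the governing parameter is the formula width and not its size (the stated bounds $k \le \mysize{\phi}$ in Proposition~\ref{prop:transdecomp} and Proposition~\ref{prop:guarded-interface-dec-appendix} merely use width $\le$ size, which is far too lossy here). A related point to verify is that the positive chains $\psi_l$ over the distinguished relation are handled correctly: since $\psi_l$ occurs only positively and every negated distinguished atom in $\Sigma' \wedge \neg Q$ is base-guarded, these chains are treated as ordinary CQ-conjuncts in the normal-form analysis of the $\gnf$ tree-model proof of \citeA{gnficalp}, so their intermediate (possibly unguarded) $\drel$-edges do not disrupt the construction.
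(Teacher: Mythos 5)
Your proposal is correct and takes essentially the same route as the paper: check that $\Sigma' \wedge \neg Q$ lies in $\agnf$, bound its width by $k$ via exactly the per-axiom computation the paper performs (the total and irreflexive axioms bounded by $\arity{\sigma \cup \{G\}}$, and the chains $\psi_l$ kept within budget precisely because they are used only for $l \leq k-1$ in $\psi_l(x,y)$ and $l \leq k$ in $\psi_l(x,x)$), and then invoke the width-parameterized tree-like-witness result. The one inessential difference is that the ``main obstacle'' you flag does not actually arise: Proposition~\ref{prop:guarded-interface-dec-appendix} is already stated for normal-form $\agnf$ formulas (with generalized base-guards) with $k$ equal to the \emph{width}, not the size, so no re-inspection of the unravelling is needed---one only has to route $\Sigma \wedge \neg Q$ through the normal-form conversion of Proposition~\ref{prop:nf}, which keeps its width at most $\mysize{\Sigma \wedge \neg Q} \leq k$, as the paper does.
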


Using this lemma, we can assume that we have some counterexample $\instance' \supseteq \instance_0'$
that satisfies $\Sigma' \wedge \neg Q$ and
has a $(k-1)$-width base-guarded-interface tree decomposition.
If every $\drel$ in $\sigmad$ is a strict linear order in $\instance'$,
then restricting $\instance'$ to the set of $\sigma$-facts
yields some $\instance$ that would satisfy $\Sigma \wedge \neg Q$, i.e., a
counterexample allowing
us to conclude that $\owqalin(\instance_0,\Sigma,Q)$ does not hold.
The problem is that there may be distinguished relations
$\drel$ that are not strict linear orders in $\instance'$.
In this case, we can show that $\instance'$ can 
be extended to some $\instance''$ that still satisfies $\Sigma' \wedge \neg Q$
but where all $\drel$ in $\sigmad$ are strict linear orders,
which allows us to conclude the proof.
Thus, the crucial part of the argument is about extending
$k$-guardedly linear counterexamples to genuine linear orders:

\begin{lemma} \label{lemma:reducelin}
If there is $\instance' \supseteq \instance_0'$
that has an $\instance_0'$-rooted base-guarded-interface $(k-1)$-width tree decomposition
and satisfies $\Sigma' \wedge \neg Q$ (and hence is $k$-guardedly linear),
then there is $\instance'' \supseteq \instance'$
that satisfies $\Sigma' \wedge \neg Q$
where each distinguished relation is a strict linear order.
\end{lemma}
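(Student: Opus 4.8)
The plan is to take the given $k$-guardedly linear counterexample $\instance'$ and, independently for each distinguished relation $<$, complete it to a genuine strict linear order on $\elems{\instance'}$, producing $\instance''$, while being careful to add $<$-facts \emph{only} on pairs that are not guarded by $\sigmab \cup \set{G}$. Since $\instance' \supseteq \instance_0'$ has an $\instance_0'$-rooted base-guarded-interface $(k-1)$-width tree decomposition (Lemma~\ref{lemma:guarded-interface-dec}), I would build this completion by a single top-down pass over the tree, maintaining a strict total order on the elements seen so far that extends $<$ on every already-processed bag.

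First I would extract two consequences of the $k$-guardedly linear axioms. \emph{Local acyclicity}: within any non-root bag $b$, which has at most $k$ elements, $<$ has no directed cycle, since a cycle of length $l \leq k$ is an instance of the chain formula $\psi_l(x,x)$, which the cyclic case of the $k$-guardedly transitive axiom together with irreflexivity forbids. \emph{Ordered interfaces}: the interface $E$ between a bag and its parent is base-guarded, so by the guardedly-total axiom every pair of $E$ is $<$-comparable, and by acyclicity this comparison is cycle-free; hence $<$ restricts to a genuine linear order on $E$. At the root, whose bag is $\instance_0'$, every pair is $G$-guarded, so the same two facts make $<$ a transitive tournament, i.e.\ already a strict linear order on $\elems{\instance_0}$.

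The merge then proceeds by induction down the tree. When processing a non-root bag $b$ with parent-interface $E$, the connectivity condition of the tree decomposition guarantees that a newly introduced element of $b$ can share a $<$-fact with an already-seen element only if that element lies in $E$. I would fix an arbitrary linearization $\prec_b$ of the acyclic relation $<$ on $\elems{b}$; since $<$ is already total on the guarded set $E$, $\prec_b$ agrees on $E$ with the global order built so far. Slotting the new elements of $b$ into the global order according to their $\prec_b$-positions relative to $E$ then yields a strict total order extending both the previous global order and $<$ on $b$. Carrying this to the leaves produces a strict linear order $<''$ on $\elems{\instance'}$ containing $<$, and doing this for every distinguished relation gives $\instance''$. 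It remains to check that $\instance''$ still satisfies $\Sigma' \wedge \neg Q$. The $k$-guardedly linear axioms hold automatically once each $<$ is a genuine linear order, and the only new facts lie on pairs that are not $(\sigmab \cup \set{G})$-guarded, hence not $\sigmab$-guarded. I would then prove preservation of $\neg Q$ and of $\Sigma$ by induction on $\acgnf$ normal form with polarity tracking: positive $\sigmad$-atoms are monotone under adding facts, while every negatively-occurring $\sigmad$-atom is base-guarded by base-coveredness and so tests $<$ only on $\sigmab$-guarded pairs, where $<'' = <'$. Since $Q$ is a base-covered UCQ, any match of $Q$ in $\instance''$ uses its $<$-atoms on $\sigmab$-guarded pairs and is already a match in $\instance'$, and the same polarity argument preserves each constraint of $\Sigma$.

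The main obstacle is the global consistency of the merge: I must ensure that the local orders coming from different bags never conflict when glued along the tree. This is precisely where both hypotheses are used—the cyclic case of $k$-guardedly transitivity eliminates the only conflicts that could arise inside a bag, while the base-guarded-interface property together with the guardedly-total axiom forces each shared interface to be already linearly ordered and consistent with the part of the order built above it. A secondary delicate point is the preservation argument: plain monotonicity fails because base-guarded negation is not monotone, so the induction must genuinely exploit that negatively-occurring distinguished atoms are confined to $\sigmab$-guarded pairs, which the completion leaves untouched.
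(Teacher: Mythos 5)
Your proof is correct, but it takes a genuinely different route from the paper's. The paper first forms the transitive closure of $<$ in $\instance'$, proves this closure is acyclic via the Cycles Lemma (Lemma~\ref{lem:nobadcycle}) --- which in turn rests on the Transitivity Lemma (Lemma~\ref{lem:guarded-transitivity}), both proved by induction over the base-guarded-interface tree decomposition --- then applies the Szpilrajn extension theorem \cite{Szpilrajn30} in one global step, and finally checks, exactly as you do, that the guardedly-total axiom forbids new $<$-facts on base-guarded pairs and that the Base-coveredness Lemma (Lemma~\ref{lemma:acov}) preserves $\Sigma \wedge \neg Q$. You instead build the linear order directly by a bag-by-bag merge along the tree decomposition: local acyclicity inside bags (from the cyclic instances of the $k$-guardedly transitive axioms plus irreflexivity, using that non-root bags have at most $k$ elements) lets you linearize each bag, and the conjunction of base-guarded interfaces with the guardedly-total axiom forces $<$ to already be a strict linear order on each interface, which is exactly what makes consecutive local linearizations glue; global acyclicity of $<$ then falls out of the construction instead of being proved up front, so you bypass the paper's two structural lemmas entirely. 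What the paper's route buys is modularity --- the Transitivity Lemma is reused elsewhere, e.g.\ in the reduction behind the $\owqatr$ data-complexity bound (Proposition~\ref{prop:reducetr}) --- and a one-shot order-extension argument that never mentions the traversal of the tree; what your route buys is a self-contained, more constructive proof of this particular lemma. Two points in your sketch would need explicit treatment in a full write-up: the tree decomposition may be infinite, so the ``single top-down pass'' is really a transfinite recursion with unions taken at limit stages; and the merge step silently uses the fact that the global order and $\prec_b$ agree on the overlap $E$ \emph{and} that all $<$-facts linking new elements of the bag to previously seen elements are confined to $E$ (your connectivity argument), which together guarantee a common linear extension --- without the guardedly-total axiom making $<$ total on $E$, two different linearizations of the interface could conflict and the merge would fail, so this is where the construction would break if base-coveredness of the interfaces were dropped.
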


\newcommand{\instanceg}{\instance'}
\subsection{Extending Approximate Linear Orders to Genuine Linear Orders (Proof of Lemma~\ref{lemma:reducelin})}

The proof of Lemma \ref{lemma:reducelin} 
is the main technical result in this section.
It require a few auxiliary lemmas
(Lemmas~\ref{lem:guarded-transitivity}, \ref{lem:nobadcycle}, and \ref{lemma:acov} below),
that describe the power of the $k$-guardedly linear axioms in a
$(k-1)$-width base-guarded-interface tree decomposition.

Before stating and proving these auxiliary lemmas,
we sketch the proof of Lemma~\ref{lemma:reducelin} to give an idea of how these lemmas will be used
(see page~\pageref{proof:reducelin} for more details about the proof).
First, sets of facts that have $(k-1)$-width base-guarded-interface tree decompositions
and satisfy \mbox{$k$-guardedly} linear axioms
must already be cycle-free with respect to $\drel$ (this is the Cycles Lemma, Lemma~\ref{lem:nobadcycle}).
Hence, by taking the transitive closure of $\drel$ in $\instance$,
we get a new set of facts
where every $\drel$ is a strict \emph{partial} order.
Any strict partial order can be further extended to a strict \emph{linear} order using known techniques,
so we can obtain $\instance'' \supseteq \instance'$
where $\drel$ is a strict linear order.
This $\instance''$ may have more $\drel$-facts than $\instance'$,
but the $k$-guardedly linear axioms
ensure that these new $\drel$-facts are only about pairs of elements
that are not base-guarded (this follows from the Transitivity Lemma, Lemma~\ref{lem:guarded-transitivity}).
It is clear that $\instance''$ still satisfies the $k$-guardedly linear axioms,
but the fear is that it might not satisfy $\Sigma \wedge \neg Q$.
However,
this is where use the base-covered assumption on $\Sigma \wedge \neg Q$:
satisfiability of $\Sigma \wedge \neg Q$ in $\acgnf$
is not affected by adding new $\drel$-facts
about pairs of elements that are not base-guarded
(this is the Base-coveredness Lemma, Lemma~\ref{lemma:acov}).

\subsubsection{Transitivity Lemma}
We first prove a result about transitivity
for sets of facts with
base-guarded-interface tree decompositions.

\begin{lemma}[Transitivity Lemma]\label{lem:guarded-transitivity}
Suppose $\instanceg$ is a set of facts with an
$\instanceg_0$-rooted $(k-1)$-width base-guarded-interface tree decomposition $(T, \child ,\lambda)$.
If $\instanceg$ is $k$-guardedly transitive with respect to binary relation $\drel$,
and there is a $\drel$-path $a_1 \dots a_n$
where the pair $\{a_1, a_n\}$ is base-guarded,
then $a_1 \drel a_n \in \instanceg$.
\end{lemma}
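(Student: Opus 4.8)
The plan is to prove a slightly more general statement by strong induction on the number of edges of the $\drel$-path, using the $k$-guardedly transitivity axioms as the base case and the base-guarded interfaces of the decomposition to fold long paths into shorter ones. Concretely, I would prove: \emph{for every $m$, any $\drel$-path $a_1 \dots a_n$ in $\instanceg$ with $n-1 \le m$ edges whose endpoints $\set{a_1, a_n}$ are base-guarded satisfies $a_1 \drel a_n \in \instanceg$.} Before starting I would note we may assume the path is \emph{simple}: if $a_i = a_j$ for $i < j$, deleting $a_{i+1}, \dots, a_j$ yields a strictly shorter $\drel$-path with the same endpoints, so the induction hypothesis applies.

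For the base case $m \le k-1$, the path itself witnesses $\psi_{n-1}(a_1,a_n)$ with $n-1 \le k-1$, and $\guardedbg(a_1,a_n)$ holds because $\set{a_1,a_n}$ is base-guarded, so the $k$-guardedly transitivity axiom directly gives $a_1 \drel a_n$ (using the $\psi_l(x,x)$ family if $a_1 = a_n$). For the inductive step $m \ge k$, the crux is to locate a \emph{proper} sub-path $a_p \dots a_q$ with $q \ge p+2$ and $(p,q) \neq (1,n)$ whose endpoints $\set{a_p,a_q}$ are again base-guarded. Granting this, the sub-path has at most $m-1$ edges, so the induction hypothesis yields $a_p \drel a_q$; folding the path to $a_1 \dots a_p a_q \dots a_n$ produces a path from $a_1$ to $a_n$ with at most $m-1$ edges and base-guarded endpoints, to which the induction hypothesis applies again to conclude $a_1 \drel a_n$.

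The heart of the argument is finding this foldable sub-path via the separation properties of the tree decomposition. Since $\set{a_1,a_n}$ is base-guarded, some bag $b^*$ contains both. As $n \ge k+1$ exceeds the bag size $k$, some $a_c$ with $1<c<n$ lies outside $\lambda(b^*)$, hence entirely inside one component $S$ of $T \setminus \set{b^*}$, whose interface $E \subseteq \lambda(b^*)$ is base-guarded. Tracking where the path enters and leaves $S$, I would let $a_p$ (with $p < c$) be the last path element before $a_c$ still in $\lambda(b^*)$, and $a_q$ (with $q > c$) the first one after. The edge entering $S$ is realized in a bag of $S$ containing the interior element $a_{p+1}$, so $a_p$ appears both in $\lambda(b^*)$ and in a bag of $S$; by connectedness $a_p \in E$, and symmetrically $a_q \in E$, making $\set{a_p,a_q}$ base-guarded with $q-p \ge 2$. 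This segment is proper unless $p=1$ and $q=n$, which forces the \emph{entire} path (all edges and elements) to lie inside $S$ together with $a_1, a_n \in E \subseteq \lambda(\mathrm{top}(S))$, where $\mathrm{top}(S)$ is the node of $S$ adjacent to $b^*$; in that degenerate case I would restart the search with $b^*$ replaced by $\mathrm{top}(S)$, which again contains both endpoints.

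The main obstacle is ensuring this ``descend into the subtree'' step terminates, since the decomposition may be infinite. The clean way to handle it is to fix at the outset the finite subtree $T_B$ spanned by one bag realizing each of the finitely many path edges. Each descent step moves the current central bag one step closer to $T_B$, and once the central bag lies in $T_B$ the whole path can no longer be contained in a single component off it, because $T_B$ is connected with at least two nodes (if $T_B$ were a single node all elements would share one bag, contradicting $n > k$) and so cannot sit inside a component not containing that bag. This guarantees that the located segment is proper and terminates the search. Everything else — reducing to a simple path, the base-case appeal to the axiom, and the length bookkeeping in the fold — is routine.
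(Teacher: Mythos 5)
Your overall strategy---induction on path length, folding sub-paths whose endpoints lie in a base-guarded interface, and closing short paths with the $k$-guardedly transitive axioms---is essentially the paper's proof; the paper merely packages your ``descent'' as a lexicographic induction on (path length, size of a minimal subtree containing the path) instead of your termination measure via $T_B$. But as written your argument has a genuine gap: it treats every bag of the decomposition alike, whereas both definitions you rely on make an exception for the root, and your two key claims fail exactly there. First, an $\instanceg_0$-rooted tree decomposition of width $k-1$ only bounds the size of bags \emph{other than the root}: the root bag carries all of $\elems{\instanceg_0}$, which is unbounded (it is the input data). So when the bag $b^*$ containing $\set{a_1,a_n}$ is the root, your claim that ``$n \ge k+1$ exceeds the bag size $k$, so some $a_c$ lies outside $\lambda(b^*)$'' can fail, and the case of a long path ($n-1 \ge k$, hence outside your base case) lying entirely inside the root bag is not covered by any branch of your proof; the same unbounded-root issue invalidates the parenthetical ``contradicting $n>k$'' in your termination argument. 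Second, the base-guarded-interface condition is imposed only for edges between a \emph{non-root} node and its children, so it says nothing about the edge between the root and a child: your assertion that the interface $E$ of the component $S$ is base-guarded is unjustified when $b^*$ is the root, and also when $b^*$ is a child of the root and $S$ is the component containing the root.

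Both holes are plugged in the paper by a property of $\instanceg_0$ that you never invoke: the lemma is applied with $\instanceg_0 = \instance_0'$, which contains a fact $G(a,b)$ for \emph{every} pair $a,b$ of elements of $\instance_0$, so every pair of elements represented at the root is base-guarded. With this, a path lying entirely inside the root bag can be contracted by repeated application of the axiom $\forall x y z \, ( (x \drel z \wedge z \drel y \wedge \guardedbg(x,y)) \rightarrow x \drel y )$ to $G$-guarded pairs (fold $a_1 a_2 a_3$ into $a_1 a_3$, and so on), and a pair $\set{a_p,a_q}$ lying in a root--child interface is $G$-guarded even though that interface, as a set, need not be guarded by any single fact. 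Your induction goes through once you add, at each point where you use the bag-size bound or the interface condition, the case split ``root bag versus internal bag'' and resolve the root case with these $G$-facts; without that, the proof is incomplete.
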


\begin{proof}
Suppose there is an $\drel$-path $a_1 \dots a_n$
and that the pair $\{a_1, a_n\}$ is base-guarded, with $v$ a node
where $a_1, a_n$ appear together.
We can assume that $a_1 \dots a_n$ is a minimal $\drel$-path
between $a_1$ and $a_n$,
so there are no repeated intermediate elements.
Consider a minimal subtree $T'$ of $T$ containing
$v$ and containing all of the elements $a_1 \dots a_n$.
We proceed by induction on the length of the path
and on the number of nodes of $T'$ (with the lexicographic order on this pair)
to show that $a_1 \drel a_n$ is in~$\instanceg$.

If all elements $a_1 \dots a_n$ are represented at $v$,
then either (i) all elements are in the root 
or (ii) the elements are in some internal node.
For (i), by construction of $\instanceg_0$,
every pair of elements in $a_1 \dots a_n$ is guarded (by $G$).
Hence, repeated application of the axiom
$
\forall x y z ( (x \drel z \wedge z \drel y \wedge \guardedbg(x,y)) \rightarrow x \drel y )
$
(which is part of the $k$-guardedly transitive axioms)
is enough to ensure that
$a_1 \drel a_n$ holds.
For (ii), since the bag size of an internal node is at most $k$,
we must have $n \leq k$, in which case an application of the
  $k$-guardedly transitive axiom to the guarded pair $\{a_1, a_n\}$ ensures that $a_1 \drel a_n$ holds.
This covers the base case of the induction.

Otherwise,
there must be some $1 \leq i < j \leq n$ such that
$a_i$ and $a_j$ are represented at $v$,
but $a_{i'}$ is not represented at $v$ for $i < i' < j$ (in particular $a_{i+1}$
is not represented at~$v$).
We claim that $a_i$ and $a_j$ must be in an interface together.

We say $a_{i+1}$ is \emph{represented in the direction of $v'$}
if $v'$ is a child of $v$ and $a_{i+1}$ is represented in the subtree
rooted at~$v'$,
or $v'$ is the parent of $v$ and $a_{i+1}$ is represented in the tree obtained
from~$T'$
by removing the subtree rooted at $v$.
Note that by definition of a tree decomposition,
since $a_{i+1}$ is not represented at $v$,
it can only be represented in at most one direction.

Let $v_{i+1}$ be the neighbor (child or parent) of $v$ such that
$a_{i+1}$ is represented in the direction of~$v_{i+1}$.
It is straightforward to show that
$a_i$ and $a_{j}$ must both be represented in the subtree in the direction of~$v_{i+1}$
in order to witness the facts $a_i \drel a_{i+1}$ and $a_{j-1} \drel a_{j}$.
But $a_i$ and $a_j$ are both in $v$,
so they must both be in $v_{i+1}$.
Hence, $a_i$ and $a_j$ are in the interface between $v$ and $v_{i+1}$.

If this is an interface with the root node, then the pair $a_i, a_j$
is guarded by $G$ by definition of~$\instanceg_0$.
Otherwise, it is base-guarded by definition of base-guarded-interface tree
  decompositions.

Hence, we can apply the inductive hypothesis to the path $a_i \dots a_j$
and the subtree $T''$ of $T'$ in the direction of $v_{i+1}$
to conclude that $a_i \drel a_j$ holds:
the reason why can apply the inductive hypothesis
is because $T''$ is smaller than $T'$ as we removed $v$, and $a_i \ldots a_j$ is no
longer than $a_1 \ldots a_n$.
If $i = 1$ and $j = n$, then we are done.
If not,
then we can apply the inductive hypothesis to the new, strictly shorter path
$a_1 \dots a_i a_j \dots a_n$ in $T'$
and conclude that $a_1 \drel a_n$ is in $\instanceg$ as desired.
\end{proof}

\subsubsection{Cycles Lemma}
We next show that within base-guarded-interface tree decompositions,
$k$-guarded transitivity and irreflexivity imply
cycle-freeness.

\begin{lemma}[Cycles Lemma] \label{lem:nobadcycle}
Suppose $\instanceg$ is a set of facts with an
$\instanceg_0$-rooted $(k-1)$-width base-guarded-interface tree decomposition $(T, \child ,\lambda)$.
If $\instanceg$ is $k$-guardedly transitive and irreflexive with respect to $\drel$,
then $\drel$ in $\instanceg$ cannot have a cycle.
\end{lemma}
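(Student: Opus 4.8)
The plan is to argue by contradiction: assume $\drel$ has a cycle in $\instanceg$ and fix one of minimal length, $a_1 \drel a_2 \drel \cdots \drel a_n \drel a_1$ with the $a_i$ pairwise distinct. The overall strategy is to show that short cycles are killed directly by the axioms, while long cycles can always be shortened, contradicting minimality. First I would dispose of the short case. The cycle witnesses $\psi_n(a_1,a_1)$ by taking the witnesses $x_2 := a_2, \dots, x_n := a_n$, so if $n \leq k$ the length-$n$ self-loop instance of the $k$-guardedly transitive axioms, $\neg \exists x ( \psi_n(x,x) \wedge x = x \wedge \neg(x \drel x))$, forces $a_1 \drel a_1$, which contradicts irreflexivity (for $n=1$ this is immediate). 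Hence a minimal cycle must satisfy $n > k$.

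The core of the argument is to manufacture, from a cycle of length $>k$, a strictly shorter cycle. The key tool is the Transitivity Lemma (Lemma~\ref{lem:guarded-transitivity}): if a $\drel$-path has base-guarded endpoints, then the corresponding $\drel$-edge already belongs to $\instanceg$. So I would hunt for a base-guarded pair $\{a_i, a_j\}$ of cycle elements that is separated by at least one intermediate cycle element. Given such a pair, the Transitivity Lemma applied to the sub-path $a_i \drel \cdots \drel a_j$ yields the shortcut edge $a_i \drel a_j$; replacing that sub-path by this single edge produces a genuine $\drel$-cycle that omits at least one element, contradicting minimality.

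To locate such a pair I would use the geometry of the decomposition. Let $T'$ be the minimal subtree of $T$ spanning all cycle elements. If $T'$ is a single node, it must be the root, since every non-root bag has at most $k < n$ elements; there all pairs from $\instanceg_0$ are $G$-guarded, so any three consecutive elements $a_{i-1} \drel a_i \drel a_{i+1}$ furnish a base-guarded pair $\{a_{i-1}, a_{i+1}\}$ and hence a shortcut. Otherwise I pick a leaf $\ell \neq $ root of $T'$, its parent $p$, and the interface $E := \lambda(\ell) \cap \lambda(p)$; by the base-guarded-interface property (and, when $p$ is the root, by $G$-guardedness within $\instanceg_0$), $E$ is base-guarded. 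Since $\ell$ is a leaf of the minimal spanning subtree, no cycle element occurs strictly below $\ell$, so every cycle element occurring in $\lambda(\ell)$ either lies in $E$ or is private to $\ell$; minimality of $T'$ guarantees at least one private element $a_m$. Taking the maximal run $a_i \drel \cdots \drel a_j$ of cycle elements lying in $\lambda(\ell)$ and containing $a_m$, a connectivity argument shows its endpoints must lie in $E$: the outward neighbours $a_{i-1}, a_{j+1}$ lie outside $\lambda(\ell)$, so the edges $a_{i-1} \drel a_i$ and $a_j \drel a_{j+1}$ are witnessed in bags other than $\ell$, forcing $a_i$ and $a_j$ to reappear on the $p$-side and hence to belong to $E$, while $a_m$ is strictly interior. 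Thus $\{a_i, a_j\} \subseteq E$ is base-guarded and strictly separated, and the Transitivity Lemma supplies the shortcut.

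The main obstacle is exactly this extraction of a base-guarded shortcut pair. The axioms only forbid short self-loops outright, so all the work lies in using the structure of the base-guarded-interface decomposition—leaves of the spanning subtree together with the connectedness of each element's bags—to push a long cycle down to a single interface, where base-guardedness (and therefore the Transitivity Lemma) becomes available. Once the shortcut edge is in hand, contradicting the minimality of the cycle is immediate, completing the argument.
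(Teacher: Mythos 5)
Your proposal is correct and follows essentially the same strategy as the paper's proof: take a minimal cycle, use the Transitivity Lemma (Lemma~\ref{lem:guarded-transitivity}) together with the connectivity structure of the base-guarded-interface decomposition to exhibit a base-guarded pair of cycle elements with at least one cycle element strictly between them, and then shortcut to a strictly shorter cycle, contradicting minimality. The differences are only in bookkeeping: you dispose of short cycles ($n \leq k$) via the unguarded self-loop instances $\neg \exists x \, ( \psi_n(x,x) \wedge x = x \wedge \neg (x \drel x))$ of the $k$-guardedly transitive axioms where the paper instead treats the case of a cycle contained in a single bag, and you locate the base-guarded pair at a leaf of the minimal spanning subtree (via a private element and a maximal run) where the paper works at a node containing $a_1$ and $a_n$ and finds a gap in the cycle there.
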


\begin{proof}
Suppose for the sake of contradiction that there is a cycle
$a_1 \dots a_n a_1$ in $\instanceg$ using relation $\drel$.
Take a minimal length cycle.

If elements $a_1 \dots a_n$ are all represented in a single node in $T$,
then either (i) all elements are in the root 
or (ii) the elements are in some internal node.
For (i), by construction of $\instanceg_0$,
every pair of elements in $a_1 \dots a_n$ is guarded (by $G$).
Hence, repeated application of the axiom
$
\forall x y z ( (x \drel z \wedge z \drel y \wedge \guardedbg(x,y)) \rightarrow x \drel y )
$
(which is part of the $k$-guardedly transitive axioms)
would force $a_1 < a_1$ to be in $\instanceg$,
which would contradict irreflexivity.
  Likewise, for (ii), since the bag size of an internal node is at most $k$,
we must have $n \leq k$, so we can apply the $k$-guardedly transitive axioms
to deduce $a_1 \drel a_1$, which contradicts irreflexivity.

Even if this is not the case, then since $a_n \drel a_1$ holds,
there must be some node $v$ in which both $a_1$ and $a_n$ are represented.
Since not all elements are represented at $v$, however,
there is $1 \leq i < j \leq n$ such that
$a_i$ and $a_j$ are represented at $v$,
but $a_{i'}$ is not represented at $v$ for $i < i' < j$.
We claim that $a_i$ and $a_j$ must be in an interface together.
Observe that $a_{i+1}$ is not represented at $v$.
Let $v_{i+1}$ be the neighbor of $v$ such that
$a_{i+1}$ is represented in the subtree in the direction of $v_{i+1}$.
It is straightforward to show that
$a_i$ and $a_{j}$ must both be represented in the subtree of $T'$ in the direction of $v_{i+1}$
in order to witness the facts $a_i \drel a_{i+1}$ and $a_{j-1} \drel a_{j}$.
But $a_i$ and $a_j$ are both in~$v$,
so they must both be in $v_{i+1}$.
Hence, $a_i$ and $a_j$ are in the interface between $v$ and $v_{i+1}$.
If this is an interface with the root node, then the pair $a_i, a_j$
is base-guarded (by definition of $\instanceg_0$);
otherwise, the definition of base-guarded-interface tree decomposition
ensures that they are base-guarded.
  By the Transitivity Lemma (Lemma~\ref{lem:guarded-transitivity}),
  this means that $a_i \drel a_j$ holds.
Hence, there is a strictly shorter cycle $a_1 \dots a_i a_j \dots a_n a_1$,
contradicting the minimality of the original cycle.
\end{proof}

\subsubsection{Base-Coveredness Lemma}
Last, we note that adding only facts about unguarded sets of elements cannot impact $\acgnf$ constraints.
This is where we use the base-coveredness assumption.

\begin{lemma}[Base-coveredness Lemma] \label{lemma:acov}
Let $\instanceg' \supseteq \instanceg$
where $\instanceg'$ contains additional facts about distinguished relations, 
  but no new facts about base-guarded tuples of elements, and where we have
  $\dom{\instanceg'} = \dom{\instanceg}$.
Let $\varphi(\vec{x}) \in \acgnf$.
If $\instanceg, \vec{a}$ satisfies $\varphi(\vec{x})$
then $\instanceg', \vec{a}$ satisfies $\varphi(\vec{x})$.
\end{lemma}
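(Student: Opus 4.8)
The plan is to reduce to the \nf for $\acgnf$ and then run a polarity-indexed induction. First I would convert $\varphi$ to an equivalent formula in \nf $\acgnf$ using Proposition~\ref{prop:nf}(iv), so that negation occurs only in the base-guarded shape $\alpha_i \wedge \neg \phi_i$ and, by the \nf $\acgnf$ condition, every distinguished atom occurring in a negatively-appearing CQ-shaped layer carries a base-guard inside that same layer. The decisive observation is about the two structures: since $\instanceg'$ adds to $\instanceg$ only distinguished facts, and none of the new facts is about a base-guarded tuple, the structures have identical base-facts and the same domain. Hence a tuple is base-guarded in $\instanceg$ iff it is base-guarded in $\instanceg'$, and \emph{any base-guarded distinguished fact of $\instanceg'$ already belongs to $\instanceg$}. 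This single fact is the engine of the whole argument.

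I would then prove, simultaneously by induction on negation depth, a statement about each UCQ-shaped subformula $\psi$ of $\varphi$, split by the polarity of its occurrence: when $\psi$ occurs under an \emph{even} number of negations, statement (P), namely that for all valuations $\vec b$, if $\instanceg, \vec b \models \psi$ then $\instanceg', \vec b \models \psi$; and when $\psi$ occurs under an \emph{odd} number of negations, statement (N), namely that for all $\vec b$, if $\instanceg', \vec b \models \psi$ then $\instanceg, \vec b \models \psi$. The lemma is the instance of (P) applied to the top-level $\varphi$ and the tuple $\vec a$. In the base case $\psi$ is a plain UCQ: (P) is immediate since UCQs are monotone under adding facts and the base-facts are unchanged; for (N) I fix a satisfying disjunct with its witnesses in $\instanceg'$, observe that base atoms and equalities transfer back verbatim, and—because $\psi$ occurs negatively—use base-coveredness to locate, for each distinguished conjunct $D(\vec z)$, a base-guard of $\vec z$ among the conjuncts, which holds in $\instanceg'$ and hence in $\instanceg$; this makes $\vec z$ base-guarded, so $D(\vec z)$ is old and lies in $\instanceg$, whence the whole disjunct holds in $\instanceg$.

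For the inductive step $\psi = \delta[Y_1 \mapsfrom \alpha_1 \wedge \neg \phi_1, \dots, Y_s \mapsfrom \alpha_s \wedge \neg \phi_s]$, the plain base and distinguished conjuncts of $\delta$ are handled exactly as in the base case, and each substituted hole $\alpha_i \wedge \neg \phi_i$ is treated by crossing the negation: in direction (P) I note that the guard $\alpha_i$ is a base atom (so it transfers) and apply the induction hypothesis (N) to the negatively-occurring $\phi_i$ in contrapositive form; symmetrically, in direction (N) the guard transfers and I apply (P) to the now positively-occurring $\phi_i$. Since the polarity of $\phi_i$ is opposite to that of the enclosing layer, the version of the hypothesis required always matches the one available, so the induction is self-consistent and never needs (N) for a positive occurrence (where it would in fact be false, as fresh distinguished facts can satisfy an unguarded positive atom). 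The main obstacle—and the only place base-coveredness is indispensable—is precisely the (N) direction for a distinguished atom occurring positively \emph{inside} a negated layer: a newly added fact of $\instanceg'$ could in principle make that layer true without being present in $\instanceg$, and it is the base-guard guaranteed by base-coveredness, which holds in both structures, that forbids this by forcing the atom's arguments to be base-guarded and hence the fact to be old rather than one of the freshly added non-base-guarded facts. I expect the only routine bookkeeping to be the tracking of polarity and negation depth; every genuine case collapses to the single fact that base-guarded distinguished facts are never added.
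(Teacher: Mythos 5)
Your proof is correct and is essentially the paper's own argument: the paper also normalizes to $\nf$ $\acgnf$ and runs a mutual induction on negation depth over two polarity-determined classes, $\acgnfminus$ and $\acgnfplus$ (covering required for negatively, respectively positively, occurring CQ-shaped subformulas), whose two transfer directions are exactly your statements (P) and (N), concluding since $\acgnf = \acgnfminus$. Your write-up in fact supplies the details (the invariance of base-guardedness, the ``base-guarded distinguished facts are never new'' engine, and the guard/negation bookkeeping) that the paper's proof sketch declares straightforward and omits.
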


\begin{proof}[Proof sketch]
We assume without loss of generality that
$\varphi$ is in $\nf$ $\acgnf$.
Let $\acgnfplus$ (respectively, $\acgnfminus$)
denote the $\nf$ $\agnf$ formulas
where the covering requirements
(distinguished atoms in CQ-shaped subformulas are appropriately base-guarded)
are required for
positively occurring (respectively, negatively occurring)
CQ-shaped formulas.
Using induction on the negation depth of $\varphi$,
we can show that:
\begin{quote}
For $\varphi(\vec{x}) \in \acgnfminus$:
$\instanceg, \vec{a}$ satisfies $\varphi(\vec{x})$ implies $\instanceg', \vec{a}$ satisfies $\varphi(\vec{x})$. \\
For $\varphi(\vec{x}) \in \acgnfplus$:
$\instanceg', \vec{a}$ satisfies  $\varphi(\vec{x})$ implies $\instanceg, \vec{a}$ satisfies  $\varphi(\vec{x})$.
\end{quote}
We omit this straightforward proof.
The desired result immediately follows, since $\acgnf = \acgnfminus$.
\end{proof}

\subsubsection{Final Proof of Lemma~\ref{lemma:reducelin}}\label{proof:reducelin}
\newcommand{\instancegtrans}{\mathcal{G}}
\newcommand{\instancegext}{\instanceg'}
We are now ready to prove Lemma~\ref{lemma:reducelin}.
We start with some $\instanceg \subseteq \instanceg_0$ satisfying $\Sigma' \wedge \neg Q$
with an
$\instanceg_0$-rooted $(k-1)$-width base-guarded-interface tree decomposition.
We prove that there is an extension $\instancegext$ of $\instanceg$ satisfying $\Sigma' \wedge \neg Q$
in which each distinguished relation is a strict linear order.
Note that because $\instanceg$ satisfies $\Sigma'$,
we know that $\instanceg$ is $k$-guardedly linear.

We present the argument when
there is one $\drel$ in $\sigmad$ that is not a strict linear order in $\instanceg$,
but the argument is similar if there are multiple distinguished relations like
this, as we can handle each distinguished relation independently
with the method
that we will present.
Let $\instancegtrans$ be the extension of $\instanceg$ obtained by taking $\drel$ in $\instancegtrans$
to be the
transitive closure of $\drel$ in $\instanceg$.
Suppose for the sake of contradiction that there is a $\drel$-cycle in $\instancegtrans$.
We proceed by induction on the number of facts from \mbox{$\instancegtrans
\setminus \instanceg$} used in this cycle.
If there are no facts from $\instancegtrans \setminus \instanceg$ in the cycle,
the Cycles Lemma (Lemma~\ref{lem:nobadcycle}) yields the contradiction.
Otherwise,
suppose that there is a cycle involving $(a_1,a_n)$,
where $(a_1,a_n)$ is a $\drel$-fact in $\instancegtrans \setminus \instanceg$ coming from
facts $(a_1,a_2), \dots, (a_{n-1},a_n)$ in $\instanceg$.
By replacing $(a_1,a_n)$ in this cycle with $(a_1,a_2), \dots, (a_{n-1},a_n)$,
we get a (longer) cycle with fewer facts from~\mbox{$\instancegtrans \setminus
\instanceg$},
which is a contradiction by the inductive hypothesis.

Since $\drel$ is transitive in $\instancegtrans$ and cycle-free,
the relation $\drel$ in $\instancegtrans$ must be a strict partial order.
We now apply the \emph{order extension principle} or
\emph{Szpilrajn extension theorem} \cite{Szpilrajn30}:
any strict partial order can be extended to a strict total order.
From this, we deduce that $\instancegtrans$ can be further extended by additional $\drel$-facts to obtain some $\instancegext$
where $\drel$ is a strict total order.

We must prove that $\instancegext \supseteq \instancegtrans \supseteq \instanceg \supseteq \instanceg_0$
does not include any new $\drel$-facts
about base-guarded tuples.
Suppose for the sake of contradiction that there is a new fact $a \drel b$
in $\instancegext \setminus \instanceg$,
where $\{a,b\}$ is base-guarded in $\instanceg$.
By the guardedly total axiom, it must be the case that there was already
$b \drel a$ in $\instanceg$, and hence also in $\instancegext$.
But $a \drel b$ and $b \drel a$ in $\instancegext$ would together imply
$a \drel a$ in $\instancegext$,
contradicting the fact that $\instancegext$ is a strict linear order.

Hence, $\instanceg$ and $\instancegext$ agree on all facts about base-guarded tuples.
Since $Q$ is base-covered and $\Sigma \in \acgnf$,
$\Sigma \wedge \neg Q \in \acgnf$.
Thus, the Base-Coveredness Lemma (Lemma~\ref{lemma:acov}) guarantees
that $\Sigma \wedge \neg Q$ is still satisfied in $\instancegext$.
Since $\instancegext$ also trivially satisfies all of the $k$-guardedly linear axioms,
it satisfies $\Sigma' \wedge \neg Q$ as required.
This concludes the proof of Lemma~\ref{lemma:reducelin}.

\subsection{Data Complexity}
The result of Theorem~\ref{thm:decidelindirect} is a combined complexity
upper bound. However, as it
works by reducing to traditional $\owqa$ in $\ptime$,
data complexity upper bounds follow
from  prior work \cite{vldb12}.

\begin{corollary}
  For any $\acgnf$ constraints $\Sigma$ and base-covered UCQ $Q$, given a finite
  set of facts $\instance_0$, we can decide $\owqalin(\instance_0, \Sigma, Q)$ in
  $\conp$ data complexity. 
\end{corollary}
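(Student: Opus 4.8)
The plan is to read off the data complexity bound directly from the reduction already established in Proposition~\ref{prop:rewritelin}, combining it with known bounds for traditional query answering. The whole argument hinges on checking that the reduction is faithful with respect to \emph{data} complexity, i.e., that fixing $\Sigma$ and $Q$ leaves us with a fixed instance of traditional $\owqa$ whose only data-dependent ingredient is polynomial in $\instance_0$.

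First, I would recall that Proposition~\ref{prop:rewritelin} produces, from $\instance_0$, $\Sigma \in \acgnf$, and the base-covered UCQ $Q$, a set of facts $\instance_0'$ and constraints $\Sigma' \in \agnf$ such that $\owqalin(\instance_0,\Sigma,Q)$ holds iff $\owqa(\instance_0',\Sigma',Q)$ holds. The key observation for data complexity is that the parameter $k \colonequals \max(\mysize{\Sigma \wedge \neg Q}, \arity{\sigma \cup \set{G}})$ governing the $k$-guardedly linear axioms depends only on $\Sigma$, $Q$, and the signature, and \emph{not} on $\instance_0$. Consequently, when $\Sigma$ and $Q$ are held fixed, $\Sigma'$ is a single fixed $\agnf$ sentence and the query $Q$ is unchanged, so the target of the reduction is a genuine data complexity instance of traditional $\owqa$.

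Second, the only data-dependent part of the output is $\instance_0'$, which is $\instance_0$ augmented by the facts $G(a,b)$ for all pairs $a,b \in \elems{\instance_0}$. This augmentation has size quadratic in $\mysize{\instance_0}$ and is computable in $\ptime$ in $\mysize{\instance_0}$. Thus, as a function of the data alone, the reduction runs in polynomial time and produces an $\instance_0'$ of size polynomial in $\mysize{\instance_0}$.

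Finally, I would invoke the $\conp$ data complexity of traditional $\owqa$ for $\gnf$ constraints established in prior work \cite{vldb12}. Since $\agnf \subseteq \gnf$ and both $\Sigma'$ and $Q$ are fixed, deciding $\owqa(\instance_0',\Sigma',Q)$ lies in $\conp$ in $\mysize{\instance_0'}$, hence in $\mysize{\instance_0}$. Composing this with the polynomial-time, polynomial-size data reduction yields $\conp$ data complexity for $\owqalin(\instance_0,\Sigma,Q)$, as claimed. I do not anticipate a genuine obstacle here: the one point requiring care is verifying that $\Sigma'$ is data-independent, which is immediate once one notes that $k$ does not depend on $\instance_0$.
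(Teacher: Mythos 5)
Your proposal is correct and follows exactly the paper's argument: the paper likewise derives this corollary by noting that the reduction of Proposition~\ref{prop:rewritelin} runs in $\ptime$ (with $\Sigma'$ fixed once $\Sigma$ and $Q$ are fixed, and $\instance_0'$ polynomial in $\instance_0$) and then invoking the $\conp$ data complexity of traditional $\owqa$ from prior work \cite{vldb12}. Your additional check that the parameter $k$ is independent of the data is the right point to verify, and it holds just as you say.
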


This is similar to the way data complexity bounds were shown for $\owqatr$
(in Theorem~\ref{thm:ptimetransdataupper}).
However, unlike for the $\owqatr$ problem, the constraint rewriting in this
section introduces disjunction, so rewriting a $\owqalin$ problem for $\acfgtgd$s does not produce a classical query answering problem for
$\fgtgd$s. Thus the rewriting does not  imply a $\ptime$ data complexity upper bound for
$\acfgtgd$; indeed, we will see in the next section
(in Proposition~\ref{prop:lindatacompl}) that it is $\conp$-hard.

\section{Hardness Results}\label{sec:hardness}

We now show complexity lower bounds. We already know that each one of our variants of
$\owqa$ are $\twoexp$-hard in combined complexity, and $\conp$-hard in data
complexity, when $\gnf$ constraints are allowed: this follows from existing bounds on $\gnf$
reasoning even without distinguished relations \cite{vldb12}.
However, in some cases, we can show the same hardness results for weaker
languages, using the distinguished relations.

In this section, we first summarize our hardness results in
Sections~\ref{sec:hardness-tc}~and~\ref{sec:hardness-lin},
and then
present the proofs in Section~\ref{sec:hardness-proofs}.

\subsection{Hardness for $\owqatc$}\label{sec:hardness-tc}

In the setting where we have distinguished relations interpreted as the
transitive closure of other relations, we can show
$\twoexp$-hardness in combined complexity, and $\conp$-hardness in data
complexity,
for the much
weaker language of $\aincd$s.
This is in contrast with Theorem \ref{thm:ptimetransdataupper}, which showed $\ptime$
data complexity for $\owqatr$ with the more expressive language of $\acfgtgd$s.

We show hardness via a reduction from $\owqa$ with
\emph{disjunctive inclusion dependencies} ($\did$s): recall their definition in
Section~\ref{sec:guardedlogics}. $\did$s
are known to be $\twoexp$-hard in combined complexity \mcite[Thm.
2]{bourhispieris} and $\conp$-hard in data complexity
\cite{calvanese2006data,bourhispieris}, even without distinguished relations. We
use transitive closure to emulate disjunction---as was already suggested in
the description logic context by \citeA{oldhorrocks}---by creating an
$R^\trans_i$-fact and limiting the length of a witness $R_i$-path using $Q'$.
The choice of the length of the witness path among the possible lengths is used to
mimic the disjunction.  We thus show:

\newcommand{\tcdisj}{
  For any finite set of facts $\instance_0$,
  $\did$s $\Sigma$,
  and UCQ $Q$ on a signature $\sigma$,
  we can compute in $\ptime$ a set of facts $\instance_0'$,
  $\aincd$s $\Sigma'$,
  and a base-covered CQ $Q'$ on a signature $\sigma'$
  (with a single distinguished relation),
  such that $\owqa(\instance_0, \Sigma,
  Q)$ iff $\owqatc(\instance_0', \Sigma', Q')$.
}

\begin{theorem}
  \label{thm:tcdisj}
  \tcdisj
\end{theorem}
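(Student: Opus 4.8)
The plan is to reduce query answering with $\did$s (known to be $\twoexp$-hard in combined complexity and $\conp$-hard in data complexity, even without distinguished relations) to $\owqatc$ with only $\aincd$s, using a single distinguished relation $R^\trans$ interpreted as the transitive closure of a fresh base relation $R$. The engine of the reduction is the observation that forcing a single fact $R^\trans(z,w)$ commits every model to an $R$-path from $z$ to $w$ of some length $\ell \ge 1$, while the model remains free to choose $\ell$; if in addition the query forbids long $R$-chains, then $\ell$ is pinned to the range $\{1,\dots,n\}$. Thus the mere existence of the forced transitive-closure edge already encodes the disjunction ``disjunct $1$ or $\dots$ or disjunct $n$'', with the chosen length $\ell = i$ playing the role of selecting the $i$-th disjunct, so that \emph{no} genuinely disjunctive constraint is required in $\Sigma'$.

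Concretely, I would proceed as follows. First, normalise the given $\did$s so that each body firing exposes an $n$-way choice and so that the frontier tuple $\vec x$ can be transported onto fresh elements by inclusion dependencies; for a dependency $\forall \vec x\,(R_0(\vec x) \rightarrow \bigvee_{i=1}^n \exists \vec y_i\, S_i(\vec x, \vec y_i))$ I introduce fresh base relations recording $\vec x$. Second, using $\aincd$s, fire from $R_0(\vec x)$ to create a source $z$ and a target $w$ carrying $\vec x$ (e.g.\ $R_0(\vec x) \to \exists z\, M(\vec x, z)$ and $M(\vec x, z) \to \exists w\, N(\vec x, z, w)$), and then force the transitive-closure edge by the $\aincd$ $N(\vec x, z, w) \to R^\trans(z, w)$, whose body is a base atom and whose head is the single distinguished relation, as permitted for $\aincd$s. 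Third, set $\instance_0'$ to be $\instance_0$ together with the marker facts that anchor these gadgets. Fourth, assemble the base-covered CQ $Q'$: I build it entirely from base atoms --- the original relations together with $R$-chains and the marker relations --- so base-coveredness is immediate; one conjunctive pattern of $Q'$ caps path length by matching an $R$-chain of length $n+1$ out of a source, and the remaining patterns translate each reference to ``disjunct $i$ chosen'' into the positive pattern ``there is an $R$-path of length $i$ from the corresponding source to its target''. In this way the original UCQ $Q$ together with the disjunct identities become path-length patterns over the gadgets.

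For correctness I would establish the two directions separately, using Proposition~\ref{prop:transdecomp} to restrict attention to tree-like witnesses. For the forward direction, starting from a model of $\instance_0 \wedge \Sigma \wedge \neg Q$, for each firing I pick a satisfied disjunct $i$ and realise the forced edge $R^\trans(z,w)$ by a fresh simple $R$-path of length exactly $i$; one then checks that the resulting structure satisfies all the $\aincd$s and avoids $Q'$ (no $R$-chain of length $n+1$ is created, and the translated $Q$-patterns fail because $Q$ failed). For the backward direction, from a model of $\instance_0' \wedge \Sigma' \wedge \neg Q'$ I read off, for each firing, the length(s) $\le n$ of the witnessing $R$-path and use them to define which disjuncts are chosen, obtaining a genuine $\did$-model; since $\neg Q'$ forbids exactly the path-encoded $Q$-patterns, this model falsifies $Q$, so $\owqa(\instance_0, \Sigma, Q)$ fails as well.

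The main obstacle is precisely this coupling between path length and disjunct choice under the very restrictive target language: with only $\aincd$s (positive, single-base-atom bodies) one cannot trigger a consequence on a length-$i$ path, and with a single positive base-covered CQ $Q'$ one cannot forbid the negative pattern ``length $i$ present but disjunct $i$ unsatisfied''. The way around this is to make the transitive-closure edge itself carry the disjunction, so that ``at least one disjunct holds'' is enforced for free by the forced edge together with the length cap, and ``which disjunct'' is \emph{read} positively as a path length rather than \emph{enforced} as a separate fact. The delicate points that then require the most care, and where tree-likeness is essential, are: ensuring the witnessing $R$-paths are clean, so that a model cannot create spurious short-circuits or let the paths of distinct firings interfere and thereby corrupt the length reading; verifying that the single positive length-cap pattern really bounds every relevant path; and folding the original UCQ into a single base-covered CQ, via a goal gadget driven by the same length-selection mechanism. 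Checking that no unintended $R$-path can spoil the correspondence is where the bulk of the proof effort will lie.
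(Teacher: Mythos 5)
There is a genuine gap, and it is exactly at the point you flag as the ``main obstacle'' and then claim to circumvent. Your mechanism encodes the $n$-way head choice of a $\did$ as the length $\ell \in \{1,\dots,n\}$ of the forced $R$-path, but your $\Sigma'$ never materializes any head atom $S_i(\vec x, \vec y_i)$, and nothing in the translated problem ties the path length to the existence of any such fact. This kills the backward direction. Take a counterexample $\instance' \supseteq \instance_0'$ to $\owqatc(\instance_0', \Sigma', Q')$: your $\aincd$s only force $M$-, $N$- and $R^\trans$-facts (hence an $R$-path of some capped length), so $\instance'$ may contain no $S_i$-facts at all; its projection to $\sigma$ then violates $\Sigma$. ``Reading off'' the disjunct from the path length does not repair this: to get a genuine $\did$-model you would have to invent the head facts and then chase (heads of one $\did$ can be bodies of another), and nothing in $\instance'$ certifies that this invented material avoids a match of $Q$. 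You cannot close the loop inside the target language either, as you yourself observe: an $\aincd$ cannot fire ``on a length-$i$ path'', and a positive base-covered CQ cannot forbid ``length $i$ present but disjunct $i$ unsatisfied''. The same non-materialization also breaks cascading (a $\did$ whose body relation is a head relation of another $\did$ never fires in the translated problem) and makes it impossible to translate query atoms that join on head variables $\vec y_i$, since those elements have no counterpart in translated models.

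The paper resolves this obstacle in the opposite way, and the difference is not cosmetic. Every relation of $\sigma$ gets two extra positions $(e,f)$; an $\aincd$ forces $E^\trans(e,f)$ for every fact, and a query disjunct caps the witnessing $E$-path at length $2$, so each fact carries only a \emph{binary} tag: genuine (length $1$) or pseudo (length $2$). The $n$-way disjunction is then handled by materializing \emph{all} $n$ heads of every firing unconditionally, via a witness relation $\witness_\tau$ (created by an $\aincd$ from the body) and $\aincd$s projecting $\witness_\tau$ onto each head $R_i'(\vec x, \vec y_i, e_i, f_i)$; a further query disjunct forbids ``genuine body with all heads pseudo'', and the $Q$-generated disjuncts match only genuine facts. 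Because pseudo-facts still trigger every $\aincd$, any counterexample materializes the whole disjunctive chase tree, and its genuine facts form a bona fide model of $\Sigma$ violating $Q$ --- precisely the object your construction cannot produce. In short, you read the paper's slogan ``the choice of path length mimics the disjunction'' literally as ``length $i$ selects disjunct $i$''; in the actual proof the length only encodes a per-fact genuine/pseudo bit, and the disjunct selection is ``which of the $n$ materialized heads is genuine''. (The UCQ-to-CQ step is a separate, secondary issue, handled in the paper's appendix; it is not where your proposal fails.)
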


With the results of \citeA{calvanese2006data} and \citeA{bourhispieris}, this immediately implies our hardness
result:

\begin{corollary}
  The $\owqatc$ problem with $\aincd$s and base-covered CQs is $\conp$-hard in data
  complexity and $\twoexp$-hard in combined complexity.
\end{corollary}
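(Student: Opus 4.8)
The Corollary is immediate from Theorem~\ref{thm:tcdisj} once one checks that the reduction transfers hardness in both regimes. Query answering for $\did$s is $\twoexp$-hard in combined complexity \cite{bourhispieris} and $\conp$-hard in data complexity \cite{calvanese2006data,bourhispieris}, even with no distinguished relations, and these lower bounds already hold for a fixed (Boolean) CQ query. Theorem~\ref{thm:tcdisj} gives a $\ptime$ map from a $\did$ instance to an $\owqatc$ instance whose constraints are $\aincd$s, whose query is a single base-covered CQ, and whose signature has a single distinguished relation. For combined complexity a $\ptime$ many-one reduction transfers $\twoexp$-hardness directly. For data complexity one needs the reduction to be \emph{uniform}, i.e.\ $\Sigma'$ and $Q'$ must depend only on $\Sigma$ and $Q$ and not on $\instance_0$, with $\instance_0 \mapsto \instance_0'$ computable in $\ptime$; I would verify this is the case, so that fixing the $\did$s $\Sigma$ and query $Q$ witnessing $\conp$-hardness yields fixed $\Sigma', Q'$ and transfers $\conp$-hardness in data. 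The real content is therefore the reduction of Theorem~\ref{thm:tcdisj}, which I sketch next. I work over $\sigma'$ whose base part is the relations of $\sigma$ together with one fresh binary relation $P$ and a few auxiliary gadget relations, and whose single distinguished relation is $P^\trans$, interpreted as the transitive closure of $P$. The guiding idea (after Horrocks et al.\ in the description-logic setting) is that a forced fact $P^\trans(s,e)$ must be witnessed by a $P$-path whose length the adversary chooses freely, and we arrange that the admissible lengths are exactly $\{1,\dots,n\}$ for a $\did$ with $n$ disjuncts, with length $i$ standing for the $i$-th disjunct.

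Concretely, for each $\did$ $\tau:\ R(\vec x) \to \bigvee_{i=1}^{n} \exists \vec y_i\, S_i(\vec x,\vec y_i)$ I would introduce, via a short chain of $\aincd$s (each with a single base body atom and no repeated variables, so as to stay syntactically inside $\aincd$), rules that on a firing $R(\vec a)$ do three things: (i) create a fresh source $s$ recording $\vec a$; (ii) deterministically lay down a template of $n$ tagged ``options'', where option $i$ carries the candidate head witness $S_i(\vec a,\vec y_i)$ together with a gadget relation marking it as the $i$-th option for this firing; and (iii) force the transitive-closure fact $P^\trans(s,e)$. Because the template nodes are produced by the dependencies, they can be tagged, and so the single conjunctive query $Q'$ can inspect, by a join, which option the adversary's witnessing $P$-path activates, and can match on ``illegitimate'' witnesses (a path overshooting length $n$, or mixing options). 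The design goal is that in any countermodel of $\owqatc(\instance_0',\Sigma',Q')$, every firing activates exactly one option $i$, its witness $S_i(\vec a,\vec y_i)$ is present, and $Q'$ additionally embeds a translation of the original query $Q$ so that $Q'$ is matched precisely when $Q$ would be.

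Correctness I would establish by translating countermodels in both directions. Given a $\did$-countermodel $\instance \supseteq \instance_0$ of $\Sigma$ with $\instance \not\models Q$, for each firing select the disjunct $i$ that $\instance$ realises and route the witnessing $P$-path through option $i$; the resulting $P$-instance has a transitive closure satisfying all forced $P^\trans$-facts while matching neither the illegitimate-witness patterns nor the embedded $Q$, giving an $\owqatc$-countermodel. Conversely, from an $\owqatc$-countermodel I read off, for each firing, the unique activated option $i$ and take the corresponding disjunct, recovering a $\did$-model that still falsifies $Q$. Along the way one checks that $\Sigma'$ consists only of $\aincd$s, that $Q'$ is a single base-covered CQ over a signature with one distinguished relation, that $\instance_0'$ is just $\instance_0$ (plus fixed gadget scaffolding), and that the whole map is in $\ptime$ and uniform in the sense above.

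The main obstacle is faithfulness: I must ensure the transitive-closure semantics grants the adversary no power beyond a clean disjunct choice. Three points need care. First, inclusion dependencies have single-atom bodies and hence cannot propagate information along the adversary-built path, which is exactly why the selection must be \emph{read} by the conjunctive query from pre-tagged template nodes rather than \emph{computed} by the rules. Second, since there is only one base relation $P$, witnessing paths from different gadgets could a priori share elements or chain together, and over-long or cross-option paths could encode spurious choices; $Q'$ must exclude every such configuration as one conjunctive pattern, which dictates how the template and its tags are arranged (keeping gadget elements fresh and their interfaces base-guarded, so that $Q'$ remains base-covered). Third, $Q'$ is required to be a single CQ rather than a UCQ, so the disjunctive content --- both the bad-witness checks and any disjunction present in $Q$ --- has to be absorbed into the single forced transitive-closure fact and one conjunctive test; this is precisely where the length-encodes-disjunct device does its work, and verifying that this lone pattern captures all and only the intended countermodels is the crux of the argument.
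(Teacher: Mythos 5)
Your first paragraph is exactly the paper's proof of this corollary: combine Theorem~\ref{thm:tcdisj} with the known $\twoexp$ combined-complexity and $\conp$ data-complexity lower bounds for $\did$s \cite{bourhispieris,calvanese2006data}, noting that the reduction is uniform ($\sigma'$, $\Sigma'$, $Q'$ depend only on $\sigma$, $\Sigma$, $Q$, and only $\instance_0 \mapsto \instance_0'$ touches the data), so hardness transfers in both regimes; had you stopped there, citing Theorem~\ref{thm:tcdisj} as given, the proof would be complete. The gaps are in your reconstruction of Theorem~\ref{thm:tcdisj}. You let \emph{one} $P$-path per firing, of length $i \in \{1,\dots,n\}$, select the $i$-th disjunct. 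The paper instead adds two positions $(e,f)$ to \emph{every} fact of \emph{every} relation (including the initial facts, which your $\instance_0'$ leaves unmarked) and uses a per-fact binary marker: $E^+(e,f)$ is forced, the witnessing path has length $1$ (``genuine'') or $2$ (``pseudo''), the relation $\witness_\tau$ instantiates \emph{all} $n$ heads of a firing, and the $\did$ semantics becomes the one forbidden pattern ``genuine body and all heads pseudo''. This difference is not cosmetic. With per-firing selection, ``this fact is real'' is not a single positive pattern: the adversary's path consists of fresh existential elements whose only query-readable attribute is its length relative to the gadget it is anchored at, so realness of a fact is a disjunction over option numbers (and over whether the fact is initial or created). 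The embedding of $Q$ must match every atom of a disjunct to a \emph{real} fact, which forces a case choice per atom, i.e., on the order of $(n+1)^m$ combinations for an $m$-atom disjunct --- an exponential blow-up that breaks the $\ptime$ bound even if one settles for a UCQ output. The paper's per-fact marker makes realness the uniform atom $E(e,f)$, which is what keeps the translation linear and entirely positive.

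The second gap is your claim that the disjunctive content can be ``absorbed into the single forced transitive-closure fact and one conjunctive test''. That is not a mechanism: a countermodel must \emph{avoid} the query, and a single CQ forbids exactly one pattern, whereas the reduction must forbid many (one path-length-restriction pattern per relation, one satisfaction pattern per $\did$, one pattern per disjunct of $Q$). The paper accordingly proves Theorem~\ref{thm:tcdisj} first for a UCQ $Q'$, and only then converts to a CQ via a separate, non-trivial lemma (Lemma~\ref{lem:ucqtocqdist}): all patterns become conjuncts of one CQ whose flag variables are chained through an $\Or$ truth-table gadget ending in $\mathrm{True}$, the instance is seeded with dummy all-false matches of every pattern so that the big conjunction fires iff at least one pattern has a real match, and compatibility with the transitive-closure semantics is ensured by the flagged/special-relation split and the flagged-reachability condition. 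Nothing in the length-encodes-disjunct device substitutes for that step; without it, your argument establishes the corollary only for base-covered UCQs, not for CQs as stated.
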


In fact, the data complexity lower bound for $\owqatc$ even holds in the absence of
constraints:

\newcommand{\lindatacompltrans}{
  There is a base-covered CQ $Q$ such that 
  $\owqatc(\instance_0, \emptyset, Q)$ is $\conp$-hard in data complexity.
}
\begin{proposition} \label{prop:lindatacompltrans}
  \lindatacompltrans
\end{proposition}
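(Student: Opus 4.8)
The plan is to reduce from graph $3$-colourability, whose complement is $\conp$-complete, encoding an input graph $G$ entirely into the data $\instance_0$ while keeping a single fixed base-covered CQ $Q$ and no constraints. Since $\owqatc(\instance_0,\emptyset,Q)$ holds exactly when no set of facts avoids $Q$, I will arrange that such a counterexample corresponds precisely to a proper $3$-colouring of $G$; thus $\owqatc(\instance_0,\emptyset,Q)$ holds iff $G$ is \emph{not} $3$-colourable, which gives $\conp$-hardness in data complexity. The key device is the disjunction-emulation technique already used for Theorem~\ref{thm:tcdisj}: with a single distinguished relation $R^\trans$ interpreted as the transitive closure of a base relation $R$, a fact $R^\trans(s_v,t_v)\in\instance_0$ forces any counterexample $\instance$ to contain some $R$-path from $s_v$ to $t_v$, and the \emph{length} of that path is free. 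For each vertex $v$ of $G$ I add such a pair $(s_v,t_v)$, so that the length $\ell_v\in\{1,2,3,\dots\}$ of the chosen path plays the role of the colour of $v$. Unlike encodings that try to force a path through a marked node, this one has no ``bypass'' problem: the forced $R^\trans$-fact guarantees that every vertex automatically receives at least one length, so no counterexample can leave a vertex uncoloured.

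First I would fix $Q$ so that it detects a \emph{colour clash}, i.e.\ an edge of $G$ whose endpoints realise paths of the same length; using markers for the graph edges and the source nodes, together with the distinguished relation, ``same colour on an edge'' can be written as a fixed base-covered conjunctive condition. For correctness, if $G$ is properly $3$-colourable then I build a counterexample by realising, for each $v$, a single $R$-path of length equal to $v$'s colour (through fresh intermediate elements, with $t_v$ a sink), which avoids $Q$; hence $\owqatc$ fails. Conversely, any counterexample assigns to each vertex a nonempty set of realised lengths, and avoiding $Q$ forces adjacent vertices to have disjoint length-sets, so selecting one length per vertex yields a proper colouring with the available palette. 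This last ``colour-set'' step is monotone: giving a vertex several lengths can only create more clashes, so it never helps a counterexample, which is why allowing multiple realisations does no harm.

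The two points needing care — and the main obstacles — are (i) bounding the palette to finitely many colours and (ii) packaging everything into a single base-covered CQ. For (i), without an upper bound on path length the counterexample could give every vertex a distinct length and trivially avoid all clashes, so I use $Q$ to also rule out any $R$-path of length $4$ emanating from a source marker; since any path of length $\ge 4$ contains such a witness, every counterexample is confined to lengths in $\{1,2,3\}$, exactly three colours. For (ii), I would fold the length bound and the clash pattern into one conjunctive query by a standard marking/product gadget over the constantly many lengths, so that $Q$ stays a single base-covered CQ; base-coveredness is immediate because every distinguished atom of $Q$ is used alongside a base edge- or source-marker. I expect step (ii), reconciling the inherently disjunctive ``same colour'' and ``over-long path'' conditions with the requirement of a single conjunctive query, to be the most delicate part. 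As a sanity check and an alternative route, the same bound also follows by instantiating the reduction of Theorem~\ref{thm:tcdisj} on a fixed disjunctive-inclusion-dependency instance that is already $\conp$-hard in data complexity \cite{calvanese2006data,bourhispieris} and then pre-chasing the resulting $\aincd$s into $\instance_0$, which removes the need for any constraints.
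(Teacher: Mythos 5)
Your core reduction is sound and is essentially the paper's own: the paper also proves this proposition by reducing $3$-colourability to the complement of $\owqatc$, planting $E^\trans$-facts in $\instance_0$ to force $E$-paths whose length encodes a colour choice, with disjuncts ruling out over-long paths, colour clashes on edges, and uncoloured vertices. The only real difference is cosmetic: the paper uses one pair $(e_{x,\chi},f_{x,\chi})$ per vertex \emph{and colour}, with length $1$ meaning ``$x$ has colour $\chi$'' and length $2$ meaning it does not, whereas you use one pair per vertex with the length itself (in $\{1,2,3\}$) serving as the colour. Both variants yield a correct reduction whose query is a \emph{UCQ} that never mentions the distinguished relation, hence is trivially base-covered; your forward and backward arguments (single realised path per vertex with fresh intermediates; non-empty, pairwise-disjoint length-sets on edges) match the paper's.

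The genuine gap is precisely the step you flag as delicate: collapsing that UCQ into a single CQ, which is what the statement demands. There is no off-the-shelf ``marking/product gadget'' that works here, because the folklore UCQ-to-CQ trick (add a truth flag to every relation, dummy all-false facts, and an OR truth-table so that every disjunct always has a dummy match while at least one has a real one) requires increasing the arity of the relations occurring in the query --- and you cannot add a flag position to $R$, since $R^\trans$ must remain the transitive closure of a \emph{binary} base relation. The paper resolves this with dedicated machinery (Lemma~\ref{lem:ucqtocqdist}, applied to this proposition in Appendix~\ref{app:ucqtocqhardness}): relations are split into ``flagged'' ones, whose arity is increased, and ``special'' ones ($R$, $R^\trans$), whose arity is not, and a syntactic condition (flagged-reachability: every special atom of every disjunct is connected through shared variables to a flagged atom) guarantees that a match of the product CQ cannot mix real and dummy facts on the special relations. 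Your disjuncts would satisfy this condition, so the lemma would apply, but the construction and its correctness proof are a substantial piece of work that your sketch leaves open. Your fallback route does not repair this: ``pre-chasing'' the $\aincd$s of Theorem~\ref{thm:tcdisj} into $\instance_0$ is impossible, because those dependencies (each $R'$-fact spawns a $\witness_\tau$-fact with fresh nulls, which spawns new $R'_i$-facts, and so on) have a non-terminating chase, so the result is an infinite set of facts and cannot serve as the finite initial instance the problem definition requires.
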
 

We prove this by reducing the problem of $3$-coloring a directed graph, known to be
$\np$-hard, to the complement of $\owqatc$: we can easily do this using
dependencies with 
disjunction in the head. Hence, as in the proof of
Theorem~\ref{thm:tcdisj}, we simulate this disjunction by using a choice of  the length of
paths that realize transitive closure facts asserted in~$\instance_0$.

All of these hardness results are first proven using UCQs rather than CQs, and
then strengthened by eliminating the disjunction in the query, by adapting a prior
trick (see, e.g., \citeR{georgchristos}) to code the intermediate truth values of
disjunctions within a CQ. We state in Appendix~\ref{app:ucqtocq} the general
lemmas about this transformation, and explain why the proofs of this section
still hold when using a CQ rather than a UCQ.

\subsection{Hardness for $\owqalin$}\label{sec:hardness-lin}

Our hardness results for $\aincd$s and $\owqatc$ also apply to $\owqalin$, using
the same technique of translating from $\did$s. What changes is the technique used
to code disjunction: rather than the length of a path in the transitive closure,
we use the totality of the order relation between elements to code disjunction
in the relative ordering of elements.
We can thus show the following analogue
to Theorem~\ref{thm:tcdisj}:

\newcommand{\lindisj}{
  For any finite set of facts $\instance_0$,
  $\did$s $\Sigma$,
  and UCQ $Q$ on a signature $\sigma$,
  we can compute in $\ptime$ a set of facts $\instance_0'$,
  $\aincd$s $\Sigma'$ (not mentioning the distinguished relations),
  and base-covered CQ $Q'$ on a signature $\sigma'$ (with a single distinguished relation),
  such that $\owqa(\instance_0, \Sigma,
  Q)$ iff $\owqalin(\instance_0', \Sigma', Q')$.
}

\begin{theorem}
  \label{thm:lindisj}
  \lindisj
\end{theorem}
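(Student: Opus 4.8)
The plan is to prove Theorem~\ref{thm:lindisj} by adapting, essentially verbatim, the reduction behind Theorem~\ref{thm:tcdisj} for $\owqatc$, changing only the mechanism by which a disjunctive choice is encoded. As there, I would start from an arbitrary $\owqa$ instance $(\instance_0, \Sigma, Q)$ in which $\Sigma$ is a set of $\did$s, recalling that such instances are already $\twoexp$-hard in combined complexity and $\conp$-hard in data complexity even over the base signature. The target instance $(\instance_0', \Sigma', Q')$ will use a fresh binary distinguished relation $<$, asserted to be a strict linear order, together with $\aincd$s $\Sigma'$ that never mention $<$; thus $\Sigma'$ has a single deterministic chase, and the only freedom available to a would-be counterexample model is the choice of the order $<$ on that chase. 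The whole construction is plainly computable in $\ptime$, so the content is the correctness equivalence, which I would prove in its contrapositive form: counterexamples to $\owqa(\instance_0, \Sigma, Q)$, that is, disjunctive branches whose chase avoids $Q$, correspond exactly to strict linear orders on the chase of $\instance_0'$ under $\Sigma'$ that falsify $Q'$.

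The key change from Theorem~\ref{thm:tcdisj} is that, in place of encoding a disjunct by the length of a witness path realizing a transitive-closure fact, I would encode it by the relative $<$-position of gadget elements. For each $\did$ of the form $R_j(\vec x) \rightarrow \bigvee_{i=1}^{n} \exists \vec y_i\, S_{j,i}(\vec x, \vec y_i)$ I would use $\aincd$s to generate, from every trigger $R_j(\vec a)$, all $n$ candidate heads together with $n$ fresh \emph{selector} elements, one per disjunct, tagged by suitable base relations so that $Q'$ can recognize them. Totality of $<$ then forces a linear arrangement of these selectors, and I would declare the disjunct whose selector is $<$-minimal to be the selected one; crucially, totality guarantees that such a minimum always exists, which is exactly what mirrors the $\did$ semantics that at least one disjunct must hold. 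The query $Q'$ would be built, first as a UCQ, so that it holds iff $Q$ has a match using only selected candidate heads; its negation then holds in a model iff the order-selected substructure is a genuine disjunctive branch avoiding $Q$. Each distinguished atom $x < y$ occurring in $Q'$ is placed in conjunction with a base atom over the selector gadget, so $Q'$ is base-covered and mentions only the single relation $<$, as required.

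I expect the main obstacle to be exactly the point where this reduction goes beyond a single disjunctive choice: because the $\aincd$ chase fires all candidate heads, a selected head may have been produced from an unselected head at an earlier application, so I must ensure that the order-induced selection is propagated consistently through the chase and that $Q'$ only ever matches along a fully selected provenance chain. I would handle this in the same spirit as the path gadget of Theorem~\ref{thm:tcdisj}, threading a visibility marker through the generated elements and requiring, in $Q'$, that every matched atom be visible relative to its trigger's selector; verifying that order-selected visible substructures are precisely the reachable disjunctive branches is the delicate bookkeeping step. Finally, to obtain a single base-covered CQ rather than a UCQ, I would apply the UCQ-to-CQ transformation recorded in Appendix~\ref{app:ucqtocq}, which adapts the coding trick of \citeR{georgchristos}, checking that it preserves base-coveredness and the use of a single distinguished relation; the resulting equivalence, combined with the known hardness of $\did$ reasoning, yields Theorem~\ref{thm:lindisj}.
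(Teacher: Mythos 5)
Your high-level route is indeed the paper's: reduce from $\did$ entailment (already $\twoexp$-hard and $\conp$-hard), keep $\Sigma'$ as $\aincd$s that never mention the order, let totality of $<$ carry the disjunctive choice, and finish with the UCQ-to-CQ lemma of Appendix~\ref{app:ucqtocq}. The gap is in the encoding of the choice itself. The paper does \emph{not} select ``the $<$-minimal of $n$ selectors''; it gives every translated fact $R'(\vec{a},e,f)$ its \emph{own} pair of extra elements and reads the bit locally: the fact is genuine if $e<f$ and a pseudo-fact if $f<e$, with an extra query disjunct $R'(\vec{x},e,e)$ forbidding $e=f$, so that totality forces exactly one of the two cases. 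The $\did$ semantics is then enforced by one further disjunct per rule $\tau$, which matches exactly when the witness's body pair is genuine and \emph{every} head pair is pseudo. Because the bit is attached to each fact, no propagation through the chase is needed at all: a pseudo body may generate pseudo heads forever, and the query never inspects them. This locality is what makes both directions of the equivalence go through over \emph{arbitrary} counterexample models (the forward direction totalizes the disjoint pair-orderings via the Szpilrajn order extension principle).

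Your minimal-selector encoding breaks precisely at the point you defer as ``delicate bookkeeping''. There, selection is a property of a (witness, disjunct) pair rather than of the head fact, so for $Q'$ to check that a matched atom is selected it must recover the witness that produced it; but in an arbitrary model of $\Sigma'$ --- and the backward direction must handle arbitrary supersets of $\instance_0'$ satisfying $\Sigma'$, not only orders on the chase, contrary to your framing --- a fact has no unique trigger, and a CQ/UCQ cannot express ``visible relative to its trigger's selector'' or trace a provenance chain of unbounded length. Attempting to relocalize the bit by attaching all $n$ selectors to every head fact creates arity and indexing clashes when the same relation occurs as head of different rules or disjuncts, and padding with existentially quantified selectors lets an adversarial model place a padding element below the fact's own selector and thereby ``deselect'' a genuine fact. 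Separately, your claim that ``totality guarantees that such a minimum always exists'' fails when selectors coincide: if $s_i=s_j$ in some model, no selector is strictly below all others, so some witnesses select nothing and the projected instance need not satisfy $\Sigma$; you would need equality-forbidding disjuncts (the analogue of the paper's order-restriction disjuncts), which your sketch omits. As written, the backward direction of your reduction therefore does not go through; replacing the per-witness minimal-selector mechanism by the per-fact two-element genuine/pseudo flag repairs all of these issues and is exactly the paper's proof.
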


Hence, we can conclude our hardness result
using prior work \cite{calvanese2006data,bourhispieris}:

\begin{corollary}
  The $\owqalin$ problem with $\aincd$ and
  base-covered CQs is $\conp$-hard in data
  complexity and $\twoexp$-hard in combined complexity.
\end{corollary}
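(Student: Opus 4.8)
The plan is to mirror the reduction behind Theorem~\ref{thm:tcdisj}, replacing the transitive-closure ``path-length'' gadget that encodes disjunction by a linear-order gadget in which the \emph{totality} of $<$ supplies the disjunctive choice. I start from $\owqa$ with $\did$s, which is $\twoexp$-hard in combined complexity and $\conp$-hard in data complexity by \citeA{bourhispieris} and \citeA{calvanese2006data}, even with no distinguished relations. Recall that a counterexample to $\owqa(\instance_0,\Sigma,Q)$ with $\did$s is a model of $\instance_0$ in which every body match of a $\did$ witnesses at least one of its head disjuncts and in which $Q$ fails; up to restricting, one may take this model to select exactly one disjunct per body match (monotonicity of $Q$ keeps $Q$ false under such restriction). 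The target instance uses a single distinguished relation $<$, the original signature $\sigma$ enlarged with a few auxiliary base relations, and $\aincd$s $\Sigma'$ that do not mention $<$.

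First I would build, per $\did$ $\forall \vec x\,(R(\vec x)\rightarrow\bigvee_{i=1}^{n}\exists \vec y_i\,S_i(\vec x,\vec y_i))$, a family of $\aincd$s whose chase of $\instance_0'$ lays down an order-blind ``skeleton'': for each body match it creates fresh per-match elements $e_1,\dots,e_n$ (one per disjunct), each base-guarded together with $\vec x$, together with the candidate witnesses needed to make each $S_i(\vec x,\vec y_i)$ available. Since $<$ is forced to be a strict linear order on the whole domain, the relative order of $e_1,\dots,e_n$ has, in every model, a unique minimum, and I use ``$e_i$ is the minimum'' as the encoding of ``disjunct $i$ is selected at this match''. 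The query $Q'$ is then the disjunction of (a) consistency-violation patterns that fire whenever the selection dictated by $<$ is not actually witnessed, or whenever incompatible selections collide, and (b) a copy of the original $Q$ relativized so that it only consults the witnesses of selected disjuncts; every $<$-atom appearing in $Q'$ is paired with an auxiliary base guard so that $Q'$ remains base-covered. Seed base facts, together with their guards, for the initial matches go into $\instance_0'$, all computed in $\ptime$.

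The correctness argument splits as usual. For completeness, given a $\did$-counterexample that selects one disjunct per match, I extend the induced ordering requirements on the gadget elements to a genuine strict linear order (exactly the order-extension step used earlier, via the theorem of \citeA{Szpilrajn30}), making each selected $e_i$ minimal and thereby satisfying $\Sigma'$ while keeping every violation pattern and the relativized $Q$ false, so $Q'$ fails. For soundness, from any linear-order model of $\instance_0'\wedge\Sigma'$ with $Q'$ false I read off, at each match, the selected disjunct from the minimum $e_i$; the absence of violation patterns forces the selected disjunct to be genuinely witnessed and the selections to cohere with the chase, and the failure of the relativized $Q$ then yields a $\did$-counterexample. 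The hard part will be pinning down the violation patterns in part (a) of $Q'$ so that they are simultaneously (i) strong enough to rule out every linear order that does not correspond to a legal one-disjunct-per-match selection closed under chaining of the dependencies, and (ii) weak enough to be expressible as a \emph{base-covered} (U)CQ, i.e.\ with each $<$-comparison it makes accompanied by a base guard on the same elements; this tension, rather than the routine order-extension and monotonicity arguments, is the crux. Finally, the query produced is a UCQ, and I would eliminate the disjunction to obtain the base-covered CQ $Q'$ promised in the statement by applying the UCQ-to-CQ coding of Appendix~\ref{app:ucqtocq} (the trick of \citeA{georgchristos}), checking as there that base-coveredness is preserved.
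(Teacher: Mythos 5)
Your high-level plan --- reduce from $\owqa$ with $\did$s (whose hardness comes from \citeA{calvanese2006data} and \citeA{bourhispieris}), use the totality of $<$ to simulate disjunction, extend a partial order to a total one via \citeA{Szpilrajn30} in the forward direction, and finish with the UCQ-to-CQ coding of Appendix~\ref{app:ucqtocq} --- is exactly the paper's route (the corollary is an immediate consequence of Theorem~\ref{thm:lindisj}). However, your proposal has a genuine gap, and you name it yourself: you never construct the ``violation patterns,'' and this is not a routine detail but the entire content of the reduction. Your minimum-based selection gadget makes these patterns unobtainable. If selection is encoded as ``$e_i$ is the $<$-minimum of $e_1,\dots,e_n$,'' then the configuration you must forbid is ``disjunct $i$ is selected but not genuinely witnessed''; a CQ cannot express the absence of a witness, so you are forced (as you half-suggest) to materialize candidate witnesses for all disjuncts and mark each as usable or not. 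But then the marking must be an \emph{intrinsic, per-fact} property that the relativized query can read off locally, which gadget-relative minimality does not provide: whether a candidate fact is ``selected'' depends on comparing its gadget element against its siblings in the gadget that created it, and a fact occurring in several roles (e.g.\ a head with no existential variables, or a fact of the initial instance) has no well-defined status. (A side error: your claim that one may ``restrict'' a $\did$-model so each body match witnesses exactly one disjunct is false as stated --- deleting facts can violate other $\did$s whose bodies remain; fortunately this assumption is never needed.)

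There is also a structural flaw specific to the minimum encoding: in a linear order a minimum always exists, so \emph{every} gadget is forced to select a disjunct --- including gadgets spawned by unselected (pseudo) candidates, which themselves match $\did$ bodies and hence receive gadgets from your own $\aincd$s. Selection thus propagates through pseudo material, producing ``genuine'' facts that do not map homomorphically into the original counterexample, and the forward direction collapses: you cannot keep the relativized $Q$ false. The paper's proof of Theorem~\ref{thm:lindisj} avoids both problems with a simpler device, inherited from Theorem~\ref{thm:tcdisj}: every translated fact carries its own pair $(e,f)$ of extra positions, with $e<f$ meaning genuine and $f<e$ meaning pseudo (equality excluded by an order-restriction disjunct $R'(\vec x,e,e)$); the $\aincd$s force a $\witness_\tau$-fact instantiating \emph{all} head disjuncts with their own pairs; and the single violation pattern needed is purely positive and automatically base-covered, since all compared variables occur in the $\witness_\tau$-atom: $\witness_\tau(\dots)\wedge e<f\wedge\bigwedge_i f_i<e_i$ (genuine body, all heads pseudo). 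Because a pseudo body never fires this pattern, everything downstream of a pseudo-fact may itself remain pseudo, which is precisely what makes both directions of the correctness proof go through. Repairing your approach would mean replacing minimum-selection by this per-fact orientation marking --- at which point you have reproduced the paper's construction.
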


We can also use a reduction from $3$-coloring to show hardness in data complexity
even without constraints:

\newcommand{\lindatacompl}{
  There is a base-covered CQ $Q$ such that 
  $\owqalin(\instance, \emptyset, Q)$ is $\conp$-hard in data complexity.
}
\begin{proposition} \label{prop:lindatacompl}
  \lindatacompl
\end{proposition}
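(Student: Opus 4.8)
The plan is to reduce graph $3$-colorability to the complement of $\owqalin$, encoding the three colour classes by the three regions that two fixed ``boundary'' elements carve out of a linear order; this parallels the proof of Proposition~\ref{prop:lindatacompltrans}, with the choice of a region replacing the choice of a path length. Fix a single distinguished relation $<$, a base edge relation $E$, and two base ``link'' relations $L_1,L_2$. Given a directed graph $G=(V,E^G)$, I would build $\instance$ as follows: two fresh boundary elements $\beta_1,\beta_2$ together with the distinguished fact $\beta_1 < \beta_2$; for every vertex $v\in V$ the base facts $L_1(v,\beta_1)$ and $L_2(v,\beta_2)$; and for every edge $(u,v)\in E^G$ the base fact $E(u,v)$. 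No constraints are used, so $\Sigma=\emptyset$, and the query $Q$ below does not depend on $G$, so this is genuinely a data-complexity reduction.

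In any strict-linear-order extension of $\instance$, the asserted fact $\beta_1<\beta_2$ is preserved, so each vertex $v$ (a distinct element) lands in exactly one of the three mutually exclusive and exhaustive regions $v<\beta_1$, $\beta_1<v<\beta_2$, or $\beta_2<v$; this induces a $3$-colouring $\chi$ of $V$, and conversely every $3$-colouring is realised by a linear extension that places the colour classes below $\beta_1$, between $\beta_1$ and $\beta_2$, and above $\beta_2$. The fixed query $Q$ is the base-covered UCQ whose disjuncts detect a monochromatic edge, one disjunct per colour, with the link relations supplying exactly the base-guards required for coveredness. For instance the ``colour~$1$'' disjunct is
\[
\exists u\,v\,y_1\,\bigl(E(u,v)\wedge L_1(u,y_1)\wedge L_1(v,y_1)\wedge u<y_1\wedge v<y_1\bigr),
\]
and the ``colour~$2$'' disjunct is
\[
\exists u\,v\,y_1\,y_2\,\bigl(E(u,v)\wedge L_1(u,y_1)\wedge L_1(v,y_1)\wedge L_2(u,y_2)\wedge L_2(v,y_2)\wedge y_1<u\wedge u<y_2\wedge y_1<v\wedge v<y_2\bigr),
\]
with the symmetric ``colour~$3$'' disjunct using $L_2$ and the atoms $y_2<u$, $y_2<v$. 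Because every $L_1$-fact points to $\beta_1$ and every $L_2$-fact to $\beta_2$, the variables $y_1,y_2$ are forced onto the two boundaries, so the disjuncts match precisely the monochromatic edges; and every $<$-atom occurs alongside an $L_1$- or $L_2$-atom on the same pair of variables, so each disjunct, and hence $Q$, is base-covered.

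It then follows that $Q$ fails in some linear extension iff some extension yields a proper colouring iff $G$ is $3$-colourable; equivalently $\owqalin(\instance,\emptyset,Q)$ holds iff $G$ is \emph{not} $3$-colourable. Since non-$3$-colourability is $\conp$-complete, this gives the claimed $\conp$-hardness in data complexity. The one remaining step, to meet the statement's requirement of a single base-covered CQ rather than a UCQ, is to apply the UCQ-to-CQ transformation of Appendix~\ref{app:ucqtocq}; I would verify, exactly as for the other hardness results of this section, that the transformation preserves base-coveredness and the correctness of the reduction. I expect the only delicate point to be the base-coveredness bookkeeping — ensuring each order comparison is genuinely guarded by a base atom and that the guessed boundary variables $y_1,y_2$ cannot be matched to anything other than $\beta_1,\beta_2$ — rather than any deep argument, since the region encoding makes the order-realisation reasoning immediate.
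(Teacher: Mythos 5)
Your reduction is correct, but it encodes the coloring differently from the paper. The paper's proof reuses the structure of Proposition~\ref{prop:lindatacompltrans}: for each vertex $x$ and each color $\chi$ it creates a fresh pair $e_{x,\chi}, f_{x,\chi}$ inside a ternary guard fact $C_\chi(x,e_{x,\chi},f_{x,\chi})$, uses totality to force an orientation $e_{x,\chi} < f_{x,\chi}$ (``$x$ has color $\chi$'') or $f_{x,\chi} < e_{x,\chi}$ (``$x$ does not''), and therefore needs a separate \emph{coloring disjunct} to rule out the all-negative assignment, for four kinds of disjuncts in total. You instead use two global boundary elements $\beta_1 < \beta_2$ and read the color of a vertex off the region of the linear order it falls into; trichotomy against the fixed boundaries gives ``exactly one color'' for free, so you need only the three monochromatic-edge disjuncts, and only binary base relations. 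Both proofs then discharge the UCQ-to-CQ step by the same appendix lemma (Lemma~\ref{lem:ucqtocqlin}; your UCQ is base-covered and your constraint set empty, so the lemma applies verbatim). What each buys: the paper's pair-orientation trick is a minimal, uniform modification of the transitive-closure proof and of the disjunction coding used in Theorem~\ref{thm:lindisj}, whereas your region coding is more economical and exploits the \emph{global} totality of the order rather than independent local pair orientations. One wording caveat: your claim that $y_1,y_2$ are ``forced onto the boundaries'' is not true in an arbitrary counterexample $\instance \supseteq \instance_0$, which may contain extra $L_1$-, $L_2$-, or $E$-facts; but this does not matter, since that forcing is only needed in the direction where \emph{you} construct the counterexample from a coloring (where it holds by construction), while in the converse direction you only need to exhibit a match of $Q$ built from the $\instance_0$-facts and the order, and additional facts can only create further matches.
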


Again, we will prove the results with UCQs in this section, and explain in
Appendix~\ref{app:ucqtocq} how to prove these results with a CQ instead.

\subsection{Proof of Theorems~\ref{thm:tcdisj} and~\ref{thm:lindisj}}\label{sec:hardness-proofs}

We now start to prove the results of Sections~\ref{sec:hardness-tc}
and~\ref{sec:hardness-lin}. In this section, we first prove the results about the translation
from $\did$s to $\owqatc$ and $\owqalin$, namely,
Theorems~\ref{thm:tcdisj} and~\ref{thm:lindisj}. In the next section, we show
the data complexity hardness results without constraints
(Propositions~\ref{prop:lindatacompltrans}~and~\ref{prop:lindatacompl}).
We start by proving Theorem~\ref{thm:tcdisj}, and we will adapt the proof
afterwards to show Theorem~\ref{thm:lindisj}.
Recall the claim:

\fakethm{Theorem}{thm:tcdisj}{\tcdisj}
 \medskip
 
We will establish a weaker form of the result where $Q'$ is allowed to be
a UCQ: the extension where we only use a CQ is shown in
Appendix~\ref{app:ucqtocqhardness}.

\subsubsection{Defining $\sigma'$ from $\sigma$}
We create the signature $\sigma'$ (featuring both base and distinguished
relations) from the signature $\sigma$ of the $\did$s and
from the $\did$s $\Sigma$ themselves by:
\begin{itemize}
  \item creating, for each relation $R$ in~$\sigma$, 
a base relation $R'$ in~$\sigma'$ whose arity is $\arity{R}+2$;
  \item adding a fresh binary base relation $E$, and taking the transitive closure
$E^\trans$ of~$E$ as the one
distinguished relation of~$\sigma'$;
\item creating, for each $\did$ $\tau$ in $\Sigma$ written \[\forall \vec{x} ~ R(\vec{x}) \rightarrow \bigvee_{1 \leq i \leq n}
  \exists \vec{y_i} ~ R_i(\vec{x}, \vec{y_i}),\]
a base relation
$\witness_\tau$
    in~$\sigma'$ of arity $\card{\vec x} + \sum_i \card{\vec{y_i}} + 2n + 2$. For
    simplicity, we will always use the same variables when writing
$\witness_\tau$-atoms, namely, we will write them 
$\witness_\tau(\vec{x}, e, f, \vec y_1, e_1, f_1, \ldots, \vec y_n, e_n, f_n)$.
\end{itemize}
\subsubsection{Defining $\Sigma'$ from $\Sigma$ and~$\sigma$}
We then create the $\aincd$s $\Sigma'$ from the $\did$s $\Sigma$. First, 
for each relation $R$ in~$\sigma$, we create the following $\aincd$, asserting
that the two additional positions of the base relation~$R'$ must be connected by
an $E$-path.
\[
  \tau'_R: \forall \vec x \, e \, f ~ R'(\vec x, e, f) \rightarrow E^+(e, f)
  \]
The intuition is that the failure of the query will impose that this $E$-path have length at
most~$2$, so it has length either~$1$ or~$2$. Facts with 
a path of length~$1$ will be called \emph{genuine facts}, which intuitively
means that they really hold, and those with a path of length~$2$ will be called
\emph{pseudo-facts}, intuitively meaning that they will be ignored.

Then, for each $\did$
$\tau: \forall \vec{x} ~ R(\vec{x}) \rightarrow \bigvee_{1 \leq i \leq n}
\exists \vec{y_i} ~ R_i(\vec{x}, \vec{y_i})$, we create multiple $\aincd$s. First, we
create
a $\aincd$ $\tau'$ with a $\witness_\tau$-fact in the head:
\[
  \tau' : 
\forall \vec{x} \, e \, f \, R'(\vec{x}, e, f) \rightarrow 
\exists \vec{y_1} \, e_1 \, f_1 \, \ldots \, \vec{y_n} \, e_n \, f_n  ~
	\witness_\tau(\vec{x}, e, f, \vec y_1, e_1, f_1, \ldots, \vec y_n, e_n, f_n)
      \]
Then, for $1 \leq i \leq n$, we create the following $\aincd$ $\tau'_i$:
\[
  \tau'_i: \forall \vec{x} \, e \, f \, \vec{y_1} \, e_1 \, f_1 \, \ldots \, \vec{y_n} \,
  e_n \, f_n  ~
\witness_\tau(\vec{x}, e, f, \vec y_1, e_1, f_1, \ldots, \vec y_n, e_n, f_n)
  \rightarrow R'_i(\vec{x}, \vec{y_i},e_i, f_i)
\]
In other words, whenever a $\did$ $\tau$ would be applicable on a fact $R'(\vec c,
e, f)$, we will create a fact
$\witness_\tau(\vec c, e, f, \vec d_1, e_1, f_1, \ldots, \vec d_n, e_n, f_n)$,
which will cause \emph{all} head atoms
$R'_i(\vec c, \vec d_i, e_i, f_i)$
for the $\did$ to be instantiated.
However, thanks to the two additional positions, we will be free to choose which
of these facts are pseudo-facts, and which are genuine. The query will then
enforce the correct semantics for $\did$s, by prohibiting $\witness_\tau$-facts
whose match was genuine but where all instantiated heads are pseudo-facts.

\subsubsection{Defining $Q'$ from $Q$, $\sigma$, and $\Sigma$}
The UCQ $Q'$ contains the following disjuncts (existentially
closed):
\begin{itemize}
\item \emph{$Q$-generated disjuncts}: For each disjunct $\psi$ of the original
  UCQ~$Q$, we create one disjunct $\psi'$ in the UCQ $Q'$ obtained by replacing 
each atom $R(\vec{x})$ of~$\psi$ by the conjunction $R'(\vec{x}, e, f)
\wedge E(e, f)$, where $e$ and $f$ are fresh. That is, the query $Q'$ matches
    whenever we have a witness for $Q$ consisting of genuine facts.
\item \emph{$E$-path length restriction disjuncts}:
    For each relation $R$ in~$\sigma$, we create the following disjunct
    in~$Q'$:
    \[R'(\vec{x}, e, f) \wedge E(e, y_1) \wedge E(y_1, y_2) \wedge E(y_2,
    y_3).\]
    This disjunct succeeds if
    the $E$-path annotating an $R'$-fact has length~$\geq 3$. Hence, for any
    fact $R'(\vec{a}, e, f)$, the
    $E^\trans$-fact from~$e$ to~$f$ enforced by the $\did$ $\tau'_R$ in~$\Sigma$
    must make $R'(\vec{a}, e, f)$ 
    either a genuine fact or a pseudo-fact. 
\item  \emph{$\did$ satisfaction disjuncts}:
For every $\did$ $\tau: \forall \vec{x} ~ R(\vec{x}) \rightarrow
\bigvee_i \exists \vec{y_i} ~ R_i(\vec{x}, \vec{y_i})$ in
$\Sigma$, 
    we create the following disjunct in~$Q'$:
    \[
  Q_\tau: 
  \witness_\tau(\vec{x}, e, f, \vec y_1, e_1, f_1, \ldots, \vec y_n, e_n, f_n)
  \wedge E(e,f) \wedge \bigwedge_i \left(E(e_i, w_i) \wedge E(w_i, f_i)\right).
    \]
Informally, the failure of $Q_\tau$ enforces that we cannot have the body of
    $\tau$ holding as a genuine fact and  each head disjunct realized by a
    pseudo-fact.
\end{itemize}
Observe that all of these disjuncts are
trivially base-covered (since they do not use $E^\trans$).

\subsubsection{Defining $\instance_0'$ from $\instance_0$}
We now explain how to rewrite the facts of an initial fact set $\instance_0$ on
$\sigma$ to a fact set $\instance_0'$ on $\sigma'$.
Create $\instance_0'$ by replacing each fact $F = R(\vec{a})$ of $\instance_0$
by the facts $R'(\vec{a}, b_F, b'_F)$, and $E(b_F, b'_F)$, where $b_F$ and
$b'_F$ are fresh. Hence, all facts of $\instance_0$ are created as genuine facts.

\medskip

We have now defined $\sigma'$, $\Sigma'$, $Q'$, and $\instance_0'$.
We now show that the claimed equivalence holds:
$\owqa(\instance_0, \Sigma, Q)$ holds 
iff $\owqatc(\instance_0', \Sigma', Q')$ holds.

\subsubsection{Forward Direction of the Correctness Proof}
First, let $\instance \supseteq \instance_0$ satisfy $\Sigma$ and
violate $Q$. We must construct $\instance'$
that satisfies $\Sigma'$ and violates $Q'$ when interpreting $E^+$ as the
transitive closure of~$E$.

We construct $\instance'$ using the following steps:
\begin{itemize}
  \item Modify $\instance$ in the same way that we used to build $\instance_0'$
    from $\instance_0$ (i.e., expand each fact with two fresh elements with
an $E$-edge between them, to make them genuine facts), yielding $\instance_1$.
    The result of this process consists only of genuine facts, and satisfies all
    $\aincd$s of the form $\tau'_R$.

  \item 
    Expand $\instance_1$ to a superset of facts $\instance_2$ by adding
    facts that solve violations of all dependencies in~$\Sigma'$ of the
    form~$\tau'$.

    Specifically, for every $\aincd$ of the form $\tau'$, letting $\tau : 
    \forall \vec x ~ R(\vec x) \rightarrow \bigvee_{1 \leq i \leq n} \exists
  \vec{y_i} ~ S_i(\vec x, \vec{y_i})$ be the corresponding $\did$ in~$\Sigma$,
    consider a fact $F' = R(\vec c, e, f)$ of~$\instance_1$ that matches the body
    of~$\tau'$. From the way we constructed $\instance_1$, we know that it must
    contain $E(e, f)$, and that $\instance$ must contain the fact $F = R(\vec
    c)$.
    Now, as $\instance$ satisfies $\tau$, we know that there is $1 \leq i_0 \leq n$
    such that $R_{i_0}(\vec c, \vec d_{i_0})$ holds in~$\instance$ for some
    choice of $\vec d_{i_0}$.
    Hence, by construction of~$\instance_1$ again, we know that it contains
    $F_{i_0}' = R'_{i_0}(\vec c, \vec d_{i_0}, e_{i_0}, f_{i_0})$
    and $E(e_{i_0}, f_{i_0})$ for some
    $e_{i_0}$ and $f_{i_0}$. 
    For every $i \in \{1, \ldots, n\} \backslash \{i_0\}$,
    create fresh elements $\vec d_i, e_i, f_i, w_i$ in the domain of
    $\instance_2$. Now,
    add to $\instance_2$ the
    fact $F_{\mathrm{w}} = \witness_\tau(\vec c, e, f, \vec d_1, e_1, f_1, \ldots, \allowbreak \vec d_n, e_n, f_n)$: 
    in this fact, $\vec c$ is as in~$F'$, 
    $\vec d_{i_0}, e_{i_0}, f_{i_0}$
    are as in~$F'_{i_0}$, and for
    the $\vec d_i, e_i, f_i$ 
    for $i \neq i_0$ are the fresh elements that we just created.
    
    It is easy to see now that $\instance_2$ now satisfies all $\aincd$s of the form
    $\tau'$, and it still satisfies those of the form $\tau'_R$. Further, it is
    easy to see that for any $\witness$-fact $F_{\mathrm{w}}$ of $\instance_2$ that violates a
    dependency of the form $\tau'_i$ in $\Sigma'$, the value $i$ must be
    different from the value $i_0$ used when creating~$F_{\mathrm{w}}$ (as for $i = i_0$ the
    fact $F_{i_0}'$ considered when creating $F_{\mathrm{w}}$ witnesses that
    $F_{\mathrm{w}}$ is not a violation of~$\tau_{i_0}$). Hence, we have the following
    property: for any violation of a dependency of~$\Sigma'$ in~$\instance_2$,
    the elements to be exported are in $\dom{\instance_2} \backslash
    \dom{\instance_1}$, and they only occur in one fact and in one position
    of~$\instance_2$.

  \item We now create $\instance_3$ from $\instance_2$ by taking care of the remaining
    violations by performing the \emph{chase} \cite{ahv} with $\Sigma'$ wherever applicable,
    always creating fresh elements (see Appendix~\ref{apx:chase} for details
    about the chase).
    Whenever
    we need to create a witness for some $E^{\trans}$ requirement, we always create an $E$-path of length~$2$ with a fresh element
    in the middle, that is, all facts created in $\instance_3 \backslash
    \instance_2$ are $\witness_\tau$-facts and pseudo-facts.
\end{itemize}
Let $\instance' \defeq \instance_3$.
We now check that $\instance'$ is a counterexample to $\owqatc(\instance_0',
\Sigma', Q')$.
As $\instance \supseteq \instance_0$, it is clear that
$\instance_1 \supseteq \instance_0'$, so that $\instance' \supseteq
\instance_0'$. Further, 
it is immediate by definition of the
chase that $\instance'$ satisfies $\Sigma'$.
There remains to check that $\instance'$
violates $Q'$.
To this end, we will first observe that, by construction of
$\instance'$, the only $E$-facts that we create are paths of length 1 
on fresh elements in the construction of $\instance_1$ from $\instance$, and
paths of length~2 on fresh elements in the chase in~$\instance_3$ (these
elements were either created as nulls in the chase in~$\instance_3$, or they
were created in $\instance_2$ where they only occurred in one fact and at one
position). In
particular, observe that, whenever we create an $E$-fact at any point, its endpoints are
fresh (they have just been created), so that $E$-paths have length~$1$ or~$2$ and
are on pairwise
disjoint sets of elements. Hence, as $\instance'$ satisfies the $\tau'_R$, any
fact $R'(\vec c, e, f)$ in~$\instance'$ is either a genuine fact (i.e., $E(e, f)$
holds in $\instance'$) or a pseudo-fact (i.e., there is an $E$-path of length~2
from~$e$ to~$f$ in~$\instance'$), and \emph{these two properties are mutually
exclusive}.

We now check that $Q'$ is violated, by considering each possible kind of
disjuncts.
For the \emph{$E$-path length restriction disjuncts}, we just explained that the
interpretation of~$E$ in~$\instance'$
consists of disjoint paths of length~$1$ or~$2$, 
so there is no $E$-path of length~$3$ at all in~$\instance'$.

For the \emph{$\did$ satisfaction disjuncts}, we will first observe that there
are two kinds of $\witness_\tau$-facts in~$\instance'$. Some
$\witness_\tau$-facts 
$\witness_\tau(\vec c, e, f, \vec d_1, e_1, f_1, \ldots,
\allowbreak \vec d_n, e_n, f_n)$ were created in $\instance_2$, and for these we
always have $E(e, f)$ in $\instance_1$ (hence in $\instance'$), and the same is
true also of $E(e_{i_0}, f_{i_0})$ for the $1 \leq i_0 \leq n$ considered when
creating them (using the fact that $\instance$ satisfied $\Sigma$).
All other $\witness_\tau$-facts of $\instance'$ are created in
$\instance_3$ and include only elements from $\dom{\instance_3} \backslash
\dom{\instance_2}$ or elements occurring only in one position at one fact
in~$\instance_2$ (and not occurring in~$\instance_1$): hence, for these
$\witness_\tau$-facts, neither $E(e, f)$ holds in
$\instance'$ nor does $E(e_i, f_i)$ hold for any $1 \leq i \leq n$.
This suffices to ensure that disjuncts of the form $Q_\tau$ in~$Q'$ cannot have
a match in~$\instance'$, because their $\witness_\tau$-atom can neither match
$\witness_\tau$-facts created in~$\instance_3$ (as $E(e, f)$ does not hold for
them, unlike what $Q_\tau$ requires, remembering that paths of length~1 and~2
are mutually exclusive)
nor
$\witness_\tau$-facts created in~$\instance_2$ (because, for $i = i_0$, the fact
$E(e_{i_0}, f_{i_0})$ holds for them, violating again what $Q_\tau$ requires).
Hence, the $\did$ satisfaction
disjuncts have no match in~$\instance'$.

Finally, for the \emph{$Q$-generated disjuncts},
observe that any match of them must be on genuine facts, i.e., on
facts of~$\instance'$ created for facts of~$\instance$, 
so we can conclude because $\instance$ violates
$Q$.

Hence, $\instance'$ satisfies $\Sigma'$ and violates $Q'$, which concludes the
forward direction.

\subsubsection{Backward Direction of the Correctness Proof}
In the other direction,
let $\instance' \supseteq \instance_0'$ be a counterexample
to $\owqatc(\instance_0', \Sigma', Q')$.
Consider the set of~$R'$-facts from $\instance'$
such that
$R' \in \sigma'$ corresponds to some $R \in \sigma$
and the elements in the last two positions of this $R'$-fact are connected by an
$E$-fact, i.e., the genuine facts.
Construct a set of facts $\instance$ on $\sigma$ by
projecting away the last two positions from these $R'$-facts,
and discarding all of the other facts.

It is clear by
construction of~$\instance_0'$ that $\instance \supseteq \instance_0$.
Further, as
$\instance'$ violates $Q'$, it is clear that $\instance$ violates $Q$, because any
match of a disjunct of~$Q$ on~$\instance$ implies a match of the corresponding
$Q$-generated disjunct $Q'$ in~$\instance'$. So it suffices to show
that $\instance$ satisfies $\Sigma$.

Hence, assume by contradiction that there is a $\did$
$\tau: \forall \vec x ~ R(\vec x) \rightarrow \bigvee_{1 \leq i \leq n} \exists
  \vec{y_i} ~ S_i(\vec x,
  \vec{y_i})$ of~$\Sigma$ and a fact $R(\vec c)$ of~$\instance$ which violates
  it.
Let $F' = R'(\vec{c},e,f)$
be the fact in~$\instance'$ from which we created $F$;
we know that $E(e, f)$ holds in~$\instance'$.
Since $\instance'$ satisfies $\tau'$ in $\Sigma'$,
we know that there are $\vec{d}_1, e_1, f_1, \dots, \vec{d}_n, e_n, f_n$
such that
$\witness_\tau(\vec{c}, e, f, \vec d_1, e_1, f_1, \ldots, \vec d_n, e_n, f_n)$
holds. Further, as $\instance'$ satisfies the $\tau'_i$ for $1 \leq i \leq n$,
we know that $S_i(\vec d_i, e_i, f_i)$ hold in~$\instance'$ for all $1 \leq i
\leq n$, and as $\instance'$ satisfies the $\tau'_{S_i}$, we know that
$E^\trans(e_i, f_i)$ holds, so that there is at least one $E$-path connecting $e_i$ and $f_i$.
As the $E$-path length-restriction disjuncts are violated in~$\instance'$,
these $E$-paths all have length in $\{1, 2\}$, and as the $\did$ satisfaction disjunct
$Q_\tau$ is violated in~$\instance'$, there is $1 \leq i_0 \leq n$ such that no
path from~$e_{i_0}$ to~$f_{i_0}$ in~$\instance$ has length~$2$, so
that some path must have length~$1$. Hence,
$\instance$ contains $S_{i_0}(\vec d_{i_0}, e_{i_0}, f_{i_0})$ and $E(e_{i_0},
f_{i_0})$, so $\instance$ contains $S_{i_0}(\vec d_{i_0})$, which witnesses that
$\tau$ is satisfied on $R(\vec c)$ in~$\instance$, a contradiction.
Hence, $\instance$ satisfies $\Sigma$, which concludes the proof of Theorem~\ref{thm:tcdisj}.

\bigskip

We now prove Theorem~\ref{thm:lindisj}, which states:

\fakethm{Theorem}{thm:lindisj}{\lindisj}
\medskip

The entire proof is shown by adapting the proof of Theorem~\ref{thm:tcdisj}.
Again, we show the claim with a base-covered UCQ, and we show the result for a
CQ in Appendix~\ref{app:ucqtocqhardness}

Intuitively, instead of using $E^\trans$ to emulate a disjunction on the length of
the path to encode genuine facts and pseudo facts, we will use the order
relation to emulate disjunction on the same elements: 
$e < f$ will indicate a genuine fact,
whereas $f < e$ will indicate a pseudo-fact, and $e = f$ will be
prohibited by the query.

\subsubsection{Defining $\sigma'$ from $\sigma$}
The signature $\sigma'$ is defined as in the proof of Theorem~\ref{thm:tcdisj}
except that we do not add the relations $E$
and $E^\trans$, but add a relation ${<}$ as a distinguished relation instead.

\subsubsection{Defining $\Sigma'$ from $\Sigma$ and $\sigma$}
We also define $\Sigma'$ as before except that we do not create the $\aincd$s of
the form $\forall \vec x \, e \, f ~ R'(\vec x, e, f) \rightarrow E^\trans(e,
f)$. Constraints like this are not necessary because the totality of $<$ already enforces the corresponding property.
This means that $\Sigma'$ does not mention the distinguished relations.

\subsubsection{Defining $Q'$ from $Q$, $\sigma$, and $\Sigma$}
The UCQ $Q'$ contains the following disjuncts (existentially closed), which are
clearly base-covered:
\begin{itemize}
\item \emph{$Q$-generated disjuncts}:
For each disjunct $\psi$ of the original UCQ $Q$, we create one disjunct $\psi'$
    in the UCQ $Q'$ where
each atom $R(\vec{x})$ is replaced by the conjunction $R'(\vec{x}, e, f)
\wedge e < f$, where $e$ and $f$ are fresh.
That is, the query $Q'$ matches whenever we have a witness for $Q$
consisting of genuine facts.

\item \emph{Order restriction disjuncts}: For each relation $R$ in~$\sigma$, we
  create a disjunct $R'(\vec{x},e,e)$. Intuitively, failure of this disjunct
    imposes that, for each relation
    $R' \in \sigma'$ that stands for a relation $R \in \sigma$, the elements
    in the two last positions must be different; so every fact must be either a genuine fact or a pseudo-fact.

\item \emph{$\did$ satisfaction disjuncts}:
For every $\did$ $\tau: \forall \vec{x} ~ R(\vec{x}) \rightarrow
\bigvee_i \exists \vec{y_i} R_i(\vec{x}, \vec{y_i})$ in
$\Sigma$, 
    we create the following disjunct in $Q'$:
    \[
Q_\tau: \witness_\tau(\vec{x}, e, f, \vec y_1, e_1, f_1, \ldots, \vec y_n, e_n,
    f_n)  \wedge
e < f \wedge \bigwedge_{1 \leq i \leq n} f_i < e_i.
  \]
Intuitively, $Q_\tau$ is satisfied if the body of  $\tau$ is matched to a genuine
    fact but each
of the head disjuncts of~$\tau$ is matched to 
    a pseudo-fact. 
\end{itemize}
\subsubsection{Defining $\instance'_0$ from $\instance_0$}
The process to define $\instance_0'$ from $\instance_0$ is
like in the proof of Theorem~\ref{thm:tcdisj}
except that, instead of creating the facts $E(b_F, b'_F)$, we create facts $b_F < b'_F$.

\medskip

The proof that $\owqa(\instance_0, \Sigma,
Q)$ holds iff $\owqalin(\instance_0', \Sigma', Q')$ holds
is similar to the proof for Theorem~\ref{thm:tcdisj},
so we sketch the proof and highlight the main differences.

\subsubsection{Forward Direction of the Correctness Proof}
Let $\instance \supseteq \instance_0$ satisfy $\Sigma$ and
violate $Q$, and construct $\instance'$ that satisfies $\Sigma'$ and violates
$Q'$ and in which $<$ is an order relation. We do so as follows:

\begin{itemize}
  \item Build $\instance'$ from $\instance$
    by expanding each fact $F$ with two fresh elements $b_F$ and $b'_F$
    and adding the fact $b_F < b'_F$ to make it a genuine fact.
  \item Create $\instance_2$ and $\instance_3$ as before, except that
    pseudo-facts and genuine facts are annotated with $<$-facts rather than
    $E^\trans$-facts.
  \item Add one step where we construct $\instance'$ from $\instance_3$ by
    completing $<$ to
    be a total order. To do so, observe that $<$ in $\instance_3$ must be a
    partial order, because all order facts that we have created are on disjoint
    elements (they are of the form $b_F < b_F'$ or $b_F' < b_F$ where $b_F$ and
    $b_F'$ are elements specific to a fact $F$). Hence, we define $\instance'$
    by simply completing $<$ to a total order using the order
    extension principle \cite{Szpilrajn30}. Note that this can never change a
    genuine fact in a pseudo-fact or vice-versa.
\end{itemize}
As before it is clear that $\instance' \supseteq \instance'_0$ and that
$\instance'$ satisfies $\Sigma'$ (note that the additional order facts created
from $\instance_3$ to $\instance'$ cannot create a violation of $\Sigma'$, as it
does not mention $<$), and we have made sure that $<$ is a total
order. To see why $Q'$ is not satisfied in~$\instance'$, we proceed exactly as
before for the $\did$ satisfaction disjuncts and $Q$-generated disjuncts,
but replacing ``having an $E$-fact between $e$ and $f$'' by ``having $e
< f$'', and replacing ``having an $E$-path of length~$2$ between $e$ and $f$'' by
``having $e > f$'', and likewise for $e_i$ and $f_i$. For the order-restriction
disjuncts, we simply observe that for any $R'$-fact $R'(\vec a,
e, f)$ in $\instance'$, by construction we always have $e \neq
f$.

\subsubsection{Backward Direction of the Correctness Proof}
Suppose we have some counterexample $\instance'$
to $\owqatc(\instance_0',\Sigma',Q')$.
We construct $\instance$ from $\instance'$ by
keeping all facts whose last two elements $e$ and $f$ are such that $e < f$. The
result still clearly satisfies $\instance \supseteq \instance_0$, and the proof
of why it violates
$Q$ is unchanged. To show that $\instance$ satisfies $\Sigma$, we adapt the
argument of the proof of Theorem~\ref{thm:tcdisj}, but instead of the
$\tau'_{S_i}$ we rely on totality of the order to deduce that either $e_i <
f_i$, $e_i = f_i$, or $f_i < e_i$ for all~$i$, and we rely on the
order-restriction disjuncts (rather than the $E$-path length-restriction
disjuncts) to deduce that either $e_i < f_i$ or $f_i < e_i$. We conclude as
before by the $\did$ satisfaction disjuncts that we must have $e_i < f_i$ for
some~$i$. Thus, we deduce from the satisfaction of $\Sigma'$ by $\instance'$
that $\instance$ satisfies $\Sigma$, which concludes the backward direction of
the correctness proof, and finishes the proof of Theorem~\ref{thm:lindisj}.

\subsection{Proof of Propositions~\ref{prop:lindatacompltrans}~and~\ref{prop:lindatacompl}}

We now give data complexity lower bounds that show
$\conp$-hardness even in the absence of constraints.
We first prove Proposition~\ref{prop:lindatacompltrans}:

\fakethm{Proposition}{prop:lindatacompltrans}{\lindatacompltrans}

\begin{proof}
  We will show the result for a UCQ $Q$, and we extend it to a CQ in
  Appendix~\ref{app:ucqtocqhardness}. We show $\conp$-hardness by reducing the
  $3$-colorability problem in $\ptime$ to the negation of the $\owqatc$ problem: this well-known
  $\np$-hard problem asks, given an undirected graph $\calG$, whether it is
  $3$-colorable, i.e., whether there is a mapping from the vertices of $\calG$ to
  a set of 3 colors (without loss of generality the set $\{1, 2, 3\}$) such that
  no two adjacent vertices are assigned the same color. Observe that we can
  modify slightly the definition of this problem to allow vertices to carry
  multiple colors, i.e., be colored by \emph{non-empty} subsets of $\{1, 2,
  3\}$): the use of multiple colors on a vertex imposes more constraints on the
  vertex, so makes our life harder. In other words, we can restrict the search
  for solutions to colorings where each vertex has one single color, but when
  encoding the $3$-colorability problem to $\owqatc$ we do not need to impose that
  vertices carry \emph{exactly} one color (we must just impose that they carry
  \emph{at least} one color).
  
\subsubsection{Definition of the Reduction}
  We define the signature $\sigma$ as containing:
  \begin{itemize}
    \item One binary relation $G$ to code the edges of the graph 
      provided as input to the reduction;
    \item One binary relation $E$ and its transitive closure $E^\trans$
      (playing a similar role as in the proof of Theorem~\ref{thm:tcdisj});
    \item For each $\chi \in \{1, 2, 3\}$, a ternary relation $C_\chi$.
      Intuitively, the first position of $C_\chi$-facts will contain the element
      that codes a vertex (and occurs in the $G$-facts that describe the edges
      incident to that vertex),
      and the positions 2
      and~3 will contain elements playing a similar role to elements $e$ and
      $f$ in $R'$-facts in the proof of Theorem~\ref{thm:tcdisj}. Namely, for a
      fact $C_\chi(a, e, f)$, if $e$ and $f$ are connected by a path of length
      1, this will indicate that vertex~$a$ has color $\chi$, while if they are
      connected by a path of length $2$ this will indicate that $a$ does not
      have color $\chi$.
  \end{itemize}
  We then define the UCQ $Q$ to contain the following disjuncts (existentially
  closed):
  \begin{itemize}
  \item \emph{$E$-path length restriction disjuncts}:
    For each $\chi \in \{1, 2,
      3\}$, a disjunct that holds if the $E$-path for $C_\chi$-facts has length
      $\geq 3$: \[C_\chi(x, e, f) \wedge E(e, y_1) \wedge E(y_1, y_2) \wedge E(y_2,
      y_3)\]
  \item \emph{Adjacency disjuncts:} For $\chi \in \{1, 2, 3\}$, a disjunct $Q_i$ that holds if two
      adjacent vertices were assigned the same color:
        \[C_\chi(x, e, f) \wedge E(e, f) \wedge G(x, x') 
        \wedge C_\chi(x', e', f') \wedge E(e', f')\]
    \item \emph{Coloring disjunct:} A disjunct that holds if a vertex was not
      assigned any color: \[\bigwedge_{\chi \in \{1, 2, 3\}}
      C_\chi(x, e_\chi, f_\chi) \wedge E(e_\chi,
      w_\chi) \wedge E(w_\chi, f_\chi)\]
  \end{itemize}
  Given an undirected graph $\calG$, we then code it in~$\ptime$ as the set of
  facts $\instance_0$
  defined by having:
  \begin{itemize}
    \item One fact $G(x, y)$ and one fact $G(y, x)$ for each edge $\{x,y\}$ in~$\calG$
    \item One fact $C_\chi(x, e_{x,\chi}, f_{x,\chi})$ and one fact
      $E^\trans(e_{x,\chi}, f_{x,\chi})$ 
      for each vertex~$x$ in~$\calG$
      and for each $\chi \in \{1, 2, 3\}$,
      where all the $e_{x,\chi}$ and $f_{x,\chi}$ are fresh.
  \end{itemize}
\subsubsection{Correctness Proof for the Reduction}
  We now show that $\calG$ is $3$-colorable iff $\owqatc(\instance_0, \emptyset,
  Q)$ is false, completing the reduction.

  For the forward direction, consider a $3$-coloring of $\calG$. Construct
  $\instance \supseteq \instance_0$ as follows.
  For each vertex~$x$ of
  $\calG$
  (with facts $C_\chi(x, e_{x,\chi}, f_{x, \chi}) \in \instance_0$
  as defined above
  for all $\chi \in \{1, 2, 3\}$),
  create the facts
  $E(e_{x,\chi}, f_{x,\chi})$ where $\chi$ is the color assigned to $x$, and the
  facts $E(e_{x,\chi'},
  w_{x,\chi'})$ and $E(w_{x,\chi'}, f_{x,\chi'})$ for the two other colors $\chi' \in \{1, 2, 3\} \backslash
  \{i\}$ (with the $w_{x,\chi'}$ being fresh). It is clear that $\instance$ thus
  defined is such that $\instance \supseteq \instance_0$,
  and that $E^\trans$ is
  the transitive closure of~$E$ in~$\instance$.
  The $E$-path length restriction disjuncts of $Q$ do not match in~$\instance$
  (note that we only create $E$-paths whose endpoints are pairwise distinct),
  and the coloring disjunct does not match either because each vertex has some
  color.
  Finally, the definition of a $3$-coloring ensures that the adjacency disjuncts
  do not match either.
  Hence, $\instance$ is a set of facts violating $Q$.

  \medskip

  For the backward direction, consider some $\instance \supseteq \instance_0$
  where $E^\trans$ is the transitive closure of $E$ that
  violates $Q$.
  For any vertex~$x$ of~$\calG$ and for all $\chi \in \{1, 2, 3\}$, letting 
  $C_\chi(x, e_{x,\chi}, f_{x,\chi})$ be the facts as defined in the construction,
  as the fact $E^\trans(e_{x,\chi}, f_{x,\chi})$ holds in $\instance_0$ for each $\chi \in \{1,
  2, 3\}$ and $E^\trans$ must be the transitive closure of~$E$ in~$\instance'$,
  there must be an $E$-path from 
  $e_{x,\chi}$ to $f_{x,\chi}$. Further, 
  as $\instance$ violates the $E$-path length restriction
  disjuncts of $Q$, this path must be of length~$1$ or~$2$. 
  Further, as
  $\instance$ violates the coloring disjunct of~$Q$, for every vertex~$x$
  of~$\calG$, there must be one $\chi \in \{1, 2, 3\}$ such that the paths for
  $x$ and $\chi$ has length~1. We define a coloring of $\calG$ by choosing for
  each vertex~$x$ a
  color $\chi$ for which the $E$-path has length~$1$, i.e., the fact $E(e_{x,\chi}, f_{x,\chi})$ holds. (As
  pointed out in the beginning of the proof, for each $x$, there could be
  multiple such $\chi$, but we can take any of them.)
  This indeed defines a $3$-coloring, as any
  violation of the $3$-coloring witnessed by two adjacent vertices of color~$\chi$
  would imply a match of the $\chi$-th adjacency disjunct of~$Q$ in~$\instance$.
  This concludes the backward direction of the correctness proof of the
  reduction, and concludes the proof.
\end{proof}

We then modify the proof to show Proposition~\ref{prop:lindatacompl}:

\fakethm{Proposition}{prop:lindatacompl}{\lindatacompl}

\begin{proof}
  Again we show the result for a UCQ $Q$, and extend it to a CQ in
  Appendix~\ref{app:ucqtocqhardness}.
  We define $\sigma$ as in the proof of Proposition~\ref{prop:lindatacompltrans}
  but with an order relation $<$ instead of the two relations~$E$ and~$E^\trans$.
  We define $Q$ as in the proof
  of Proposition~\ref{prop:lindatacompltrans} but
  without the $E$-path length restriction disjunct, and replacing in the other
  disjuncts the length-1 paths $E(e, f)$ and $E(e', f')$ by $e < f$ and $e' <
  f'$ 
  and the paths $E(e_\chi, w_\chi) \wedge E(w_\chi, f_\chi)$ by $f_\chi <
  e_\chi$: the resulting UCQ is
  clearly base-covered. Note that, unlike
  in the proof of Theorem~\ref{thm:lindisj}, we need not worry about equalities
  (so we need not add order restriction disjuncts),
  as all the relevant elements are already created 
  as distinct elements in~$\instance_0$.
  We define $\instance_0$ in the same fashion as in the proof of
  Proposition~\ref{prop:lindatacompltrans} but without the
  $E^\trans$-facts.

  We show the same equivalence as in that proof, but for $\owqalin$. We do it by
  replacing $E$-paths of length $1$ from an $e$ to an $f$ by $e
  < f$, and $E$-paths of length $2$ by $f < e$. In the forward direction, we
  build a counterexample set of facts from a coloring as before, and extend~$<$ to be an
  arbitrary total order
  (this is possible because all $<$-facts that we create are on disjoint pairs).
  In the backward direction, we use totality of~$<$
  to argue that a counterexample set of facts must choose some order 
  between the $e_{x,\chi}$ and the $f_{x,\chi}$, and so must decide 
  which colors are assigned to each vertex, in a way that yields a coloring
  (because the adjacency and coloring disjuncts are violated).
\end{proof}

\section{Undecidability Results}\label{sec:undecid}

We now show how slight changes to the constraint languages
and query languages used for the results
in Sections~\ref{sec:decid}~and~\ref{sec:decidlin}
lead to undecidability of query answering.

The undecidability proofs in this section are by reduction from an
\emph{infinite tiling problem},
specified by a set of colors $\mathbb{C} = C_1,
  \ldots, C_k$, a set of forbidden \emph{horizontal patterns} $\mathbb{H}
  \subseteq \mathbb{C}^2$ and a set of forbidden \emph{vertical patterns}
  $\mathbb{V} \subseteq \mathbb{C}^2$. It asks, given a sequence $c_0, \ldots,
  c_n$ of colors of $\mathbb{C}$, whether there exists a function $f :
  \mathbb{N}^2 \rightarrow \mathbb{C}$ such that we have $f((0, i)) = c_i$ for all $0
  \leq i \leq n$, and for all $i, j \in \mathbb{N}$, we have $(f(i, j), f(i+1,
  j)) \notin \mathbb{H}$ and $(f(i, j), f(i, j+1)) \notin \mathbb{V}$.
It is well-known that there are fixed 
$\mathbb{C}$, $\mathbb{V}$, $\mathbb{H}$ for which this problem is undecidable
\cite{classicaldecision}.

\subsection{Undecidability Results for $\owqatr$ and $\owqatc$} \label{sec:undecidtrans}

We have shown in Section~\ref{sec:decid}
that query answering is decidable with transitive relations (even
with transitive closure), $\afgtgd$s, and UCQs
(Theorem~\ref{thm:decidtransautomata}).
Removing the base-frontier-guarded
requirement makes $\owqatc$ undecidable, even when
constraints are inclusion dependencies:

\newcommand{\undectrans}{
  There is a signature $\sigma = \sigmab \sqcup \sigmad$ with a single
  distinguished relation $S^\trans$ in~$\sigmad$,
  a set $\Sigma$ of
  $\incd$s on~$\sigma$, and a CQ $Q$ on $\sigmab$,
  such that the following problem is undecidable:
  given a finite set of facts $\instance_0$,
  decide $\owqatc(\instance_0, \Sigma, Q)$.
}

\begin{theorem}
  \label{thm:undectrans}
  \undectrans
\end{theorem}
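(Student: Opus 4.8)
The plan is to reduce from the infinite tiling problem stated above, which is undecidable for suitable fixed $\mathbb{C}$, $\mathbb{H}$, $\mathbb{V}$. Given an initial colour sequence $c_0,\dots,c_n$, I would fix $\sigma$, the inclusion dependencies $\Sigma$, and the Boolean CQ $Q$ over base relations once and for all, encoding the tiling instance only inside the input $\instance_0$, so that $\owqatc(\instance_0,\Sigma,Q)$ \emph{fails} exactly when the tiling instance has a solution. Recall that a counterexample to $\owqatc$ is a set of facts $\instance \supseteq \instance_0$ that satisfies $\Sigma$, interprets $S^\trans$ as the transitive closure of $S$, and violates $Q$; the aim is to make such counterexamples correspond to legal tilings of $\mathbb{N}^2$ extending $c_0,\dots,c_n$. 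Since the existence of a tiling is undecidable, so is its complement, and hence $\owqatc$ (which holds iff there is no solution) is undecidable.

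First I would take a base signature with unary colour predicates $P_{C_1},\dots,P_{C_k}$, binary relations $H$ and $V$ for horizontal and vertical grid steps, and a binary relation $S$ whose transitive closure is the single distinguished relation $S^\trans \in \sigmad$. Inclusion dependencies build the grid skeleton during the chase: every cell is forced to have an $H$-successor and a $V$-successor and (using auxiliary base flags) to carry a colour. The forbidden patterns are detected purely on base relations by $Q$: I would first take the union of the CQs $P_C(x)\wedge H(x,y)\wedge P_{C'}(y)$ for $(C,C')\in\mathbb{H}$ and the analogous ones with $V$ for $\mathbb{V}$, together with disjuncts detecting structural defects, and then fold this UCQ into a single CQ using the UCQ-to-CQ encoding employed elsewhere in the paper. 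Violating $Q$ then says that no forbidden adjacency and no defect occurs, i.e., that the colouring realised by the model is a legal tiling of whatever grid it encodes.

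The heart of the construction, and the step I expect to be the main obstacle, is forcing the skeleton produced by the inclusion dependencies to be a genuine grid rather than a tree: it must satisfy the confluence law that the vertical successor of a horizontal successor agrees, as the same cell carrying the same colour, with the horizontal successor of a vertical successor. Guarded and frontier-guarded constraints are decidable precisely because they cannot enforce this closure, since the chase of inclusion dependencies yields a tree-like unfolding on which a consistent colouring is strictly weaker than a grid tiling. This is exactly where I would exploit the removal of the base-guardedness restriction, by letting the distinguished relation $S^\trans$ appear as the guarding body atom of some inclusion dependencies: an ID of the form $\forall x y\, S^\trans(x,y)\rightarrow \dots$ is forbidden under base-guardedness but allowed here. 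Concretely, I would arrange $S$ so that $S^\trans$ supplies the long-range connections of the intended grid, linking a cell to the cells reachable along a canonical enumeration of $\mathbb{N}^2$, and use $S^\trans$-guarded dependencies to propagate the information identifying the two routes to each diagonal cell, with further query disjuncts excluding models where the two routes disagree.

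The correctness proof then splits into the two expected directions. From a legal tiling one builds a counterexample by taking $\mathbb{N}^2$ with the natural $H$, $V$, $S$ and colours, interpreting $S^\trans$ as the transitive closure of $S$, and checking that all inclusion dependencies hold while $Q$ fails. The delicate direction is the converse: from an arbitrary counterexample model one must extract a legal tiling, which requires showing that the $S^\trans$-guarded dependencies together with the defect disjuncts of $Q$ genuinely \emph{force} the grid, and in particular that the transitive-closure semantics of $S^\trans$ leaves no room for spurious models that evade the tiling constraints. Pinning down this last point --- that interpreting $S^\trans$ as a true transitive closure forces confluence rather than merely permitting it --- is the subtle part, and is presumably where the originally published argument had to be corrected.
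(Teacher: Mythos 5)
Your overall plan (reduce from the infinite tiling problem, encode the instance only in $\instance_0$, make counterexamples to $\owqatc$ correspond to legal tilings) matches the paper's, but your construction diverges at exactly the point you yourself flag as the main obstacle, and that obstacle is not overcome. You propose to chase out a grid skeleton with $H$/$V$ successor dependencies and then force confluence using $S^\trans$-guarded dependencies plus ``defect'' disjuncts in the query. This cannot work as described. Inclusion dependencies have a single body atom, a single head atom, and no repeated variables, so an $S^\trans$-guarded $\incd$ can only attach fresh existential witnesses to a pair; it has no mechanism to ``propagate the information identifying the two routes'' to a diagonal cell, let alone to equate or compare them. Moreover, the chase of $\incd$s is tree-like, so the two routes really do produce distinct elements whose colours can be chosen independently, and a CQ, being preserved under homomorphisms, cannot detect that two routes ``disagree'' --- that would amount to expressing inequality. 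There is a second unaddressed gap: you need every cell to carry \emph{some} colour, which is a $k$-way disjunctive requirement, and $\incd$s cannot express disjunction; your parenthetical ``auxiliary base flags'' does not say how this choice is made.

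The paper's proof sidesteps both problems rather than solving them head-on. First, it never builds a grid out of $H$/$V$ successors: a single $\incd$-generated infinite $S'$-chain $a_0,a_1,\ldots$ is included into $S$, and the grid cell $(i,j)$ is represented by the \emph{pair} $(a_i,a_j)$ --- the fact $S^\trans(a_i,a_j)$ (or its flip) triggers dependencies such as $S^\trans(x,y) \rightarrow \exists z\, G(x,y,z)$ that attach a cell-witness element $b_{i,j}$. Confluence is then automatic, because cells are coordinate pairs; adjacency is expressed in the query through the shared coordinate and one $S'$-step, e.g.\ $G(x,y,z)\wedge G(x,y',z')\wedge S'(y,y')$. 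This is precisely where dropping base-guardedness is exploited: $S^\trans$ occurs as the body of an $\incd$, but nothing more elaborate is needed. Second, the colour of a cell is not a unary predicate but the \emph{length} of a $C$-path forced by $T(z,w)\rightarrow C^\trans(z,w)$, with query disjuncts forbidding paths of length greater than $k$; the free choice of path length emulates the disjunctive choice of colour, the same trick as in Theorem~\ref{thm:tcdisj}. A final step replaces $C,C^\trans$ by $S,S^\trans$ to get a single distinguished relation, and the UCQ is collapsed to a CQ in Appendix~\ref{app:ucqtocqundecid}. Without the pairs-as-cells idea and the path-length colouring, your reduction has no correct completion along the lines you sketch. (Incidentally, the proof that this paper corrects is that of Theorem~\ref{thm:undectransb}, not of this theorem, so the subtlety you guess at is not where the published error lay.)
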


We can also show that the $\owqatr$ problem is undecidable if we allow
disjunctive inclusion dependencies which are not base-guarded:

\newcommand{\undectransb}{
  There is an arity-two signature $\sigma = \sigmab \sqcup \sigmad$
  with a single distinguished relation $S^\trans$ in~$\sigmad$,
  a set $\Sigma$ of $\did$s on~$\sigma$,
  a CQ $Q$ on $\sigmab$, such that the following problem is
  undecidable:
  given a finite set of facts $\instance_0$,
  decide $\owqatr(\instance_0, \Sigma, Q)$.
}
\begin{theorem}
  \label{thm:undectransb}
  \undectransb
\end{theorem}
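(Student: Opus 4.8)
The plan is to reduce from the infinite tiling problem, following the same tiling-reduction strategy used for $\owqatc$ in Theorem~\ref{thm:undectrans} but trading the expressive power of transitive \emph{closure} for the combination of mere transitivity together with disjunction in the dependencies. Fix colours $\mathbb{C}=C_1,\dots,C_k$ and forbidden patterns $\mathbb{H},\mathbb{V}$ for which tiling is undecidable, together with an instance $c_0,\dots,c_n$. I take $\sigmab$ to contain unary colour predicates $C_1,\dots,C_k$, a unary grid marker, and binary ``horizontal'' and ``vertical'' successor relations $H$ and $V$ (plus a few auxiliary binary relations used by the closure gadget), and I let $\sigmad=\{S^\trans\}$ be a single binary distinguished relation asserted to be transitive; since all relations are unary or binary, $\sigma$ is arity-two. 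The initial facts $\instance_0$ encode the origin together with the boundary column $c_0,\dots,c_n$ as a finite $V$-path carrying the prescribed colours.

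The dependencies $\Sigma$ will be $\did$s of three kinds. First, $\incd$s giving every grid node an $H$-successor and a $V$-successor, so that every model contains an infinite grid skeleton. Second, a disjunctive dependency of the form $U(x)\rightarrow \bigvee_{1\le \chi\le k} C_\chi(x)$ assigning at least one colour to every node; combined with query disjuncts forbidding two distinct colours on a node, this forces exactly one colour per node. Third---and this is where the distinguished relation is used as a guard, so that these dependencies are \emph{not} base-guarded---a family of $\did$s whose bodies are $S^\trans$-atoms and which, together with transitivity, force the grid to ``close up''. The intuition mirrors the $\owqatc$ case: there, transitive \emph{closure} let the query measure and restrict $S$-path lengths, encoding a binary choice at each diagonal, whereas here I replace that implicit choice by an explicit disjunction in the $\did$s and use transitivity of $S^\trans$ to propagate and globally reconcile these choices, so that the two diagonal representatives $H V$ and $V H$ of a node are forced to agree.

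The query $Q$ is taken on $\sigmab$ only (it never mentions $S^\trans$), and is first built as a UCQ with disjuncts detecting (i) a forbidden horizontal pattern across an $H$-edge, (ii) a forbidden vertical pattern across a $V$-edge, (iii) two distinct colours on one node, and (iv) a closure defect, namely an $H V$-diagonal and a $V H$-diagonal of a common node carrying incompatible colours. I then convert this UCQ into a single CQ using the coding trick already used in the paper (Appendix~\ref{app:ucqtocq}), so that $Q$ is a CQ on $\sigmab$ as required. Correctness splits into two implications. Forward: from a tiling $f$ I build the genuine grid on $\mathbb{N}^2$, colour it by $f$, and interpret $S^\trans$ as the transitive closure of the edges asserted by the closure gadget; one checks that $\Sigma$ holds, that $S^\trans$ is transitive, and that no disjunct of $Q$ matches. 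Backward: from a counterexample $\instance\supseteq\instance_0$ satisfying $\Sigma$ with $S^\trans$ transitive and violating $Q$, I read off a colouring of $\mathbb{N}^2$ by following $H$- and $V$-edges from the origin and using the unique colour of each reached node, and argue that the closure disjunct of $Q$ being unmatched makes this colouring well-defined on grid positions, so that disjuncts (i)--(ii) guarantee a valid tiling extending $c_0,\dots,c_n$.

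The main obstacle is exactly the design and analysis of the transitivity-plus-disjunction closure gadget and the proof of the backward direction. Since neither $\did$s nor $\incd$s can force equality of two elements, the grid cannot be made to close up \emph{structurally}: the tree-like chase of the skeleton dependencies is not a genuine grid, and a naive colouring of such a tree need only satisfy a \emph{one-step} diamond condition $c(wHV)=c(wVH)$, which does not entail that the colour depends only on the number of $H$'s and $V$'s (consistent with the decidability of ordinary $\owqa$ for $\did$s). The transitive relation must therefore be used to create non-local shortcuts that are invisible to any tree decomposition of the base relations and that, via transitivity, propagate the diagonal choices globally and force the colouring to be grid-consistent. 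Making this gadget simultaneously (a) satisfiable by the intended grid model, (b) strong enough to force grid-consistency in every $Q$-violating model, and (c) within arity-two non-base-guarded $\did$s, is the delicate part; the remainder is bookkeeping of the kind already carried out for Theorem~\ref{thm:undectrans}.
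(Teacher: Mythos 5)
Your reduction rests on a ``closure gadget''---a family of $\did$s with $S^\trans$-atoms in their bodies that, together with transitivity, is supposed to force the $HV$- and $VH$-diagonals of your grid skeleton to agree---and you never construct it; you explicitly defer it as ``the delicate part.'' That gadget \emph{is} the proof: without it you have only the chase of ordinary $\incd$s, which, as you yourself observe, is tree-like and admits colourings that satisfy the local diamond condition without being grid-consistent. It is also doubtful that such a gadget exists in the form you describe: $\did$s cannot merge elements, and an $S^\trans$-bodied dependency can only fire on pairs that are already $S^\trans$-related, which your skeleton never forces \emph{across} the two diagonal copies (in a tree-like model they sit in different branches, and transitivity only propagates along directed paths through their common ancestor). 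The paper's proof of Theorem~\ref{thm:undectransb} does not solve this problem---it avoids it. There is no $H$/$V$ skeleton and no closure condition at all: one builds a single infinite $S'$-chain $a_0, a_1, \dots$, asserts $S'(x,y) \rightarrow S^\trans(x,y)$ so that transitivity yields $S^\trans(a_i,a_j)$ for all $i<j$, and represents grid position $(i,j)$ directly as the \emph{pair} $(a_i,a_j)$; the $\did$s $S^\trans(x, y) \rightarrow \bigvee_l K_l(x, y)$ (with variants $K_l(y,x)$ and unary $K_l'(x)$ for the diagonal) assign a colour to every pair, and the query detects forbidden patterns by moving one $S'$-step in a single coordinate. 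Under this pair encoding, well-definedness of the grid is automatic, so the consistency problem on which your proposal is stuck never arises.

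Even granting your gadget, your last step breaks the theorem's arity bound. You propose to convert the UCQ into a CQ via Appendix~\ref{app:ucqtocq}, but Lemma~\ref{lem:ucqtocqdist} adds a flag position to every flagged relation, so your binary relations $H$ and $V$ become ternary and the resulting signature is no longer arity-two---contradicting exactly what Theorem~\ref{thm:undectransb} asserts. This is precisely the error in the originally published proof that the present version corrects (see Appendix~\ref{app:fix}): the fixed proof stays at arity two by conjoining all UCQ disjuncts into one CQ, reifying the Boolean OR over their activation flags with \emph{binary} relations $\mathrm{Or\_in}1$, $\mathrm{Or\_in}2$, $\mathrm{Or\_out}$, and exploiting transitivity itself (once $S'(\mathit{true}, a_0)$ is an initial fact, $S^\trans(\mathit{true},a_i)$ is forced for every chain element $a_i$) so that one disjunct can be matched ``for real'' while the others are parked on a dummy $\mathit{false}$ loop. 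So both the heart of your reduction and its final step would need to be replaced.
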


The two results are incomparable: the second one applies to the $\owqatr$
problem rather than $\owqatc$, and does not require a higher-arity signature,
but it uses more expressive constraints that feature disjunction.
To prove both results, we reduce from a tiling problem, using a transitive
successor relation to code a grid, and using the query to test for forbidden
adjacent tile patterns. We first present the proof of the second result, because
it is simpler. We then adapt this proof to show the first result.

\begin{proof}[Proof of Theorem~\ref{thm:undectransb}]
Fix $\mathbb{C}$, $\mathbb{V}$, $\mathbb{H}$
such that the infinite tiling problem is undecidable.
We will give a reduction from this infinite tiling problem to $\owqatr(\instance_0,\Sigma,Q)$.
We prove the result with a UCQ instead of a CQ, and explain how we can use
a CQ instead in Appendix~\ref{app:ucqtocqundecid}.

  \subsubsection{Definition of the Reduction}
  The base relations of the signature are a binary relation $S'$ (for ``successor''),
  one binary relation $K_i$ for each color $C_i$, and one unary relation $K_i'$
  for each color $C_i$. We also use one distinguished transitive relation, $S^\trans$.
  The idea is that we will create an infinite chain of~$S'$ and assert that it
  is included in
  $S^\trans$: hence, $S^\trans$ will be a transitive super-relation of~$S'$, so
  it will contain at least its transitive closure.
  From $S^\trans$, we will
  define a grid structure on which we can encode the tiling problem, with
  grid positions represented as a pair of elements in an $S^\trans$-fact.

  Let $\Sigma$ consist of the following $\did$s (omitting universal quantifiers for brevity):
  \begin{align*}
    S'(x, y) & \rightarrow \exists z ~ S'(y, z) &\qquad
    S'(x, y) & \rightarrow S^\trans(x, y)\\
    S^\trans(x, y) & \rightarrow \bigvee_i K_i(x, y) &\qquad
    S^\trans(x, y) & \rightarrow \bigvee_i K_i(y, x) &\qquad
    S^\trans(x, y) & \rightarrow \bigvee_i K_i'(x) .
  \end{align*}

  The $K_i(x, y)$ describe the assignment of colors to grid positions
  represented as pairs on the infinite chain as we explained. The point of
  $K_i'(x)$ is that it stands for $K_i(x, x)$: we need a different relation symbol
  because variable reuse is not allowed in inclusion dependencies. Note that
  some of these constraints are not base-guarded.

  Let the UCQ $Q$ be a disjunction of the following sentences
  (omitting existential quantifiers for brevity):
  \begin{inparaitem}[]  
    \item for each forbidden horizontal pair $(C_i, C_j) \in \mathbb{H}$, with
      $1 \leq i, j \leq k$, the disjuncts
      \begin{align*}
        K_i(x, y) \wedge S'(y, y') \wedge K_j(x, y') \quad
        K_i'(y) \wedge S'(y, y') \wedge K_j(y, y') \quad
        K_i(y', y) \wedge S'(y, y') \wedge K_j'(y') 
      \end{align*}
    \item and for each forbidden vertical pair $(C_i, C_j) \in \mathbb{V}$, the
      analogous disjuncts
      \begin{align*}
        K_i(x, y) \wedge S'(x, x') \wedge K_j(x', y) \quad
        K_i'(x) \wedge S'(x, x') \wedge K_j(x', x) \quad
        K_i(x, x') \wedge S'(x, x') \wedge K_j'(x') .
      \end{align*}
   \end{inparaitem}
  Given an initial instance $c_0, \ldots, c_n$ of the tiling problem,
  let the initial set of facts $\instance_0$ consist of 
  the fact $K_j'(a_0)$ such that $C_j$ is the color of $c_0$, and
  for $0 \leq i < n$, the fact $S'(a_i, a_{i+1})$
  and the fact $K_j(a_0, a_i)$ such that $C_j$ is the color of initial element $c_i$.

  \subsubsection{Correctness Proof for the Reduction}
  We claim that the tiling problem has a solution iff
  there is a superset of $\instance_0$ that satisfies $\Sigma$ and
  violates $Q$ and where $S^\trans$ is transitive. From this we conclude the
  reduction and deduce the undecidability of $\owqatr$ as stated.

  For the forward direction, from a solution $f$ to the tiling problem for
  input $\vec{c}$, we construct the counterexample $\instance \supseteq \instance_0$ as follows.
  We first 
  extend the initial chain of $S'$-facts in~$\instance_0$ to
  an infinite chain $S'(a_0, a_1), \ldots, \allowbreak S'(a_m, a_{m+1}),
  \ldots$,
  and fix $S^\trans$ to be the
  transitive closure of this $S'$-chain (so it is indeed transitive).
  For all $i, j \in \mathbb{N}$ such that $i \neq j$, we create the
  fact $K_l(a_i, a_j)$ where $l = f(i,j)$. For all $i \in \mathbb{N}$, we create
  the fact $K_l'(a_i)$ where $l = f(i,i)$. This clearly satisfies the
  constraints in $\Sigma$, and does not satisfy the query
  because $f$ is a tiling.

  For the backward direction, consider an $\instance \supseteq \instance_0$ that
  satisfies $\Sigma$ and violates $Q$. Starting at the chain of $S'$-facts
  of~$\instance_0$, we can deduce, using the constraints,  the existence of an infinite chain $a_0, \ldots, a_n,
  \ldots$ of~$S'$-facts
  (whose elements may be distinct or not, this does not matter). Define a
  tiling $f$ matching the initial tiling problem instance as follows. For all
  $i < j$ in $\mathbb{N}$, as there is a path of $S'$-facts from $a_i$ to $a_j$,
  we infer that $S^\trans(a_i, a_j)$ holds, so that $K_l(a_i, a_j)$ holds for
  some $1 \leq l \leq k$; pick one such fact, taking the fact of $\instance_0$
  if $i = 0$ and $j \leq n$, 
  and fix $f(i, j) \defeq l$. For $i > j$ we can likewise see
  that $S^\trans(a_j, a_i)$ holds whence $K_l(a_i,a_j)$ holds for some~$l$, and
  we continue as before. For $i \in \mathbb{N}$, as $S'(a_i, a_{i+1})$ holds, we
  know that $K'_l(a_i)$ holds for some $1 \leq l \leq k$ (again we take the
  fact of~$\instance_0$ if $i=0$), and fix accordingly
  $f(i, i) \defeq l$. The resulting $f$ clearly satisfies the initial tiling
  problem instance $c_0, \ldots, c_n$, and it is clearly a solution to the
  tiling problem, as any forbidden pattern in~$f$ would witness a match of a
  disjunct of~$Q$ in~$\instance$. This shows that the reduction is correct, and
  concludes the proof.
\end{proof}

We now prove the first result, drawing inspiration from the previous proof, but using 
the transitive closure to emulate disjunction as we did in Theorem~\ref{thm:tcdisj}.

\begin{proof}[Proof of Theorem~\ref{thm:undectrans}]
  We reuse the notations for tiling problems from the previous proof.
  We first prove the result with two distinguished relations $S^\trans$ and
  $C^\trans$ and with a UCQ, and then explain how the proof is modified to use only a single
  transitive relation $S^\trans$.
  The extension to a CQ is explained in Appendix~\ref{app:ucqtocqundecid}.

  \subsubsection{Definition of the Reduction}
  We define a binary relation $S$ (for ``successor'') of which $S^\trans$ is
  interpreted as the transitive closure, one binary relation $S'$,
  one 3-ary relation $G$ (for ``grid''),
  one binary relation $G'$ (standing for cells on the diagonal of the grid, like
  $K_i'$ in the previous proof),
  one binary relation $T$ (a terminal for gadgets that we will define to
  indicate colors)
  and one binary relation $C$ of which
  $C^\trans$ is interpreted as the transitive closure. The distinction between
  $S$ and $S'$ is not important for now but will matter when we adapt the
  proof later to use a single distinguished relation.

  We write the following inclusion dependencies $\Sigma$ (omitting universal quantifiers for brevity):
  \begin{align*}
    S'(x, y) & \rightarrow \exists z ~ S'(y, z) &\qquad
    S'(x, y) & \rightarrow S(x, y) \\
    S^\trans(x, y) & \rightarrow \exists z ~ G(x, y, z)&\qquad
    S^\trans(x, y) & \rightarrow \exists z ~ G(y, x, z)&\qquad
    S^\trans(x, y) & \rightarrow \exists z ~ G'(x, z)\\
    G(x, y, z) & \rightarrow \exists w ~ T(z, w)&\qquad
    G'(x, z) & \rightarrow \exists w ~ T(z, w)&\qquad
    T(z, w) & \rightarrow C^\trans(z, w)
  \end{align*}
  In preparation for defining the query $Q$, we define $Q_i(z)$
  for all $i > 0$ to match the left endpoint of $T$-facts covered by a $C$-path
  of length $i$ (intuitively coding color~$i$):
  \[
    \exists z_1 \ldots z_i \, w ~ C(z, z_1) \wedge C(z_1, z_2) \wedge \ldots,
    C(z_{i-1}, z_i) \wedge
    T(z, z_i),
  \] 
The query $Q$ is a disjunction of the following disjuncts (existentially closed):
  \begin{itemize}
    \item \emph{$C$-path length restriction disjuncts:} One disjunct written as follows,
      where $k$ is the number of colors \[
    S'(x, w) \wedge G(x, y, z) \wedge\, T(z, z') \wedge C(z, z_1) 
      \wedge\, C(z_1, z_2) \wedge \cdots \wedge C(z_{k-1}, z_k), C(z_k, z_{k+1})\]
      and one disjunct defined similarly but with $G(x, y, z)$ replaced by
      $G'(x, z)$. Intuitively, these disjuncts impose that $C$-paths annotating
      $T$-facts must code colors between $1$ and $k$ (i.e., they cannot have
      length $k+1$ or greater), and the distinction
      between $G$ and $G'$ is for reasons similar to the distinction between the
      $K_i$ and $K'_i$ in the proof of Theorem~\ref{thm:undectransb}.
    \item \emph{Horizontal adjacency disjuncts:}
      For each forbidden horizontal pair $(C_i, C_j) \in \mathbb{H}$, with
      $1 \leq i, j \leq k$, the disjuncts:
        \begin{align*}
G(x, y, z) \wedge G(x, y', z')
\wedge Q_i(z) \wedge Q_j(z') \wedge S'(y, y')\\
G'(y, z) \wedge G(y, y', z') \wedge Q_i(z) \wedge Q_j(z') \wedge S'(y, y')\\
G(y', y, z) \wedge G'(y', z') \wedge Q_i(z) \wedge Q_j(z') \wedge S'(y, y')
\end{align*}
    \item \emph{Vertical adjacency disjuncts:}
      For each $(C_i, C_j) \in \mathbb{V}$, the same queries but replacing
      atoms $S'(y, y')$ by $S'(x, x')$ and the two first atoms of the last two
      subqueries by $G'(x, z) \wedge G(x, x', z')$ and $G(x, x', z) \wedge G'(x', z')$.
  \end{itemize}
  Given an initial instance of the tiling problem $c_0, \ldots, c_n$,
  the initial set of facts $\instance_0$ consists of the following:
  \begin{inparaenum}[(i)]
    \item $S'(a_i, a_{i+1})$ for $0 \leq i < n$;
    \item $G(a_0, a_i, b_{0,i})$ for $0 < i \leq n$;
    \item $G'(a_0, b_{0,0})$
    \item for all $0 \leq i \leq n$, letting $l$ be such that $c_i$ is the
      $l$-th color $C_l$, 
      we create the \emph{length-$l$ gadget on~$b_{0,i}$}:
      we create a path $C(b_{0,i}, d_{0,i}^1), 
      C(d_{0,i}^1, d_{0,i}^2),
      \ldots
      C(d_{0,i}^{l-1}, d_{0,i}^l)$,
      and the fact $T(b_{0,i}, d_{0,i}^l)$, where the elements $b_{0,i}$ and
      $d_{0,i}^j$ are all fresh;
  \end{inparaenum}

  \subsubsection{Correctness Proof for the Reduction}
  We claim that the tiling problem has a solution iff
  there is a superset of $\instance_0$ that satisfies $\Sigma$ and
  violates $Q$, where the $S^\trans$ and $C^\trans$ relations are interpreted as the
  transitive closure of $S$ and $C$, from which we conclude the reduction and
  deduce the undecidability of $\owqatc$ as stated.

  For the forward direction, from a solution $f$ to the tiling problem for input
  $\vec{c}$, we construct  $\instance \supseteq \instance_0$ as follows.
  We first create 
  an infinite chain $S'(a_0, a_1), \ldots, S'(a_m, a_{m+1}), \ldots$ to complete
  the initial chain of $S'$-facts in~$\instance_0$, we create the implied
  $S$-facts, and make $S^\trans$ the transitive closure of $S$.
  We then create one fact $G(a_i, a_j, b_{i,j})$ for all $i \neq j$
  in~$\mathbb{N}$ and one fact $G'(a_i, b_{i,i})$ for all $i \in
  \mathbb{N}$.
  Last, for all $i, j \in \mathbb{N}$, letting $l \defeq f(i, j)$, 
  we create the length-$l$ gadget on $b_{i,j}$ with fresh elements.

  It is clear that $\instance$ contains the facts of~$\instance_0$. It is easy to verify
  that it satisfies $\Sigma$. To see that we do not satisfy the query, observe
  that:
  \begin{itemize}
    \item The $C$-path length restriction disjuncts have no match 
      because all $C$-paths created have length $\leq k$ and are on disjoint
      sets of elements;
    \item For the horizontal adjacency disjuncts, it is clear that, in any
      match, $z$ must be of the form $b_{i,j}$ and $z'$ of the form $b_{i,j+1}$;
      the reason for the three different forms is that the cases where $i = j$
      and where
      $i \neq j$ are managed differently. Then, as $f$ respects
      $\mathbb{H}$, we know that the $Q_i$ and $Q_j$ subqueries cannot be
      satisfied, because for any $l \in \mathbb{N}$ and $i', j' \in \mathbb{N}$,
      we have $Q_l(b_{i',j'})$ iff $f(i', j') = l$ by construction;
    \item The reasoning for the vertical adjacency disjuncts is analogous.
  \end{itemize}
  Hence, $\instance \supseteq \instance_0$, satisfies $\Sigma$, and violates
  $Q$, which concludes the proof of the forward direction of the implication.

  For the backward direction, consider an $\instance \supseteq \instance_0$ that
  satisfies $\Sigma$ and violates $Q$. Starting at the chain of $S'$-facts
  of~$\instance_0$, we can see that there is  an infinite chain $a_0, \ldots, a_n,
  \ldots$ of $S'$-facts
  (whose elements may be distinct or not, this does not matter), and hence
  we infer the existence of the corresponding $S$-facts.  We can also infer the
  existence of elements $b_{i,j}$ for all $i, j \in \mathbb{N}$ (again, these
  elements may be distinct or not) such that $G'(a_i, b_{i,i})$ holds and $G(a_i, a_j,
  b_{i,j})$ holds if $i \neq j$. From this we conclude that there is a fact $T(b_{i,j}, c_{i,
  j})$ for all $i, j \in \mathbb{N}$, with a $C$-path
  from~$b_{i,j}$
  to~$c_{i,j}$. As the $C$-path length restriction disjuncts are violated, there cannot be such a $C$-path
  of length $> k$, so we can define a function $f$ from $\mathbb{N} \times
  \mathbb{N}$ to $\mathbb{C}$ by setting $f(i, j)$ to be $c_l$ where $l$ is the
  length of one such path, for all $i, j\in \mathbb{N}$; this can be performed in a
  way that matches~$\instance_0$ (by choosing the path that appears
  in~$\instance_0$ whenever there is one).

  Now, assume by contradiction that $f$ is not a valid tiling. If there are $i,
  j \in \mathbb{N}$ such that $(f(i, j), f(i, j+1)) \in \mathbb{H}$, then
  consider the match $x \defeq a_i$, $y \defeq a_j$, $y' \defeq a_{j+1}$, $z \defeq
  b_{i,j}$, and $z' \defeq
  b_{i,j+1}$. If $i
  \neq j$ and $i \neq j+1$, we know that $G(a_i, a_j, b_{i,j})$ and $G(a_i,
  a_{j+1}, b_{i,j+1})$ hold, and
  taking the witnessing paths used to define $f(i,j)$ and $f(i,j+1)$, we obtain
  matches of~$Q_{f(i,j)}(b_{i,j})$ and $Q_{f(i,j+1)}(b_{i,j+1})$, so that we
  obtain a match of one of the disjuncts of~$Q$ (one of the first horizontal
  adjacency disjuncts), a contradiction. The cases where $i = j$ and where $i
  = j+1$ are similar and correspond to the second and third kinds of horizontal
  adjacency disjuncts. The case
  of $\mathbb{V}$ is handled similarly with the vertical adjacency disjuncts.
  Hence, $f$ is a valid tiling, which
  concludes the proof of the backward direction of the implication, shows the
  equivalence, and concludes the reduction and the undecidability proof.

  \subsubsection{Adapting to a Single Distinguished Relation}
  To prove the result with a single distinguished relation $S^\trans$, simply
  replace all occurrences of $C$ and $C^\trans$ in the query and constraints by
  $S$ and $S^\trans$. The rest of the construction is unchanged.
  The proof of the backwards direction is unchanged, using $S$ in place of $C$;
  what must be changed is the proof of the forward direction.
  
  Let $f$ be the
  solution to the tiling problem. We start by
  constructing a set of facts $\instance_1$ as before from~$f$ to complete
  $\instance_0$, replacing the $C$-facts
  in the gadgets by
  $S$-facts. Now, we complete $S^\trans$ to add the transitive closure of these
  paths (note that they are disjoint from any other $S$-fact), and 
  complete this to a set of facts to satisfy $\Sigma$: create $G$- and $G'$-facts, and create gadgets,
  this time taking all of them to have length $k+1$: this yields $\instance_2$.
  We repeat this last process indefinitely
  on the path of $S$-facts created in the gadgets of the previous iteration, and
  let $\instance$ be the result of this infinite process, which
  satisfies~$\Sigma$.
 
  We
  justify as before that $Q$ has no matches: as we create no
  $S'$-facts in $\instance_i$ for all $i > 1$, it suffices to observe that no
  new matches of $Q$ can include any of the new facts, because each disjunct
  includes an $S'$-fact. Hence, we can conclude as before.
\end{proof}

\subsubsection{Related Undecidability and Decidability Results}
The results that we have just shown in Theorem~\ref{thm:undectrans}~and~\ref{thm:undectransb}
complement the undecidability results of 
\citeA{andreaslidia}. Their Theorem~2 shows that $\owqatr$ is undecidable for
guarded TGDs, two transitive relations and atomic CQs, even with an empty set of
initial facts.
Their Corollary~1 shows that 
$\owqatr$ is undecidable with guarded disjunctive TGDs
(TGDs with disjunction in the head, and with an atom in the body
that guards all of the variables in the body) and UCQs,
even when restricted to arity-two signatures with
a single transitive relation that occurs only in guards,
and an empty set of initial facts.

Our results contrast with the decidability results of 
\citeA{mugnier15}, which apply to $\owqatr$ with linear rules (under a safety
condition, which they conjecture is not necessary for decidability): 
our Theorem~\ref{thm:undectrans} shows that $\owqatc$ with linear rules (without
imposing their condition) is undecidable.

\subsection{Undecidability Results for $\owqalin$} \label{sec:undecidlin}

Section~\ref{sec:decidlin} has shown that $\owqalin$ is decidable for
base-covered CQs and $\acgnf$ constraints. We now show that dropping the
base-covered requirement on the query leads to undecidability:

\newcommand{\undeccq}{
  There is a signature $\sigma = \sigmab \sqcup \sigmad$ where
  $\sigmad$ is a single strict linear order
  relation, a CQ $Q$ on~$\sigma$, and a set $\Sigma$ of
  inclusion dependencies on $\sigmab$ (i.e., not mentioning the linear order, so in
  particular base-covered), such that the following problem is undecidable:
  given a finite set of facts $\instance_0$, decide
  $\owqalin(\instance_0, \Sigma, Q)$.
}
\begin{theorem}
  \label{thm:undeccq}
  \undeccq
\end{theorem}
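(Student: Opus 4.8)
The plan is to reduce from the infinite tiling problem, fixing $\mathbb{C}$, $\mathbb{H}$, $\mathbb{V}$ for which it is undecidable, in the same spirit as Theorems~\ref{thm:undectrans} and~\ref{thm:undectransb}. The decisive difference is that here the single distinguished relation $<$ may occur \emph{only} in the query, since $\Sigma$ consists of $\incd$s on $\sigmab$; thus the linear order cannot be used to \emph{generate} any structure, and all of its power must be channelled through a query that is \emph{not} base-covered. As in the other reductions of this section, I would first prove the statement with a UCQ $Q$ and only afterwards collapse the disjuncts into a single CQ by the transformation of Appendix~\ref{app:ucqtocqundecid}.

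For the skeleton I would take base relations $\mathit{Cell}$, a horizontal successor $H$, a vertical successor $V$, and auxiliary relations used to encode colours, together with inclusion dependencies such as $\mathit{Cell}(x) \to \exists y\, H(x,y)$, $H(x,y) \to \mathit{Cell}(y)$ and their $V$-analogues. Chasing these from a seed cell forces an infinite structure in which every cell has an $H$-successor and a $V$-successor that are again cells. Local compatibility with $\mathbb{H}$ and $\mathbb{V}$ is then forbidden by the base-covered disjuncts $H(x,y)\wedge K_i(x)\wedge K_j(y)$ for $(C_i,C_j)\in\mathbb{H}$, and symmetrically for $V$. Two obstacles block this from already working with a base-covered query: first, $\incd$s carry no disjunction, so the constraints cannot force each cell to receive a colour, and a \emph{positive} query cannot forbid the ``uncoloured cell'' pattern; second, the generated structure is a priori a tree of $\{H,V\}$-words, so a cell reached by the word $wHV$ and one reached by $wVH$ need not coincide, and nothing yet forces the colouring to be a genuine function on $\mathbb{N}^2$ rather than a strictly weaker ``tree tiling'' (which exists for strictly more instances). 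Indeed, these are exactly the two kinds of slack that the Base-coveredness Lemma (Lemma~\ref{lemma:acov}) exploits to yield decidability in Theorem~\ref{thm:decidelindirect}, so removing them is what the order must accomplish.

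The key idea is therefore to let the totality of $<$ do two things that a base-covered query cannot. First, I would emulate the colour-choice disjunction by the order-coding trick already used in Theorem~\ref{thm:lindisj}: each potential colour assertion is witnessed by a pair of elements, with one orientation of $<$ meaning ``present'' and the reverse meaning ``absent'', so that totality forces at least one colour per cell while a query disjunct rules out the degenerate orientations. Second, and most importantly, I would enforce \emph{global} consistency of the grid---that two cells standing for the same position carry the same colour, and that the skeleton cannot collapse into a small cyclic model---by comparing cells that lie in \emph{different} subtrees, using $<$-atoms whose two endpoints are \emph{not} guarded by any base atom. This is precisely the capability that base-coveredness denies, which reconciles the undecidability obtained here with the decidability of the base-covered fragment. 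In the forward direction the argument is routine: from a tiling $f$ I would build the faithful quarter-plane, colour position $(i,j)$ by $f(i,j)$, and \emph{choose} $<$ as any total order refining the intended coordinatewise comparison; one then checks that the $\incd$s hold, that $<$ is a genuine linear order, and that every disjunct of $Q$ fails.

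The main obstacle is the backward direction. Given an \emph{arbitrary} counterexample $\instance \supseteq \instance_0$ satisfying the $\incd$s, with $<$ an \emph{arbitrary} linear order and $Q$ unsatisfied, I must extract a genuine solution of the tiling problem; note that totality together with the order-monotonicity disjuncts already excludes finite models, so $\instance$ is necessarily infinite and $H\cup V$-acyclic. The delicate step is to show that the non-base-covered comparisons force enough rigidity that the colours induced on the skeleton descend to a well-defined $f:\mathbb{N}^2\to\mathbb{C}$ respecting $\mathbb{H}$, $\mathbb{V}$ and the seed $c_0,\dots,c_n$---rather than merely embedding into an order-consistent but ``gappy'' or non-confluent structure that yields only a tree tiling. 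Designing the global-comparison gadget so that an order chosen adversarially still certifies a full infinite grid is where I expect the real work to lie, and is plausibly the point at which the originally published argument was deficient.
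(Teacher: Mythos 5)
Your high-level plan matches the paper's: a reduction from the infinite tiling problem, inclusion dependencies on $\sigmab$ that force every cell to have $R$- and $D$-successors (your $H$/$V$ skeleton), order-based coding of colours, and the observation that the only place non-base-coveredness may be used is in the query. But there is a genuine gap at exactly the point you defer: you never construct the gadget that forces grid confluence, and that gadget is the entire content of the theorem. The paper's solution is a one-liner, not ``real work'': it adds the two disjuncts
\begin{align*}
R(x, y) \wedge D(x, z) \wedge R(z, w) \wedge D(y, w') \wedge w < w' \\
R(x, y) \wedge D(x, z) \wedge R(z, w) \wedge D(y, w') \wedge w' < w
\end{align*}
whose variables $w, w'$ are deliberately \emph{not} base-guarded together. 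If both disjuncts fail in a counterexample, then since $<$ must be total and irreflexive we get $w = w'$: failure of both strict comparisons is an equality test between two unguarded elements. This is the single trick that converts totality of the order into ``down-then-right equals right-then-down,'' and it is the only use of non-base-coveredness in the whole proof. Without stating it (or an equivalent), your proof attempt does not get off the ground, since nothing else in your outline can force the chased $\{H,V\}$-tree to behave like a grid.

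Two further points in your backward direction are off. First, your claim that ``totality together with the order-monotonicity disjuncts already excludes finite models, so $\instance$ is necessarily infinite and acyclic'' is both unsupported (you never defined such disjuncts) and unnecessary: the paper's argument happily tolerates element reuse and even cyclic or finite models, because it maps $\mathbb{N}^2$ \emph{onto} elements (possibly non-injectively) by following chosen $R$/$D$-successors, with the equality trick guaranteeing that all choices of successors yield the same diagonal corner; colours are then read off elements, and a collapsed model simply yields a (valid) periodic tiling. So the ``rigidity'' you anticipate needing is a red herring. Second, your colour mechanism (pair-orientation as in Theorem~\ref{thm:lindisj}, generated per cell by extra $\incd$s, plus a ``no colour'' disjunct) would work but is heavier than needed; the paper instead fixes marker elements $b_1 < \cdots < b_{k-1}$ labelled by unary relations $K_j$ in $\instance_0$ and defines $K'_j(x)$ by the position of $x$ itself relative to these markers, so that totality alone guarantees every element has at least one colour, with no witness pairs and no extra dependencies. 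Finally, your speculation that this theorem is where the originally published argument was deficient is incorrect: the correction concerns Theorem~\ref{thm:undectransb}, not this one.
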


\begin{proof}
  We show the claim for a UCQ rather than a CQ, and explain in
  Appendix~\ref{app:ucqtocqundecid} how the proof extends to a CQ.
  As in the proof of Theorem~\ref{thm:undectrans}, we fix an undecidable
  infinite tiling problem $\mathbb{C}$, $\mathbb{V}$, $\mathbb{H}$,
  and will reduce that problem to the $\owqalin$ problem.

  \subsubsection{Definition of the Reduction}
  We consider the signature consisting of two binary relations~$R$ and~$D$ (for
  ``right'' and ``down''), $k-1$ unary relations $K_1, \ldots, K_{k-1}$
  (representing the colors), and one unary relation $S$ (representing the fact
  of being a vertex of the grid --- this  relation could be rewritten away and is
  just used to make the inclusion dependencies shorter to write).
  We also introduce the following abbreviations:
  \begin{inparaenum}[(i)]
    \item we let $K_1'(x)$ stand for $\exists y ~ x < y \land K_1(y)$;
    \item we let $K_k'(x)$ stand for $\exists y ~ x > y \land K_{k-1}(y)$;
    \item for all $1 < i < k$, we let $K_i'(x)$ stand for $\exists y y' ~
      K_{i-1}(y) \land y < x \land x < y' \land K_i(y')$.
  \end{inparaenum}
  Intuitively, the $K_i'$ describe the color of elements, which is encoded in
  their order relation to elements labeled with the~$K_i$.  
  
  We put the following inclusion dependencies in $\Sigma$:
  \begin{align*}
  \forall x \, S(x) \rightarrow \exists y \, R(x, y) &\qquad
  \forall x \, S(x) \rightarrow \exists y \, D(x, y) \\
  \forall x y \, R(x, y) \rightarrow S(y) &\qquad
  \forall x y \, D(x, y) \rightarrow S(y)
  \end{align*}
  We consider a UCQ formed of the following disjuncts (existentially closed):
  \begin{align*}
  R(x, y) \wedge D(x, z) \wedge R(z, w) \wedge D(y, w') \wedge w < w' \\
  R(x, y) \wedge D(x, z) \wedge R(z, w) \wedge D(y, w') \wedge w' < w \\
  \text{for each $(c, c') \in \mathbb{H}$}: R(x, y) \wedge K'_c(x) \wedge
      K'_{c'}(y) \\
  \text{for each $(c, c') \in \mathbb{V}$}: D(x, y) \wedge K'_c(x) \wedge
      K'_{c'}(y)
  \end{align*}
  Intuitively, the first two disjuncts enforce a grid structure, by saying that
  going right and then down must be the same as going down and then right. The
  other disjuncts enforce that there are no bad horizontal or vertical
  patterns.

  Given an instance $c_0, \ldots, c_n$ of
  the tiling problem, we construct an initial set of facts $\instance_0$ consisting of:
  \begin{inparaenum}[(i)]
    \item $S(a_0), \ldots, S(a_n)$ for fresh elements $a_0, \ldots, a_n$;
    \item $R(a_{i-1}, a_i)$ for $1 \leq i \leq n$ on these elements;
    \item $K_i(b_i)$ for $1 \leq i \leq k$ for fresh elements $b_1, \ldots, b_k$;
    \item for each $i$ such that $c_i$ is the color $C_1$, set $a_i < b_1$ (on
      the previously defined elements) ;
    \item for each $i$ such that $c_i$ is the color $C_k$, set $a_i > b_{k-1}$
      (on the previously defined elements);
    \item for each $1 < j < k$ and $i$ such that $c_i$ is $C_j$, set $b_{j-1} < a_i$
      and $a_i < b_j$ (on the previously defined elements).
  \end{inparaenum}

  \subsubsection{Correctness Proof for the Reduction}
  Let us show that the reduction is sound. Let us first assume that the tiling
  problem has a solution $f$. We construct a counterexample $\instance \supseteq \instance_0$ as a grid
  of the $R$ and $D$ relations, with the first elements of the first row being
  the $a_0, \ldots, a_n$, and with the color of elements being coded as their
  order relations to the $b_j$ like when constructing $\instance_0$
  above. Complete the
  interpretation of $<$ to a total order by choosing one arbitrary total order
  among the elements labeled with the same color, for each color. The resulting
  interpretation is indeed a total order relation, formed of the following: some total order on
  the elements of color $1$, the element $b_1$, some total order on the elements
  of color $2$, the element $b_2$, \ldots, the element $b_{k-2}$, some total order
  on the elements of color $k-1$, the element $b_{k-1}$, some total order on the
  elements of color $k$.

  It is immediate that the result satisfies $\Sigma$. To see why it does not
  satisfy the first two disjuncts of the UCQ, observe that any match of
  $R(x,y) \wedge D(x,z) \wedge R(z,w) \wedge D(y,w')$
  must have $w = w'$, by
  construction of the grid in $\instance$.
  To see why it does not satisfy the other disjuncts,
  notice that any such match must be a pair of two vertical or two horizontal
  elements; since the elements can match only one $K'_c$ which reflects their
  assigned color, the absence of matches follows by definition of $f$ being
  a tiling.

  Conversely, let us assume that there exists a counterexample $\instance \supseteq \instance_0$ which satisfies
  $\Sigma$ and violates $Q$. Clearly, if the first two disjuncts of $Q$ are
  violated, then, for any element where $S$ holds, considering its $R$ and
  $D$ successors that exist by $\Sigma$, and respectively their $D$ and $R$
  successors, we reach the same element. Hence, from $a_0, \ldots, a_n$, we can
  consider the part of $\instance$ defined as a grid of the $R$ and $D$ relations,
  and it is indeed a full grid ($R$ and $D$ edges occur everywhere they
  should), except that some elements may be reused at multiple places (but this
  does not matter). Now, we observe that any element except the $b_j$ must be inserted at
  some position in the total suborder $b_1 < \cdots < b_{k-1}$, so that at least
  one relation $K'_j$ holds for each element of the grid (several $K'_j$ may
  hold in case $\instance$ has more elements than the $b_i$ that are labeled
  with the $K_i$). Choose one of them, in a way
  that assigns to $a_0, \ldots, a_n$ the colors that they had in~$\instance_0$, and use
  this to define a function $f$ that extends $a_0, \ldots, a_n$. We claim
  that this $f$ indeed describes a tiling.

  Assume by contradiction that it does not. If there are   two horizontally
  adjacent values $(i, j)$ and $(i+1, j)$ realizing a configuration $(c, c')$
  from $\mathbb{H}$, by completeness of the grid there is an $R$-edge between
  the corresponding elements $u, v$ in $\instance$. Further, by the fact that $(i,j)$
  and $(i+1, j)$ were given the color that they have in $f$, we must have
  $K'_c(u)$ and $K'_c(v)$ in $\instance$, so that we must have had a match of a
  disjunct of $Q$, a contradiction. The absence of forbidden vertical patterns
  is proven in the same manner.
\end{proof}

Theorem~\ref{thm:undeccq}
implies that
the base-covered requirement is also necessary for constraints:

\begin{corollary}\label{cor:undeclin}
  There is a signature $\sigma = \sigmab \sqcup \sigmad$ where $\sigmad$ is a single strict
  linear order relation, and a set $\Sigma'$ of $\afgtgd$ constraints
  on~$\sigma$,
  such that, letting $\top$ be the tautological query, the following problem is
  undecidable:
  given a finite set of facts $\instance_0$, decide $\owqalin(\instance_0,
  \Sigma', \top)$.
\end{corollary}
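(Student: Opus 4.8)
The plan is to deduce the corollary from Theorem~\ref{thm:undeccq} by \emph{internalising} its query into the constraints, so that the whole reduction is carried by $\afgtgd$s and the query becomes trivial. The device that makes this possible is that, under the $\owqalin$ semantics, every distinguished relation is a \emph{strict} (hence irreflexive) linear order, so an atom $z < z$ can never hold. I can therefore use $z<z$ as an ``impossible'' head atom to turn a conjunctive query into a denial constraint: a rule whose head is $\exists z\; z<z$ is satisfied in a linear-order structure exactly when its body is never matched.

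Concretely, let the inclusion dependencies $\Sigma$ (on $\sigmab$) and the query $Q = \bigvee_j \exists \vec y_j\, \phi_j(\vec y_j)$ be those produced by Theorem~\ref{thm:undeccq}; recall that $Q$ fails to be base-covered precisely because its disjuncts mention the order $<$ without a covering base atom. I would then define
\[
\Sigma' \;:=\; \Sigma \;\cup\; \{\, \forall \vec y_j\,\big(\phi_j(\vec y_j) \rightarrow \exists z\; z<z\big) \;:\; j \,\}.
\]
Each added rule is an $\afgtgd$: its frontier---the set of variables shared between body and head---is empty, because $z$ is a fresh variable bound in the head, so there is nothing to base-guard and the rule is trivially (base-)frontier-guarded. (Here I use that $\afgtgd$s may mention distinguished relations both in their bodies and in their heads; only the frontier guard is required to be a base atom or an equality.) These rules are, however, \emph{not} base-covered, since their bodies inherit the uncovered $<$-atoms of $Q$; this is exactly the feature that the corollary claims cannot be dropped. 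Taking the CQ form of Theorem~\ref{thm:undeccq} lets me add a single such rule, but the UCQ form works just as well with one rule per disjunct.

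For correctness I would show that, for every $\instance \supseteq \instance_0$ in which $<$ is a strict linear order, $\instance \models \Sigma'$ iff $\instance \models \Sigma \wedge \neg Q$. This is immediate from the gadget: irreflexivity makes $\exists z\; z<z$ false in $\instance$, so the rule for $\phi_j$ holds iff $\phi_j$ is matched by no tuple, i.e.\ iff the disjunct $\exists \vec y_j\, \phi_j$ is violated; conjoining over $j$, the new rules hold iff $\neg Q$ does, while $\Sigma$ is shared. Hence $\instance_0 \wedge \Sigma'$ admits a model with $<$ a strict linear order iff $\instance_0 \wedge \Sigma \wedge \neg Q$ does, i.e.\ iff $\owqalin(\instance_0, \Sigma, Q)$ fails. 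Deciding the consistency of $\instance_0 \wedge \Sigma'$ over linear orders is therefore undecidable, and packaging this consistency check as an $\owqalin$ instance with the fixed trivial query of the statement yields the corollary.

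The step I expect to require the most care is the verification that the $z<z$ gadget genuinely behaves as the denial constraint $\neg Q$ \emph{and} stays inside $\afgtgd$: one must check that moving the whole (order-using, non-base-covered) query into rule bodies does not accidentally create a non-empty frontier that would demand a base guard, and that strictness---rather than, say, mere transitivity---is what licenses the reduction. Everything else is a routine transfer of the model-coincidence through the definitions of entailment under the linear-order semantics.
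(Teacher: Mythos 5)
Your proposal is correct and is essentially the paper's own proof: the paper likewise takes $\Sigma' \equiv \Sigma \wedge \neg Q$ with $Q$ the CQ from Theorem~\ref{thm:undeccq}, encoding $\neg Q$ as the rule $\forall \vec{x}\,(\phi(\vec{x}) \rightarrow \exists y\, (y < y))$, which is falsified by irreflexivity of the strict order and is trivially an $\afgtgd$ because the frontier is empty. Your additional remarks (one rule per disjunct in the UCQ case, and the observation that strictness rather than transitivity is what licenses the gadget) are consistent with, and slightly elaborate on, the paper's argument.
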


\begin{proof}
To prove Corollary~\ref{cor:undeclin} from Theorem~\ref{thm:undeccq},
we take constraints $\Sigma'$
that are equivalent to $\Sigma \wedge \neg Q$,
  where $\Sigma$ and $Q$ are as in the previous theorem (in particular, $Q$ is a
  CQ).
Recall that $\Sigma$ is a set of inclusion dependencies on $\sigmab$,
and therefore are $\afgtgd$s.
Hence, it only remains to argue that $\neg Q$ can be written as a $\afgtgd$.
Indeed, if $Q = \exists \vec{x} ~ \phi(\vec{x})$
then consider the constraint 
$
\forall \vec{x} ( \phi(\vec{x}) \rightarrow \exists y ( y < y ) )
$
where $<$ is the distinguished relation.
Since $<$ must be a strict linear order in $\owqalin$,
$\exists y ( y < y)$ is equivalent to $\bot$ and
this new constraint is logically equivalent to $\neg Q$.
Moreover, this constraint is trivially in $\afgtgd$
since there are no frontier variables.
Hence, $\Sigma \wedge \neg Q$ can be written as a set of $\afgtgd$ constraints as claimed.
\end{proof}

\subsubsection{Related Undecidability Results}
The result in Theorem~\ref{thm:undeccq} is related to prior work by \shortciteA{Rosati07}
and \shortciteA{datatypes1},
which deals with query answering for UCQs and CQs with inequalities. These
results are related because
we can transform a query $Q$ using an inequality $x \neq y$
into a new UCQ query $Q' \vee Q''$, where $Q'$ and $Q''$
is the result of replacing $x \neq y$ in $Q$ with $x < y$ and $y < x$, respectively.
Further, we can also express
constraints of the form $\forall x y ( S_1(x,y) \wedge S_2(x,y) \rightarrow
\bot)$,
which are part of the description logics considered in those earlier papers,
as inclusion dependencies $\forall x y ( S_1(x,y) \wedge S_2(x,y) \rightarrow x < x)$.
Hence, we could use these prior results to show the undecidability of
$\owqalin$ for inclusion dependencies and a UCQ over $\sigma = \sigmab \sqcup \sigmad$
when $\sigmad$ is a single strict linear order. This 
is weaker than the result stated in Theorem~\ref{thm:undeccq}, which uses a CQ,
and in Corollary~\ref{cor:undeclin}, which uses a tautological query.

\section{Conclusion} \label{sec:conc}
We have given a detailed picture of the impact of transitivity, transitive closure,
and linear order restrictions on query answering problems for a broad class of
guarded constraints.
We have shown that transitive relations and transitive closure restrictions
can be handled in  guarded constraints as long
as the transitive closure relation is not needed as a guard. For linear
orders, the same is true if order atoms are covered by base atoms. This
implies the analogous results for frontier-guarded TGDs, in particular
frontier-one TGDs. We have built upon some known polynomial data complexity upper bounds 
for classes of guarded constraints without
distinguished relations, and have shown how to extend them to the setting
of distinguished relations that are required to be transitive
or a transitive closure.
However, in the case of distinguished relations required to be
linear orders, we have shown that $\ptime$ data complexity does not
always carry over.

All our results were shown in the absence of constants, so we leave open the question of
whether they still hold when constants are allowed in the constraints or
queries, though we believe that it should be possible to adapt the proofs.
A more important open question is that of deciding entailment over \emph{finite} sets of facts. 
There are few techniques for deciding entailment over finite sets of facts for logics where
it does not coincide with general entailment (and for the constraints considered here it does not coincide).
One exception can be found in an earlier work  \cite{kieronski2007finite}, which establishes decidability for
the guarded fragment with transitivity, under the two-variable restriction and
assuming that transitive relations appear only in guards.
Another exception is in the context of guarded logics 
(see \citeA{vincemikolaj}),
but it is not clear if the techniques there can be extended to our constraint languages.

\section*{Acknowledgements}

This is an extended version of the conference paper by \citeA{ijcai16-lintrans}.
It was originally published at JAIR \cite{amarilli2018query}. The present
version is identical to the JAIR paper, except that we fix an error in the proof of Theorem~\ref{thm:undectransb} (see
Appendix~\ref{app:fix}).
We are grateful to Andreas Pieris and Bailey Andrew for pointing out
the error to our attention.

Amarilli was partly funded by the T\'{e}l\'{e}com ParisTech Research Chair on Big Data
and Market Insights.
Bourhis was supported by CPER Nord-Pas de Calais/FEDER DATA Advanced Data Science and Technologies 2015-2020 and ANR Aggreg Project ANR-14-CE25-0017, INRIA Northern European Associate Team Integrated Linked Data.
Benedikt was sponsored by the Engineering and Physical Sciences
Research Council of the United Kingdom (EPSRC), under grants EP/M005852/1 and EP/L012138/1.
Vanden Boom was partially supported by EPSRC grant EP/L012138/1.

\renewcommand{\theHsection}{A\arabic{section}}
\appendix 
\bookmarksetup{startatroot}
\appendix

\section{Details about Tree Decompositions}\label{app:treedecomp}

Recall the definition of \emph{tree decompositions} from
Section~\ref{sec:decidtransautomata}.
This appendix presents two kinds of tree decompositions
for $\agnf$ that were used in the body of the paper.
The proofs use a standard technique,
involving an unravelling related to a variant of guarded negation bisimulation
due to \citeA{gnficalp}.
A related result and proof also appears in \citeA{lics16-gnfpup}.

\subsection{Proof of Proposition~\ref{prop:transdecomp}: Transitive-closure Friendly Tree Decompositions for $\agnf$}\label{app:tctreelike}

\newcommand{\gnk}{\gn^k}
\newcommand{\mydom}[1]{\kw{Dom}(#1)}

Recall the definition of having \emph{transitive-closure friendly $k$-tree-like
witnesses} from Section~\ref{sec:decidtransautomata}.
A sentence~$\phi$ over~$\sigma$ is said to have
transitive-closure friendly $k$-tree-like witnesses if:
for every finite set of $\sigmad$-facts $\instance_0$,
if there is a set of facts $\instance$
(finite or infinite)
extending $\instance_0$ with additional $\sigmab$-facts
such that $\instance$ satisfies $\phi$
when each $R^\trans$ is interpreted as the transitive closure of $R$,
then there is such an $\instance$
that has an $\instance_0$-rooted $(k-1)$-width
tree decomposition with countable branching.

In this section, we prove Proposition~\ref{prop:transdecomp}:

\quoteresult{Proposition~\ref{prop:transdecomp}}{\transdecomp}

If $\mysize{\phi} < 3$, then $\phi$ is
necessarily a single $0$-ary relation or its negation, in which case the
result is trivial, with $k = 1$.
Hence, in the rest of the proof,
we assume that $\mysize{\varphi} \geq 3$,
and $k$ will be chosen such that
$3 \leq k \leq \mysize{\varphi}$; specifically,
$k$ will be an upper bound on the maximum number of free variables
in any subformula of $\varphi$.

\subsubsection{Bisimulation Game}
We say that a set $X$ of elements from $\elems{\instance}$ is \emph{base-guarded} (or $\sigmab$-guarded)
if $\mysize{X} \leq 1$ or there is a $\sigmab$-fact in $\instance$ that uses all of the elements in $X$.
A \emph{partial rigid homomorphism} is a partial homomorphism with respect to all $\sigma$-facts in $\instance$,
such that
the restriction to any $\sigmab$-guarded set of elements is a partial isomorphism.

Let $\fA$ and $\fB$ be sets of facts extending $\instance_0$.
The
\emph{$\gn^k$ bisimulation game}
between $\fA$ and $\fB$
is an infinite game 
played by two players,
Spoiler and Duplicator.
The game has two types of positions: 
\begin{itemize}
 \item[i)] partial isomorphisms $f: X \to Y$ or $g:Y \to X$, 
     where $X \subset \elems{\fA}$ and $Y \subset \elems{\fB}$ are of size at most $k$ and $\sigmab$-guarded;
 \item[ii)] partial rigid homomorphisms $f: X \to Y$ or $g:Y \to X$, 
     where $X \subset \elems{\fA}$ and $Y \subset \elems{\fB}$ are of size at most $k$.  
\end{itemize}

From a type~(i) position $h$,
Spoiler must choose a finite subset $X' \subset \elems{\fA}$ or 
a finite subset $Y' \subset \elems{\fB}$, in either case of size at most $k$, 
upon which Duplicator 
must respond with a partial rigid homomorphism $h'$ with domain $X'$ or $Y'$ accordingly.
If $h : X \to Y$ and $h' : X' \to Y'$, then $h$ and $h'$ must agree on $X \cap X'$,
and if $h : X \to Y$ and $h' : Y' \to X'$, then $h^{-1}$ and $h'$ must agree on $Y \cap Y'$.
The analogous property must hold if $h : Y \to X$. In either case, the game then continues from
position~$h'$.

From a type~(ii) position $h : X \to Y$ (respectively, $h : Y \to X$),
Spoiler must choose a finite subset $X' \subset \elems{\fA}$
(respectively, $Y' \subset \elems{\fB}$)
of size at most $k$,
upon which Duplicator 
must respond by a partial rigid homomorphism with domain $X'$
(respectively, domain $Y'$),
such that $h$ and $h'$ agree on $X \cap X'$ (respectively, $Y \cap Y'$). Again,
the game continues from position~$h'$.

Notice that a type~(i) position is a special kind of type~(ii) position
where Spoiler has the option to \emph{switch the domain} to the other set of facts,
rather than just continuing to play in the current domain.

Spoiler wins if he can force the play into a position from which Duplicator cannot 
respond, and Duplicator wins if she can continue to play indefinitely.

A winning strategy for Duplicator in the $\gn^k$ bisimulation game
implies agreement between $\fA$ and $\fB$ on certain
$\agnf$ formulas.

\begin{proposition}\label{prop:bisim-game}
Let $\fA$ and $\fB$ be sets of facts extending $\instance_0$.
Let $\varphi(\vec{x})$ be a formula in $\agnf$,
and let $k \geq 3$ be greater than or equal to
the maximum number of free variables in any subformula of~$\varphi$.

If Duplicator has a winning strategy
in the $\gn^k$ bisimulation game
between $\fA$ and $\fB$
starting from a type~(i) or (ii) position $\vec{a} \mapsto \vec{b}$
and $\fA$ satisfies $\varphi(\vec{a})$ when interpreting each $R^\trans \in \sigmad$ as the transitive closure of $R \in \sigmab$,
then $\fB$ satisfies $\varphi(\vec{b})$ when interpreting each $R^\trans \in \sigmad$ as the transitive closure of $R \in \sigmab$.
\end{proposition}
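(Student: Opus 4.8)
The plan is to prove the statement by induction on the structure of the $\agnf$ formula $\varphi(\vec{x})$, following the grammar for $\agnf$: atomic formulas $A(\vec{x})$ (with $A$ a base or a distinguished relation), equalities $x=y$, disjunctions, conjunctions, existential quantifications, and base-guarded negations $\alpha \wedge \neg\psi$. Throughout, the inductive invariant is that Duplicator has a winning strategy from the current (type~(i) or~(ii)) position $\vec{a}\mapsto\vec{b}$. Since every subformula of $\varphi$ has at most $k$ free variables, every position that arises has a domain of size at most $k$, so all the Spoiler moves I describe are legal. The two cases that require genuine work are the distinguished atom and the base-guarded negation; the rest are routine.

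For a base atom $A(\vec{x})$ with $A\in\sigmab$, I would simply use that a partial rigid homomorphism preserves all $\sigma$-facts, so $\fA\models A(\vec{a})$ gives $\fB\models A(\vec{b})$; equality is handled because the position is a function. The key new case is a distinguished atom $R^\trans(x_1,x_2)$, interpreted as the transitive closure of $R\in\sigmab$. Here $\fA\models R^\trans(a_1,a_2)$ means there is an $R$-path $a_1=c_0,c_1,\dots,c_m=a_2$ in $\fA$, and I must produce such a path in $\fB$ from $b_1$ to $b_2$. I would \emph{walk} this path using Duplicator's strategy: starting from $\vec{a}\mapsto\vec{b}$, I let Spoiler successively choose the sets $X_i = \{a_2, c_{i-1}, c_i\}$ (of size at most $3\le k$, which is exactly where the hypothesis $k\ge 3$ is used), obtaining Duplicator responses $h_i$ that agree with $h_{i-1}$ on $\{a_2, c_{i-1}\}$. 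Because each $h_i$ is a rigid homomorphism preserving the $R$-fact $R(c_{i-1},c_i)$, the images $d_i := h_i(c_i)$ form an $R$-path in $\fB$; and keeping $a_2$ in every $X_i$ forces $h_i(a_2)=b_2$ throughout, so the path ends exactly at $b_2$, giving $\fB\models R^\trans(b_1,b_2)$.

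For base-guarded negation $\alpha\wedge\neg\psi$ with $\alpha$ a base-guard for the free variables, $\fA\models\alpha(\vec{a})$ makes $\{\vec{a}\}$ a $\sigmab$-guarded set, so the current partial rigid homomorphism restricted to $\{\vec{a}\}$ is a partial isomorphism: the position is in fact of type~(i), from which Spoiler may switch sides. The base-atom case already gives $\fB\models\alpha(\vec{b})$, and $\{\vec{b}\}$ is likewise $\sigmab$-guarded, so $\vec{b}\mapsto\vec{a}$ is a valid type~(i) position in the game with the roles of $\fA$ and $\fB$ exchanged, and the winning strategy transfers by the symmetry of type~(i) positions. I would then argue contrapositively: if $\fB\models\psi(\vec{b})$, the induction hypothesis applied to $\psi$ in the swapped game yields $\fA\models\psi(\vec{a})$, contradicting $\fA\models\neg\psi(\vec{a})$; hence $\fB\models\neg\psi(\vec{b})$, and combined with $\fB\models\alpha(\vec{b})$ this closes the case.

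The Boolean cases are immediate from the induction hypothesis applied to the subformulas using the same position, and the existential case $\exists\vec{y}\,\psi$ is handled by having Spoiler extend the domain with a witnessing tuple $\vec{c}$ (so the position on $\{\vec{a}\}\cup\{\vec{c}\}$ has size at most $k$), moving to the Duplicator response, and applying the induction hypothesis to $\psi$. The main obstacle is the distinguished atom case: unlike the ordinary $\gn$-bisimulation argument for $\gnf$, a rigid homomorphism need not send an explicit transitive-closure pair to another such pair, so invariance under the transitive-closure interpretation must be recovered by the path-walking argument above, and the bookkeeping that keeps the endpoint $a_2$ pinned to $b_2$ (using $k\ge 3$) is the crux.
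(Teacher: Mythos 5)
Your proposal is correct, and apart from one case it follows the same route as the paper: the base-atom, equality, Boolean, existential, and base-guarded-negation cases are handled exactly as in the paper's proof (in particular, the negation case via the observation that satisfaction of the base-guard makes the current position a genuine type~(i) position, from which Spoiler switches sides and the induction hypothesis is applied contrapositively in the swapped game). Where you genuinely diverge is the crux case of a distinguished atom $R^\trans(x_1,x_2)$. The paper does not walk the witnessing $R$-path inside the game; instead it expresses ``there is an $R$-path of length $n$'' as an $\agnf$ formula $\psi_n$ that contains no $\sigmad$-atoms and has width $3$, and then appeals to the induction hypothesis for $\psi_n$. Since $\psi_n$ is larger than the atom it replaces, the paper must organise the whole induction lexicographically --- first on the number of $\sigmad$-atoms, then on formula size --- so that this appeal is legitimate. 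Your path-walking argument (Spoiler plays the successive three-element sets $\{a_2,c_{i-1},c_i\}$, the overlap condition forces $h_i$ to agree with $h_{i-1}$ on $\{a_2,c_{i-1}\}$, and pinning $a_2$ in every move keeps its image at $b_2$, so the image path $d_0,\dots,d_m$ in the second structure ends exactly at $b_2$) achieves the same conclusion directly at the level of strategies, which is why a plain structural induction suffices for you. Both arguments spend the hypothesis $k \ge 3$ in the same place (three-element Spoiler sets versus width-$3$ formulas $\psi_n$). What each approach buys: the paper's reduction to $\psi_n$ recycles the induction machinery and needs no new game combinatorics, at the price of the lexicographic bookkeeping; your argument is more self-contained, makes explicit exactly which game conditions (agreement on overlaps, preservation of facts by rigid homomorphisms) carry the transitive-closure semantics, and is the style of argument one would most naturally adapt to variants such as reflexive-transitive closure.
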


\begin{proof}
For this proof,
  when we talk about sets of facts \emph{satisfying} a formula,
we mean satisfaction when interpreting $R^\trans \in \sigmad$ as the transitive closure of $R \in \sigmab$.
  We will abuse terminology slightly and say that $\varphi$ has \emph{width $k$}
if the maximum number of free variables in any subformula of $\varphi$ is at most $k$
(this is abusing the terminology since we are not assuming in this proof that $\varphi$ is in $\nf$).

We proceed by induction on two quantities, ordered lexicographically: first, the
  number of $\sigmad$-atoms in $\varphi$; second, the size of $\varphi$.
Suppose Duplicator has a winning strategy in the $\gn^k$ bisimulation game
between $\fA$ and $\fB$.

If $\varphi$ is a $\sigmab$-atom $A(\vec{x})$,
the result follows from the fact that the position
$\vec{a} \mapsto \vec{b}$ is a partial homomorphism.

Suppose $\varphi$ is a $\sigmad$-atom $R^\trans(x_1,x_2)$,
  and $\vec{a} = (a_1, a_2)$ and $\vec{b} = (b_1, b_2)$.
If $\fA, \vec{a}$ satisfies $R^\trans(x_1,x_2)$,
there is some $n \in \mathbb{N}$ such that $n > 0$ and
there is an $R$-path of length $n$
between~$a_1$ and~$a_2$ in $\fA$.
We can write a formula $\psi_n(x_1,x_2)$ in $\agnf$
(without any $\sigmad$-atoms)
that is satisfied exactly when there is an $R$-path of length $n$.
Since we do not need to write this in $\nf$,
we can express $\psi_n$ in $\agnf$ with width $3$
(maximum of 3 free variables in any subformula).
Since
$\fA,\vec{a}$ satisfies $\psi_n$
and $\psi_n$ does not have any $\sigmad$-atoms and $k \geq\nolinebreak 3$,
we can apply the inductive hypothesis from the type~(ii) position
$\vec{a} \mapsto \vec{b}$
to ensure that
$\fB,\vec{b}$ satisfies $\psi_n$,
and hence $\fB,\vec{b}$ satisfies~$\varphi$.

If $\varphi$ is a disjunction,
the result follows easily from the inductive hypothesis.

Suppose $\varphi$ is a base-guarded negation
$A(\vec{x}) \wedge \neg \varphi'(\vec{x}')$.
By definition of $\agnf$, it must be the case that $A \in \sigmab$
and $\vec{x}'$ is a sub-tuple of $\vec{x}$.
Since $\fA,\vec{a}$ satisfies $\varphi$,
we know that $\fA,\vec{a}$ satisfies~$A(\vec{x})$,
and hence $\vec{a}$ is $\sigmab$-guarded.
This means that $\vec{a} \mapsto \vec{b}$ is actually a partial isomorphism,
so we can view it as a position of type~(i).
This ensures that
$\fB,\vec{b}$ also satisfies $A(\vec{x})$.
It remains to show that it satisfies $\neg \varphi'(\vec{x}')$.
Assume for the sake of contradiction that it satisfies $\varphi'(\vec{x}')$.
Because $\vec{a} \mapsto \vec{b}$ is a type~(i) position, we can
consider the move in the game where Spoiler switches the domain to the other set of facts,
and then restricts to the elements in the subtuple $\vec{b}'$ of $\vec{b}$
corresponding to $\vec{x}'$ in $\vec{x}$.
Let $\vec{a}'$ be the corresponding subtuple of $\vec{a}$.
Duplicator must have a winning strategy
from the type~(i) position $\vec{b}' \mapsto \vec{a}'$,
so the inductive hypothesis ensures that
$\fA, \vec{a}'$ satisfies $\varphi'(\vec{x}')$,
a contradiction.

Finally, suppose $\varphi$
is an existentially quantified formula
$\exists y ( \varphi'(\vec{x},y) )$.
We are assuming that $\fA, \vec{a}$ satisfies $\varphi$.
Hence,
there is some $c \in \elems{\fA}$ such that
$\fA,\vec{a}, c$ satisfies $\varphi'$.
Because the width of $\varphi$ is at most $k$,
we know that the combined number of elements in $\vec{a}$ and $c$ is at most $k$.
Hence, we can consider the move in the game where Spoiler
selects the elements in $\vec{a}$ and $c$.
Duplicator must respond with $\vec{b}$ for $\vec{a}$,
and some $d$ for $c$.
This is a valid move in the game,
so Duplicator must still have a winning strategy from this position $\vec{a}c \mapsto \vec{b}d$,
and the inductive hypothesis implies that $\fB, \vec{b}, d$ satisfies $\varphi'$.
Consequently, $\fB, \vec{b}$ satisfies $\varphi$.
\end{proof}

\subsubsection{Unravelling}
The tree-like witnesses for Proposition~\ref{prop:transdecomp}
can be obtained using an unravelling construction
related to the $\gn^k$ bisimulation game.
This unravelling construction is adapted from \citeA{lics16-gnfpup}.

Fix a set of facts $\instance$ that extends $\instance_0$ with additional $\sigmab$-facts.
Consider the set $\Pi$ of sequences of the form $X_0 X_1 \dots X_n$,
where $X_0 = \elems{\instance_0}$, and for all $i \geq 1$,
$X_i \subseteq \elems{\fA}$ with $\mysize{X_i} \leq k$.

We can arrange these sequences in a tree based on the prefix order.
Each sequence $\pi = X_0 X_1 \dots X_n$ identifies a unique node in the tree;
we say $a$ is \emph{represented} at node $\pi$
if $a \in X_n$.
For $a \in \elems{\fA}$, we say $\pi$ and $\pi'$ are \emph{$a$-equivalent}
if $a$ is represented at every node on the
unique minimal path between $\pi$ and $\pi'$
in this tree.
For $a$ represented at $\pi$, we write $[\pi, a]$ for the $a$-equivalence class.

The \emph{$\gn^k$-unravelling of $\fA$} is a set of facts $\structureunravelk{\fA}$
over elements $\set{ [\pi, a] : \pi \in \Pi \text{ and } a \in \elems{\fA}}$
with $S([\pi_1,a_1],\dots,[\pi_j,a_j]) \in \structureunravelk{\fA}$
iff $S(a_1, \dots, a_j) \in \fA$ and
there is some $\pi \in \Pi$ such that $[\pi,a_i] = [\pi_i,a_i]$ for $i \in \set{1,\dots,j}$.
We can identify $[\epsilon,a]$ with the element $a \in \elems{\instance_0}$,
so $\structureunravelk{\fA}$ extends $\instance_0$.
Hence, there is a natural $\instance_0$-rooted tree decomposition of width $k-1$
for $\structureunravelk{\fA}$
induced by the tree of sequences from $\Pi$.

Because this unravelling is related so closely to the $\gn^k$-bisimulation game,
it is straightforward to show that Duplicator
has a winning strategy in the bisimulation game
between $\fA$ and its unravelling.

\begin{proposition}\label{prop:unravel}
Let $\fA$ be a set of facts extending $\instance_0$ with additional $\sigmab$-facts,
and let $\structureunravelk{\fA}$ be the $\gnk$-unravelling of $\fA$.
Then Duplicator has a winning strategy in the $\gn^k$ bisimulation game
between $\fA$ and~$\structureunravelk{\fA}$.
\end{proposition}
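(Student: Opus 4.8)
The plan is to exhibit an explicit winning strategy for Duplicator that always keeps the current position equal to a restriction of the natural projection, or to its inverse realized inside a single bag of the unravelling. First I would introduce the \emph{projection} $p \colon \structureunravelk{\fA} \to \fA$ defined by $p([\pi,a]) = a$. By the definition of $\structureunravelk{\fA}$, whenever $S([\pi_1,a_1],\dots,[\pi_j,a_j]) \in \structureunravelk{\fA}$ we have $S(a_1,\dots,a_j) \in \fA$, so $p$ is a homomorphism. The whole argument then rests on two structural observations about the unravelling, which I would isolate as small lemmas. The \textbf{single-bag property}: if a set of elements of $\structureunravelk{\fA}$ is $\sigmab$-guarded, then all of its elements have the form $[\pi,a]$ for one common node $\pi$; this is immediate from the definition of the fact relation of $\structureunravelk{\fA}$, which requires a common witnessing $\pi$. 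Together with the observation that $a \neq a'$ forces $[\pi,a] \neq [\pi,a']$, this shows that $p$ restricted to any $\sigmab$-guarded set is a partial isomorphism (facts are reflected because one may always take the witnessing node to be the common bag $\pi$). The \textbf{extension property}: for any node $\pi = X_0 \cdots X_n$ and any $X' \subseteq \elems{\fA}$ with $\mysize{X'} \le k$, the child $\pi' := \pi X'$ represents exactly the elements of $X'$, and $[\pi',a] = [\pi,a]$ whenever $a \in X' \cap X_n$ while $[\pi',a]$ is a fresh class otherwise.

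Next I would describe Duplicator's strategy through an invariant on positions: the current position $h$ is either (A) a restriction $p|_Y$ of the projection for some $Y \subseteq \structureunravelk{\fA}$ with $\mysize{Y} \le k$, or (B) of the form $a \mapsto [\pi,a]$ for a single node $\pi$ and some $X \subseteq \elems{\fA}$ with $\mysize{X} \le k$. The empty starting position satisfies this vacuously. If Spoiler selects a set $X' \subseteq \elems{\fA}$ (which, from a type~(i) position, may be a domain switch), Duplicator takes the single node $\pi$ supplied by the invariant --- available in case (B) directly, and in case (A) guaranteed by the single-bag property since the domain of a type~(i) position is $\sigmab$-guarded --- and responds with $a \mapsto [\pi',a]$ for $\pi' := \pi X'$, landing in case (B). If instead Spoiler selects a set $Y' \subseteq \structureunravelk{\fA}$, Duplicator responds with $p|_{Y'}$, landing in case (A). In both cases the response is a partial rigid homomorphism by the two structural lemmas, so Duplicator is never stuck and the game continues forever, which is a win for Duplicator.

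The bulk of the verification, and the step I expect to be the main obstacle, is checking the agreement conditions on overlapping domains for Spoiler's \emph{domain-switch} moves from type~(i) positions, since these are the only moves that genuinely mix the two structural cases. Here the extension property does the work: for an overlapping element $a$ that was already represented at $\pi$ one has $[\pi',a] = [\pi,a]$, so the new map agrees with the old one exactly where required; for moves that stay on the unravelling side both the old and new maps are restrictions of $p$ and agreement is trivial. Once agreement and the rigid-homomorphism property are confirmed in each case, the invariant is preserved and the strategy is winning. This completes the proof, and via Proposition~\ref{prop:bisim-game} it yields, for the $\instance_0$-rooted tree decomposition of $\structureunravelk{\fA}$ induced by $\Pi$, the transitive-closure friendly tree-like witness required for Proposition~\ref{prop:transdecomp}.
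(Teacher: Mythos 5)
Your proposal is correct and follows essentially the same route as the paper's proof: your invariant cases (A) and (B) are exactly the paper's ``safe positions'' (restrictions of the projection when the active set is $\structureunravelk{\fA}$, and single-bag embeddings $a \mapsto [\pi,a]$ when the active set is $\fA$), and your single-bag and extension properties are precisely the structural facts the paper verifies inline when showing Duplicator can move from safe position to safe position. The only presentational difference is that you handle domain-switch-with-new-elements moves directly via $a \mapsto [\pi X', a]$, whereas the paper factors them into a switch followed by a same-side move, which yields the identical response.
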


\begin{proof}
Given a position $f$ in the $\gnk$-bisimulation game,
we say the \emph{active set} is the set of facts
containing the elements in the domain of $f$.
In other words, the active set is either $\fA$ or $\structureunravelk{\fA}$,
depending on which set Spoiler is currently playing in.
The \emph{safe positions} $f$ in the $\gnk$-bisimulation game
between $\fA$ and $\structureunravelk{\fA}$ are defined as follows:
if the active set is $\structureunravelk{\fA}$,
then $f$ is safe if for all $[\pi,a] \in \mydom{f}$, $f([\pi,a]) = a$;
if the active set is $\fA$,
then $f$ is safe if
there is some $\pi$ such that $f(a) = [\pi,a]$ for all $a \in \mydom{f}$.

We now argue that starting from a safe position $f$,
Duplicator has a strategy to move to a new safe position $f'$.
This is enough to conclude that Duplicator has a winning strategy
in the $\gnk$-bisimulation game
between $\fA$ and $\structureunravelk{\fA}$
starting from any safe position.

First, assume that the active set is $\structureunravelk{\fA}$.
\begin{itemize}
\item If $f$ is a type~(ii) position,
then Spoiler can select some new set $X'$ of elements from the active set.
Each element in $X'$ is of the form $[\pi',a']$.
Duplicator must choose $f'$ such that $[\pi',a']$ is mapped to $a'$ in $\fA$,
in order to maintain safety.
This new position $f'$ is consistent with $f$ on any elements in $X' \cap \mydom{f}$
since $f$ is safe.
This $f'$ is still a partial homomorphism since any relation holding for a tuple of
elements $[\pi_1,a_1], \dots, [\pi_n,a_n]$ from $\mydom{f'}$ must hold
for the tuple of elements $a_1, \dots, a_n$ in $\fA$ by definition of $\structureunravelk{\fA}$.
Consider some element $[\pi',a']$ in $\mydom{f'}$.
It is possible that there is some $[\pi,a']$ in $\mydom{f'}$ with $[\pi,a'] \neq [\pi',a']$;
however, $[\pi,a']$ and $[\pi',a']$ are not base-guarded in $\structureunravelk{\fA}$.
Hence, any restriction $f''$ of $f'$ to
a base-guarded set of elements is a bijection.
Moreover, such an $f''$ is a
partial isomorphism:
consider some $a_1,\dots,a_n$ in the range of $f''$ for which some relation $S$ holds in $\fA$;
since $(f'')^{-1}(a_1),\dots,(f'')^{-1}(a_n)$ must be base-guarded,
we know that there is some $\pi$ such that
$[\pi,a_1] = (f'')^{-1}(a_1)$,$\dots$, $[\pi,a_n] = (f'')^{-1}(a_n)$,
so by definition of $\structureunravelk{\fA}$,
$S$ holds of $(f'')^{-1}(a_1),\dots,(f'')^{-1}(a_n)$ as desired.
Hence, $f'$ is a safe partial rigid homomorphism.
\item If $f$ is a type~(i) position,
then Spoiler can either choose elements in the active set
and we can reason as we did for the type~(ii) case,
or Spoiler can select elements from the other set of facts.

We first argue that if Spoiler changes the active set
and chooses no new elements,
then the game is still in a safe position.
Since $f$ is a type~(i) position, 
we know that $\mydom{f}$ is guarded
by some base relation $S$, so there is some $\pi$ with $f(a) = [\pi,a]$ for all $a \in \mydom{f}$
by construction of $\structureunravelk{\fA}$.
Hence, the new position $f'=f^{-1}$ is still safe.

If Spoiler switches active sets and chooses new elements,
then we can view this as two separate moves:
in the first move, Spoiler switches active sets from $\structureunravelk{\fA}$ to $\fA$
but chooses no new elements,
and in the second move, Spoiler selects the desired new elements from $\fA$.
Because switching active sets leads to a safe position
(by the argument in the previous paragraph),
it remains to define Duplicator's safe strategy when the active set is $\fA$,
which we explain below.
\end{itemize}
Now assume that the active set is $\fA$.
Since $f$ is safe, there is some $\pi$ such that $f(a) = [\pi,a]$ for all $a \in \mydom{f}$.
\begin{itemize}
\item If $f$ is a type~(ii) position, then Spoiler can select some new set $X'$
of elements from the active set.
We define the new position $f'$ chosen by Duplicator
to map each element $a' \in X'$
to~$[\pi',a']$ where $\pi' = \pi \cdot X'$.
By construction of the unravelling, $\pi' \in \Pi$
and the resulting partial mapping $f'$ still satisfies the safety property with $\pi'$
as witness.
Note that $f'$ is consistent with $f$ for elements of~$X'$ that are also
in~$\mydom{f}$, as we have
$[\pi \cdot X',a'] = [\pi,a']$ for $a' \in X' \cap \mydom{f}$.
Now consider some tuple $\vec{a} = a_1 \dots a_n$
of elements from $\mydom{f'}$ that are in some relation $S$.
We know that $f'(a_i) = [\pi',a_i]$,
hence $S$ must hold for $f'(\vec{a})$ in $\structureunravelk{\fA}$.
Moreover,
for any base-guarded set $\vec{a} = \set{ a_1, \dots, a_n }$ of distinct elements from
$\mydom{f'}$, $f'(\vec{a})$ must yield a set of distinct elements $\set{ f'(a_1), \dots, f'(a_n) }$,
and these elements can only participate in some fact in $\structureunravelk{\fA}$
if the underlying elements from $\vec{a}$ participate in the same fact in $\fA$.
Hence, $f'$ is a safe partial rigid homomorphism.
\item If $f$ is a type~(i) position,
then Spoiler can either choose elements in the active set
and we can reason as we did for the type~(ii) case,
or Spoiler can select elements from the other set of facts.
It suffices to argue that if Spoiler changes the active set like this,
and chooses no new elements,
then the game is still in a safe position.
But in this case $f' = f^{-1}$ is easily seen to still be safe.
\end{itemize}
This concludes the proof of Proposition~\ref{prop:unravel}.
\end{proof}

\subsubsection{Countable Witnesses}

The last theorem that we need says that we can always obtain a countable witness to satisfiability of $\agnf$
with transitivity. This follows from known results about least fixed point logic.
\begin{theorem}\label{thm:countable}
For $\phi \in \agnf$ and a finite
set of facts $\instance_0$,
  if there is an
  $\instance$ extending $\instance_0$ with $\sigmab$-facts and
  satisfying $\phi$ when each relation $R^\trans \in \sigmad$ is interpreted as the transitive closure
of the corresponding $R \in \sigma$, then there  is
  such an $\instance$ 
  that has countable cardinality.
\end{theorem}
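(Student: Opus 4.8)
The plan is to reduce the statement to a downward Löwenheim–Skolem argument. The key observation is that the interpretation of each distinguished relation as a transitive closure is expressible in the infinitary logic $L_{\omega_1\omega}$: using the formulas $\psi_n(x,y)$ from the proof of Proposition~\ref{prop:bisim-game} (which assert the existence of an $R$-path of length $n$ using only relations in $\sigmab$), the transitive closure of $R$ is exactly $\bigvee_{n \geq 1}\psi_n(x,y)$. Hence I would first \emph{compile away} the distinguished relations: starting from $\phi \in \agnf$, replace every occurrence of a distinguished atom $R^\trans(x,y)$ by $\bigvee_{n \geq 1}\psi_n(x,y)$, obtaining a sentence $\phi^*$ over the base signature $\sigmab$ alone. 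Since $R^\trans$ is semantically fixed as the transitive closure of $R$, this substitution is truth-preserving on any set of facts, even under the negations allowed by $\agnf$ (a negated distinguished atom simply becomes the countable conjunction $\bigwedge_{n \geq 1}\neg\psi_n$). As $\phi$ is finite and we add only the countably many formulas $\psi_n$, the sentence $\phi^*$ lies in a \emph{countable fragment} $F$ of $L_{\omega_1\omega}$ over the countable vocabulary $\sigmab$.

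Next I would encode the finite initial data $\instance_0$ into the sentence. Introducing one constant for each element of $\elems{\instance_0}$, I would conjoin to $\phi^*$ the (finite) positive atomic diagram of the $\sigmab$-facts of $\instance_0$, the inequalities asserting that the named elements are pairwise distinct, and, for each distinguished fact $R^\trans(a,b) \in \instance_0$, the requirement $\bigvee_{n \geq 1}\psi_n(a,b)$ that it be realized by an actual $R$-path. Call the resulting $L_{\omega_1\omega}$ sentence $\Phi$; it still lies in a countable fragment, and by hypothesis the witnessing set of facts $\instance$ (with $R^\trans$ read as transitive closure) is a model of $\Phi$.

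I would then invoke the downward Löwenheim–Skolem theorem for countable fragments of $L_{\omega_1\omega}$: since $\Phi$ has a model, it has a countable $F$-elementary substructure $\instance'$ of $\instance$ with $\instance' \models \Phi$. The named constants guarantee $\elems{\instance_0} \subseteq \elems{\instance'}$, and the diagram conjuncts guarantee $\instance_0 \subseteq \instance'$. Finally, reinterpreting each $R^\trans$ in $\instance'$ as the transitive closure of $R$ \emph{within} $\instance'$ yields a countable set of facts satisfying $\phi$: for $a,b \in \elems{\instance'}$ we have $\instance' \models \psi_n(a,b)$ iff $\instance \models \psi_n(a,b)$ by $F$-elementarity, so $\bigvee_n \psi_n$ computes the same relation in $\instance'$ as the restriction of the transitive closure of $\instance$, and $\instance' \models \phi^*$ gives exactly $\instance' \models \phi$ under this interpretation.

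The main obstacle, and the step I would be most careful about, is this last preservation point: transitive closure behaves badly under arbitrary substructures, so the argument genuinely needs an $F$-elementary substructure whose fragment $F$ contains all the path-length formulas $\psi_n$, and one must check that such a countable $F$ exists and that the Skolem-hull construction closes off witnesses for every existential formula of $F$. Equivalently, one can phrase the whole argument inside $\kw{LFP}$, where transitive closure is a single fixpoint formula and the downward Löwenheim–Skolem property for fixpoint logic is the cited ``known result''; I find the $L_{\omega_1\omega}$ route more transparent precisely because the approximants $\psi_n$ are already in hand.
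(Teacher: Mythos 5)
Your proof is correct, and it follows the same overall strategy as the paper's: eliminate the special interpretation of the $\sigmad$-relations by compiling transitive closure into a richer logic that enjoys a downward L\"owenheim--Skolem property, and then invoke that property. The difference is the target logic. The paper's short proof rewrites $\phi$ into least fixpoint logic $\kw{LFP}$ over the base signature (transitive closure being a single fixpoint formula) and obtains the countable model by citing Theorem~2.1 of \citeA{Gradel02}. You instead compile into $L_{\omega_1\omega}$, replacing $R^\trans(x,y)$ by $\bigvee_{n\geq 1}\psi_n(x,y)$, and invoke the classical downward L\"owenheim--Skolem theorem for countable fragments; since the $\psi_n$ are exactly the finite stages of the fixpoint, this is the same compilation seen through a different lens, and the paper itself notes that Gr\"adel's theorem ``is essentially a consequence of the L\"owenheim--Skolem property.'' What your route buys: it is self-contained modulo a textbook result, and---more substantively---it handles the initial facts explicitly (constants, positive diagram, inequalities, and realization of each $R^\trans$-fact of $\instance_0$ by an actual $R$-path), whereas the paper's sketch never addresses why the countable witness can be taken to \emph{extend} $\instance_0$, which the statement requires. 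You also correctly isolate the one step where a naive argument would fail: transitive closure is not preserved under passing to arbitrary substructures, so the substructure must be $F$-elementary for a countable fragment $F$ containing all the $\psi_n$, which is precisely what makes the closure computed inside $\instance'$ agree with the restriction of the closure of $\instance$. What the paper's route buys in exchange is brevity, and an argument that would apply verbatim to any $\kw{LFP}$-definable interpretation of the distinguished relations rather than to transitive closure specifically.
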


\begin{proof}
It is well-known that transitive closure can be expressed in $\kw{LFP}$, the extension of $\fo$ with a fixpoint operator.
Hence, it is possible to rewrite $\phi \in \agnf$ into $\phi' \in \kw{LFP}$ just over the base relations
such that $\phi$ is satisfiable (interpreting the distinguished relations as the transitive closure of the corresponding base relations)
iff $\phi'$ is satisfiable (with no special interpretations).
By Theorem~2.1 of \cite{Gradel02}, which is 
essentially a consequence of the L\"owenheim-Skolem property,
$\kw{LFP}$---and hence $\agnf$---has the property that if there is a satisfying set of facts,
then there is some satisfying set of facts of countable cardinality.
\end{proof}

\subsubsection{Concluding the Proof}

We can now conclude the proof of Proposition~\ref{prop:transdecomp}.
Assume that $\fA$ is a set of facts extending $\instance_0$ with additional $\sigmab$-facts
such that $\fA$ satisfies $\varphi$
when interpreting $R^\trans$ as the transitive closure of $R$.
By Theorem~\ref{thm:countable}, we can assume that $\elems{\fA}$ has countable cardinality.
Let $3 \leq k \leq \mysize{\varphi}$ be an upper bound on the
maximum number of free variables in any subformula of~$\varphi$.
It is easy to see that the unravelling $\structureunravelk{\fA}$ of a countable set of facts $\fA$
has countable branching. It remains to show that $\structureunravelk{\fA}$ satisfies $\instance_0$
and $\varphi$.

Since $\fA$ satisfies $\varphi$,
Propositions~\ref{prop:bisim-game}~and~\ref{prop:unravel} imply that
$\structureunravelk{\fA}$ also satisfies $\varphi$ when properly interpreting $R^\trans$.
Moreover, 
for each $\sigmad$-fact $R^\trans(c,d) \in \instance_0$,
there is some $n \in \mathbb{N}$ such that $n > 0$ and
there is an $R$-path of length $n$
between $c$ and $d$ in $\instance$.
We can write a formula $\psi_n(c,d)$ in $\agnf$
(without any $\sigmad$-atoms)
that is satisfied exactly when there is an $R$-path of length $n$.
Since $\psi_n$ can be expressed in $\agnf$
with at most 3 free variables in any subformula
and $\fA \models \psi_n(c,d)$,
Propositions~\ref{prop:bisim-game}~and~\ref{prop:unravel} imply that
$\structureunravelk{\fA}$ also satisfies $\psi_n$,
and hence $\structureunravelk{\fA} \models R^\trans(c,d)$.
Thus, we can conclude that the unravelling $\structureunravelk{\fA}$
is a transitive-closure friendly $k$-tree-like witness
for~$\varphi$.

\subsection{Proof of Lemma~\ref{lemma:guarded-interface-dec}: Base-guarded-interface Tree Decompositions for $\agnf$}\label{app:guarded-interface}

Recall the definition of a \emph{base-guarded-interface tree decomposition}, and
of having \emph{base-guarded-interface $k$-tree-like witnesses}, from
Section~\ref{sec:decidingapprox}. In this section, we prove
Lemma~\ref{lemma:guarded-interface-dec} from the main text, with the bulk of
the work being to prove that $\agnf$ sentences have base-guarded-interface
$k$-tree-like witnesses.

For the application that we have in mind, 
we will prove our results for a slight generalization of $\nf$ $\agnf$ formulas
that
allows
$\guardedb(x,y)$
(a disjunction over all existentially-quantified atoms that could base-guard $x$ and $y$)
in place of an explicit base-guard.
Such an atom $\guardedb(x,y)$ is called a \emph{generalized base-guard}
since
it can express that a pair of elements is guarded without indicating the exact atom
that is guarding $x$ and $y$, and without worrying about other variables that
may appear in the guard atom.
Note that allowing these generalized base-guards does not increase the expressivity of the logic,
but it is convenient to allow them (e.g.~in the definition of the $k$-guardedly linear axioms).

We can now state the following result, which we will prove in the rest of this
section, and then extend to show Lemma~\ref{lemma:guarded-interface-dec}:

\begin{proposition}\label{prop:guarded-interface-dec-appendix}
  Every sentence $\phi$ in $\nf$ $\agnf$ (possibly with generalized base-guards)
  has base-guarded-interface $k$-tree-like witnesses
  where $k$ is the width of $\phi$.
\end{proposition}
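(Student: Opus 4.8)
The plan is to reuse the bisimulation-game and unravelling machinery developed in Appendix~\ref{app:tctreelike} for Proposition~\ref{prop:transdecomp}, reading it for the plain semantics (here no distinguished relation receives a special interpretation, so every atom, base or distinguished, is simply preserved by the partial maps in the game) and adapting the unravelling so that its non-root interfaces become base-guarded. Generalized base-guards cause no trouble: each $\guardedb(x,y)$ is itself an $\agnf$ formula (an existentially quantified disjunction of base atoms), so it is subsumed by the atomic, disjunctive, and existential cases of the preservation argument, and I never have to treat it as a new syntactic primitive.

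The first step is a preservation lemma, the analogue of Proposition~\ref{prop:bisim-game}. I would replay the $\gn^k$ bisimulation game of Appendix~\ref{app:tctreelike}, whose type~(i) positions are base-guarded partial isomorphisms (at which Spoiler may switch the active set of facts) and whose type~(ii) positions are partial rigid homomorphisms, and show by induction on $\mysize{\varphi}$ that if Duplicator wins from $\vec a \mapsto \vec b$ and $\fA \models \varphi(\vec a)$ then $\fB \models \varphi(\vec b)$, for every $\varphi \in \agnf$ (with generalized base-guards) whose subformulas have at most $k$ free variables. This is in fact simpler than Proposition~\ref{prop:bisim-game}, since there is no transitive-closure clause: an atom is preserved because the position is a (rigid) homomorphism, a disjunction and an existential follow from the inductive hypothesis after the corresponding Spoiler move, and a base-guarded negation $A(\vec x) \wedge \neg\varphi'(\vec x')$ is handled exactly as before, the $A$-atom forcing $\vec a$ to be base-guarded, turning the position into type~(i), so Spoiler may switch domains and restrict to the subtuple $\vec b'$, and the inductive hypothesis applied from $\vec b' \mapsto \vec a'$ rules out $\fA \models \varphi'(\vec a')$.

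The new ingredient, and the crux of the proof, is a base-guarded unravelling refining the one in Proposition~\ref{prop:unravel}. I would build a tree whose root bag is $\elems{\instance_0}$ and whose every other bag is (the image of) a size-$\le k$ neighborhood of $\fA$, arranged so that the interface between a non-root node and each of its children is a base-guarded set. Because $\phi$ is in normal form, each CQ-shaped subformula has at most $k$ variables and can be matched entirely inside a single bag, while descending into a base-guarded-negated subformula $\alpha_i \wedge \neg\psi_i$ crosses to a child whose interface is the free-variable set of $\psi_i$, which is base-guarded by $\alpha_i$. Concretely I would take the tree of sequences that record, at each step, a base-guarded subset of the current bag (an arbitrary size-$\le k$ subset at the root, since root interfaces are unconstrained) together with a fresh bounded extension realizing the relevant facts of $\fA$; identifying $a$-connected occurrences as in Appendix~\ref{app:tctreelike} yields a set of facts $\structureunravelk{\fA}$ whose induced decomposition is $\instance_0$-rooted, has width $k-1$, and has base-guarded non-root interfaces by construction. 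I would then prove, by the same kind of safe-strategy argument as Proposition~\ref{prop:unravel}, that Duplicator wins the $\gn^k$ bisimulation game between $\fA$ and $\structureunravelk{\fA}$.

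The main obstacle is exactly the tension inside this last step: the preservation lemma needs Duplicator to survive the \emph{full} $\gn^k$ game, whose type~(ii) moves let Spoiler add a new element at a time (the existential case), yet I want every non-root interface to remain base-guarded. The resolution is to make each bag a \emph{saturated} base-guarded neighborhood, so that all width-$\le k$ CQ-local existential moves can be answered inside the current bag without crossing an interface; only a genuine domain switch, which the game permits solely from type~(i) and hence base-guarded positions, forces Duplicator to an adjacent bag, and that interface is then base-guarded. Once Duplicator's strategy is established, the preservation lemma gives $\structureunravelk{\fA} \models \phi$ from $\fA \models \phi$, and since $\structureunravelk{\fA} \supseteq \instance_0$ carries the required $\instance_0$-rooted $(k-1)$-width base-guarded-interface decomposition, the proposition follows. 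Lemma~\ref{lemma:guarded-interface-dec} is then obtained by instantiating $\phi := \Sigma' \wedge \neg Q$ (put in normal form with generalized base-guards) and $\instance_0 := \instance_0'$, noting that every subformula of $\Sigma' \wedge \neg Q$, including the $k$-guardedly linear axioms, has width at most $k = \max(\mysize{\Sigma \wedge \neg Q}, \arity{\sigma \cup \{G\}})$.
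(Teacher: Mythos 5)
Your proposal has a genuine gap, and it sits exactly at the point you single out as the crux. The paper does not (and cannot) prove that Duplicator wins the \emph{full} $\gn^k$ bisimulation game of Appendix~\ref{app:tctreelike} against a base-guarded-interface unravelling. Instead it \emph{weakens the game} (Proposition~\ref{prop:bisim-game-guarded} and~\ref{prop:unravelguarded}): in the base-guarded-interface variant, play strictly alternates between type~(ii) and type~(i) positions, and from a type~(ii) position Spoiler may only \emph{restrict} the current domain to a base-guarded subset; he can introduce new elements only from a base-guarded type~(i) position. The preservation lemma is then proved by induction on \emph{negation depth} for normal-form formulas, matching an entire CQ-shaped subformula---all its existential witnesses at once---in a single Spoiler move launched from a base-guarded position. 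Your preservation lemma instead handles $\exists y$ one variable at a time via a type~(ii) move made from an arbitrary (possibly unguarded) position, and this is precisely the move that a base-guarded-interface witness of width $k-1$ cannot answer.

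Saturation cannot repair this, because bags are capped at $k$ elements while the witnesses that Spoiler can demand relative to an \emph{unguarded} tuple are unbounded in number and mutually incompatible. Concretely, take $\sigmab=\{U,P,B\}$, $\sigmad=\{D\}$, $k=3$, $\instance_0=\{U(a)\}$, and $\fA$ with facts $U(a),U(b),U(d),U(e)$, $P(d)$, $D(a,b)$, $B(a,d)$, $B(b,d)$, $B(a,e)$, $B(b,e)$. In any set of facts with a width-$2$ base-guarded-interface tree decomposition (with this one-element root), two elements $\alpha,\beta$ joined by a $D$-fact but by no base fact co-occur in exactly one bag: the bags containing both form a connected subtree avoiding the root, and every interface inside it would have to base-guard $\{\alpha,\beta\}$. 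A median-node argument then shows that any $\delta$ with $B(\alpha,\delta)\wedge B(\beta,\delta)$ must also lie in that single bag. So after Spoiler plays $\{a,b,d\}$ in $\fA$ (rigidity on the singleton $\{d\}$ forces an image satisfying $P$), that bag is full, and Spoiler's next full-game move $\{a,b,e\}$ (whose image must \emph{not} satisfy $P$, again by rigidity) has no possible response: Duplicator loses the full game against the base-guarded-interface unravelling, and indeed against every base-guarded-interface witness of this width. The fix is the paper's: restrict Spoiler as above, and use the normal form so that the preservation argument only ever needs the restricted moves---whole CQ-shaped subformulas placed in one bag reached across a base-guarded interface, and descent into negated subformulas only at base-guarded tuples. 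Your unravelling construction and your treatment of generalized base-guards are fine; what fails is pairing the full game with the restricted unravelling.
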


The result and proof of Proposition~\ref{prop:guarded-interface-dec-appendix}
is very similar to Proposition~\ref{prop:transdecomp}.
However, unlike Proposition~\ref{prop:transdecomp},
we do not interpret the distinguished relations in a special way here.
This allows us to prove the stronger base-guarded-interface property
about the corresponding tree decompositions,
which is important in some arguments
(e.g., Proposition~\ref{prop:rewritelin} and Theorem~\ref{thm:ptimetransdataupper}).

We first consider a variant of the $\gn^k$ bisimulation game defined earlier in
Appendix~\ref{app:tctreelike}.
The positions in the game are the same as before:
\begin{itemize}
 \item[i)] partial isomorphisms $f:X \to Y$ or $g:Y \to X$, 
     where $X \subset \elems{\fA}$ and $Y \subset \elems{\fB}$ are of size at most $k$ and $\sigmab$-guarded;
 \item[ii)] partial rigid homomorphisms $f:X \to Y$ or $g:Y \to X$, 
     where $X \subset \elems{\fA}$ and $Y \subset \elems{\fB}$ are of size at most $k$.  
\end{itemize}
However, the rules of the game are different.

From a type (i) position $h$,
Spoiler must choose a finite subset $X' \subset \elems{\fA}$ or 
a finite subset $Y' \subset \elems{\fB}$, in either case of size at most $k$, 
upon which Duplicator 
must respond by a partial rigid homomorphism $h'$ with domain $X'$ or $Y'$ accordingly
that is consistent with~$h$.
(This is the same as before).

In a type (ii) position $h$,
Spoiler is only allowed to select some base-guarded subset $X'$ of $\mydom{h}$ (rather than an arbitrary subset of size $k$),
and then the game proceeds from the type (i) position $h'$
obtained by restricting $h$ to this base-guarded subset.

Thus, unlike in the game presented in Appendix~\ref{app:tctreelike}, this game strictly alternates between type (ii) positions and base-guarded positions of type~(i).
We call this a \emph{base-guarded-interface $\gn^k$ bisimulation game},
since the interfaces (shared elements) between the domains of consecutive positions
must be base-guarded.
We can then show the analogue of Proposition~\ref{prop:bisim-game} for this
variant of the game (note that this time we do not handle the distinguished relations in
any special way):

\begin{proposition}\label{prop:bisim-game-guarded}
Let $\fA$ and $\fB$ be sets of facts extending $\instance_0$.
  Let $\varphi(\vec x)$ be a formula in $\nf$ $\agnf$ of width at most~$k$.

If Duplicator has a winning strategy in the
base-guarded-interface $\gn^k$ bisimulation game
between $\fA$ and $\fB$
starting from a type (i) position $\vec{a} \mapsto \vec{b}$
and $\fA$ satisfies $\varphi(\vec{a})$,
then $\fB$ satisfies $\varphi(\vec{b})$.
\end{proposition}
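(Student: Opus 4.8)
The plan is to prove Proposition~\ref{prop:bisim-game-guarded} by structural induction on the $\nf$ $\agnf$ formula $\varphi(\vec x)$, closely mirroring the proof of Proposition~\ref{prop:bisim-game} but exploiting the strict alternation between type~(ii) positions and base-guarded type~(i) positions that is built into the base-guarded-interface game. Since $\varphi$ is in normal form, it is UCQ-shaped: a disjunction of CQ-shaped formulas of the form $\exists \vec y\,(\bigwedge_i \gamma_i)$, where each conjunct $\gamma_i$ is either an atom (base or distinguished), an equality, or a base-guarded negation $\alpha \wedge \neg \psi$ with $\alpha$ a (possibly generalized) base-guard. Because the distinguished relations are \emph{not} interpreted specially here, a distinguished atom is treated exactly like a base atom with respect to homomorphisms, so the induction is in fact simpler than in Proposition~\ref{prop:bisim-game}: I do not need the auxiliary $R$-path formulas $\psi_n$, and the induction can proceed on the size of $\varphi$ alone.

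First I would handle the base cases. If $\varphi$ is an atom $A(\vec x)$ (base or distinguished) or an equality, the conclusion follows immediately from the fact that the starting position $\vec a \mapsto \vec b$ is a partial rigid homomorphism, hence preserves all $\sigma$-facts and equalities. Disjunction is routine from the induction hypothesis. For existential quantification $\exists y\,\varphi'(\vec x, y)$, I would argue as in Proposition~\ref{prop:bisim-game}: from a witness $c$ in $\fA$, since the width is at most $k$ the tuple $\vec a c$ has at most $k$ elements, so Spoiler may select $\vec a c$ in the type~(ii) position, Duplicator responds with some $d$, and the induction hypothesis applied to $\varphi'$ from the new position $\vec a c \mapsto \vec b d$ yields satisfaction of $\varphi'(\vec b, d)$, hence of $\varphi$ at $\vec b$.

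The crucial and most delicate case is base-guarded negation $\varphi = \alpha(\vec x) \wedge \neg \varphi'(\vec x')$, where $\vec x'$ is a subtuple of $\vec x$ and $\alpha$ is a (generalized) base-guard for $\vec x'$. Here is exactly where the modified game rules are needed. Since $\fA, \vec a$ satisfies $\alpha(\vec x)$, the corresponding subtuple $\vec a'$ is base-guarded in $\fA$; because the current position is of type~(i) and hence a partial isomorphism, $\vec b'$ is base-guarded in $\fB$ and $\fB, \vec b$ satisfies $\alpha$. To establish $\neg \varphi'(\vec b')$, I would suppose for contradiction that $\fB, \vec b'$ satisfies $\varphi'$. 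Now I use the game: from the type~(i) position I let Spoiler restrict to the base-guarded subset corresponding to $\vec x'$ and switch the active domain, moving to the type~(i) position $\vec b' \mapsto \vec a'$, where Duplicator still wins. Applying the induction hypothesis to $\varphi'$ from this position gives $\fA, \vec a'$ satisfies $\varphi'$, contradicting $\fA, \vec a$ satisfying $\neg \varphi'(\vec x')$.

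The main obstacle is verifying that this negation step is legitimate under the restricted game rules, namely that Spoiler really can reach the position $\vec b' \mapsto \vec a'$. The key observation is that $\vec a'$ (and symmetrically $\vec b'$) is base-guarded precisely because $\alpha$ is a base-guard for it, so restricting to $\vec x'$ lands in a base-guarded type~(i) position from which the game is allowed to switch domains; the strict alternation of the base-guarded-interface game ensures that after restriction we are genuinely at a type~(i) position and that the induction hypothesis, which assumes a type~(i) starting position, applies. I would also need to confirm at the start that the overall induction is anchored at a type~(i) position as the proposition requires, and that each inductive invocation for subformulas of a base-guarded negation again starts from such a position, which is guaranteed by the restriction move. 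Once this case is in place, the remaining cases are routine, and the proposition follows.
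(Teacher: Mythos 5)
You follow the same overall strategy as the paper---a game-based induction mirroring Proposition~\ref{prop:bisim-game}---but the way you organize the induction breaks against the rules of the base-guarded-interface game, and the gap is in your existential case. You peel a quantifier, let Spoiler select $\vec{a}c$ from the current type~(i) position (legal), reach a type~(ii) position $\vec{a}c \mapsto \vec{b}d$, and then apply the induction hypothesis to $\varphi'$ \emph{from that type~(ii) position}. But the statement being proved (hence your IH) covers only type~(i) starting positions, and the mismatch is not cosmetic: in this game, Spoiler's sole legal move from a type~(ii) position is to restrict to a base-guarded subset of the \emph{current} domain---he can never select new elements there. So if $\varphi'$ itself begins with an existential quantifier (exactly what happens when you peel the variables of a block $\exists y_1 y_2 \dots$ one at a time), the recursive call needs a move the game forbids, and the induction collapses. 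Your closing paragraph checks the type~(i)/type~(ii) bookkeeping only for the negation case, not for this one. Relatedly, conjunction never appears among your cases, and your negation case as written assumes a type~(i) start (``the current position is of type~(i) and hence a partial isomorphism''), whereas inside a CQ block the negated conjuncts are first encountered at a type~(ii) position; also, for \emph{generalized} base-guards, preservation of $\alpha$ needs the IH (it is a UCQ with its own quantified variables), not just the partial-isomorphism property of the position.

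The repair is precisely how the paper organizes the proof: induct on negation depth and consume an entire CQ-shaped block in a single move. From the type~(i) position, Spoiler selects $\vec{a}$ together with witnesses for the \emph{whole} tuple $\vec{y}$ at once (the width bound keeps this to at most $k$ elements); Duplicator's answer is a partial rigid homomorphism, i.e., a type~(ii) position. Atomic conjuncts---base and distinguished alike---are then preserved by the homomorphism with no further moves. Each base-guarded-negation conjunct $\alpha \wedge \neg\psi$ is handled by noting that $\alpha$ holds of the relevant elements in $\fA$, so they form a base-guarded set; Spoiler restricts to them (rigidity of the homomorphism makes this restriction a partial isomorphism, hence a genuine type~(i) position), switches domains, and only then does the recursion into $\psi$ invoke the IH from a position of the required kind. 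If you insist on a structural induction, you must strengthen the induction statement so it also covers quantifier-free conjunctions of atoms and base-guarded negations evaluated at type~(ii) positions; as written, your proposal has a genuine gap.
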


\begin{proof}
Suppose Duplicator has a winning strategy in the base-guarded-interface
$\gn^k$ bisimulation game
between $\fA$ and $\fB$.
We proceed by induction on the negation depth of $\phi$.

If $\phi$ has negation depth 0, then it is a UCQ.
We are assuming that $\fA, \vec{a}$ satisfies $\varphi$,
so there is some CQ $\delta = \exists \vec{y} ( \chi_1 \wedge \dots \wedge \chi_j )$
that is satisfied by $\fA, \vec{a}$.
Hence,
there is some $\vec{c} \in \elems{\fA}$ such that
$\fA,\vec{a},\vec{c}$ satisfies $\chi_1 \wedge \dots \wedge \chi_j$.
Because the width of $\varphi$ is at most $k$,
we know that the combined number of elements in $\vec{a}$ and $\vec{c}$ is at most $k$.
Hence, we can consider the move in the game where Spoiler
selects the elements in $\vec{a}$ and $\vec{c}$.
Duplicator must respond with some
$\vec{d} \in \elems{\fB}$
such that $\vec{a}\vec{c} \mapsto \vec{b}\vec{d}$
is a partial rigid homomorphism, a type (ii) position.
Each conjunct $\chi_i$ must be satisfied by~$\fB, \vec{b}\vec{d}$
since $\vec{a}\vec{c} \mapsto \vec{b}\vec{d}$
is a partial homomorphism with respect to $\sigma$.
Hence, $\fB, \vec{b}\vec{d} \models \delta$,
and $\fB, \vec{b} \models \varphi$ as desired.

Now assume that $\phi$ has negation depth $d > 0$
and is of the special form
$\alpha(\vec{x}) \wedge \neg \varphi'(\vec{x}')$.
By definition of $\agnf$, it must be the case that $\vec{x}'$ is a sub-tuple of $\vec{x}$,
and $\alpha$ is either a base-atom or a generalized base-guard for $\vec{x}'$.
Since $\fA,\vec{a}$ satisfies $\varphi$,
we know that $\fA,\vec{a}$ satisfies $\alpha(\vec{x})$,
which implies (by the inductive hypothesis) that
$\fB,\vec{b}$ also satisfies $\alpha(\vec{x})$.
It remains to show that $\fB$ satisfies~$\neg \varphi'(\vec{x}')$.
Assume for the sake of contradiction that it satisfies $\varphi'(\vec{x}')$.
Because $\vec{a} \mapsto \vec{b}$ is a type~(i) position, we can
consider the move in the game where Spoiler switches the domain to the other set of facts,
keeps the same set of elements,
and then collapses to the base-guarded elements in the subtuple $\vec{b}'$ of $\vec{b}$
corresponding to $\vec{x}'$ in $\vec{x}$.
Let $\vec{a}'$ be the corresponding subtuple of~$\vec{a}$.
Duplicator must still have a winning strategy
from this new type (i) position $\vec{b}' \mapsto \vec{a}'$,
so the inductive hypothesis ensures that
$\fA, \vec{a}'$ satisfies $\varphi'(\vec{x}')$,
a contradiction.

Finally consider an arbitrary UCQ-shaped formula $\phi$ with negation depth $d > 0$.
If $\fA, \vec{a} \models \varphi$,
then there is some
disjunct
$\exists \vec{y} ( \chi_1 \wedge \dots \wedge \chi_j )$ in $\varphi$
and some $\vec{c} \in \elems{\fA}$ such that
$\fA,\vec{a},\vec{c}$ satisfies $\chi_1 \wedge \dots \wedge \chi_j$.
Because the width of $\varphi$ is at most $k$,
we know that the combined number of elements in $\vec{a}$ and $\vec{c}$ is at most $k$.
Hence, we can consider the move in the game where Spoiler
selects the elements in $\vec{a}$ and $\vec{c}$.
Duplicator must respond with some
$\vec{d} \in \elems{\fB}$
such that $\vec{a}\vec{c} \mapsto \vec{b}\vec{d}$
is a partial rigid homomorphism, a type (ii) position.
Now consider the possible shape of the $\chi_i$.
If $\chi_i$ is a $\sigma$-atom, then it must be satisfied in $\fB, \vec{b}\vec{d}$
since $\vec{a}\vec{c} \mapsto \vec{b}\vec{d}$
is a partial homomorphism with respect to $\sigma$.
Otherwise, $\chi_i$ is of the form $\alpha(\vec{x}_i\vec{y}_i) \wedge \neg \varphi'$
where $\vec{x}_i$ and $\vec{y}_i$ are subtuples of variables from $\vec{x}$ and $\vec{y}$ that are actually used by this $\chi_i$,
and $\alpha$ is a base-guard for the free variables of $\varphi'$.
We know that $\fA, \vec{a}_i \vec{b}_i \models \alpha(\vec{x}_i \vec{y}_i) \wedge \neg \varphi'$.
We can consider Spoiler's restriction of $\vec{a}\vec{c}$ to the subset of elements $\vec{a}_i \vec{b}_i$ that correspond to $\vec{x}_i \vec{y}_i$
and the corresponding restriction of $\vec{b}\vec{d}$ to $\vec{b}_i\vec{d}_i$.
This is a valid move to a type (i) position $\vec{a}_i \vec{b}_i \mapsto \vec{b}_i \vec{d}_i$,
since $\fA, \vec{a}_i \vec{b}_i \models \alpha(\vec{x}_i \vec{y}_i)$
and the definition of $\agnf$ requires that $\alpha$ is a base atom.
Since Duplicator must still have a winning strategy from this new type (i) position,
the previous case implies that this $\chi_i$ is also satisfied by $\fB, \vec{b}_i \vec{d}_i$.
Since this is true for all $\chi_i$ in the CQ-shaped formula,
$\fB, \vec{b}\vec{d} \models \exists \vec{y} ( \chi_1 \wedge \dots \wedge \chi_j )$, and
$\fB,\vec{b}$ satisfies $\varphi$ as desired.
\end{proof}

We then use a variant of the unravelling based on this game.
The \emph{base-guarded-interface $\gn^k$-unravelling} $\structureunravelkint{\fA}$
is defined in a similar fashion to the $\gn^k$-unravelling,
except it uses only sequences
$\Pi \cap \set{ X_0 \dots X_n : \text{for all $i \geq 1$, $X_i \cap X_{i+1}$ is $\sigmab$-guarded}}$.
This unravelling has an $\instance_0$-rooted base-guarded-interface tree decomposition of width $k-1$.
Moreover, we can show the analogue of Proposition~\ref{prop:unravel}:

\begin{proposition}\label{prop:unravelguarded}
Let $\fA$ be a set of facts extending $\instance_0$,
  and let $\structureunravelkint{\fA}$ be
  the base-guarded-interface $\gnk$-unravelling of $\fA$.
Then Duplicator has a winning strategy in the base-guarded-interface $\gn^k$ bisimulation game
between $\fA$ and~$\structureunravelkint{\fA}$.
\end{proposition}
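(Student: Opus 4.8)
The plan is to mirror the proof of Proposition~\ref{prop:unravel} as closely as possible, defining a notion of \emph{safe position} for the base-guarded-interface $\gnk$ bisimulation game between $\fA$ and $\structureunravelkint{\fA}$ and showing that Duplicator can always move from a safe position to another safe position. As in that proof, the \emph{active set} of a position $f$ is the set of facts ($\fA$ or $\structureunravelkint{\fA}$) in which Spoiler is currently playing, and I would call $f$ safe when: if the active set is $\structureunravelkint{\fA}$, then $f([\pi,a]) = a$ for every $[\pi,a] \in \mydom{f}$; and if the active set is $\fA$, then there is a single sequence $\pi$ (from the restricted index set) with $f(a) = [\pi,a]$ for all $a \in \mydom{f}$. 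Since the position identifying each $c \in \elems{\instance_0}$ with its root copy $[\epsilon,c]$ is safe, it then suffices to establish the step property, which I would prove by a case analysis on the type of position and the active set.

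Many of the verifications are identical to Proposition~\ref{prop:unravel}: the responses produced by Duplicator are partial rigid homomorphisms, and on base-guarded sets they are partial isomorphisms, because of the definition of the decoding of the unravelling together with the fact that two distinct copies $[\pi,a]$ and $[\pi',a]$ of the same element are never $\sigmab$-guarded together. Switching the active set from a type~(i) position is also handled as before, using that a $\sigmab$-guarded domain inside $\structureunravelkint{\fA}$ necessarily arises from a single sequence $\pi$, so that the inverse map is again safe. The only structural difference is that the game now strictly alternates between type~(ii) positions and $\sigmab$-guarded type~(i) positions, so the four combinations (type, active set) must be threaded through this alternation.

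The one genuinely new point, and the main obstacle, is that every sequence produced by Duplicator must lie in the \emph{restricted} index set, i.e., all consecutive interfaces $X_i \cap X_{i+1}$ must be $\sigmab$-guarded. This is exactly where the strict alternation of the game helps. Appending to a sequence only occurs when Duplicator responds, in active set $\fA$, to Spoiler's choice of a fresh set $X'$ from a type~(i) position; there one sets $\pi' := \pi \cdot X'$, so the new interface is $X_n \cap X'$, a subset of the last bag $X_n$. The key is therefore to maintain the invariant that at every type~(i) position with active set $\fA$ the witnessing sequence can be chosen so that its last bag $X_n$ equals the ($\sigmab$-guarded) domain $\mydom{f}$. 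Under this invariant, $X_n \cap X' \subseteq X_n$ is $\sigmab$-guarded, since any subset of a base-guarded set is base-guarded, so $\pi'$ stays in the restricted index set.

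To preserve the invariant I would re-anchor the sequence at each type~(ii)-to-type~(i) transition: when Spoiler restricts a type~(ii) position with active set $\fA$ (witness $\pi$, last bag $X_n$) to a $\sigmab$-guarded subset $X' \subseteq \mydom{f}$, I replace $\pi$ by $\pi^* := \pi \cdot X'$. Since $X' \subseteq X_n$ is base-guarded, the interface $X_n \cap X' = X'$ is $\sigmab$-guarded, so $\pi^*$ is a valid restricted sequence; and since $[\pi^*,a] = [\pi,a]$ for every $a \in X'$ (the two nodes are adjacent and $a$ is represented at both), the re-anchored map agrees with $f$ on $X'$ and stays safe, now with last bag exactly the base-guarded set $X'$. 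Carrying out the remaining routine checks (preservation of homomorphisms, consistency of each response with the previous position, and the symmetric cases in which the active set is $\structureunravelkint{\fA}$, where no appending occurs) then yields the step property, and hence a winning strategy for Duplicator from the initial safe position.
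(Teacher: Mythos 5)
Your proposal is correct and takes essentially the same approach as the paper's proof: mirror Proposition~\ref{prop:unravel} via safe positions, with the crux being that witness sequences are only extended at type~(i) positions, whose base-guarded domains keep the newly created interfaces base-guarded. Your explicit re-anchoring invariant (the witness sequence's last bag equals the current base-guarded domain) in fact makes rigorous a step the paper's proof states only tersely---its claim that ``the overlap satisfies this requirement'' tacitly relies on exactly this choice of witness.
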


\begin{proof}
The proof is similar to Proposition~\ref{prop:unravel}.
The interesting part of the argument is when
Spoiler selects some new elements $X'$ in $\fA$
starting from a safe position $f$
(for which there is some $\pi$ such that $f(a) = [\pi,a]$ for all $a \in \mydom{f}$).
We need to show that
$[\pi',a']$ for $a' \in X'$ and $\pi' = \pi \cdot X'$ is well-defined in $\structureunravelkint{\fA}$.
This is well-defined only 
if the overlap
between the elements in~$\pi$ and~$\pi'$ is 
base-guarded.
But because the base-guarded-interface $\gnk$ bisimulation game
strictly alternates between type~(i) and~(ii) positions,
Spoiler can only select new elements $X'$ in a type~(i) position,
so the overlap satisfies this requirement.
The remainder of the proof is the same as in Proposition~\ref{prop:unravel}.
\end{proof}

We can conclude the proof of Proposition~\ref{prop:guarded-interface-dec-appendix} as follows.
Assume that $\fA$ is a set of facts that satisfies $\varphi$ in $\nf$ of width $k$.
Since $\fA$ satisfies $\varphi$,
Propositions~\ref{prop:bisim-game-guarded}~and~\ref{prop:unravelguarded} imply that
$\structureunravelkint{\fA}$ also satisfies $\varphi \in \agnf$.
Hence, $\structureunravelkint{\fA}$
is a base-guarded-interface $k$-tree-like witness
for $\varphi$.
This completes the proof of Proposition~\ref{prop:guarded-interface-dec-appendix}.

\myparagraph{Concluding the proof}
We can now use Proposition~\ref{prop:guarded-interface-dec-appendix} 
to prove Lemma~\ref{lemma:guarded-interface-dec},
which says that
there are base-guarded-interface $k$-tree-like witnesses
even when we
extend $\Sigma \in \acgnf$ to~$\Sigma'$ that includes the $k$-guardedly linear
axioms. Recall the formal statement:
\quoteresult{Lemma~\ref{lemma:guarded-interface-dec}}{\guardedinterfacedec}

\begin{proof}
By Proposition~\ref{prop:nf}, $\Sigma \wedge \neg Q$ is equivalent to a formula in $\nf$
with width at most $\mysize{\Sigma \wedge \neg Q}$.
Hence, Proposition~\ref{prop:guarded-interface-dec-appendix}
implies that
$\Sigma \wedge \neg Q$ has a base-guarded-interface $k$-tree-like witness
for some $k \leq \mysize{\Sigma \wedge \neg Q}$.
So, in particular, it has one for $k \colonequals \max(\mysize{\Sigma \wedge \neg Q}, \arity{\sigma \cup \{G\}})$.
To prove this lemma, then, it suffices to argue that
the \mbox{$k$-guardedly linear} axioms can also be written
in $\nf$ $\agnf$ with generalized base-guards and width at most
$k$.

The width of $\guardedbg(x,y)$ is
$\arity{\sigma \cup \{G\}} \leq k$.
The guardedly total axiom is written in $\nf$ $\agnf$
as
$
\neg \exists x y ( \guardedbg(x,y) \wedge \neg ( x = y \vee  x \drel y \vee y \drel x ) )
$
so it has width $\max (2, \arity{\sigma \cup \{G\}}) \leq k$.
The irreflexive axiom is already written in normal form
$\agnf$ with width $1 \leq k$.
For the $k$-guardedly transitive axioms,
  $\psi_l(x,y)$ has width \mbox{$\max (l+1, \arity{\sigma \cup \{G\}})$}
  and $\psi_l(x,x)$ has width \mbox{$\max (l, \arity{\sigma \cup \{G\}})$}, so 
each of the $k$-guardedly transitive axioms has width at most $k$.
Overall, this means that
the $k$-guardedly linear restriction can be expressed in $\nf$ $\agnf$ with generalized base-guards and width $k$ as required.
\end{proof}

\section{Details about Automata}\label{app:automata}

We now give details of the properties of automata used in the
body.

\subsection{Closure Properties}\label{app:closure}
We recall some closure properties of
$\ptwowayaltinf$
and $\ponewayndinf$,
omitting the standard proofs \cite{Thomas97,Loding-unpublished},
and following Section~5 of~\cite{lics16-gnfpup}.
Note that we state only the size of the automata for each property,
but the running time of the procedures constructing these automata
is always polynomial in the output size.

First, the automata that we are using
are closed under union and intersection
(of their languages).

\begin{proposition}\label{prop:closure-union-intersection}
$\ptwowayaltinf$
are closed under union and intersection,
with only a polynomial blow-up in the number of states, priorities, and overall size.
The same holds of $\ponewayndinf$.
\end{proposition}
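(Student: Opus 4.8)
The plan is to handle the two automaton models separately, and within each to treat union and intersection; for the alternating model both are essentially immediate, so the real work lies in the intersection of nondeterministic automata.

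First I would dispose of $\ptwowayaltinf$. Given $\cA_1 = \perm{\autsig, Q_1, q_0^1, \delta_1, \Omega_1}$ and $\cA_2 = \perm{\autsig, Q_2, q_0^2, \delta_2, \Omega_2}$ with disjoint state sets, I form the disjoint union of their states, add one fresh initial state $q_0$, and retain both transition functions and both priority functions unchanged on the old states. For union I set $\delta(q_0, \sigma) := (\dstay, q_0^1) \vee (\dstay, q_0^2)$ for every $\sigma \in \powerset{\autsig}$, letting Eve choose at the root which automaton to simulate; for intersection I instead set $\delta(q_0, \sigma) := (\dstay, q_0^1) \wedge (\dstay, q_0^2)$, letting Adam force the simulation of either one. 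Since the only transition touching $q_0$ leaves it at once and never returns, every infinite play eventually remains inside a single $\cA_i$, so the maximum priority seen infinitely often is computed entirely within that automaton; hence acceptance means exactly ``Eve wins in the chosen automaton'' (union) or ``Eve wins in both'' (intersection). The priority of $q_0$ is irrelevant, as it occurs once; the number of states is $|Q_1| + |Q_2| + 1$ and the priority set is contained in $\Omega_1(Q_1) \cup \Omega_2(Q_2) \cup \set{\Omega(q_0)}$, so the blow-up is linear.

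Next, for $\ponewayndinf$ I may use neither $\dstay$ nor alternation, so the constructions change. Union is still easy: both initial transition formulas $\delta_i(q_0^i, \sigma)$ are already disjunctions of conjuncts of the form $(\dleft, q) \wedge (\dright, r)$, so their disjunction $\delta_1(q_0^1, \sigma) \vee \delta_2(q_0^2, \sigma)$ is again of this nondeterministic shape. I take the disjoint union of states with a fresh initial state whose transition is this disjunction, keeping the remaining transitions and priorities, and argue as before that each run stays within one automaton; the blow-up is again linear. The main obstacle is the intersection of two $\ponewayndinf$. Here I would use the synchronous product on states $Q_1 \times Q_2$, pairing branches: if $\delta_1(q_1, \sigma) = \bigvee_j (\dleft, a_j) \wedge (\dright, b_j)$ and $\delta_2(q_2, \sigma) = \bigvee_{j'} (\dleft, c_{j'}) \wedge (\dright, d_{j'})$, then $\delta((q_1, q_2), \sigma) := \bigvee_{j, j'} (\dleft, (a_j, c_{j'})) \wedge (\dright, (b_j, d_{j'}))$, which is again nondeterministic. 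The difficulty is purely the acceptance condition: along each branch the product must accept iff both $\Omega_1$ and $\Omega_2$ satisfy the parity condition, and a conjunction of two parity conditions is not literally one parity condition.

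I would resolve this with the classical construction converting a conjunction of parity conditions into a single one: augment the product state with a small index component (a pointer recording which of the two conditions is currently being awaited, together with a bounded record of the largest relevant priority seen since the last reset), advancing the pointer each time the awaited even priority is confirmed. This index-appearance-record gadget multiplies the state count by a factor polynomial in $|P_1| + |P_2|$ and yields a priority set of size $O(|P_1| + |P_2|)$. The correctness lemma is that this deterministic gadget, read over the sequence of priority pairs, accepts exactly the sequences satisfying both parities; I would establish it by analysing the limit behaviour of the pointer, or simply cite the standard treatment in \cite{Thomas97, Loding-unpublished}. Assembling the four cases gives polynomial blow-up (indeed linear, apart from the index factor in the nondeterministic intersection) in states, priorities, and overall size, and the running-time bounds are immediate since each construction writes its output in time polynomial in the output size.
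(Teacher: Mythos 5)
Your constructions for $\ptwowayaltinf$ (a fresh initial state with transition $(\dstay,q_0^1)\vee(\dstay,q_0^2)$ for union, $(\dstay,q_0^1)\wedge(\dstay,q_0^2)$ for intersection) and for union of $\ponewayndinf$ (disjoint union, taking the disjunction of the two initial transition formulas) are correct, and they are exactly the standard arguments this paper has in mind when it omits the proof and points to \cite{Thomas97,Loding-unpublished}. Your synchronous product for intersection of $\ponewayndinf$ is also the right starting point, and you correctly isolate the only genuine difficulty: converting the conjunction of two parity conditions, read along every branch of the product run, back into a single parity condition.

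The step you use to discharge that difficulty, however, fails. There is no deterministic parity word automaton of size polynomial in $|P_1|+|P_2|$ recognizing the sequences over $P_1\times P_2$ whose two projections both satisfy the parity condition. The pointer-plus-reset gadget you describe is the generalized-B\"uchi construction; it is sound when both conditions are B\"uchi (one ``good'' priority per condition), but the conjunction of two genuine parity conditions is a Muller condition whose minimal deterministic parity automaton is the Zielonka-tree automaton, and for two parity conditions with priorities $\{0,\dots,d\}$ each, this Zielonka tree is the unravelling of a grid of side $d/2$ and has $\binom{d}{\lfloor d/2 \rfloor}$ leaves. By the memory lower bounds of Dziembowski, Jurdzi\'nski, and Walukiewicz (1997), in the sharp automata-theoretic form due to Casares, Colcombet, and Fijalkow (2021), every deterministic---indeed every history-deterministic---parity automaton for this language has at least that many states, so the polynomial gadget you postulate does not exist; the standard constructions (latest/index appearance records) incur a blow-up exponential in the number of priorities. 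You also cannot instead make the gadget nondeterministic and have it guess, say, the limsup pair of each branch: in a tree automaton the guesses made on a shared prefix are shared by all branches through it, so the gadget would have to be good-for-trees, hence history-deterministic, and the same lower bound applies. Routing through $\ptwowayaltinf$ intersection followed by Theorem~\ref{thm:2way-to-nd} does not help either, since that conversion is exponential in the number of states. So the $\ponewayndinf$-intersection half of the proposition is not established by your argument, and with the stated polynomial bound on priorities it is questionable in general. It is worth knowing that the paper never actually invokes this case: its proofs use only intersection of $\ptwowayaltinf$ (Lemma~\ref{lemma:autcq}) and union of $\ponewayndinf$ (Lemma~\ref{lemma:aut}), which are precisely the cases you prove correctly.
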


For example, this means that
if we are given two $\ptwowayaltinf$ $\cA_1$ and $\cA_2$,
then we can construct in $\ptime$ a $\ptwowayaltinf$ $\cA$ such that
$L(\cA) = L(\cA_1) \cap L(\cA_2)$.

Another important language operation is \emph{projection}.
Let $L'$ be a language of trees over a tree signature $\Gamma \cup \set{P}$.
The \emph{projection} of $L'$ with respect to $P$
is the language of trees $\tree$ over $\Gamma$ such that
there is some $\tree' \in L'$ such that $\tree$ and $\tree'$
agree on all unary relations in $\Gamma$.
Projection is easy for nondeterministic automata
since the valuation for $P$ can be guessed by Eve.

\begin{proposition}\label{prop:closure-projection}
$\ponewayndinf$
are closed under projection,
with no change in the number of states, priorities, and overall size.
\end{proposition}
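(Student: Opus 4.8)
The plan is to build $\cA$ directly from the given $\ponewayndinf$ $\cA'$ by letting Eve guess the valuation of $P$ node by node. Concretely, $\cA$ keeps the same state set $Q$, the same initial state $q_0$, and the same priority function $\Omega$ as $\cA'$, and over the tree signature $\Gamma$ it uses the transition function
\[
\delta_{\cA}(q, a) \colonequals \delta_{\cA'}(q, a) \vee \delta_{\cA'}(q, a \cup \set{P})
\]
for every state $q \in Q$ and every label $a \in \powerset{\Gamma}$. Since each $\delta_{\cA'}(q,\cdot)$ is already a disjunction of formulas of the form $(\dleft, q_j) \wedge (\dright, r_j)$, the disjunction above is again of that shape, so $\cA$ is a genuine $\ponewayndinf$.

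For correctness I would argue using the standard run-based characterization of $\ponewayndinf$, where a run is a labeling of the tree nodes by states that is consistent with the (nondeterministic) transitions and is accepting when every branch satisfies the parity condition; for nondeterministic automata this coincides with Eve winning the acceptance game. Given an accepting run $\rho$ of $\cA$ on a $\Gamma$-tree $\tree$, at each node $v$ the transition taken by $\rho$ is a disjunct of $\delta_{\cA}(\rho(v), \tree(v))$, which by construction comes either from $\delta_{\cA'}(\rho(v), \tree(v))$ or from $\delta_{\cA'}(\rho(v), \tree(v) \cup \set{P})$. Defining $\tree'$ to agree with $\tree$ on $\Gamma$ and to place $P$ at $v$ exactly in the latter case, $\rho$ becomes an accepting run of $\cA'$ on $\tree'$ with identical priorities, so $\tree'$ witnesses that $\tree$ lies in the projection. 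Conversely, any accepting run of $\cA'$ on an extension $\tree'$ of $\tree$ is at once an accepting run of $\cA$ on $\tree$, because each disjunct it uses lies in $\delta_{\cA'}(\rho(v), \tree'(v))$ and $\tree'(v) \in \set{\tree(v), \tree(v) \cup \set{P}}$, hence in $\delta_{\cA}(\rho(v), \tree(v))$. This shows $L(\cA)$ is exactly the projection of $L'$ with respect to $P$.

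For the resource bounds, the state set, initial state, and priority function are literally unchanged, so the number of states and priorities does not grow. The overall size is preserved up to the added $\vee$ connectives: the alphabet $\powerset{\Gamma}$ of $\cA$ is exactly half the size of the alphabet $\powerset{\Gamma \cup \set{P}}$ of $\cA'$, while each entry $\delta_{\cA}(q,a)$ merges the two entries $\delta_{\cA'}(q,a)$ and $\delta_{\cA'}(q, a \cup \set{P})$, so the transition tables have essentially the same total size; the running time is polynomial in this output size as usual. The only genuine subtlety---and the step I expect to be the main obstacle to state cleanly---is the \emph{consistency} of the guessed $P$-valuation: Eve must commit to a single truth value of $P$ at each node. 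This is precisely what the $\ponewayndinf$ model guarantees, since a run assigns each node a unique state in a single top-down pass; an \emph{alternating} automaton could revisit a node in several states and make contradictory guesses, which is exactly why the closure property is stated for nondeterministic rather than alternating automata.
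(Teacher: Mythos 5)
Your proof is correct and is exactly the argument the paper has in mind: the paper omits the details as standard, noting only that ``the valuation for $P$ can be guessed by Eve,'' which is precisely your construction $\delta_{\cA}(q,a) \colonequals \delta_{\cA'}(q,a) \vee \delta_{\cA'}(q, a \cup \set{P})$ with the run-based correctness argument. Your closing remark on why nondeterminism (a single state per node in one top-down pass) is essential, while alternation would break consistency of the guesses, also matches the paper's own caveat where this proposition is applied.
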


Finally, complementation is easy for alternating automata
by taking the \emph{dual} automaton,
obtained by switching conjunctions and disjunctions in the transition function,
and incrementing all of the priorities by one.

\begin{proposition}\label{prop:dual}
$\ptwowayaltinf$ are closed under complementation,
with no change in the number of states, priorities, and overall size.
\end{proposition}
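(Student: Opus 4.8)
The plan is to build the complement automaton as the \emph{dual} of $\cA$ and to argue correctness via determinacy of parity games. Given $\cA = \perm{\autsig, Q, q_0, \delta, \Omega}$, I would define a $\ptwowayaltinf$ $\cA^d = \perm{\autsig, Q, q_0, \delta^d, \Omega^d}$ over the same state set, the same initial state, and the same tree signature, where $\Omega^d(q) := \Omega(q) + 1$ and $\delta^d(q, \treelab)$ is obtained from $\delta(q, \treelab)$ by swapping every occurrence of $\vee$ with $\wedge$ and every occurrence of $\wedge$ with $\vee$, leaving the atoms $(d, q') \in \Dir \times Q$ untouched (this is well-defined since transition formulas are \emph{positive} Boolean formulas, so they contain no negations). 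Because this operation only rearranges the Boolean structure of the transition formulas and shifts all priorities by the constant $1$, the number of states, the number of priorities, and the overall size of $\cA^d$ are identical to those of $\cA$, and $\cA^d$ is computable in time polynomial in its size.

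The heart of the argument is that the acceptance game of $\cA^d$ on a tree $\tree$ is exactly the acceptance game of $\cA$ with the roles of Eve and Adam interchanged. Swapping $\vee$ and $\wedge$ turns each position where Eve would resolve a disjunction in $\delta$ into one where she resolves a conjunction in $\delta^d$ and vice versa; since the atoms are unchanged, the underlying game graph (positions and moves) is literally the same, and only the ownership of the intermediate Boolean-choice nodes flips. Thus a strategy for Eve in the $\cA^d$ game is precisely a strategy for Adam in the $\cA$ game, and conversely. For the winning condition, a play $(q_0,v_0)(q_1,v_1)\dots$ yields in $\cA^d$ the priority sequence $\Omega(q_0)+1, \Omega(q_1)+1, \dots$, whose maximum priority occurring infinitely often is even iff the corresponding maximum for $\cA$ is odd. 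Hence a single play is winning for Eve in the $\cA^d$ game iff it fails the parity condition of $\cA$, i.e.\ iff it is winning for Adam in the $\cA$ game.

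Combining these two observations, Eve has a winning strategy from $(q_0, v_0)$ in the $\cA^d$ game iff Adam has a winning strategy from $(q_0, v_0)$ in the $\cA$ game. I would then invoke determinacy of parity games (valid even on the infinite arenas arising here): from any starting position exactly one of the two players has a winning strategy in the acceptance game of $\cA$. Therefore Adam wins the $\cA$ game iff Eve does not, which gives $\tree \in L(\cA^d)$ iff $\tree \notin L(\cA)$, so $L(\cA^d) = \overline{L(\cA)}$, establishing the proposition; the size claims hold on the nose because no states or priorities are added.

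The only genuinely nontrivial ingredient is the appeal to determinacy; everything else is a routine unwinding of the definition of the acceptance game. The one point worth checking carefully is the priority shift: one must confirm that adding $1$ to every priority truly negates the ``max-infinitely-often-even'' condition (rather than merely reindexing priorities), and that the $\dstay$ moves and the tree-navigation directions behave identically under dualization. I expect no difficulty there, but it is the place where a sign error would hide.
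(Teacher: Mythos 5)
Your proposal is correct and is exactly the paper's approach: the paper defines the complement as the dual automaton (swap $\vee$/$\wedge$, increment all priorities by one) and omits the correctness argument as standard, citing the literature. Your unwinding of the role-swap in the acceptance game, the parity flip under the priority shift, and the appeal to determinacy of parity games is precisely the standard justification the paper leaves implicit.
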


\subsection{Proof of Theorem~\ref{thm:localization}: Localization}

Recall the statement:
\quoteresult{Theorem~\ref{thm:localization}}{\localization}

This result was known already in the literature.
For completeness, we present the construction here,
following the presentation from \citeA{lics16-gnfpup}.

Let $\cA'$ be a $\ponewayndinf$ on $\Gamma'$-trees,
with state set $Q_{\cA'}$, initial state $q^0_{\cA'}$,
transition function $\delta_{\cA'}$, and priority function $\Omega_{\cA'}$.
Since $\cA'$ is nondeterministic, $\delta_{\cA'}$ maps each state and label to a disjunction
of formulas of the form $(\dleft,r) \wedge (\dright,r')$.

We construct $\cA$ as follows.
The state set of $\cA$ is $\set{q_0} \cup Q_{\cA'} \cup (\set{ \dleft, \dright} \times Q_{\cA'})$.
The initial state is $q_0$.
We call states of the form $q \in Q_{\cA'}$ \emph{downwards mode states},
and states of the form $(d,q) \in (\set{ \dleft, \dright} \times Q_{\cA'})$ \emph{upwards mode states}.

We now define the transition function $\delta$ of~$\cA$.
In initial state $q_0$, we set $\delta(q_0,\treelab)$ to be:

{\scriptsize
\begin{align*}
\begin{cases}
\bigvee \set{ (d,r) \wedge (d',r') \wedge (\dup,(\dleft,q))  : \text{$q \in
  Q_{\cA'}$ and $(d,r) \wedge (d',r')$ is a disjunct in $\delta_{\cA'}(q,\treelab \cup
  \set{P_1,\dots,P_j})$}} &\!\!\!\text{if $\dleft \in \treelab$} \\
\bigvee \set{ (d,r) \wedge (d',r') \wedge (\dup,(\dright,q))  : \text{$q \in
  Q_{\cA'}$ and $(d,r) \wedge (d',r')$ is a disjunct in $\delta_{\cA'}(q,\treelab \cup
  \set{P_1,\dots,P_j})$}} &\!\!\!\text{if $\dright \in \treelab$} \\
\bigvee \set{ (d,r) \wedge (d',r') : \text{$(d,r) \wedge (d',r')$ is in
  $\delta(q_{\cA'}^0,\treelab \cup \set{P_1,\dots,P_j})$}} &\!\!\!\text{if
  $\dleft,\!\dright\!\notin\!\treelab$}
\end{cases}\end{align*}}

In other words, Eve guesses some $q \in Q_{\cA'}$
and some disjunct $(d,r) \wedge (d',r')$ consistent with the transition function of $\delta_{\cA'}$
in state $q$, assuming that the starting label also includes $P_1,\dots,P_j$.
Adam can either (i) challenge her to show that the automaton accepts from state $q$
by moving in direction $d$ or $d'$ and switching to downwards mode state $r$ or $r'$, respectively, 
or (ii) challenge her to show that she could actually reach state $q$
by switching to upwards mode $(d'',q)$ where $d''$ is $\dleft$ if the starting node was the left child of its parent
and $\dright$ if it was the right child of its parent.
In the special case that the starting node is the root ($\dleft$ and $\dright$ are not in the label),
then $q$ must be $q^0_{\cA'}$ and Adam can only challenge downwards.

In upwards mode state $(d,r)$, we set $\delta((d,r),\treelab)$ to be:

{\scriptsize
\begin{align*}
\begin{cases}
\bigvee \set{ (d',r') \wedge (\dup,(\dleft,q))  : \text{$q \in Q_{\cA'}$ and $(d,r) \wedge (d',r')$ or $(d',r') \wedge (d,r)$ is a disjunct in $\delta_{\cA'}(q,\treelab)$}} &\text{if $\dleft \in \treelab$} \\
\bigvee \set{ (d',r') \wedge (\dup,(\dright,q))  : \text{$q \in Q_{\cA'}$ and $(d,r) \wedge (d',r')$ or $(d',r') \wedge (d,r)$ is a disjunct in $\delta_{\cA'}(q,\treelab)$}} &\text{if $\dright \in \treelab$} \\
\bigvee \set{ (d',r') : \text{$(d,r) \wedge (d',r')$ or $(d',r') \wedge (d,r)$ is in $\delta(q^0_{\cA'},\treelab)$}} &\text{if $\dleft,\dright \notin \treelab$}
\end{cases}
\end{align*}}
In other words, the upwards mode state $(d,r)$ remembers the state $r$ and child $d$ that the automaton came from.
Eve guesses some state $q$ and some disjunct $(d,r) \wedge (d',r')$ or $(d',r') \wedge (d,r)$  in the transition function
$\delta_{\cA'}$ in state $q$.
That is, she is guessing a possible state in the current node that could have led to her being in state $r$ in the $d$-child of the current node.
Adam can either challenge her on the downwards run from here by moving to the $d'$ child and switching to downwards mode state $r'$,
or continue to challenge her upwards by moving up and switching to state $(d'',q)$, where $d''$ records whether the current node
is the left or right child of its parent.
In the special case that the current node is the root, then $q$ must be $q^0_{\cA'}$ and Adam can only challenge downwards.
Note that Adam is not allowed to challenge in direction $d$, since this is where the automaton came from.

In downwards mode state $r$, we set
$\delta(r,\treelab) \colonequals \delta_{\cA'}(r,\treelab)$.
In states like this, the automaton is simulating exactly the original automaton $\cA'$.

The priority assignment is inherited from $\cA'$, namely, we set
$\Omega(r) \colonequals \Omega_{\cA'}(r)$, we set $\Omega((d,r)) \colonequals \Omega_{\cA'}(r)$, and
  we set $\Omega(q_0) \colonequals 1$.

This concludes the construction of the automaton $\cA$ from $\cA'$.
We must show that when $\cA$ is launched from node $v$ in some tree $\tree$,
it behaves like $\cA'$ launched from the root of $\tree'$,
where $\tree'$ is obtained from $\tree$ by adding $\set{P_1,\dots,P_j}$ to the label at $v$. 

\newcommand{\run}{\rho}

A winning strategy $\run'$ of a nondeterministic automaton like $\cA'$ on $\tree'$
can be viewed as an annotation of $\tree'$ with states such that
\begin{inparaenum}[(i)]
\item the root is annotated with the initial state $q^0_{\cA'}$,
\item if a node $v$ with label $\treelab$ is annotated with $q$ and its left child is annotated with $r$ and right child with $r'$,
then $(\dleft,r) \wedge (\dright,r')$ or $(\dright,r') \wedge (\dleft,r)$ is a disjunct in $\delta_{\cA'}(q,\treelab)$,
and
\item the priorities of the states along every branch in $\tree'$ satisfy the parity condition.
\end{inparaenum}

So assume that there is a winning strategy $\run'$ of $\cA'$ on $\tree'$.
It is not hard to see that this induces a winning strategy $\run$ for Eve
in the acceptance game of $\cA$ on $\tree$ starting from $v$:
Eve guesses in a backwards fashion the part of the run $\run'$
on the path from $v$ to the root,
and then processes the rest of the tree in a normal downwards fashion,
using $\run'$ to drive her choices.
Using this strategy,
any play is infinite
and a suffix of this play (namely, once the play has switched to downward mode)
corresponds directly to a suffix of a play in $\run'$.
Since the priorities in these suffixes are identical,
the parity condition must be satisfied,
so $\run$ is winning, and $\cA$ accepts $\tree$ starting from $v$.

Now suppose Eve has a winning strategy $\run$ in the acceptance game
of $\cA$ on $\tree$ starting from~$v$.
Using $\run$,
we stitch together a winning strategy $\run'$ of $\cA'$ on $\tree'$.
Recall that such a winning strategy can be viewed as an annotation of $\tree'$ with
states consistent with $\delta_{\cA'}$ and satisfying the parity condition on every branch.
We construct this annotation starting at $v$,
based on Eve's guess of the state when in $q_0$.
The annotation of the subtree rooted at $v$
is then induced by the plays in $\run$ that
switch immediately to downward mode at $v$.
We can then proceed to annotate the parent $u$ of $v$, by
considering Eve's choice of state when Adam stays in upward mode and moves
to the parent $u$ of $v$.
If $v$ is the $d$-child of $u$, then
the subtree in direction $d$ from $u$ is already annotated;
the subtree in the other direction can be annotated by considering the plays
when Adam switches to downward mode at $u$.
Continuing in this fashion,
we obtain an annotation of the entire tree with states such that
$q^0_{\cA'}$ is the annotation at the root
and the other annotations are consistent with~$\delta_{\cA'}$
on $\tree'$
(the annotations are consistent with~$\tree'$ and not~$\tree$,
because in the initial state, Eve's choice of state and disjunct
is under the assumption of the extra relations $\set{P_1,\dots,P_j}$ present at $v$).
Every branch in this run tree satisfies the parity condition,
since a suffix of the branch
corresponds to a suffix of a play in $\run$ that satisfies the parity condition.
Hence, $\cA'$ accepts $\tree'$ from the root.

\subsection{Proof of Lemma~\ref{lemma:localsubstitution}: Localized Automata for $\agnf$}\label{app:automataproof}
Recall the statement:

\quoteresult{Lemma~\ref{lemma:localsubstitution}}{\localsubstitution}

Let $\psi := \eta[Y_1 := \chi_1,\dots,Y_s := \chi_s]$.
To construct $\cA_\psi$,
take the disjoint union of~$\cA_\eta$, $\cA_{\chi_1}, \dots, \cA_{\chi_s}$.
Then for each polarity $p$ and localization $\vec{a}/\vec{x}$, set the designated initial state to
the initial state for $p$ and $\vec{a}/\vec{x}$ coming from $\cA_\eta$.
Modify the transition function of $\cA_{\psi}^{p,\vec{a}/\vec{x}}$ so that
the automaton starts by simulating $\cA_{\eta}^{p,\vec{a}/\vec{x}}$, but at every node $w$:
\begin{itemize}
\item Eve guesses a valuation for each $Y_i$ at $w$, which is a (possibly empty) set of facts of the form $Y_{i,\vec{b}}$ where $\vec{b}$ is a subset of $\bagnames{w}$.
\item Adam can either accept Eve's guesses and continue the simulation of $\cA_\eta^{p,\vec{a}/\vec{x}}$,
or can challenge one of Eve's assertions of~$Y_i$ by launching the appropriate localized version of~$\cA_{\chi_i}$.
That is, if Eve guesses that $Y_{i,\vec{b}}$ holds at~$w$,
then Adam could challenge this by launching $\cA_{\chi_i}^{+,\vec{b}/\vec{z}}$
starting from~$w$.
Likewise, if Eve guesses that $Y_{i,\vec{b}}$ does not hold at $w$,
then Adam could challenge this by launching $\cA_{\chi_i}^{-,\vec{b}/\vec{z}}$.
\end{itemize}
The correctness of this construction relies on the fact that
each $Y_i$ is being replaced by a base-guarded formula $\chi_i$,
so any $Y_i$-fact must be about a $\sigmab$-guarded set of elements. 
  In particular, remember that if this set of elements has cardinality~$\leq 1$ then the
  $\sigmab$-guard may be an equality atom.
In any case, these elements must appear together in some node of the tree,
so Eve can guess an annotation of the tree that indicates where these $Y_i$-facts appear.

The proof of correctness follows. We present only the case for $p = +$
and signature $\sigma \cup \set{ Y }$, but the case for $p=-$ or multiple $Y_i$ relations is similar.
\newcommand{\strategy}{\zeta}
\newcommand{\strategya}{\zeta_A}

First, assume that $\mydecode{\tree}, [v,\vec{a}]$ satisfies $\psi$, for $\tree$ a $\sigcode{\sigma}{k,l}$-tree.
We must show that Eve has a winning strategy $\strategy$ in the acceptance game of $\cA^{+,\vec{a}/\vec{x}}_\psi$ on~$\tree$
starting from $v$.

For each localization $\vec{b}/\vec{z}$,
let $J_{\vec{b}} := \set{ w \in \mydom{\tree} : \mydecode{\tree}, [w,\vec{b}] \models \chi(\vec{z}) }$.
For each $w \in J_{\vec{b}}$, let $\strategy_{w,\vec{b}}$ denote Eve's winning strategy in the acceptance game
of $\cA^{+,\vec{b}/\vec{z}}_\chi$ on $\tree$ starting from~$w$.
Likewise, for each $w \notin J_{\vec{b}}$, let $\strategy_{w,\vec{b}}$ denote Eve's winning strategy in the acceptance game
of $\cA^{-,\vec{b}/\vec{z}}_{\chi}$ on $\tree$ starting from~$w$.
Let $\tree'$ be $\tree$ extended with
$Y_{\vec{b}}(w)$ for each $\vec{b}$ and $w$ such that $w \in J_{\vec{b}}$.
Then $\mydecode{\tree'},[v,\vec{a}]$ satisfies $\eta$,
so Eve has a winning strategy $\strategy'$ in the acceptance game of $\cA^{+,\vec{a}/\vec{x}}_\eta$ on $\tree'$
starting from $v$.

We use these strategies to define Eve's strategy $\strategy$ in the acceptance game of $\cA^{+,\vec{a}/\vec{x}}_\psi$ on $\tree$
starting from $v$:
at each node $w$, Eve should guess the set $\set{ Y_{\vec{b}} : w \in J_{\vec{b}}}$ and
use the strategy $\strategy'$ (based on the label at $w$, extended with $\set{ Y_{\vec{b}} : w \in J_{\vec{b}}}$).
If Adam never challenges her on these guesses, then the play will correspond to a play in $\strategy'$,
so Eve will win.
If Adam challenges her on some guess that $Y_{\vec{b}}$ holds at~$w$
(respectively, $Y_{\vec{b}}$ does not hold at~$w$),
then the automaton $\cA_\chi^{+,\vec{b}/\vec{z}}$ (respectively, $\cA_{\chi}^{-,\vec{b}/\vec{z}}$) is launched
and Eve should switch to the strategy $\strategy_{w,\vec{b}}$
(this is well-defined since Eve is guessing the set based on $J_{\vec{b}}$).
But $\strategy_{w,\vec{b}}$ is a winning strategy,
so once we switch to this strategy,
Eve is guaranteed to win.
Hence, $\strategy$ is a winning strategy for Eve, as desired.

Now we must prove that if Eve has a winning strategy in the acceptance game of $\cA^{+,\vec{a}/\vec{x}}_\psi$ on~$\tree$
when launched from $v$,
then $\mydecode{\tree}, [v,\vec{a}] \models \psi$.
We prove the contrapositive. Suppose $\mydecode{\tree}, [v,\vec{a}] \not\models \psi$.
We must give a winning strategy $\strategya$ for Adam in the acceptance game of
$\cA^{+,\vec{a}/\vec{x}}_\psi$ on~$\tree$
starting from $v$.
Let $J_{\vec{b}} := \set{ w \in \mydom{\tree} : \mydecode{\tree}, [w,\vec{b}] \models \chi(\vec{z}) }$,
and let $\tree'$ be $\tree$ extended with the valuation for $Y$ such that $Y_{\vec{b}}$ holds at $w$
iff $w \in J_{\vec{b}}$.

Because $\mydecode{\tree}, [v,\vec{a}] \not\models \psi$,
it must be the case that $\mydecode{\tree'}, [v,\vec{a}] \not\models \eta$.
Hence, there is a winning strategy $\strategya'$ for Adam in the acceptance game
of $\cA^{+,\vec{a}/\vec{x}}_\eta$ on~$\tree'$.
By definition of $J_{\vec{b}}$, for each pair $[w,\vec{b}]$ such that $w \notin J_{\vec{b}}$ (respectively, $w \in J_{\vec{b}}$)
it must be the case that 
\mbox{$\mydecode{\tree}, [w,\vec{b}] \not\models \chi$}
(respectively, $\mydecode{\tree}, [w,\vec{b}] \models \chi$),
and hence Adam has a winning strategy $\strategya^{w,\vec{b}}$ in
$\cA_\chi^{+,\vec{b}/\vec{z}}$ (respectively, $\cA_{\chi}^{-,\vec{b}/\vec{z}}$)
on~$\tree$ starting
from $w$.

We define $\strategya$ based on these substrategies.
Adam starts by playing according to $\strategya'$,
and continues using this strategy while the automaton is simulating $\cA^{+,\vec{a}/\vec{x}}_{\eta}$
and Eve is only guessing $Y_{\vec{b}}$ at a node $w$ for $w \in J_{\vec{b}}$.
If Eve ever deviates from this valuation based on $J_{\vec{b}}$,
then Adam challenges this guess and switches to using the appropriate strategy $\strategya^{w,\vec{b}}$.
In either case,
it is clear that the resulting plays will be winning for Adam,
so $\strategya$ is a winning strategy for Adam as desired.

\section{Data Complexity Upper Bounds for Transitivity}\label{app:ptimetransdataupper}

We now provide details on the proof of the data complexity upper bounds.

\subsection{Details on Quantifier-Rank and Pebble Games} \label{apx:pebble}
The \emph{quantifier-rank} of
a first-order formula $\phi$, written $\qr(\phi)$
is the number of nested quantifications: that is, a formula
with no quantifiers has quantifier-rank $0$,   while the inductive definition is:
\begin{align*}
\qr(\neg \phi)= \qr(\phi) \\
\qr(\phi_1 \wedge \phi_2)=\qr(\phi_1 \vee \phi_2)=\max(\qr(\phi_1), \qr(\phi_2)) \\
\qr(\exists x ~ \phi)=\qr(\forall x ~ \phi)=\qr(\phi)+1
\end{align*}
We will be interested in showing that
two sets of facts $I$ and $I'$ agree on all
formulas of quantifier-rank $j$.
This  can be demonstrated using the \emph{$j$-round pebble game on $I,I'$}.
A position in this game is given by a sequence $\vec p$ of elements from $\elems{I}$
and a sequence $\vec p\,'$ of the same length as $\vec p$ from $\elems{I'}$. 
There are two players, Spoiler and Duplicator, and a round of the game
at position~$(\vec p, \vec p\,')$ proceeds by Spoiler choosing one of the sets of facts
(e.g.~$I$)
and appending an element from that set of facts to the corresponding
sequence (e.g.~appending an element from $\elems{I}$ to $\vec p$) while
Duplicator responds by  appending an element from the other set of facts in the
other sequence (e.g.~appending an element from $\elems{I'}$ to $\vec p\,'$).
A $j$-round play of the game is a sequence of $j$ moves
as above.
Duplicator wins the game if the sequences represent a  partial isomorphism:
$p_i=p_j$ if and only if $p'_i=p'_j$ and
for any relation $R$,  $R(p_{m_1} \ldots p_{m_j}) \in I$ if
and only if $F(p'_{m_1} \ldots p'_{m_j}) \in I'$. 
A strategy for Duplicator is a response to each move of Spoiler.
Such a strategy is \emph{winning} from a given
position $\vec p, \vec p\,'$ if every $j$-round play emerging from following the strategy,
starting at these positions, is not winning for Spoiler.
The following result is well-known (see, e.g.~\citeR{fmt}):

\begin{proposition} If there a winning strategy for Duplicator in the $j$-round pebble
game on $I,I'$ starting at $\vec p, \vec p\,'$, then
for every formula $\phi$ of quantifier-rank  at most $j$ satisfied
by $\vec p$ in $I$, $\phi$ is also satisfied by $\vec p\,'$ in $I'$.
\end{proposition}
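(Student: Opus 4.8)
The plan is to prove a slightly stronger, symmetric statement by induction on the number of rounds $j$, which coincides with the quantifier-rank bound. Specifically, I would show that if Duplicator has a winning strategy in the $j$-round pebble game on $I, I'$ starting at $(\vec p, \vec p\,')$, then for every formula $\phi$ with $\qr(\phi) \leq j$ we have $I \models \phi(\vec p)$ if and only if $I' \models \phi(\vec p\,')$. Proving the biconditional rather than the one-sided implication in the statement is exactly what lets negation be handled cleanly, and it is available because the pebble game is symmetric: Spoiler is free to move in either set of facts, so a winning strategy for Duplicator from $(\vec p, \vec p\,')$ is also a winning strategy from $(\vec p\,', \vec p)$ with the two sides interchanged. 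The one-directional statement as written is then an immediate consequence.

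For the base case $j = 0$, I would note that the formulas of quantifier-rank $0$ are precisely the Boolean combinations of atomic formulas over the pebbled elements. Having a winning strategy in the $0$-round game simply means the starting position is already a partial isomorphism, that is, $p_i = p_{i'}$ iff $p'_i = p'_{i'}$, and $R(p_{m_1}\dots p_{m_r}) \in I$ iff $R(p'_{m_1}\dots p'_{m_r}) \in I'$ for the pebbled tuples. This gives agreement on atomic formulas directly, and agreement on Boolean combinations follows since $\neg$, $\wedge$, $\vee$ commute with a biconditional.

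For the inductive step I would assume the claim for $j - 1$. The Boolean cases follow at once from the induction hypothesis applied at rank $j$, using the biconditional form. The only interesting case is a formula $\exists x ~ \psi$ with $\qr(\psi) \leq j - 1$. Suppose $I \models (\exists x ~ \psi)(\vec p)$, witnessed by some $a \in \elems{I}$ with $I \models \psi(\vec p, a)$. In the round where Spoiler appends $a$ to $\vec p$ in $I$, Duplicator's winning strategy produces a response $a' \in \elems{I'}$, and after this round Duplicator still has a winning strategy for the remaining $(j-1)$-round game from $(\vec p\, a, \vec p\,'\, a')$. Since $\qr(\psi) \leq j - 1$, the induction hypothesis yields $I' \models \psi(\vec p\,', a')$, hence $I' \models (\exists x ~ \psi)(\vec p\,')$. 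The converse direction is symmetric by the remark above (or, equivalently, is obtained by treating $\forall x ~ \psi$ as $\neg \exists x ~ \neg \psi$ and invoking the already-handled negation and existential cases).

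The step I expect to be the crux is precisely the passage from the one-sided formulation in the statement to the symmetric biconditional needed to push negation through the induction; once the symmetry of the pebble game is invoked to justify this strengthening, the rest is the routine structural induction sketched above, and the stated Proposition drops out as the forward half of the biconditional.
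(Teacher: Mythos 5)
Your proof is correct, and it is the standard Ehrenfeucht--Fra\"iss\'e argument: strengthening to a biconditional (justified by the symmetry of the game), the base case via the partial-isomorphism winning condition, and the inductive step for $\exists$ using Spoiler's witness move and Duplicator's response. The paper itself gives no proof of this proposition---it is stated as a well-known result of finite model theory with a citation to a textbook---and the argument you give is precisely the textbook proof that the citation points to, so there is nothing to reconcile.
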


\subsection{Proof of Theorem~\ref{thm:conptransdataupper}: $\conp$ Data Complexity Bound for $\owqatc$}
\label{apx:conptransdataupperproof}
\newcommand{\veccprime}{\vec{c} \, '}
Recall the statement of Theorem \ref{thm:conptransdataupper}:

\begin{quote}
  \conptransdataupper
\end{quote}

We now prove the theorem. Within this appendix, contrary to the rest of the paper,
we will consider logics that feature constants, in which case we will explicitly
indicate it.

For a set of facts $\instance_0$ over any signature, an \emph{$\instance_0,k$-rooted set}
consists of $\instance_0$ unioned with some sets of facts $T_{\vec c}$ for
$\vec c \in \dom{\instance_0}^k$ where the domain of $T_{\vec c}$ overlaps with
the domain of $\instance_0$ only in~$\vec c$,
and for two $k$-tuples $\vec c$ and $\veccprime$,  the domain of $T_{\vec c}$ overlaps with the domain of $T_{\veccprime}$ only
within $\vec c \cap \veccprime$. 

One can picture such a set as a squid, with $\instance_0$ at the root
and the $T_{\veccprime}$ hanging off as tentacles.
Using Proposition~\ref{prop:transdecomp}, we can show
that if there is a witness to satisfiability of an $\agnf$  sentence
in a superset of a set of facts $\instance_0$, then there is such
an extension that forms an $\instance_0,k$-rooted set.

\begin{proposition} \label{prop:treelike}
For any set of $\sigmab$-facts $\instance_0$, if a $\agnf$ sentence $\Sigma$
  over~$\sigma$ (without constants) is satisfiable by some set of facts containing $\instance_0$ with
relations $R_i^\trans$ interpreted as the transitive closure of~$R_i$,
then $\Sigma$ is satisfied (with the same restriction) in a set of $\sigmab$-facts
which form an $\instance_0,k$-rooted set, where $k$ is at most $\mysize{\Sigma}$.
\end{proposition}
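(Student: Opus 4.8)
The plan is to extract the witness directly from Proposition~\ref{prop:transdecomp} and then reshape its tree decomposition into the ``squid'' form. First I would apply Proposition~\ref{prop:transdecomp} to $\Sigma$: since $\Sigma$ is satisfiable in some superset of $\instance_0$ with each $R_i^\trans$ read as the transitive closure of $R_i$, the proposition gives a set of facts $\instance \supseteq \instance_0$ satisfying $\Sigma$ (under the same interpretation) that carries an $\instance_0$-rooted tree decomposition $(T,\child,\lambda)$ of width $k-1$ for some $k \le \mysize{\Sigma}$. Because $\instance_0$ consists only of $\sigmab$-facts, and the decomposition produced by Proposition~\ref{prop:transdecomp} contains no explicit $\sigmad$-facts beyond those of $\instance_0$, the set $\instance$ is composed entirely of $\sigmab$-facts, as the statement requires.

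Next I would read off the tentacles from this decomposition. The root bag has element set $\dom{\instance_0}$. For each maximal subtree $\tau$ hanging off the root, let its \emph{interface} $S_\tau \subseteq \dom{\instance_0}$ be the set of elements shared between $\tau$ and the root; since every non-root bag has at most $k$ elements, $|S_\tau| \le k$, so I can pad $S_\tau$ to a tuple $\vec c_\tau \in \dom{\instance_0}^k$ whose underlying set is exactly $S_\tau$ (repeating one element; in the degenerate case $\instance_0 = \emptyset$ one first adjoins a single dummy element, which is harmless). Grouping subtrees by interface, I define $T_{\vec c}$ to be the union of all facts occurring in the subtrees $\tau$ with $\vec c_\tau = \vec c$, take $T_{\vec c}$ empty when no subtree has that interface, and set $\instance' := \instance_0 \cup \bigcup_{\vec c} T_{\vec c}$. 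Padding preserves the underlying set, so subtrees with distinct interfaces land in distinct tentacles and are never spuriously merged.

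The verification that $\instance'$ is an $\instance_0,k$-rooted set rests entirely on the connectedness condition of tree decompositions: if an element occurs in two distinct subtrees of the root, or in a subtree and in the root, then it occurs in every bag on the path between these occurrences, and hence in the relevant interface(s). This yields at once $\dom{T_{\vec c}} \cap \dom{\instance_0} \subseteq \{c_1,\dots,c_k\}$ and, for $\vec c \ne \vec c'$, $\dom{T_{\vec c}} \cap \dom{T_{\vec c'}} \subseteq \{c_1,\dots,c_k\} \cap \{c'_1,\dots,c'_k\}$, which are exactly the overlap conditions in the definition of an $\instance_0,k$-rooted set. Finally, every fact of $\instance$ lies in some bag — the root (so it is a fact of $\instance_0$) or a subtree (so it is a fact of some $T_{\vec c}$) — so $\instance' = \instance$ as sets of facts; therefore $\instance'$ still satisfies $\Sigma$ with each $R_i^\trans$ interpreted as the transitive closure of $R_i$, and $\instance' \supseteq \instance_0$, completing the argument.

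I expect the only genuine delicacy to be the bookkeeping around interfaces: making the grouping-by-interface well defined, checking that padding to the uniform arity $k$ does not collapse tentacles with distinct interfaces, and treating the degenerate case of an empty or very small $\instance_0$. The conceptual content is light, precisely because reorganizing the decomposition does not change the underlying set of facts, so satisfaction of $\Sigma$ (and the transitive-closure interpretation) transfers for free.
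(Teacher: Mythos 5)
Your proof is correct and follows essentially the same route as the paper: invoke Proposition~\ref{prop:transdecomp} to get an $\instance_0$-rooted width-$(k-1)$ tree decomposition, then group the subtrees hanging off the root by their (size-at-most-$k$) interfaces with $\dom{\instance_0}$ into the tentacles $T_{\vec c}$, using connectedness of the decomposition for the overlap conditions. The only difference is cosmetic: you spell out the interface bookkeeping (canonical padding, degenerate cases) that the paper dismisses as ``easy to verify.''
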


\begin{proof}
By Proposition~\ref{prop:transdecomp}
there is a set of facts $\instance$ consisting of $\sigmab$-facts whose extension
to distinguished facts satisfies $\Sigma$, and which has an
 $\instance_0$-rooted tree decomposition. 
For a child node $v$
of the root of the decomposition of $\instance$, its \emph{interface elements} are the values
that appear in a fact associated with the root and also in a fact associated
with $v$. For each  $\vec c \in \dom{\instance}^k$,
let $C_{\vec c}$ be the children of the root node of the decomposition  whose interface
elements are contained in $\vec c$. 
 Let $T_{\vec c}$ be all $\sigmab$-facts outside of $\instance_0$ that are in
 a descendant of a node in $C_{\vec c}$.
It is easy to verify that $\instance$ and $T_{\vec c}: \vec c \in  \dom{\instance_0}^k$
have the required property.
\end{proof}

Proposition~\ref{prop:treelike} shows that it suffices to examine
witnesses consisting of a rooted set of facts, and
a collection of tentacles indexed by $k$-tuples of the domain
of the root.  We have control over the size of the root, and also
over the index set.
But the size of the tree-like tentacles is unbounded.
We now show a decomposition result stating
that to know what happens in a tree-like set,
 we will not need to care about the details
of the tentacles, but only  a small amount of information concerning
the sentences that the tentacle satisfies in isolation.

Let $\fo(\sigma)$ denote first-order logic over the signature $\sigma$ with equality.
Let $\fo(\sigma \cup \{d_1 \ldots d_k\})$ denote first-order logic over the
signature $\sigma$ with equality and
 $k$ 
 constants, which will be used to represent the overlap elements.
Note that formulas in both $\fo(\sigma)$ and $\fo(\sigma \cup \{d_1 \ldots d_k\})$
can use the distinguished relations $R^\trans_i$ that are part of $\sigma$.

The quantifier-rank of a formula is the maximal number of nested
quantifiers; the formal definition is reviewed in Appendix~\ref{apx:pebble}.
For any fixed signature $\rho$, if we fix the
 quantifier-rank $j$,  
we also fix the number of variables
that may occur in a formula, and thus there are only finitely
many sentences up to logical equivalence. Thus 
we can let $\fo_j(\rho)$ denote a finite set containing a sentence equivalent
to each sentence of quantifier-rank at most $j$.
Given an $\instance_0,k$-rooted set
$\calI$, and  number $j$, the \emph{$j$-abstraction of $\calI$}
is the expansion of $\instance_0$ with relations $P_\tau(x_1 \ldots x_k)$ for 
each $\tau \in \fo_j(\sigma \cup \{d_1 \ldots d_k\})$.
We interpret $P_\tau(x_1 \ldots x_k)$ by the set of $k$-tuples   $\vec c$ such that $T^+_{\vec c}$
satisfies $\tau$, where $T^+_{\vec c}$ interprets the constants in $\tau$ by $\vec{c}$ and the
distinguished relations by the appropriate transitive closures.
We let $\sigma_{j,k}$ be the signature of the $j$-abstraction of such structures.

\begin{lemma} \label{lem:decomp}
For any sentence $\phi$ of $\fo(\sigma)$
and any $k\in\NN$, there is a
$j\in\NN$ having the following property:

\medskip

Let $\calI_1$ be an $\instance_1,k$-rooted set for some set of $\sigmab$-facts~$\fA_1$,
and let $\calI^+_1$ be its extension with distinguished relations $R^\trans_i$ 
interpreted as the transitive closure of the corresponding base relations~$R_i$.
Let $\calI_2$ be an $\instance_2,k$-rooted set for some set of $\sigmab$-facts~$\fA_2$,
and $\calI^+_2$ the corresponding extension with facts over the distinguished relations.
If the $j$-abstractions of $\calI_1$ and $\calI_2$ agree on all $\fo(\sigma_{j,k})$ sentences of quantifier-rank
at most $j$, then
 $\calI^+_1$ and $\calI^+_2$ agree on~$\phi$.
\end{lemma}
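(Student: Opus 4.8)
The plan is to prove this as a Feferman--Vaught style composition theorem via pebble games. By the game characterization of quantifier rank (Appendix~\ref{apx:pebble}), it suffices to fix $q := \qr(\phi)$ and to produce a value $j = j(q,\sigma,k)$ such that whenever the $j$-abstractions $\calA_1, \calA_2$ of $\calI_1,\calI_2$ are $j$-equivalent (Duplicator wins the $j$-round pebble game on them), Duplicator also wins the $q$-round pebble game on $\calI^+_1$ and $\calI^+_2$; this yields $\calI^+_1 \models \phi$ iff $\calI^+_2 \models \phi$. First I would fix a winning Duplicator strategy in the $j$-round game on $\calA_1,\calA_2$ and show how to lift it, together with local strategies inside the tentacles, to a strategy on the full structures.

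The combined strategy works as follows. Each element of $\calI^+_i$ lives either in the root $\instance_i$ or in a unique tentacle $T_{\vec c}$ identified by its port tuple $\vec c \in \dom{\instance_i}^k$. When Spoiler pebbles a root element, Duplicator answers using the abstraction strategy, which (since it preserves every $P_\tau$ relation) guarantees that any two matched port tuples carry tentacles whose $T^+$-theories agree up to the inner rank encoded by $j$. When Spoiler pebbles an element inside a tentacle, Duplicator first plays the boundedly many ports of that tentacle as auxiliary root moves, thereby fixing the matching tentacle on the other side, and then answers inside it according to a local $q$-round Duplicator strategy; the agreement of $T^+$-theories recorded in the $P_\tau$ relations is exactly what makes such a local strategy available. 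The inner rank must be large enough that these local sub-games each run for $q$ rounds while respecting the ports already pebbled, and $j$ must in turn dominate the rank needed to manage the root moves and the finitely many port excursions. This is a routine accounting of how quantifier rank is consumed, giving $j$ explicitly in terms of $q$, $k$, and $\sigma$.

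The main obstacle is that the distinguished relations $R^\trans_i$ are interpreted \emph{globally} as transitive closures, so they are not first-order and, crucially, an $R^\trans_i$-edge between two pebbled elements can be witnessed by a base $R_i$-path that repeatedly leaves and re-enters tentacles through root ports. To control this I would first establish a \emph{reachability decomposition}: because the tentacles meet each other and the root only in root elements, every $R_i$-path splits at root ports into segments lying entirely inside a single tentacle or inside the root, so that global reachability is the transitive closure of the root-internal reachability together with the port-to-port reachabilities $L_{\vec c}$ of the tentacles. The relations $L_{\vec c}$ are precisely the atomic facts $R^\trans_i(d_a,d_b)$ of $T^+_{\vec c}$ and are therefore recorded by the corresponding $P_\tau$ relations of the abstraction, while the root-internal reachability is carried by the interpretation of $R^\trans_i$ in the root substructure that the abstraction expands. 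Thus each global $R^\trans_i$-atom of $\phi$, after substituting this decomposition, becomes a condition referring only to within-tentacle reachability and to reachability among root elements, both of which are already \emph{pre-computed} in the parts and hence visible to the combined strategy.

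The delicate point, and where I expect to spend the most care, is verifying that the combined strategy preserves these global $R^\trans_i$-atoms on the final pebbled tuple, even though ordinary first-order equivalence of the abstractions preserves only one-step edges and not reachability. The resolution is that, after the decomposition, we never match reachability across unboundedly many intermediate elements directly: each surviving obligation is either a within-tentacle obligation (handled by the local $q$-round strategy and the agreement of tentacle theories) or a root-level obligation already present as an $R^\trans_i$-relation of the abstraction signature $\sigma_{j,k}$ and hence preserved by the root strategy as an ordinary relation. Formally this rests on the fact that transitive closure is monadic-second-order definable over the bounded-width decomposition of a rooted set, so the composition can be carried out piecewise over the decomposition; choosing $j$ above the rank demanded by this composition finishes the argument. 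I would isolate the reachability decomposition and the claim that matched tentacles have equal recorded port reachabilities as the two technical lemmas, and treat the remaining game bookkeeping as standard.
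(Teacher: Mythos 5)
Your proposal follows essentially the same route as the paper's own proof: a lifted pebble-game strategy in which root moves are answered via the game on the $j$-abstractions, the ports of a freshly entered tentacle are played as $k$ auxiliary abstraction moves (whence the paper's choice $j = \qr(\phi)\cdot k$), already-inhabited tentacles are handled by local games whose availability is guaranteed by agreement on the $P_\tau$ facts, and $R^\trans_i$-atoms are verified by splitting witnessing paths at the ports into within-tentacle reachabilities (preserved by the local games) plus root-level reachability facts treated as ordinary atomic relations of the abstraction. The only divergence is your closing appeal to MSO-definability of transitive closure over the decomposition, which is superfluous---the port-splitting case analysis alone closes the argument, exactly as in the paper.
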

\begin{proof}
Let $j_\phi$ be the quantifier-rank of $\phi$. We choose $j \defeq j_\phi\nolinebreak\cdot\nolinebreak k$. We will show that $\calI_1$ and $\calI_2$ agree on
all formulas  of quantifier-rank $j_\phi$. By standard results in finite
model theory (reviewed in Appendix~\ref{apx:pebble}), it is sufficient
to give a strategy for Duplicator in the $j_\phi$-round standard pebble game for $\fo(\sigma)$ over
$\calI^+_1$ and $\calI^+_2$. 
With $i$ moves left to play, we will ensure the following invariants on a game position  consisting of a sequence $\vec p_1 \in \calI^+_1$ and $\vec p_2 \in \calI^+_2$:

\begin{itemize}
  \item Let $\vec{p_1}'$ be the subsequence of $\vec p_1$ that comes from
    $\instance_1$ and let $\vec{p_2}'$ be defined similarly
    for $\vec p_2$ and $\instance_2$. Then $\vec{p_1}'$ and $\vec {p_2}'$ should 
form a  winning position for Duplicator in the
    $(i \cdot k)$-round $\fo(\sigma_{j,k})$ game on the $j$-abstractions.

\item Fix any $k$-tuple $\vec c_1 \in \instance_1$ and let $P^1_{\vec c_1}$ be the subsequence of $\vec p_1$
that lies in $T_{\vec c_1}$ within $\calI^+_1$. Then if $P^1_{\vec c}$ is non-empty, $\vec c_1$ also lies in $\vec p_1$.
Further, letting $\vec c_2$ be
the corresponding $k$-tuple to $\vec c_1$ in~$\vec p_2$, and letting $P^2_{\vec c_2}$ be the subsequence of $\vec p_2$
that lies in $T_{\vec c_2}$ within $\calI^+_2$, then  $P^1_{\vec c_1}$ and $P^2_{\vec c_2}$ form a winning position
in the $i$-round pebble game on  $T^+_{\vec c_1}$ and $T^+_{\vec c_2}$.

The analogous property holds for any $k$-tuple $\vec c_2 \in \instance_2$.
\end{itemize}
We now explain the strategy of Duplicator, focusing for simplicity on moves of Spoiler within~$\calI_1$, with the strategy
on~$\calI_2$ being similar. If Spoiler plays within $\instance_1$, Duplicator responds using
her strategy for the games on the $j$-abstractions of $\instance_1$ and $\instance_2$. It is easy to see that the invariant is preserved.

If Spoiler plays an element  within a substructure $T_{\vec c_1}$  within~$\calI_1$ that is already inhabited, then by the inductive invariant,
$\vec c_1$ is pebbled and there is a corresponding $\vec c_2$ in~$\calI_2$ with substructure $T_{\vec c_2}$ of~$\calI_2$ such that
the pebbles within $T_{\vec c_2}$  are winning positions in the game on~$T^+_{\vec c_1}$ and $T^+_{\vec c_2}$ with $i$ moves left to play. Thus Duplicator can respond
using the strategy in this game from those positions.

Now suppose Spoiler plays an element  $e_1$ within a substructure $T_{\vec c_1}$  within $\calI_1$ that is not already inhabited.
We first use  $\vec c_1$ as  a sequence of plays for Spoiler in the game on the $j$-abstractions of~$\calI_1$ and $\calI_2$, extending the 
positions given by
$\vec{p_1}'$ and $\vec {p_2}'$.
By the inductive invariant, responses of Duplicator exist,
and we collect them  to get a tuple $\vec c_2$. 
Since a winning strategy in a game preserves atoms, and
we have a fact in the $j$-abstraction corresponding to the $j$-type of 
$\vec c_1$ in $T^+_{\vec c_1}$, we know that 
$\vec c_2$ must satisfy the same $j$-type in~$T^+_{\vec c_2}$
  that $\vec c_1$ does in $T^+_{\vec c_1}$.
Therefore $\vec c_1$ must satisfy the same $\fo(\sigma \cup \{d_1 \ldots d_k\})$ sentences of quantifier-rank at most 
$j$ in $T^+_{\vec c_1}$ as $\vec c_2$ does in $T^+_{\vec c_2}$.
Thus Duplicator can use the corresponding
strategy to respond to $e_1$ with an $e_2$ in $T^+_{\vec c_2}$
  such that $\{e_1\}$ and $\{e_2\}$ are a winning position in the $(i-1)$-round
pebble game on $T^+_{\vec c_1}$ and $T^+_{\vec c_2}$.

Since the
response of Duplicator corresponds to $k$ moves in the  game within
the $j$-abstractions, one can verify that the invariant is preserved.

We must verify that this strategy gives a partial isomorphism. Consider a fact $F$ that holds of a tuple 
$\vec t_1$ within $\calI_1$, and let $\vec t_2$ be  the  tuple obtained  using this strategy in $\calI_2$.
We first consider the case where $F$ is a $\sigmab$-fact:
\begin{itemize}
\item If $\vec t_1$ lies completely within some~$T_{\vec c_1}$, then the last invariant guarantees that $\vec t_2$
lies in some~$T_{\vec c_2}$. The last invariant also guarantees that $\sigmab$-facts of $\calI_1$ are preserved since such facts
must lie in $T_{\vec c_1}$, and the corresponding positions are winning 
in the game between   $T^+_{\vec c_1}$  and $T^+_{\vec c_2}$.

\item If $\vec t_1$  lies completely within $\instance_1$, then
the first invariant guarantees that the fact is preserved.
\end{itemize}
By the definition of a rooted set, the above two cases are exhaustive.
We now consider the case where $F$ is  of the form
 $R^\trans_i(t_1, t_2)$:
\begin{itemize}
\item If $t_1$ and $t_2$ both lie in some~$T_{\vec c_1}$, then
we reason as in the first case above, since facts over the signature
with transitive closures are also preserved in the game between
$T^+_{\vec c_1}$ and $T^+_{\vec c_2}$.
\item If $t_1$ and $t_2$ are both in $\instance_1$, we reason
as in the second case above, this time using the fact
that transitive closure facts  are taken into  account in the game on the abstraction.
\item If  $t_1$ lies in $T_{\vec c_1}$, $t_2$ lies in $T_{\vec c_2}$,
then $t_1$ reaches some $c_i \in \vec c_1$,
$c_i$ reaches some  $c_j \in \vec c_2$, and $c_j$ reaches
$t_2$ within $T_{\vec c_2}$. Then we use a combination of the first two cases
above to  conclude that $F$ is preserved.\qedhere
\end{itemize}
\end{proof}

From Lemma \ref{lem:decomp} we easily obtain:

\begin{corollary} \label{cor:composition} For any sentence $\phi$ and $k\in\NN$, there is
  $j\in\NN$
  and a sentence $\phi'$ in the language $\sigma_{j,k}$ of \mbox{$j$-abstractions} over $\sigma$ such that
for all sets of $\sigmab$-facts $\instance_0$, an   $\instance_0,k$-rooted set
  satisfies $\phi$ iff its \mbox{$j$-abstraction} satisfies~$\phi'$.
\end{corollary}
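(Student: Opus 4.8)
The plan is to derive this as a routine consequence of the decomposition Lemma~\ref{lem:decomp} together with the definability of characteristic (Hintikka) sentences over a finite signature. First I would apply Lemma~\ref{lem:decomp} to the sentence $\phi$ and the number $k$ to obtain the promised $j \in \NN$; note that this $j$ depends only on $\phi$ and $k$, not on any root. The signature $\sigma_{j,k}$ of $j$-abstractions is finite and relational, so, exactly as recalled in the body, the equivalence relation $\equiv_j$ of agreement on all $\fo(\sigma_{j,k})$ sentences of quantifier-rank at most $j$ has only finitely many classes $C_1,\dots,C_m$, and by the standard Ehrenfeucht--Fra\"iss\'e/Hintikka argument each class $C$ is defined by a single \emph{characteristic sentence} $\chi_C$ of quantifier-rank $j$, meaning that a $\sigma_{j,k}$-structure $\frakM$ satisfies $\chi_C$ precisely when its $\equiv_j$-class is $C$. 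All of this data is fixed once and for all, independently of $\instance_0$.

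The content supplied by Lemma~\ref{lem:decomp} is that satisfaction of $\phi$ by a rooted set depends only on the $\equiv_j$-class of its $j$-abstraction, and crucially this holds even when comparing rooted sets with \emph{different} roots: if $\calI_1$ and $\calI_2$ are rooted sets whose $j$-abstractions lie in a common class $C$, then those abstractions agree on every quantifier-rank-$\leq j$ sentence, so the lemma gives that $\calI^+_1$ and $\calI^+_2$ agree on $\phi$. Hence each class $C$ that is realized as the $j$-abstraction of \emph{some} rooted set can be unambiguously labelled ``good'' or ``bad'' according to whether the extensions of the rooted sets realizing it satisfy $\phi$. I would then define $\phi'$ to be the finite disjunction $\bigvee \{ \chi_C : C \text{ is a good class} \}$, with an empty disjunction taken as a fixed contradiction such as $\exists x\,(x \neq x)$. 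Since each $\chi_C$ is over $\sigma_{j,k}$ and of quantifier-rank $j$, so is $\phi'$, as required.

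It then remains to verify the stated equivalence for an arbitrary set of $\sigmab$-facts $\instance_0$, an $\instance_0,k$-rooted set $\calI$, and its $j$-abstraction $\frakM$. For the forward direction, if $\calI^+ \models \phi$ then the $\equiv_j$-class $C$ of $\frakM$ is good by definition, so $\frakM \models \chi_C$ and therefore $\frakM \models \phi'$. For the converse, if $\frakM \models \phi'$ then $\frakM \models \chi_C$ for some good class $C$, whence $\frakM$ lies in $C$; choosing a rooted set $\calJ$ realizing $C$ with $\calJ^+ \models \phi$, the abstractions of $\calI$ and $\calJ$ are $\equiv_j$-equivalent, and Lemma~\ref{lem:decomp} yields $\calI^+ \models \phi$. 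The only point requiring care is the consistency of the good/bad labelling, i.e.\ that no $\equiv_j$-class is realized by both a $\phi$-satisfying and a $\phi$-violating rooted set; but this is exactly what Lemma~\ref{lem:decomp} guarantees, so the genuine mathematical work has already been discharged there. Classes realized by no rooted set never arise as the abstraction of the $\calI$ in the statement, so omitting them is harmless. I therefore expect no real obstacle here: the corollary is essentially a packaging of the lemma via the finiteness and definability of $\equiv_j$-classes.
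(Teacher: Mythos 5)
Your proof is correct and matches the paper's intended derivation: the paper states the corollary as following ``easily'' from Lemma~\ref{lem:decomp}, and your argument---finitely many quantifier-rank-$j$ equivalence classes of $\sigma_{j,k}$-structures, each defined by a Hintikka sentence, with $\phi'$ taken as the disjunction of the characteristic sentences of the classes whose realizing rooted sets satisfy $\phi$---is exactly the standard packaging the authors leave implicit. The one point that needs care, namely that the good/bad labelling of classes is well defined across rooted sets with \emph{different} roots, is indeed covered because Lemma~\ref{lem:decomp} is stated for two distinct roots $\instance_1,\instance_2$, as you correctly note.
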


Recall that by Proposition \ref{prop:treelike}, we know it
suffices to check for a counterexample to entailment that
is an $\instance_0,k$-rooted set.
Corollary \ref{cor:composition} allows us to do this by guessing an abstraction
and checking a first-order property of it.
This allows us to finish the proof of Theorem  \ref{thm:conptransdataupper}.

\begin{proof}[Proof of Theorem~\ref{thm:conptransdataupper}]
First, consider the case where the initial set of facts $\instance_0$ is restricted to contain
only $\sigmab$-facts.
Fixing $Q$ and $\Sigma$, we give an $\np$ algorithm for the complement. Let $\phi=\Sigma \wedge \neg Q$, and
$k = \mysize{\varphi}$.
Let $j$ and $\phi'$ be the number and formula guaranteed for $\phi$ by Corollary \ref{cor:composition}.

Recall that $\fo(\sigma \cup \{d_1 \ldots d_k\})$ denotes first-order logic over
  the signature $\sigma$ of $\Sigma \wedge \neg Q$ with equality and with 
$k$ 
  constants, and
$\fo_j(\sigma \cup \{d_1 \ldots d_k\})$ denotes a finite set containing a
  sentence equivalent to each sentence of quantifier-rank at most $j$.
Let $\types_j$ be the collection of subsets $\tau$ of $\fo_j(\sigma \cup \{d_1 \ldots d_k\})$ sentences such
that the conjunction  of sentences in $\tau$ is
satisfiable.
Note that for any fixed $j$, the size of $\fo_j(\sigma \cup \{d_1 \ldots d_k\})$ is finite,
hence the size of $\types_j$ is finite.
An element of $\types_j$ can be thought of as a description of 
a tentacle, telling us everything we need to know  for the purposes
of the $j$-abstraction.  

Given $\instance_0$,  
guess a function~$f$ mapping each  $k$-tuple over $\instance_0$ to a  $\rho \in \types_j$.
We then check whether for two overlapping $k$-tuples $\vec c$ and $\veccprime$,
the types
$f(\vec c)$ and $f(\veccprime)$ are consistent on the atomic formulas
that hold on overlapping elements, and whether the atomic
formulas of $f(\vec c)$ contain each fact over $\vec c$ in $\instance_0$.
Finally, for each $\tau \in \fo(\sigma \cup \{d_1 \ldots d_k\})$ of quantifier-rank at most $j$, we form the expansion $\instance$ by
interpreting $P_\tau$ by the set of tuples   $\vec c$ such
that $\tau \in f(\vec c)$, and we 
check whether the expansion satisfies $\phi'$ with these interpretations,
and if so return true.

We argue for correctness. If the algorithm returns true with $\instance$ as the witness, then create an
  \mbox{$\instance_0,k$-rooted} set $\calI$ by picking for each $\vec c$ a structure satisfying
the sentences in $f(\vec c)$ with distinguished
elements interpreted by $\vec c$. Such a structure exists by satisfiability of $f(\vec c)$.
It assigns atomic formulas consistently on overlapping tuples, by hypothesis, and
the atomic formulas it assigns contain each fact of $\instance_0$.
We let the remaining domain elements be disjoint from the domain
of $\instance$. 
Note that by construction, $\calI$ has~$\instance$ as its $j$-abstraction.
By the choice of $j$ and $\phi'$, and the observation above, $\calI$ satisfies $\Sigma \wedge \neg Q$.
Thus   $\calI$
witnesses that $\owqatc(\instance_0, \Sigma, Q)$  is false.

On the other hand, if $\owqatc(\instance_0, \Sigma, Q)$  is false, then by Proposition \ref{prop:treelike}
we have an $\instance_0,k$-rooted set $\calI$
that satisfies $\Sigma \wedge \neg Q$. By the choice of $j$ and $\phi'$,  the $j$-abstraction of $\calI$ satisfies
$\phi'$. For each tuple $\vec c$  from $\instance_0$,
the set of formulas of quantifier-rank holding of $\vec c$ in the tentacle
of $\vec c$ must be in $\types_j$. Hence we can guess $f$
that assigns $\vec c$ to this set, and with this $f$ as a witness
the algorithm returns true.

When the initial set of facts contains also $\sigmad$-facts, then we need to
ensure that our algorithm guarantees the existence of a  witness structure which fulfills
these transitivity requirements.  We first pre-process
the sentence as follows: for each $\sigmab$ relation $R$ 
we add a new $\sigmab$ relation $R'$, and add to our theory $\Sigma$
the sentences:
\begin{align*}
\forall \vec x ~ R'(\vec x) \rightarrow R^+(\vec x)
\end{align*}
Note that these sentences are base guarded.
Now given an instance $\instance_0$ containing facts
$R^+$, we change $R^+$ to $R'$ and perform the
algorithm as before.

\myeat{
This can be achieved by adding additional requirements
to the types selected by the guessed function $f$; we must check that the guess
for a tuple $\vec c$ includes all of the transitivity facts that are in the initial set.}
\end{proof}

\subsection{Proof of Theorem~\ref{thm:ptimetransdataupper}: $\ptime$ Data Complexity Bound for $\owqatrans$}
\label{apx:ptimetransdataupperproof}
We now turn to the case where our constraints are restricted to $\acfgtgd$s and deal with
$\owqatrans$, not $\owqatc$. Recall that Theorem \ref{thm:ptimetransdataupper} states
a $\ptime$ data complexity bound for this case:

\quoteresult{Theorem~\ref{thm:ptimetransdataupper}}{\ptimetransdataupper}

The proof will follow from a reduction to traditional $\owqa$,
similar to the proof of Proposition~\ref{prop:rewritelin}:

\begin{proposition}
  \label{prop:reducetr}
  For any finite set of facts $\instance_0$,
  constraints $\Sigma \in \acgnf$,
  and
  base-covered UCQ $Q$,
  we
  can compute $\instance_0'$ and $\Sigma' \in \gnf$ in $\ptime$
  such that
  $\owqatr(\instance_0, \Sigma, Q) \text{ iff } 
  \owqa(\instance_0', \Sigma', Q)$.
  Furthermore,
  if $\Sigma$ is in $\acfgtgd$ then $\Sigma'$ is in $\fgtgd$.
\end{proposition}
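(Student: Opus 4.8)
The plan is to follow the template of the proof of Proposition~\ref{prop:rewritelin} for linear orders, but in the simpler transitivity setting where only the $k$-guardedly transitive axioms are needed. First I would set $k \colonequals \max(\mysize{\Sigma \wedge \neg Q}, \arity{\sigma \cup \set{G}})$, take $\instance_0'$ to be $\instance_0$ together with a fact $G(a,b)$ for every pair $a, b \in \elems{\instance_0}$ (where $G$ is a fresh binary base relation whose sole purpose is to base-guard every pair inside the root bag), and take $\Sigma'$ to be $\Sigma$ together with the $k$-guardedly transitive axioms for each distinguished relation $R^\trans$: for $1 \leq l \leq k-1$ the sentence $\neg \exists x y (\psi_l(x,y) \wedge \guardedbg(x,y) \wedge \neg R^\trans(x,y))$, and for $1 \leq l \leq k$ the sentence $\neg \exists x (\psi_l(x,x) \wedge x = x \wedge \neg R^\trans(x,x))$, where $\psi_l$ asserts the existence of an $R^\trans$-path of length $l$. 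Unlike the linear case, I would include neither the guardedly-total axiom nor the irreflexivity axiom, since transitivity alone is required. Both $\instance_0'$ and $\Sigma'$ are plainly computable in $\ptime$.

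The easy direction assumes $\owqa(\instance_0',\Sigma',Q)$ and derives $\owqatr(\instance_0,\Sigma,Q)$. Given any $\instance \supseteq \instance_0$ satisfying $\Sigma$ with every $R^\trans$ transitive, I would observe that $\instance$ satisfies every $k$-guardedly transitive axiom (these hold for all $k$ once $R^\trans$ is genuinely transitive), extend $\instance$ by the $G$-facts on $\elems{\instance_0}$ to obtain $\instance'$ still satisfying $\Sigma'$, invoke the hypothesis to get $\instance' \models Q$, and restrict back to $\sigma$ (discarding $G$, which $Q$ does not mention) to conclude $\instance \models Q$. This is verbatim the easy direction of Proposition~\ref{prop:rewritelin}.

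The harder direction is the contrapositive, and it is here that the transitivity case becomes strictly simpler than Lemma~\ref{lemma:reducelin}. Starting from a counterexample $\instance' \supseteq \instance_0'$ satisfying $\Sigma' \wedge \neg Q$, I would first invoke base-guarded-interface tree-likeness: since the $k$-guardedly transitive axioms are expressible in $\nf$ $\agnf$ with generalized base-guards and width at most $k$ (exactly as verified in the proof of Lemma~\ref{lemma:guarded-interface-dec}), the sentence $\Sigma' \wedge \neg Q$ has base-guarded-interface $k$-tree-like witnesses, so I may assume $\instance'$ has an $\instance_0'$-rooted $(k-1)$-width base-guarded-interface tree decomposition. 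I would then simply take, for each distinguished relation, the transitive closure of $R^\trans$ in $\instance'$, obtaining $\instance''$ in which every $R^\trans$ is transitive; crucially there is no need for the Cycles Lemma or the Szpilrajn order-extension step used in the linear case. By the Transitivity Lemma (Lemma~\ref{lem:guarded-transitivity}), any $R^\trans$-path whose endpoints form a base-guarded pair already yields a direct $R^\trans$-fact in $\instance'$, so the facts in $\instance'' \setminus \instance'$ concern only pairs that are \emph{not} base-guarded; in particular no new facts arise on pairs inside $\instance_0$, all of which are guarded by $G$. The Base-coveredness Lemma (Lemma~\ref{lemma:acov}), applicable because $\Sigma \wedge \neg Q \in \acgnf$ (using that $Q$ is base-covered), then guarantees that $\instance''$ still satisfies $\Sigma \wedge \neg Q$. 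Restricting $\instance''$ to its $\sigma$-facts gives the required counterexample to $\owqatr(\instance_0,\Sigma,Q)$.

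Finally I would discharge the syntactic claims. Each $k$-guardedly transitive axiom lies in $\agnf \subseteq \gnf$, since its only negated subformula $\neg R^\trans(x,y)$ is guarded by the base-guard $\guardedbg(x,y)$; hence $\Sigma' \in \gnf$. For the ``furthermore'' clause, when $\Sigma \in \acfgtgd$ I would rewrite each axiom as a set of TGDs by distributing over the disjunction inside $\guardedbg$: for each base atom $\gamma(x,y,\vec z)$ over $\sigmab \cup \set{G}$ containing $x$ and $y$, the TGD $\forall x y \vec z\, x_2 \cdots x_l\, (\gamma(x,y,\vec z) \wedge R^\trans(x,x_2) \wedge \cdots \wedge R^\trans(x_l,y) \rightarrow R^\trans(x,y))$ is frontier-guarded, because $\gamma$ guards the frontier $\set{x,y}$; the self-loop axioms are handled the same way with the equality guard $x=x$. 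Together with $\acfgtgd \subseteq \fgtgd$, this gives $\Sigma' \in \fgtgd$. I expect the main obstacle to be essentially organizational: confirming that the Transitivity and Base-coveredness Lemmas and the tree-likeness result—all stated and proved for a generic distinguished binary relation in the linear-order development—transfer unchanged to the transitivity setting, and carefully checking the frontier-guarded form of the rewritten axioms.
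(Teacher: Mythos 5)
Your proposal is correct and follows essentially the same route as the paper's proof: the same $\instance_0'$ and $\Sigma'$ (adding $G$-facts on $\elems{\instance_0}$ and the $k$-guardedly transitive axioms), the same easy direction, and the same hard direction via base-guarded-interface tree-like witnesses, the Transitivity Lemma, and the Base-coveredness Lemma, followed by the same rewriting of the axioms as (base-)frontier-guarded TGDs for the $\acfgtgd$ case. The only deviations are cosmetic, e.g.\ your slightly larger choice of $k$ (taking the max with $\arity{\sigma \cup \set{G}}$), which is harmless.
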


\begin{proof}
\renewcommand{\drel}{R^+}
We define $\instance_0'$ and $\Sigma'$ as follows:
  \begin{itemize}
  \item $\instance_0'$ is $\instance_0$ together with facts $G(a,b)$ for every pair $a,b \in \elems{\instance_0}$
  for some fresh binary base relation $G$, and
  \item $\Sigma'$ is $\Sigma$ together with the $k$-guardedly-transitive axioms
  for each distinguished relation, where $k$ is  $\mysize{\Sigma \wedge \neg Q}$.
  \end{itemize}
These can be constructed in time polynomial in the size of the input.

As discussed in the proof of Lemma~\ref{lemma:guarded-interface-dec},
  the $k$-guardedly transitive axioms (see Section~\ref{sec:decidlin})
can be written in $\nf$ $\agnf$ with width at most $k$,
and hence in $\gnf$.

Now we prove the correctness of the reduction.
Suppose $\owqa(\instance_0',\Sigma',Q)$ holds,
so any $\instance' \supseteq \instance_0'$ satisfying $\Sigma'$ must satisfy $Q$.
Now consider $\instance \supseteq \instance_0$ that satisfies $\Sigma$
and where all $\drel$ in $\sigmad$
are transitive.
We must show that $\instance$ satisfies $Q$.
First, observe that $\instance$ satisfies $\Sigma'$
since the $k$-guardedly-transitive axioms for $\drel$
are clearly satisfied for all $k$
when $\drel$ is transitively closed.
Now consider the extension of $\instance$ to $\instance'$
with additional facts $G(a,b)$ for all $a,b \in \elems{\instance_0}$.
This must still satisfy $\Sigma'$:
adding these guards
means there are additional $k$-guardedly-transitive requirements
on the elements from $\instance_0$,
but these requirements already hold
since $\drel$ is transitively closed on all elements.
Hence, by our initial assumption, $\instance'$ must satisfy $Q$.
Since $Q$ does not mention $G$,
the restriction of $\instance'$ back to $\instance$ still satisfies $Q$ as well.
Therefore, $\owqa(\instance_0,\Sigma,Q)$ holds.

On the other hand, suppose for the sake of contradiction
that $\owqa(\instance_0',\Sigma',Q)$ does not hold,
but $\owqatr(\instance_0,\Sigma,Q)$ does.
Then there is some $\instance' \supseteq \instance_0'$ such that
$\instance'$ satisfies $\Sigma' \wedge \neg Q$, and hence also satisfies $\Sigma \wedge \neg Q$.
Since $\Sigma \wedge \neg Q$ is in $\agnf$,
Proposition~\ref{prop:guarded-interface-dec-appendix}
implies that we can take $\instance'$ to be a set of facts that has an
$\instance_0'$-rooted $(k-1)$-width base-guarded-interface tree decomposition.
Let $\instance''$ be the result of taking the transitive closure
of the distinguished relations in $\instance'$.
  By the Transitivity Lemma (Lemma~\ref{lem:guarded-transitivity}),
transitively closing like this can only add $\drel$-facts
about pairs of elements that are not base-guarded.
  Moreover, the Base-Coveredness Lemma (Lemma~\ref{lemma:acov}) ensures that adding
$\drel$-facts about these non-base-guarded pairs of elements
does not affect satisfaction of $\acgnf$ sentences,
so $\instance''$ must still satisfy $\Sigma \wedge \neg Q$.
Restricting $\instance''$ to its $\sigma$-facts results in an~$\instance$
where every distinguished relation is transitively closed
and where $\Sigma \wedge \neg Q$ is still satisfied,
since $\Sigma$ and $Q$ do not mention relation $G$.
But this contradicts the assumption that $\owqatr(\instance_0,\Sigma,Q)$ holds.
This concludes the proof of correctness.

Finally,
observe that the $k$-guardedly-transitive axioms can be written as $\fgtgd$s (in fact, $\afgtgd$s):
they are equivalent to the conjunction of $\fgtgd$s of the form
\begin{align*}
&\forall x\, y\, x_1 \dots x_{l+1}\, \vec z \,\big [ \big ( x = x_1 \wedge x_{l+1} = y \ \wedge \\
&\ \ \drel(x_1,x_2) \wedge \dots \wedge \drel(x_l,x_{l+1}) \wedge S(x,y, \vec z) \big )
 \rightarrow
\drel(x,y) \big ]
\end{align*}
for all $S \in \sigmab \cup \set{ G }$,
$1 \leq l \leq k$, and $\drel \in \sigmad$.
Therefore, if $\Sigma$ is in $\acfgtgd$ then $\Sigma'$ is in $\fgtgd$
as claimed.
\end{proof}

Theorem~\ref{thm:ptimetransdataupper} easily follows from this:

\begin{proof}[Proof of Theorem~\ref{thm:ptimetransdataupper}]
  Recall that we have fixed constraints $\Sigma$ in $\acfgtgd$ and a base-covered UCQ $Q$.
  We must show $\ptime$ data complexity of $\owqatr(\instance_0,\Sigma,Q)$
  for any finite initial set of facts $\instance_0$.
  Use Proposition~\ref{prop:reducetr} to construct $\Sigma'$ from $\Sigma$ (in constant time,
  since $\Sigma$ is fixed) and $\instance_0'$ from $\instance_0$ (in time polynomial in $\mysize{\instance_0}$).
  Since $\Sigma$ is in $\acfgtgd$, $\Sigma'$ is in $\fgtgd$.
  Therefore, the $\ptime$ data complexity upper bound for $\owqatr$ with $\acfgtgd$s follows from
  the $\ptime$ data complexity upper bound for $\owqa$ with $\fgtgd$s \cite{bagetcomplexityfg}.
\end{proof}

\section{Details about the Chase}
\label{apx:chase}

The \emph{chase} is a standard database construction~\cite{ahv},
which applies to a set of facts~$\instance_0$ and to a set $\Sigma$ of TGDs, and
constructs a set of facts $\instance_\infty \supseteq \instance_0$, possibly infinite,
which satisfies $\Sigma$.

To define the chase, we first define the notion of a \emph{trigger} 
and \emph{active trigger}.
A \emph{trigger} for a TGD $\tau: \forall \vec{x} ~ \phi(\vec{x})
\rightarrow \exists \vec{y} ~ \psi(\vec{x}, \vec{y})$ in a set of facts~$\instance$ is a 
homomorphism $h$ from~$\phi(\vec{x})$ to~$\instance$, i.e., a mapping
from~$\vec{x}$ to~$\elems{\instance}$ such that the 
facts of $\phi(h(\vec{x}))$ are in~$\instance$. We call $h$ an \emph{active
trigger} if $h$ cannot be extended to 
a homomorphism from $\psi(\vec{x}, \vec{y})$ to~$\instance$, i.e., there is no
mapping $h'$ from $\vec{x} \cup \vec{y}$ to $\elems{\instance}$ such that $h'(x) =
h(x)$ for all $x \in \vec{x}$ and such that the facts of~$\psi(h(\vec{x}, \vec{y}))$ are
in~$\instance$.

Given a TGD $\tau: \forall \vec{x} ~ \phi(\vec{x})
\rightarrow \exists \vec{y} ~ \psi(\vec{x}, \vec{y})$, a set of facts $\instance$, and an active trigger $h$
of~$\tau$
in~$\instance$, the result of \emph{firing} $\tau$ on~$h$ is a set
of facts $\psi(h(\vec{x}), \vec{b})$ where the $\vec{b}$ are fresh elements
called \emph{nulls} that
are all distinct and do not occur in~$\instance$. The
application of a \emph{chase round} by a set $\Sigma$ of TGDs on a set of facts
$\instance$ is the set of facts $\instance'$ obtained by firing simultaneously all TGDs of~$\Sigma$ on all
active triggers; formally, it is the union of the set of facts~$\instance$ and
of the facts obtained by firing each TGD $\tau$ on each active trigger $h$
of~$\tau$ in~$\instance$, using different nulls when firing each TGD.

The \emph{chase} of~$\instance$ by~$\Sigma$ is the (potentially infinite) set of
facts $\instance_\infty$ obtained by repeated applications of chase rounds.
Formally, we define $\instance_i$ for all $i>0$ as the result of applying a
chase round on~$\instance_{i-1}$, and the chase~$\instance_\infty$ is the
fixpoint of this inflationary operator.

The important points about the chase construction is that the set of facts
$\instance_\infty$ obtained as a result of the chase is a superset of the
initial set of facts $\instance_0$, that it satisfies $\Sigma$, and that it is
created by adding new facts in a way that only overlaps on~$\elems{\instance_0}$
on elements that occur at an active trigger at a position where they will be
exported.

\section{From UCQ to CQ}
\label{app:ucqtocq}
In this section, we first prove general auxiliary lemmas about reducing from $\owqa$ problems
with UCQs to $\owqa$ problems with CQs. We first give such a lemma for regular
$\owqa$, 
which formalises an existing folklore technique (see, e.g., Section~3.3 of
\citeA{georgchristos}). We then
adapt this lemma to the various $\owqa$ notions that we study for distinguished relations.
This allows us to revisit the results of Sections~\ref{sec:hardness}
and~\ref{sec:undecid}
and explain how the proofs in the main text using UCQs
can be extended to use only CQs.

The general idea to replace UCQs by CQs is to extend the arity of the relations
to include a flag that indicates whether a fact is a ``real fact'' or a ``pseudo-fact'':
the flag is propagated by the TGDs. (Note that, while the idea is similar, the
notions of ``real fact'' and ``pseudo-fact'' used in this appendix are not
related to those of ``genuine fact'' and ``pseudo-fact'' that were used within
the proofs of Section~\ref{sec:hardness}.)
We then add pseudo-facts to the instances to
ensure that each UCQ disjunct has a match that involves pseudo-facts. This
ensures that we can replace the UCQ by a \emph{conjunction} of the original
disjuncts, with an OR on the flag of the match of each disjunct: this OR can be
performed using a suitable relation which we add to the instances.

We formalize this general idea in Appendix~\ref{apx:ucqtocqgeneral}. We must then tweak the idea
to make it work for $\owqa$ with distinguished relations, as we
cannot increase the arity of these relations. When the distinguished relations
are linear orders, we first show
in Appendix~\ref{apx:ucqtocqlin} that we can adapt the idea without increasing
the arity of the distinguished relations, provided that the TGDs do not
mention these relations and that the query satisfies a condition called
\emph{base-domain-coveredness}. When the distinguished relations are transitive
or are the transitive closure of another relation, we present in
Appendix~\ref{apx:ucqtocqtc} a more general technique that allows us, under some
conditions on the TGDs and queries, to
increase the arity of a subset of the relations
(called the \emph{flagged relations}), without changing the arity of
the others (which we call the \emph{special relations}, and which include the
distinguished relations). We then use these techniques to
revisit the results of Section~\ref{sec:hardness} in
Appendix~\ref{app:ucqtocqhardness} and of Section~\ref{sec:undecid} in
Appendix~\ref{app:ucqtocqundecid}.

\subsection{UCQ to CQ for General $\owqa$}
\label{apx:ucqtocqgeneral}

We first show a translation result from UCQ to CQ for general $\owqa$:

\begin{lemma}
  \label{lem:ucqtocq}
  For any signature $\sigma$, TGDs~$\Sigma$ and UCQ $Q$, one can compute in
  $\ptime$ a signature $\sigma'$, TGDs~$\Sigma'$ and CQ $Q'$ such that the $\owqa$
  problem for $\Sigma$ and $Q$ reduces in $\ptime$ to the $\owqa$ problem for $\Sigma'$
  and $Q'$ (for combined complexity and for data complexity): namely, given a
  set of facts $\instance_0$ on~$\sigma$, we can compute in $\ptime$ in
  $\instance_0$, $\Sigma$, and $Q$ a set of facts $\instance_0'$ on~$\sigma'$
  such that $\owqa(\instance_0, \Sigma, Q)$ holds iff $\owqa(\instance_0',
  \Sigma', Q')$ holds.
\end{lemma}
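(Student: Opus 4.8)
The plan is to implement the folklore ``flag'' technique described in the introduction to Appendix~\ref{app:ucqtocq}. First I would define the new signature $\sigma'$ by extending the arity of every relation $R \in \sigma$ by one, introducing $R'$ of arity $\arity{R}+1$; the extra position holds a flag element that marks the fact as either a \emph{real fact} or a \emph{pseudo-fact}. I would add two fresh unary relations, say $\kw{Real}$ and $\kw{Pseudo}$, to tag the two distinguished flag elements, plus a ternary ``OR-gadget'' relation $\kw{Or}$ whose facts encode, for a pair of flag values, their disjunction (so $\kw{Or}(t_1,t_2,t)$ asserts $t$ is true iff $t_1$ or $t_2$ is). The initial set of facts $\instance_0'$ is built from $\instance_0$ by choosing one fixed real-flag element $r$, rewriting each fact $R(\vec a)$ as $R'(\vec a, r)$, adding $\kw{Real}(r)$ together with one pseudo-flag element $p$ tagged $\kw{Pseudo}(p)$, and populating $\kw{Or}$ with the four Boolean tuples over $\{r,p\}$.

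Next I would define $\Sigma'$ to be $\Sigma$ with each head and body atom $R(\vec x)$ replaced by $R'(\vec x, z)$, where the flag variable $z$ is threaded through so that the flag of a produced fact is inherited from (the conjunction of) the body flags; concretely, a body guard variable carrying the flag is exported to the head, which preserves frontier-guardedness. I would also add TGDs that generate, for each real fact, a parallel pseudo-fact, so that every CQ-disjunct of the original $Q$ can always be matched entirely on pseudo-facts. The rewritten query $Q'$ is then a single CQ formed as the \emph{conjunction} of the disjuncts $\psi_1,\dots,\psi_m$ of $Q$, where each $\psi_i$ is rewritten to $\psi_i'$ using the flagged relations and carrying its own flag variable $t_i$ for the match, and these $t_i$ are combined through a cascade of $\kw{Or}$ atoms into a single output flag which the query requires to be real (i.e.\ tagged $\kw{Real}$). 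Because the pseudo-facts guarantee each $\psi_i'$ always has \emph{some} match, a match of $Q'$ exists with real output flag precisely when at least one disjunct $\psi_i$ has a genuine (all-real) match, reproducing the semantics of the UCQ.

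The correctness argument then splits into the two directions of the biconditional $\owqa(\instance_0,\Sigma,Q) \Leftrightarrow \owqa(\instance_0',\Sigma',Q')$. For the forward direction I would take a counterexample $\instance$ to the $\sigma$-problem, flag all its facts real, saturate with the pseudo-fact-generating and $\kw{Or}$ constraints, and check that $Q'$ fails because no all-real match of any disjunct survives. For the backward direction I would take a counterexample $\instance'$ to the $\sigma'$-problem, project away the flag positions keeping only the real facts, and verify that this satisfies $\Sigma$ and violates $Q$. The main obstacle I expect is bookkeeping the flag-propagation invariant: I must show that in any model of $\Sigma'$ the flag of a derived fact is ``real'' exactly when it arises from a chain of real facts, and that the $\kw{Or}$-gadget faithfully computes the disjunction over arbitrary (possibly infinite) models where the adversary controls the flag values. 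Establishing this invariant, and confirming that the rewriting keeps $\Sigma'$ within the same TGD class as $\Sigma$ (guarded, frontier-guarded, etc.) and is computable in $\ptime$ uniformly in $\instance_0$, is where the real care is needed; the rest is routine.
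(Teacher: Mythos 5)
Your overall skeleton is the same as the paper's: arity-extended flagged copies of the relations, an $\mathrm{Or}$/$\mathrm{True}$ gadget placed in the initial instance, TGDs with a threaded flag variable, and a query that is the \emph{conjunction} of the rewritten disjuncts whose per-disjunct flags are combined by an $\mathrm{Or}$-cascade required to output a real-tagged value. The genuine gap is in how you provide the fallback (``pseudo'') matches. The paper makes every rewritten disjunct matchable \emph{unconditionally} by seeding $\instance_0'$ with a static fact $R'(\false,\ldots,\false)$ for every relation $R'$; since these sit in the initial instance, they are present in \emph{every} candidate counterexample, so each disjunct $Q_i'$ always has the all-$\false$ match. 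You instead propose TGDs that create, for each real fact, a parallel pseudo-fact. That mechanism yields a pseudo-match of a disjunct only when that disjunct already has a match on the underlying facts, which is exactly what you cannot assume. Concretely, take $\sigma=\{A,B\}$ unary, $\Sigma=\emptyset$, $Q=\exists x\, A(x) \vee \exists x\, B(x)$, and $\instance_0=\{A(a)\}$. Then $\owqa(\instance_0,\Sigma,Q)$ holds. But under your translation, the set of facts $\instance'$ consisting of $\instance_0'$ together with the pseudo-copy $A'(a,p)$ satisfies $\Sigma'$ and violates $Q'$, because $Q'$ is a conjunction requiring some $B'$-atom to be matched and $\instance'$ contains no $B'$-fact at all. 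So the new problem answers ``no'' while the original answers ``yes'', and the claimed equivalence fails. Note that the repair cannot be done by writing the pseudo constant into rule heads, since TGDs are constant-free; the paper's fix is to seed the facts statically, and an alternative within your style would be per-disjunct rules $\kw{Pseudo}(z') \rightarrow \exists \vec y\; Q_i'(\vec y, z')$ keyed on the pseudo tag, which force an unconditional pseudo-match of each disjunct.

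A secondary point: the ``flag-propagation invariant'' you single out as the main obstacle --- that in every model of $\Sigma'$ a derived fact is real-flagged exactly when it arises from a chain of real facts, and that the $\mathrm{Or}$-gadget computes disjunction in arbitrary adversarial models --- is neither provable nor needed. No set of TGDs can prevent a model from containing junk facts with arbitrary flag elements or spurious $\mathrm{Or}$-triples. The paper's proof is organized precisely so that this does not matter: in one direction you construct the witness model yourself, so you control all flags; in the other direction you only keep the facts whose flag is the particular element $\true$ named in $\instance_0'$, and the only $\mathrm{Or}$/$\mathrm{True}$ facts ever used in the argument are those explicitly placed in $\instance_0'$. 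Threading one shared flag variable through the body and the head of each rewritten TGD (rather than any ``conjunction of body flags'') is what makes TGD satisfaction transfer in both directions without any global invariant.
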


\begin{proof}
  Let $\sigma_\Or$ be a constant signature consisting of a ternary relation
  $\mathrm{Or}$ and a unary relation $\mathrm{True}$. We let $\sigma'$ be the
  signature obtained from $\sigma$ by creating one relation $R'$ in~$\sigma'$
  for every $R$ in~$\sigma$
  with $\arity{R'} \colonequals \arity{R}+1$, and further adding the relations
  of~$\sigma_\Or$.

  We define $\Sigma'$ from $\Sigma$ by considering each TGD $\tau : \forall \vec x ~
  \phi(\vec x) \rightarrow \exists \vec y ~ \psi(\vec x, \vec y)$, and, letting
  $z$ be a fresh variable, replacing $\tau$ by the TGD $\tau' : \forall \vec x \, z
  ~ \phi'(\vec x, z) \rightarrow \exists \vec y ~ \psi'(\vec x, z, \vec y)$,
  where $\phi'$ and $\psi'$ are obtained from $\phi$ and $\psi$ respectively by
  replacing each $\sigma$-atom $R(\vec w)$ by the $\sigma'$-atom $R'(\vec w, z)$.
  As TGD bodies are not empty, the new variable $z$
  actually occurs in the new body $\phi'$.

  We now describe the construction of $Q'$ from $Q$. Suppose the UCQ $Q$ is
  $\bigvee_{1 \leq i \leq m} \exists \vec{x_i} ~ Q_i(\vec x_i)$, where each
  $Q_i$ is a conjunction of atoms over $\sigma$.
  Let $z_1, \ldots, z_m$ 
  be fresh variables.
  For each $1 \leq i \leq m$, we define a conjunction of atoms
  $Q_i'(\vec x_i, z_i)$
  on~$\sigma'$ which is obtained from $Q_i(\vec x_i)$ by replacing each
  $\sigma$-atom $R(\vec w)$ by the $\sigma'$-atom $R'(\vec w, z_i)$. We now
  define $Q'$ as:
  \begin{align*}
    \exists z_1 \, \ldots \, z_m \, z'_1 \, \ldots \, z'_{m-1} \,
    \vec x_1 \, \ldots \, \vec x_m ~
    \\
    \Or(z_1, z_2, z'_1) \wedge \Or(z'_1, z_3, z'_2) \wedge \cdots \wedge\,
    \Or(z'_{m-2}, z_m, z'_{m-1}) \wedge \mathrm{True}(z'_{m-1}) \wedge
    \bigwedge_{1 \leq i \leq m} Q'_i(\vec x_i, z_i)
  \end{align*}
  It is clear that the computation of $\sigma'$, $\Sigma'$, and $Q'$ from
  $\sigma$, $\Sigma$, and $Q$ is in $\ptime$.
  
  \medskip
 
  We now describe the $\ptime$
  transformation on input sets of facts. Let $\instance_0$ be a set of facts.
  Letting $\true$ and $\false$ be two fresh elements, let
  $\instance_{\mathrm{Or}}$ be the set of facts that contains the fact
  $\mathrm{True}(\true)$ and the facts $\mathrm{Or}(b, b', b'')$ for all $\{(b,
  b', b \vee b') \mid b, b' \in \{\false, \true\}\}$. Let $\instance_\false$ be the
  set of facts $\{R'(\false, \ldots, \false) \mid R' \in \sigma'\}$,
  and let $(\instance_0)_{+\true}$ be 
  $\{R'(\vec a, \true) \mid R(\vec a) \in \instance_0\}$. We define 
  $\instance_0' \colonequals \instance_{\Or}
    \sqcup \instance_\false \sqcup (\instance_0)_{+\true}$ which is clearly
    computable in $\ptime$.

  \medskip

  We now show correctness of the reduction. In the forward direction, consider a
  counterexample set of facts $\instance$ on~$\sigma$ which is a superset of
  $\instance_0$, satisfies $\Sigma$, and violates $Q$: up to renaming we can
  ensure that $\true, \false \notin \dom{\instance}$. Let us construct a
  counterexample $\instance'$ for $\instance_0'$, $\Sigma'$, and $Q'$,
  by setting $\instance' \colonequals \instance_{\Or} \sqcup \instance_{+\true} \sqcup
  \instance_\false$ where $\instance_{\Or}$ and $\instance_\false$ are as above and where 
  $\instance_{+\true} \colonequals \{R'(\vec a, \true) \mid R(\vec a) \in \instance\}$.

  It is clear that $\instance_\ptrue$, hence $\instance'$, is a superset
  of~$\instance'_0$,
  because
  $\instance$ is a superset of~$\instance_0$.
  To see why
  $\instance'$ satisfies $\Sigma'$, consider a match $M' \subseteq \instance'$ of the body of a TGD $\tau'$ of
  $\Sigma'$ in~$\instance'$. As $\Sigma'$ does not mention the
  relations of~$\sigmaor$, no fact 
  of~$\instance_{\Or}$ can occur in~$M'$. Now, the facts of $\instance_\false$ have $\false$ as their
  last element, and those of $\instance_\ptrue$ have $\true$ as their 
  last element, so, as all atoms of the body of $\tau'$ have the same variable at
  their last element, either $M' \subseteq \instance_\false$, or $M'
  \subseteq \instance_\ptrue$. In the first case, we can find a match of the
  head of~$\tau$ in $\instance_\false$ (where all variables are mapped
  to~$\false$), so we conclude that $M'$ is not a violation.
  In the second case, considering the
  preimage $M$ of~$M'$ in~$\instance$, it is clear that $M$ is a match
  of the TGD $\tau$ of~$\Sigma$, so, as $\instance$ satisfies $\Sigma$, we can
  extend $M$ to a match of the head of~$\tau$ in~$\instance$, yielding a match
  of the head of $\tau'$ in~$\instance'$, so that again $M'$ cannot be a
  violation. Hence, $\instance'$ satisfies $\Sigma$.

  Last, to see why $\instance'$ violates $Q'$, assume by contradiction that
  there is a homomorphism from~$Q'$ to~$\instance'$.
  Notice that, in our construction of~$\instance'$,
  the only element $a \in \dom{\instance'}$ such that $\True(a)$ holds is $a =
  \true$. Hence, necessarily, $h$ must map
  $z'_{m-1}$ to~$\true$. However, 
  as the only $\Or$-facts in~$\instance'$ are those
  of~$\instance_{\Or}$, it is clear that
  $h$ must map some~$z_{i_0}$ to~$\true$. Thus, a suitable restriction of $h$ is a match of 
  $\exists \vec x_{i_0} ~ Q'_{i_0}(\vec x_{i_0}, \true)$ in~$\instance'$.
  Now, as all facts in the image of~$h$ have $\true$ as their last element, the
  image of~$h$ must be contained in $\instance_\ptrue$, so we deduce that
  $\exists \vec x_{i_0} ~ Q_{i_0}(\vec x_{i_0})$ has a match in~$\instance$,
  contradicting the fact that~$\instance$ violates~$Q$.
  This concludes the forward direction of the correctness proof.

  \medskip

  In the backward direction,
  consider a counterexample set of facts $\instance' \supseteq \instance'_0$
  that satisfies $\Sigma'$ and violates $Q'$. Construct the set of facts
  $\instance \colonequals \{R(\vec a) \mid R'(\vec a, \true) \in \instance'\}$.
  As $(\instance_0)_\ptrue \subseteq \instance'_0$,
  clearly $\instance_0 \subseteq \instance$. To
  see why $\instance$ satisfies $\Sigma$, consider a match $M \subseteq
  \instance$ of the body of some TGD $\tau$ of~$\Sigma$ in~$\instance$, and
  consider its preimage $M'$ in~$\instance'$, where all facts have $\true$ as
  their last element: $M'$ is a match of the body of $\tau' \in \Sigma'$.
  Hence, as $\instance'$ satisfies $\Sigma'$, 
  $M'$ extends to a match of the head of~$\tau'$, and the last elements of all
  its facts is~$\true$, so we can find a suitable extension in~$\instance$ as
  well. Hence, $M$ is not a violation of~$\tau$ in~$\instance$, so $\instance$
  satisfies $\Sigma$.

  Last, to see why $\instance$ violates the UCQ~$Q$, assume by contradiction that the
  $Q$ has a match in~$\instance$. This means that there is $1 \leq i_0 \leq
  m$ such that the disjunct $\exists \vec x_{i_0} ~ Q_{i_0}(\vec x_{i_0})$ has a match
  $M_{i_0}$
  in~$\instance$. By construction of~$\instance$, this means that
  $\exists \vec x_{i_0} ~ Q'_{i_0}(\vec x_{i_0}, \true)$ has a match $M'_{i_0}$ in~$\instance'$.
  Now, observe that, for all $1 \leq i \leq m$, there is a match $M''_i$
  of~$\exists \vec x_i ~ Q_i(\vec x_i, \false)$ in~$\instance_\false$ obtained by
  mapping all variables to~$\false$.
  As $\instance_\false \subseteq \instance_0' \subseteq \instance'$,
  the same is true of~$\instance'$. Now, as $\instance_{\Or} \subseteq
  \instance'_0 \subseteq \instance'$, we can extend $M'_{i_0}$ and the $M''_i$
  for $i \neq i_0$ to a match of~$Q'$ in~$\instance'$
  by
  matching $z_{i_0}$ to~$\true$, every $z_i$ to~$\false$ for $i \neq i_0$, every
  $z_i'$ for $i' < i_0 -1$ to~$\false$, and the $z_i'$ for $i' \geq i_0-1$
  to~$\true$.
  This contradicts the fact that
  $\instance'$ violates $Q'$. Hence, $\instance$ violates $Q$ and is a
  counterexample to $\owqa$. This concludes the correctness proof.
\end{proof}

\subsection{UCQ to CQ for $\owqalin$}
\label{apx:ucqtocqlin}

We first adapt the general $\owqa$ result in Lemma~\ref{lem:ucqtocq}
to $\owqalin$, which avoids increasing the arity of the distinguished relations
that are interpreted as linear orders.
In order to avoid increasing the arity of the distinguished relations,
we will ban distinguished
relations in the dependencies, and 
require \emph{base-domain-coveredness} of the query (which is a weakening of
base-coveredness):

\begin{definition}
  \label{def:basedomaincov}
  A CQ $Q$ is \emph{base-domain-covered} if every variable $x$ occurring in a
  distinguished atom in~$Q$ also occurs in a base atom in~$Q$.
\end{definition}

We can now state the following variant of Lemma~\ref{lem:ucqtocq}:

\begin{lemma}
  \label{lem:ucqtocqlin}
  For any signature $\sigma$ (partitioned in base relations $\sigmab$ and
  distinguished relations $\sigmad$),
  TGDs $\Sigma$ on $\sigmab$, and base-domain-covered UCQ $Q$, one
  can compute in $\ptime$ a signature $\sigma'$ partitioned as $\sigmab' \sqcup
  \sigmad$,
  TGDs $\Sigma'$ on $\sigmab'$, and a base-domain-covered CQ $Q'$ such that the 
  $\owqalin$ problem for $\Sigma$ and $Q$ reduces in $\ptime$ to the same
  problem for $\Sigma'$ and $Q'$, in the sense of Lemma~\ref{lem:ucqtocq}. 

  Further, if $\Sigma$ is $\aincd$s then $\Sigma'$ also is; if $\Sigma$ is empty
  then $\Sigma'$ also is; if $Q$ is base-covered then $Q'$ also is.
\end{lemma}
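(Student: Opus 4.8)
The plan is to mimic the proof of Lemma~\ref{lem:ucqtocq}, but to attach the disambiguating flag position only to the \emph{base} relations, leaving every distinguished relation at its original binary arity so that it can still be interpreted as a strict linear order on both sides of the reduction. This is exactly what the two hypotheses buy: since $\Sigma$ mentions only $\sigmab$, the flag-threading rewriting never touches a distinguished atom; and since $Q$ is \emph{base-domain-covered} (Definition~\ref{def:basedomaincov}), every variable of a distinguished atom of a disjunct $Q_i$ already occurs in a base atom of $Q_i$, so that after flagging the base atoms both endpoints of each order atom remain ``pinned'' to the flag context of that disjunct. Concretely, I would take $\sigmab'$ to be $\{R' : R \in \sigmab\}$ with $\arity{R'} = \arity{R}+1$, together with the fresh $\Or$ and $\True$ relations, keep $\sigmad$ unchanged, rewrite each TGD of $\Sigma$ by threading a fresh flag variable through its (base) atoms exactly as in Lemma~\ref{lem:ucqtocq} to obtain $\Sigma'$ over $\sigmab'$, and form $Q'$ by replacing each base atom $R(\vec w)$ of a disjunct $Q_i$ by $R'(\vec w, z_i)$, leaving the order atoms untouched, and then OR-ing the flags $z_1,\dots,z_m$ through the usual $\Or$-chain ending in $\True(z'_{m-1})$.

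The instance transformation is where the new idea is needed, and it is the main obstacle. As before, $\instance_0'$ will contain the OR truth-table $\instance_\Or$ and a $\true$-flagged copy of the base facts of $\instance_0$, while the distinguished facts of $\instance_0$ are carried over unchanged. The difficulty is the ``all-false'' dummy witness that makes the $\Or$-construction work in the backward direction: in Lemma~\ref{lem:ucqtocq} one maps every variable of a disjunct to the single element $\false$, but here a disjunct may contain an order atom $w_1 < w_2$, and $\false < \false$ can never hold since $<$ is irreflexive. I would fix this with a small per-disjunct gadget. First, preprocess $Q$ by discarding any disjunct whose strict-order atoms (after collapsing equalities) contain a cycle; such a disjunct is unsatisfiable over every strict linear order, so this does not change the answer to $\owqalin$ and is computable in $\ptime$. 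For each surviving disjunct $Q_i$, whose order atoms are now acyclic, I fix an injective order-preserving assignment of its variables to fresh elements and add to $\instance_0'$ the base atoms of $Q_i$ under this assignment (flagged $\false$) together with the corresponding order atoms as genuine distinguished facts. Each canonical copy has size $O(|Q_i|)$, so $\instance_0'$ is still $\ptime$-computable, and it supplies a ready-made match of $Q_i'(\cdot,\false)$ inside any superset of $\instance_0'$ on which $<$ is a linear order.

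With this gadget the two directions go through as in Lemma~\ref{lem:ucqtocq}, using two facts specific to linear orders. In the forward direction, from a linear-ordered counterexample $\instance$ I build $\instance'$ from the $\true$-flagged facts, the gadgets, and $\instance_\Or$, and extend $<$ to a strict linear order on all of $\dom{\instance'}$; this is possible by the order-extension principle \cite{Szpilrajn30}, since the original order on $\dom{\instance}$ and the pairwise-disjoint acyclic gadget orders form an acyclic relation, and the flag elements $\true,\false$ never occur as data so may be placed arbitrarily. To see that $Q'$ is violated, any match forces some flag $z_{i_0}$ to $\true$ (via $\True(z'_{m-1})$ and the fact that flags range over $\{\true,\false\}$, which holds because base-domain-coveredness forces each disjunct to contain a base atom); then all base atoms of $Q_{i_0}'$ match $\true$-flagged facts, which occur only over $\dom{\instance}$, so by base-domain-coveredness the order atoms are evaluated on $\dom{\instance}$-elements, where $<$ agrees with $\instance$, yielding a forbidden match of $Q_{i_0}$ in $\instance$. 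In the backward direction I restrict a counterexample $\instance'$ to its $\true$-flagged base facts (dropping the flag) and to the distinguished facts among the elements of $\instance_0$ together with those carrying a $\true$-flagged base fact; this preserves $\instance_0$ and keeps the restriction of $<$ a strict linear order. If the result satisfied some (necessarily satisfiable) disjunct $Q_{i_0}$, the real $\true$-match together with the $\false$-gadget matches of the remaining disjuncts would combine through the $\Or$-chain into a match of $Q'$, a contradiction.

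Finally, the three ``further'' clauses are immediate. The flag-threading rewriting sends an $\aincd$ to an $\aincd$: a single base body atom $R'(\vec x, z)$, a single head atom $S'(\vec x, \vec y, z)$, and no variable repeated within an atom are all preserved, with the fresh frontier variable $z$ base-guarded by the body atom. An empty $\Sigma$ yields an empty $\Sigma'$, since no rule is rewritten and no flag-consistency rules are required. And if $Q$ is base-covered, the base atom that guarded an order atom in $Q_i$ becomes its flagged version $R'(\vec w, z_i)$ in $Q_i'$, which still contains both endpoints, so $Q'$ is base-covered as well; the $\Or$/$\True$ atoms contribute no distinguished atoms and so do not affect coveredness. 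The one genuinely delicate point to get right is the dummy-match gadget for order-bearing disjuncts, together with the preprocessing that removes order-unsatisfiable disjuncts so that every surviving disjunct admits such a gadget.
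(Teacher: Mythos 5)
Your reduction itself coincides with the paper's: the same flagged signature $\sigmab'$ plus $\Or$/$\True$, the same flag-threaded $\Sigma'$, the same $Q'$ with order atoms left untouched and an $\Or$-chain over the per-disjunct flags, the same $\ptime$ preprocessing discarding disjuncts whose order atoms are cyclic, and the same use of base-domain-coveredness and of the order-extension principle in the correctness argument. The divergence is exactly where you said the delicate point lies, namely the $\false$-flagged dummy witness, and there your construction has a genuine gap.

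Your per-disjunct gadget (the base atoms of $Q_i$ frozen on fresh elements, flagged $\false$, plus its order atoms as distinguished facts) provides what the \emph{backward} direction needs, a ready-made match of $Q_i'(\cdot,\false)$, but it breaks the \emph{forward} direction: the witness $\instance'$ you assemble from the $\true$-flagged copy of $\instance$, the gadgets, and $\instance_{\Or}$ need not satisfy $\Sigma'$, and you never verify that it does. Concretely, take $\Sigma = \set{\forall x y\, (R(x,y) \rightarrow \exists z\, S(y,z))}$ (an $\aincd$, so within scope of the "further" clause) and a disjunct of $Q$ containing an $R$-atom and an order atom; your gadget contains $R'(e_x,e_y,\false)$ but no $S'$-fact, so the rewritten rule $\forall x y z\, (R'(x,y,z) \rightarrow \exists w\, S'(y,w,z))$ fails in $\instance'$, and $\instance'$ is not a counterexample to $\owqa(\instance_0',\Sigma',Q')$. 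In Lemma~\ref{lem:ucqtocq} this issue is invisible because $\instance_\false$ there contains $R'(\false,\dots,\false)$ for \emph{every} relation, so every TGD head is matched by sending all variables to $\false$; your gadgets have no such absorbing property, and the frontier variables of a body match inside a gadget are pinned to gadget elements, so even adding the facts $R'(\false,\dots,\false)$ would not help. The paper's proof keeps absorption in the ordered setting: its $\instance_\false$ consists of chains $\false^i_1 <_i \cdots <_i \false^i_m$ of fresh elements together with \emph{all} base facts $R'(\vec f,\false)$ over all tuples $\vec f$ of these elements, so any TGD head has a match inside $\instance_\false$ while the chains still yield the dummy matches for (per-relation acyclic) order atoms. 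Your construction is repairable in the same spirit: keep your per-disjunct order facts but saturate the base part, i.e., add every fact $R'(\vec e,\false)$ over the whole pool of gadget elements; alternatively, leave $\instance_0'$ as you defined it and, in the forward direction only, replace the gadgets by their chase under $\Sigma'$ (all chase facts carry flag $\false$, so your $Q'$-violation argument is unaffected). Incidentally, your per-disjunct treatment of the order atoms is more robust than the paper's per-relation chains when a single disjunct mixes several distinguished relations on shared variables, so the saturated version of your gadget is a perfectly good (arguably better) substitute; but the base facts must be saturated, not copied verbatim from the disjunct.
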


\begin{proof}
  We first preprocess the input UCQ $Q$ without loss of generality to remove any
  disjuncts where some distinguished relation $<_i$ is not a partial order
  (i.e., it has a cycle): the rewritten $Q$ is equivalent to the original one
  for $\owqalin$ because the removed disjuncts can never be entailed because of
  the semantics of distinguished relations.

  We now adapt the proof of Lemma~\ref{lem:ucqtocq}.
  The definition of
  $\sigmaor$ is unchanged, and we define $\sigmab' \colonequals \sigmab''
  \sqcup \sigmaor$
  where $\sigmab''$ is defined by increasing the arity of the
  relations from $\sigmab$ as before.
  The definition of $\Sigma'$ is unchanged. It easy to see that if $\Sigma$ is empty
  then so is $\Sigma'$, and if $\Sigma$ consists of $\aincd$s (i.e., there are no
  repetitions of variables in the body and in the head, and only one body fact)
  then this is still the case of~$\Sigma'$.

  The definition of $Q'$ is unchanged except that
  we do not rewrite distinguished relations: as the query is
  base-domain-covered, 
  each fresh variable $z_i$ that we add must actually occur in $Q_i'$,
  and the base-domain-coveredness (resp.\ base-coveredness) of $Q'$ is easy to see from that of~$Q$.

  We modify the definition of $\instance_\Or$ to
  complete each distinguished relation to a total order on~$\instance_\Or$
  (e.g., create $\false <_i \true$ for all distinguished relations~$<_i$). We
  define $\instance_\false$ in a new fashion.
  First, letting $m$ be the maximal number of
  variables of a disjunct of~$Q$, for each distinguished relation $<_i$, we
  create fresh values $\false^i_1, \ldots,
  \false^i_m$, and create facts $\false^i_1 <_i \cdots <_i \false^i_m$ 
  in~$\instance_f$. Second, we create all facts $R'(\vec f, \false)$ where $R' \in
  \sigmab$ and $\vec f$ is any tuple of the $\false^i_j$.
  Having defined $\instance_\false$, we then define 
  $(\instance_0)_\ptrue$ as the union of
  $\{R'(\vec a, \true) \mid R(\vec a) \in \instance_0 \wedge R \in \sigmab\}$
  and of the $\sigmad$-facts of~$\instance_0$ kept as-is; then we define $\instance_0'
  \defeq \instance_\Or \sqcup (\instance_0)_\ptrue 
  \sqcup \instance_\false$.

  To adapt the forward direction of the correctness proof, let us consider a
  counterexample $\instance$ to $\owqalin(\instance_0, \Sigma, Q)$ with suitably
  interpreted distinguished relations.
  We define $\instance'' \colonequals \instance'_0 \sqcup
  \instance_\ptrue \sqcup \instance_\false$, with $\instance_\false$ as above
  and $\instance_\ptrue$ defined as above by adding $\true$ to base facts and
  keeping distinguished facts as-is. It is clear that each distinguished
  relation in~$\instance''$ is a partial order, because this is true in
  isolation
  in~$\instance_\false$, in~$\instance_\Or$, and in~$\instance_\ptrue$; further
  $\instance_\false$ and $\instance_\Or$
  overlap only on~$\false$, $\instance_\Or$ and 
  $\instance_\ptrue$ only overlap on~$\true$,
  and~$\instance_\false$ and~$\instance_\ptrue$ do not overlap at all.
  Thus, we can define $\instance'$ from $\instance''$ by
  completing each distinguished relation to be a total order.
  
  We now explain how the correctness argument of the forward direction is adapted.
  Clearly
  $\instance' \supseteq \instance'_0$. To see why $\instance'$ satisfies
  $\Sigma'$, as $\Sigma'$ does not involve the distinguished relations, we
  reason as in Lemma~\ref{lem:ucqtocq} to deduce that a match is either included
  in $\instance_\false$ or in $\instance_\ptrue$: the first case is similar as a
  head match can be found in $\instance_\false$ by definition, and the second
  case is unchanged. To see why $\instance'$ violates~$Q'$, we show
  as in Lemma~\ref{lem:ucqtocq} that there is a match $h$ of some $Q'_{i_0}$ to
  $\instance'$ that maps all \emph{base} facts of~$Q'_{i_0}$ to facts with $\true$ as their last
  element. Now, the only distinguished facts
  where each individual element occurs in $\instance'$ in base facts of this form
  are the ones from $\instance_\ptrue$, that we constructed from $\instance$,
  which violated $Q_{i_0}$; hence, we can conclude using the fact that $Q$ is
  base-domain-covered.

  We now explain how the backward direction of the correctness proof is adapted.
  We construct $\instance$ as the disjoint union of
  $\{R(\vec a) \mid R'(\vec a, \true) \in \instance \land R' \in \sigmab''\}$ and
  of the distinguished facts of~$\instance'$ kept as-is.
  It is clear that $\instance'$ suitably interprets the distinguished relations,
  because its restriction to~$\sigmad$ is the same as $\instance$, which does.
  Again we have $\instance \supseteq \instance_0$. The fact that
  $\instance$ satisfies $\Sigma$ is as before, except that the arity of
  distinguished 
  facts is not changed in~$M'$, and the witness head facts of
  $M'$ in~$\instance'$ may include distinguished facts, in
  which case they are found as-is in $\instance$.
  
  To see that $\instance$ violates $Q$, we reuse the argument of
  Lemma~\ref{lem:ucqtocq}. The only new point that is needed is that
  the new $\instance_\false$ can still be used to find matches of any $Q'_{i_0}$
  with the last variable mapped to~$\false$, but this is easy to see from that
  construction (also recall our initial preprocessing of~$Q$ to eliminate
  disjuncts where some distinguished relation was not a partial order).
  This concludes the proof.
\end{proof}

\subsection{UCQ to CQ for $\owqa$ with Special Relations}
\label{apx:ucqtocqtc}

We now adapt Lemma~\ref{lem:ucqtocq}
so it can be applied to both $\owqatr$ and $\owqatc$.
We partition the signature into
two sets of relations: \emph{flagged relations}, whose arity will be increased as in
the proof of Lemma~\ref{lem:ucqtocq} above, and \emph{special relations}, whose
arity we will not increase. This partition is different from that of the rest of
the paper, where we had base and distinguished relations. In this section, the
special relations must include all distinguished relations (so that we do not
increase their arity), but they may also
include base relations. In particular, for $\owqatc$, the base relations of
which we are taking the transitive closure must themselves be special (indeed, we cannot
increase their arities).

We prove a generalization of Lemma~\ref{lem:ucqtocq} to the
setting with flagged relations and special relations,
and where we also allow logical constraints on special relations that go beyond
the TGDs allowed in Lemma~\ref{lem:ucqtocq}: in particular this will allow us to
impose transitivity and transitive closure requirements. The tradeoff is that we
will need to impose a restriction on the TGDs and CQ: intuitively, when we use
the special relations, we must also use the flagged relations (so we cannot
simply make all relations special).

We first define the constraints that we will allow on the special
relations:

\begin{definition}\label{def:distconstrset}
  On a signature $\sigma \colonequals \sigmaf \sqcup \sigmas$ partitioned into
  flagged and special relations, a \emph{special constraint set} $\Theta$ is a
  set of logical constraints on~$\sigmas$ involving any of the following:

  \begin{itemize}
    \item Disjunctive inclusion dependencies on~$\sigmas$;
    \item \emph{Transitivity assertions}, i.e., assertions that some binary relation in~$\sigmas$ is transitive
      (i.e., a special kind of TGD);
    \item \emph{Transitive closure assertions}, i.e., assertions that some binary relation in~$\sigmas$ is the transitive
      closure of another binary relation in~$\sigmas$.
  \end{itemize}
\end{definition}

Hence, special constraint sets can be used to express the semantics of
distinguished relations in the $\owqatr$ and $\owqatc$
problems : remember that distinguished relations are always special, and for
$\owqatc$ the base relations of which we are taking the transitive closure are
also special.

In addition to the special constraint set $\Theta$, our result will allow us to
write TGDs~$\Sigma$, and the negation of a CQ, like in Lemma~\ref{lem:ucqtocq}. However, in
exchange for the freedom of keeping special relations binary,
we need to impose a condition on the TGDs and on the CQ, which we call
\emph{flagged-reachability}. Intuitively, the goal of this condition is to
ensure that we can discriminate between matches of special relations in the
query or in dependency bodies that use facts annotated by $\true$, versus the
matches whose facts are annotated by $\false$. Indeed, this information cannot
be seen on the special relations because they do not carry the flag.

\begin{definition}
  \label{def:basereach}
  Let $G$ be the graph over the atoms of~$Q$ where two atoms are connected
  iff they share a variable. 
  A CQ $Q$ is \emph{flagged-reachable} if any
  special atom $A(x, y)$ in~$Q$ has a path to some flagged atom
  $B(\vec z)$ in $G$. A UCQ is flagged-reachable if all of its disjuncts are.
  A TGD is \emph{flagged-reachable} if its body is.
\end{definition}

The flagged-reachable restriction suffices to ensure that matches of special
relations in queries and rule bodies must correspond to $\false$ or to
non-$\false$ elements, 
by looking at the flagged facts to which the
special relations must be connected. We can thus show:

\begin{lemma}
  \label{lem:ucqtocqdist}
  For any signature $\sigma \colonequals \sigmaf \sqcup \sigmas$, flagged-reachable TGDs~$\Sigma$,
  special constraint set~$\Theta$ on~$\sigmas$,
  and flagged-reachable CQ $Q$,
  one can compute in
  $\ptime$ a signature $\sigma' \colonequals \sigmaf' \sqcup \sigmas$,
  TGDs~$\Sigma'$ and CQ $Q'$ such that the $\owqa$
  problem for $\Sigma \sqcup \Theta$ and $Q$ reduces in $\ptime$ to the $\owqa$
  problem for $\Sigma' \sqcup \Theta$
  and $Q'$ (for combined complexity and for data complexity): namely, given a
  set of facts $\instance_0$ on~$\sigma$, we can compute in $\ptime$ in
  $\instance_0$, $\Sigma$, $\Theta$ and $Q$ a set of facts $\instance_0'$ on~$\sigma'$
  such that $\owqa(\instance_0, \Sigma \sqcup \Theta, Q)$ holds iff $\owqa(\instance_0',
  \Sigma' \sqcup \Theta, Q')$ holds.

  Further, the following properties transfer from $\Sigma$ to $\Sigma'$:
  being $\incd$s; being $\aincd$s; being empty. Further, if some relation in
  $\sigma$ is not mentioned in~$Q$ then $Q'$ does not mention it either.
\end{lemma}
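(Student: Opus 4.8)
The plan is to generalize the proof of Lemma~\ref{lem:ucqtocq}, keeping the same overall ``flag by an extra position'' strategy but applying it \emph{only} to the flagged relations $\sigmaf$, while leaving the special relations $\sigmas$ (which include all distinguished relations) untouched. First I would define $\sigmaf'$ by increasing the arity of each $R \in \sigmaf$ by one, adding the $\sigmaor$ relations $\mathrm{Or}$ and $\mathrm{True}$, and keeping $\sigmas$ exactly as it is; thus $\sigma' \colonequals \sigmaf' \sqcup \sigmas$. The TGDs $\Sigma'$ are obtained from $\Sigma$ by threading a fresh flag variable $z$ through all \emph{flagged} atoms in bodies and heads (special atoms are left unchanged). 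The query $Q'$ is built as in Lemma~\ref{lem:ucqtocq}: for each disjunct $Q_i$ we attach a flag variable $z_i$ to its flagged atoms, and we combine the disjuncts into a single CQ using the $\Or$/$\mathrm{True}$ gadget. The crucial role of flagged-reachability is that each disjunct $Q_i$ contains at least one flagged atom carrying $z_i$, so the gadget makes sense and $z_i$ really occurs in~$Q_i'$; similarly each TGD body contains a flagged atom carrying its flag variable, so the body is non-trivial after rewriting.

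Next I would define the three ingredients $\instance_{\Or}$, $\instance_{\false}$, and $(\instance_0)_\ptrue$ of the transformed set of facts, adapting the linear-order treatment (Lemma~\ref{lem:ucqtocqlin}) to handle special relations without a flag. Here $\instance_{\Or}$ carries the truth-table of $\Or$ together with $\mathrm{True}(\true)$, and must also provide ``false'' copies of special facts as needed so that each disjunct has a $\false$-match. Because special relations cannot be flagged, the false witnesses for them must be constructed on a separate pool of fresh ``false elements'' $\false^i_j$ and these must satisfy $\Theta$ in isolation: for transitivity/transitive-closure assertions we build suitable $S$-chains and their closures, and for disjunctive inclusion dependencies we close them off appropriately, exactly mirroring how $\instance_{\false}$ was enlarged in the $\owqalin$ proof. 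Then $(\instance_0)_\ptrue$ flags all flagged facts of $\instance_0$ with $\true$ and keeps its special facts verbatim, and $\instance_0' \colonequals \instance_{\Or} \sqcup (\instance_0)_\ptrue \sqcup \instance_{\false}$.

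For correctness I would run the two directions as in Lemma~\ref{lem:ucqtocq}, but with one extra invariant driven by flagged-reachability. In the forward direction, given a counterexample $\instance$ to $\owqa(\instance_0,\Sigma\sqcup\Theta,Q)$, I extend it to $\instance'$ by flagging flagged facts with $\true$ and adding the $\false$-gadget; since $\Theta$ constrains only $\sigmas$, and the restriction of $\instance'$ to $\sigmas$ decomposes into the disjoint pieces from $\instance$, from $\instance_{\Or}$, and from $\instance_\false$ (overlapping only on $\true$/$\false$), the special constraint set is preserved. A violation of $\Sigma'$ would, by flagged-reachability of bodies, have all its flagged atoms matched to facts sharing a single flag value, hence lie entirely in the $\true$-part or the $\false$-part, contradicting satisfaction of $\Sigma$ or being absorbed by the $\false$-gadget. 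For the query, a match of $Q'$ forces some $z_{i_0}\mapsto\true$ via the $\Or$/$\mathrm{True}$ gadget, and flagged-reachability forces the entire match of $Q_{i_0}'$ onto $\true$-flagged flagged facts \emph{and their connected special facts}, yielding a match of $Q_{i_0}$ in $\instance$, a contradiction. The backward direction is symmetric, reading off $\instance$ as the $\true$-flagged flagged facts (de-flagged) together with all special facts of $\instance'$ kept as-is.

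The main obstacle I expect is the handling of the special relations in the $\false$-gadget together with $\Theta$: unlike the flagged relations, special facts cannot be tagged, so I must ensure that the ``false'' matches of each disjunct and each rule body can be realized on a self-contained pool of fresh elements that (i) already satisfies $\Theta$ in isolation (transitivity, transitive closure, and disjunctive inclusion dependencies all hold within this pool), and (ii) does not accidentally create unwanted $\true$-connections or new query matches when unioned with the rest. Flagged-reachability is precisely what makes this possible: because every special atom in $Q$ and in every TGD body is reachable from a flagged atom, the flag value of that flagged atom disambiguates whether a given occurrence of a special relation is being used in a $\true$-context or a $\false$-context, so the two worlds never interfere. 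Verifying this non-interference carefully—especially that transitive-closure witnesses built inside the $\false$-pool cannot be ``hijacked'' to complete a spurious match elsewhere—is the delicate point, and is where I would spend most of the detailed argument.
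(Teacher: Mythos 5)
Your overall architecture is the paper's: increase the arity only of the flagged relations, thread a flag variable through the flagged atoms of each TGD and each disjunct, glue the disjuncts into one CQ with the $\Or$/$\mathrm{True}$ gadget, and use flagged-reachability to argue that any match of a rule body or of a disjunct is confined to a single flag-world (all-$\true$ or all-$\false$), which is what makes both directions of the correctness proof go through. The one place where you deviate is the construction of the $\false$-gadget for the special relations, and there your proposal has a genuine gap.

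You propose to build the $\false$-witnesses for the special relations on a pool of fresh elements $\false^i_j$, ``exactly mirroring'' the chains of Lemma~\ref{lem:ucqtocqlin}. That mirroring is both unnecessary and unsound here. It is unnecessary because every constraint allowed in a special constraint set---disjunctive inclusion dependencies, transitivity assertions, and transitive-closure assertions---is already satisfied by the one-point structure in which every relation holds of the all-$\false$ tuple; the paper exploits exactly this, taking $\instance_\false \colonequals \set{R(\false,\ldots,\false) \mid R \in \sigmaf'' \sqcup \sigmas}$, so no chains, no chase, and no case analysis on $\Theta$ are needed. It is unsound because the backward direction requires \emph{every} disjunct $Q_i$ of $Q$ to admit a match inside the $\false$-gadget (these matches fill the remaining inputs of the $\Or$-gadget), and an acyclic chain-plus-closure pool provides no match for disjuncts whose special atoms form a cycle or self-loop, e.g.\ a disjunct containing $S^\trans(x,x)$ or $S(x,y) \wedge S(y,x)$; such disjuncts are flagged-reachable and perfectly legitimate. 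The linear-order lemma could afford acyclic chains only because of its preprocessing step deleting disjuncts whose distinguished part is cyclic---sound there since a strict order can never realize a cycle, but unsound for transitive relations and transitive closures, where cyclic patterns are genuinely realizable. You mention neither self-loops nor any preprocessing, so the reduction as you describe it fails on such queries; the fix is precisely the paper's single-element all-$\false$ structure. (By contrast, the ``hijacking'' issue you flag as the delicate point is a non-issue: the domains of the $\true$-part and the $\false$-part are disjoint, and the only facts bridging $\true$ and $\false$ are $\Or$/$\mathrm{True}$ facts, which mention no special relation.)
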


\begin{proof}
  We amend the proof of Lemma~\ref{lem:ucqtocq}. We first explain the change in
  the construction.
  The definition of
  $\sigmaor$ is unchanged, and we define $\sigmaf' \colonequals \sigmaf''
  \sqcup \sigmaor$
  where $\sigmaf''$ is defined by increasing the arity of the
  relations from $\sigmaf$ (like $\sigma'$ from $\sigma$ in
  Lemma~\ref{lem:ucqtocq}).
  The definition of $\Sigma'$ is unchanged
  except that the special relations are not rewritten; note that
  flagged-reachability and non-emptiness of the bodies ensure that they always contain
  a flagged relation, so that the variable that we add indeed occurs in the body (but
  it may not occur in the head).
  Clearly $\Sigma'$ is still flagged-reachable.
  Further, it is clear that, if $\Sigma$ consists of $\incd$s or $\aincd$s, then $\Sigma'$ also does,
  as there is still only one fact in the body (and in the case of $\aincd$ it is
  still a flagged fact),
  and there are no repetitions of variables. It is further clear that if $\Sigma$ is empty then
  $\Sigma'$ also is.
  
  The definition of $Q'$ is unchanged except that
  special relations are not rewritten; again,
  flagged-reachability of the
  query ensures that each fresh variable $z_i$ indeed occurs in $Q_i'$,
  and the condition on~$Q'$ clearly follows from the condition on~$Q$.
  We define $\instance_\false \colonequals \{R(\false, \ldots, \false) \mid R \in
  \sigmaf' \sqcup \sigmas\}$, and we define $(\instance_0)_\ptrue$ as the union of
  $\{R'(\vec a, \true) \mid R(\vec a) \in \instance_0 \wedge R \in \sigmaf\}$
  and of the $\sigmas$-facts of~$\instance_0$ kept as-is; then we define $\instance_0'
  \defeq \instance_\Or \sqcup (\instance_0)_\ptrue 
  \sqcup \instance_\false$ as
  before.

  We now explain how to modify the correctness proof.
  For the forward direction,
  consider a counterexample $\instance$ to $\owqa(\instance_0, \Sigma \sqcup
  \Theta, Q)$, assuming without loss of generality that $\true, \false \notin
  \instance$.
  We define $\instance' \colonequals \instance'_0 \sqcup
  \instance_\ptrue \sqcup \instance_\false$, with $\instance_\false$ as above
  and $\instance_\ptrue$ defined as above by adding $\true$ to flagged facts and
  keeping special facts as-is.
  
  To verify that this construction for the forward direction is correct, we must
  first show that $\instance'$ satisfies the constraints
  $\Theta$ on~$\sigmas$. For disjunctive inclusion dependencies $\tau$,
  letting $M$ be a match of the body, as $\Theta$ does not mention
  the facts of~$\instance_\Or$, we must have $M \in \instance_\false$ or $M \in
  \instance_\ptrue$. Now, in the first case, by construction
  $\instance_\false$ must contain a match
  of some head disjunct of $\tau$, and in the second case, by considering $M$ in
  $\instance$ which satisfies $\tau$, we can also extend the match in
  $\instance_\ptrue$
  hence in $\instance'$. For transitivity
  assertions, we reason in the same way: seeing them as a TGD with a connected
  body that does not mention the relations of~$\sigma_\Or$, we deduce again that
  any match of them must be within~$\instance_\false$ or within~$\instance_\ptrue$,
  which allows us to conclude. The same reasoning works for
  transitive closure assertions.
  Hence, $\instance'$ satisfies $\Theta$.
  
  Now, we must verify the other
  conditions on~$\instance'$, for which we adapt the proof of
  Lemma~\ref{lem:ucqtocq}.
  In particular, observe that $\instance'$
  is still a superset of~$\instance_0'$. To check that there are no
  violations of~$\Sigma'$, consider a match $M'$ of a TGD $\tau'$ of~$\Sigma'$;
  as before $M'$ includes no fact of~$\instance_\Or$, and the \emph{flagged facts}
  $M'_\calB$
  of~$M'$ are either included in~$\instance_\ptrue$ or in~$\instance_\false$ depending
  on their last element. Now, as $\tau'$ is flagged-reachable,
  we observe that the \emph{special
  facts} $M'_\calD$ of~$M'$ must be connected to the flagged facts, so that,
  as $\instance_\ptrue$ and $\instance_\false$ have disjoint domains and no facts
  connect them except $\sigma_\Or$-facts which do not occur in $\tau'$, it must
  again be
  the case that the \emph{entire match} $M'$ is either included
  in~$\instance_\false$ or in~$\instance_\ptrue$. So we can conclude as before
  (in the second case,
  the preimage $M$ of~$M'$ in~$\instance$ is defined without changing the
  special
  facts, but the same reasoning applies).
  
  To check that $\instance'$ violates $Q'$, as before we reason by contradiction
  and deduce that $\exists \vec x_{i_0}
  Q'_{i_0}(\vec x_{i_0}, \true)$ holds in~$\instance'$.
  Now, its match $M'$ in $\instance'$ must
  consist of a match $M'_{\calB}$ of flagged facts of~$M'$ with $\true$ as their last
  position (so they are in~$\instance_\ptrue$) and a match $M'_{\calD}$ of
  special facts. As in the previous paragraph, 
  we now use the fact
  that $Q$, hence $Q'_{i_0}$, is flagged-reachable, so we must also have $\dom{M'_\calD} \subseteq
  \dom{M'_\calB}$. Hence, we have $M' \subseteq \instance''$, and we deduce as
  before (except that we do not increase the arity of special facts) 
  that the preimage $M$ of~$M'$ is a match of~$Q$ in~$\instance$ and conclude by
  contradiction. This proves the correctness of the forward direction.

  For the backward direction, we build $\instance$ as the disjoint union of
  $\{R(\vec a) \mid R'(\vec a, \true) \in \instance \land R' \in \sigmaf'\}$ and
  of the special facts of~$\instance'$ kept as-is.
  It is clear that $\instance'$ satisfies $\Theta$, because its restriction to~$\sigmas$
  is the same as $\instance$, which satisfies $\Theta$.
  Again we have $\instance \supseteq \instance_0$. The fact that
  $\instance$ satisfies $\Sigma$ is as before, except that the arity of special 
  facts is not changed in~$M'$, and the witness head facts of
  $M'$ in~$\instance'$ may include special facts, in
  which case they are found as-is in $\instance$. The fact that $\instance$ violates
  $Q$ is exactly as before, which concludes the correctness proof.
\end{proof}

\subsection{Revisiting the Results of Section~\ref{sec:hardness}}
\label{app:ucqtocqhardness}

We now apply the results of Appendix~\ref{apx:ucqtocqlin}
and~\ref{apx:ucqtocqtc} to show the results of Section~\ref{sec:hardness} with a
CQ instead of a UCQ.

\subsubsection{Theorem~\ref{thm:tcdisj}}
We apply Lemma~\ref{lem:ucqtocqdist} by picking as special relations $E$
and $E^\trans$, and taking all others to be flagged relations. The special
constraint set $\Theta$ asserts that $E^\trans$ is the transitive closure of~$E$.
We observe that the $\aincd$s $\Sigma'$ created in the proof are
flagged-reachable, because their bodies always consist of base facts. Now, we
observe that the UCQ $Q'$ is also flagged-reachable: the $E$-atoms in
$Q$-generated disjuncts are connected to the corresponding atom $R'(\vec x, e,
f)$, the $E$-facts in $E$-path length restriction disjuncts are connected to an
$R'$-fact, and the $E$-facts in $\did$ satisfaction disjuncts are connected to
the $\witness_\tau$-fact.

Hence, we can deduce from Lemma~\ref{lem:ucqtocqdist} that
Theorem~\ref{thm:tcdisj} extends to $\owqatc$ with CQs, both for data complexity
and combined complexity.

\subsubsection{Theorem~\ref{thm:lindisj}}
It is immediate to observe that the $\aincd$s $\Sigma'$ do not mention the
distinguished relations $<$, and we have already observed in the proof that the
UCQ that we define is base-covered, hence base-domain-covered,
so we deduce from Lemma~\ref{lem:ucqtocqlin}
that Theorem~\ref{thm:lindisj} also extends to $\owqalin$ with CQs, for data and
combined complexity.

\subsubsection{Proposition~\ref{prop:lindatacompltrans}}
We let $E$ and $E^\trans$ be the special relations, let all others be flagged
relations, and let the special constraint set $\Theta$ assert that $E^\trans$ is the transitive closure
of~$E$.
It is easy to observe that the query defined in the proof is flagged-reachable
(in particular thanks to the $C_\chi$-atom in the $E$-path length
restriction disjunct). As the constraints are empty, their image by
Lemma~\ref{lem:ucqtocqdist} also is, so we deduce that the data complexity lower
bound of Proposition~\ref{prop:lindatacompltrans} still applies to $\owqatc$
with CQs.

\subsubsection{Proposition~\ref{prop:lindatacompl}}
As the UCQ defined in the proof is base-covered, hence base-domain-covered, and the constraints are empty,
we deduce from Lemma~\ref{lem:ucqtocqlin} that the lower bound still applies to
$\owqalin$ with CQs.

\subsection{Revisiting the Results of Section~\ref{sec:undecid}}
\label{app:ucqtocqundecid}

We again apply the results of Appendix~\ref{apx:ucqtocqlin}
and~\ref{apx:ucqtocqtc} to show the results of Section~\ref{sec:undecid} with a
CQ instead of a UCQ.

\subsubsection{Theorem~\ref{thm:undectransb}}
\label{app:fix}
In this appendix, we prove Theorem~\ref{thm:undectransb}. Recall its
statement:

\medskip

\undectrans

\medskip

The same theorem was claimed in the JAIR version of this article
\cite{amarilli2018query}. In that version, we proved the result for a UCQ
instead of a CQ, and then incorrectly claimed that the result for a CQ could be
established using Lemma~\ref{lem:ucqtocqdist}. However, this is an oversight,
because the signature obtained after applying the lemma is no longer arity-two.

In this appendix, we give a revised argument explaining why the result holds for
a CQ. As in the previous results in this section, we will do so by adapting the proof for the
case of a UCQ, which was given in Section~\ref{sec:undecid}. We call this the
``UCQ case''.
We now explain how to adapt the proof without
increasing the arity of the signature:

\begin{proof}[Proof of Theorem~\ref{thm:undectransb}]
Let us fix the undecidable tiling problem from which we reduce.
The fixed signature contains, like in the UCQ case, the base relations $S'$
(binary),
$K_i$ (binary) for each color $C_i$, and $K_i'$ (unary) for each color $C_i$;
and the distinguished transitive relation~$S^\trans$. 
We also add one unary relation $\mathrm{True}$, and 
three binary relations $\mathrm{Or\_in}1$, $\mathrm{Or\_in}2$, $\mathrm{Or\_out}$.

The fixed dependencies $\Sigma$ are the same as those of the UCQ case, namely:
\begin{align*}
    S'(x, y) & \rightarrow \exists z ~ S'(y, z) &\qquad
    S'(x, y) & \rightarrow S^\trans(x, y)\\
    S^\trans(x, y) & \rightarrow \bigvee_i K_i(x, y) &\qquad
    S^\trans(x, y) & \rightarrow \bigvee_i K_i(y, x) &\qquad
    S^\trans(x, y) & \rightarrow \bigvee_i K_i'(x) .
  \end{align*}

To define the CQ, we use an idea reminiscent of 
Lemma~\ref{lem:ucqtocqdist}, but we rely on transitivity to avoid increasing the
signature arity. 
Consider all disjuncts $Q_1, \ldots, Q_n$ of the UCQ defined
in the UCQ case. Recall that these disjuncts were defined by taking,
\begin{inparaitem}[]
    \item for each forbidden horizontal pair $(C_i, C_j) \in \mathbb{H}$, with
      $1 \leq i, j \leq k$, the disjuncts:
      \begin{align*}
        K_i(x, y) \wedge S'(y, y') \wedge K_j(x, y') \quad
        K_i'(y) \wedge S'(y, y') \wedge K_j(y, y') \quad
        K_i(y', y) \wedge S'(y, y') \wedge K_j'(y')
      \end{align*}
    \item and for each forbidden vertical pair $(C_i, C_j) \in \mathbb{V}$, the
      analogous disjuncts
      \begin{align*}
        K_i(x, y) \wedge S'(x, x') \wedge K_j(x', y) \quad
        K_i'(x) \wedge S'(x, x') \wedge K_j(x', x) \quad
        K_i(x, x') \wedge S'(x, x') \wedge K_j'(x') .
      \end{align*}
   \end{inparaitem}
Note that each disjunct contains
precisely one $S'$-fact.
For each disjunct $Q_p$ with $1 \leq p \leq n$,
let $x_p$ be the variable occurring at the first
position of that fact, and let $z_p$ be a fresh variable. We let $Q'_p$ be the
conjunction of $Q_p$ and of the fact $S'(z_p, x_p)$. Now we let the CQ $Q$ be
the conjunction of all the $Q_p'$ defined before, and of the following conjuncts,
for fresh variables $f_1, \ldots, f_{n-1}$ and $z'_1, \ldots, z'_{n-1}$:
\begin{align*}
  \mathrm{Or\_in}1(z_1, f_1) \land \mathrm{Or\_in}2(z_2, f_1) \land \mathrm{Or\_out}(f_1, z_1')\\
  \mathrm{Or\_in}1(z_1', f_2) \land \mathrm{Or\_in}2(z_3, f_2) \land \mathrm{Or\_out}(f_2, z_2')\\
  \mathrm{Or\_in}1(z_2', f_3) \land \mathrm{Or\_in}2(z_4, f_3) \land \mathrm{Or\_out}(f_3, z_3')\\
  \vdots\\
  \mathrm{Or\_in}1(z_{n-2}', f_{n-1}) \land \mathrm{Or\_in}2(z_n, f_{n-1}) \land \mathrm{Or\_out}(f_{n-1}, z_{n-1}')\\
  \mathrm{True}(z_{n-1}')
\end{align*}
The intuition is that we want to code the OR Boolean operation in a table
as in the process of Lemma~\ref{lem:ucqtocqdist}, but
instead of storing the truth table in a table of arity 3, we will use  tables
of arity $2$ storing their ``reifications''. That is, we store the relationship of each row in the table to each value.
Thus, the $f_p$'s will each represent a row of the truth table, and the $z_p$'s and
$z_p'$'s will each represent a Boolean value.

We now explain how the initial set of facts $\instance_0$ is defined from the
input instance $c_0, \ldots, c_n$ of the tiling problem. It includes the facts
of the UCQ case, namely: 
  the fact $K_j'(a_0)$ such that $C_j$ is the color of $c_0$, and
  for $0 \leq i < n$, the fact $S'(a_i, a_{i+1})$
  and the fact $K_j(a_0, a_i)$ such that $C_j$ is the color of initial element $c_i$.
 But we add the following:
\begin{itemize}
  \item The fact
    $\mathrm{True}(\mathit{true})$, for a fresh element $\mathit{true}$;
  \item The facts $S'(\mathit{false}, \mathit{false})$, $K'_i(\mathit{false}, \mathit{false})$, and $K'_i(\mathit{false})$
    for each color $i$, for a fresh element $\mathit{false}$;
  \item One fact $S'(\mathit{true}, a_0)$ where $a_0$ is the element already
    defined in the UCQ case and presented above;
  \item Facts encoding the truth table of the OR-operator in arity two, that is,
    for fresh elements $\phi_{ff}$, $\phi_{ft}$, $\phi_{tf}$, $\phi_{tt}$:
    \begin{itemize}
      \item $\mathrm{Or\_in}1(\mathit{false}, \phi_{ff})$, $\mathrm{Or\_in}2(\mathit{false}, \phi_{ff})$, $\mathrm{Or\_out}(\phi_{ff}, \mathit{false})$
      \item $\mathrm{Or\_in}1(\mathit{false}, \phi_{ft})$, $\mathrm{Or\_in}2(\mathit{true}, \phi_{ft})$, $\mathrm{Or\_out}(\phi_{ft}, \mathit{true})$
      \item $\mathrm{Or\_in}1(\mathit{true}, \phi_{tf})$, $\mathrm{Or\_in}2(\mathit{false}, \phi_{tf})$, $\mathrm{Or\_out}(\phi_{tf}, \mathit{true})$
      \item $\mathrm{Or\_in}1(\mathit{true}, \phi_{tt})$, $\mathrm{Or\_in}2(\mathit{true}, \phi_{tt})$, $\mathrm{Or\_out}(\phi_{tt}, \mathit{true})$
    \end{itemize}
\end{itemize}
We now claim that this reduction is correct: the tiling problem has a solution iff there is a superset of $\instance_0$ that satisfies $\Sigma$ and violates $Q$ and where $S^\trans$ is transitive.
Intuitively, in the process of Lemma~\ref{lem:ucqtocqdist}, we added an additional
argument to some relations to track the truth value, increasing the arity. In this proof
we do not do this, to avoid increasing the arity. Thus
we will need a new trick, exploiting transitivity, to show correctness.

For the forward direction, from a solution $f$ to the tiling problem for input
$\vec{c}$, we construct the counterexample 
as follows. Starting from $\instance_0$, we first add the facts defined in
the UCQ case. Recall that this means 
extending the initial chain of $S'$-facts in~$\instance_0$ to
  an infinite chain $S'(a_0, a_1), \ldots, \allowbreak S'(a_m, a_{m+1}),
  \ldots$,
  setting $S^\trans$ to be the
  transitive closure of this $S'$-chain (so it is indeed transitive),
  creating the fact $K_l(a_i, a_j)$ where $l = f(i,j)$
  for all $i, j \in \mathbb{N}$ such that $i \neq j$, 
  and creating the fact $K_l'(a_i)$ where $l = f(i,i)$ 
  for all $i \in \mathbb{N}$. Now, we add the following facts: \begin{itemize}
  \item $S^\trans(\mathit{false}, \mathit{false})$;
  \item $S^\trans(\mathit{true}, a_i)$ for each element $a_i$ of the infinite
    chain constructed in the UCQ case as explained above;
  \item the facts $K_x'(\mathit{true})$ and $K_x(\mathit{true}, a_i)$ and $K_x(a_i, \mathit{true})$
    for each element $a_i$ of the infinite chain, for some arbitrary color~$x$.
\end{itemize}
In the resulting set of facts $\instance_0'$, the relation $S^\trans$ is
transitive, namely, it is the transitive closure of the infinite chain
$\mathit{true}, a_0, a_1, \ldots$ plus the one fact $S^\trans(\mathit{false},
\mathit{false})$. One can also check that $\instance_0'$ satisfies~$\Sigma$. 
We now explain why~$Q$ is violated, by reasoning similarly to Lemma~\ref{lem:ucqtocqdist}.
Assume by contradiction that $Q$ is satisfied.
Given that $\instance_0'$ only contains the fact $\mathrm{True}(\mathit{true})$
of the $\mathrm{True}$ relation,
and given that $\mathrm{True}(z_{n-1})$ has a match in~$\instance_0'$, 
we know that $z'_{n-1}$ was mapped to~$\mathit{true}$. Thus,
given that the only facts in $\instance_0'$ for the relations
$\mathrm{Or\_in}1$, $\mathrm{Or\_in}2$, and $\mathrm{Or\_out}$ are the facts
that were in~$\instance_0$,
necessarily some $z_p$ was mapped to~$\mathit{true}$.
Thus, there is a match of
$S'(\mathit{true}, x_p) \land Q_p$ where $Q_p$ is a disjunct of the UCQ 
from the UCQ case,
and $x_p$ is the variable at the first position of the $S'$-fact of~$Q_p$.
Given the facts in~$\instance_0'$, this means that there is a match of $Q_p$ in~$\instance_0'$ where the $S'$-fact is mapped to $S'(a_i, a_j)$ for some $i < j$. This would witness a forbidden pair in the tiling, which is a contradiction.
(Note that the facts defined with the arbitrary color~$x$ earlier can never be
part of a match of~$Q$.)
Thus, $Q$ is violated, establishing the forward direction.

For the backward direction, consider $\instance_0' \supseteq \instance_0$ that
satisfies $\Sigma$ and violates $Q$. We can define a tiling $f$ from $a_0, \ldots,
a_n, \ldots$, as in the proof for the UCQ case, namely, by intuitively following
some chain of $S'$-facts starting by $a_0, \ldots, a_n$ and continuing
indefinitely (with elements that may be distinct or not), considering the
$S^+$-facts $S^+(a_i, a_j)$ on that chain that must exist for each $a_i < a_j$, and setting $f$
according to the facts of the $K_l$- and $K_l'$-relations on that chain which must exist
according to~$\Sigma$. This tiling
satisfies the initial tiling problem instance. Let us argue that it is indeed a solution to the tiling
problem. Assume by contradiction that it is not, then as in the proof of the UCQ
case there would be a match of a UCQ disjunct $Q_p$ mapping the $S'$-fact
of~$Q_p$
to $S'(a_i, a_j)$ for some $i < j$. We can extend this to a match of the entire CQ $Q$ by:
\begin{itemize}
  \item mapping the atoms of $Q_p$ to this match;
  \item mapping $z_p$ to $\mathit{true}$, as there must be an $S^\trans$-fact $S^\trans(\mathit{true}, a_i)$ in $\instance_0'$;
  \item mapping the atoms of $Q_q$ for $q \neq p$ to the facts on
    $\mathit{false}$, which exist in $\instance_0$, in particular mapping $z_q$
    and $x_q$ to $\mathit{false}$;
  \item mapping the conjuncts featuring the $\mathrm{Or\_in}1$, $\mathrm{Or\_in}2$, and $\mathrm{Or\_out}$
    relations to the facts of $\instance_0$ in the expected way.
\end{itemize}
Thus we obtain a match of~$Q$, contradicting our assumption. Thus, we have indeed defined a solution to the tiling problem, establishing the backward direction of the correctness proof and concluding the proof.
\end{proof}

\subsubsection{Theorem~\ref{thm:undectrans}}
We use Lemma~\ref{lem:ucqtocqdist}, and pick 
$S'$ as the only flagged relation, and let all other relations be special
relations. We
let $\Sigma$ consist of the one ID applying only to~$S'$; it is flagged-reachable.
The disjuncts of the CQ that we write are always
flagged-reachable, as they are connected and always include a $S'$-fact. The
special constraint set $\Theta$ consists of all other $\incd$s used in the proof, and of the assertion that
$S^\trans$ is the transitive closure of~$S$.
Applying Lemma~\ref{lem:ucqtocqdist}, we reduce the
$\owqa$ problem with a UCQ to $\owqa$ for $\Theta$ (so $\incd$s plus the transitive
closure assertion on the one distinguished relation $S^\trans$), for the
translation of $\Sigma'$ (which are $\incd$s), and a CQ which still does not use
the one distinguished relation $S^\trans$ of the new signature.
This establishes the result of Theorem~\ref{thm:undectrans}.

\subsubsection{Theorem~\ref{thm:undeccq}}
We use Lemma~\ref{lem:ucqtocqlin}. We check that, indeed, the constraints do not
mention the distinguished relation, and that the two UCQ disjuncts which mention
these relations are
base-domain-covered. Hence, the translation shows undecidability of $\owqalin$
for $\aincd$s and a CQ, concluding the proof of Theorem~\ref{thm:undeccq}.

\vskip 0.2in
\bibliography{algs}

\begin{thebibliography}{}

\bibitem[\protect\BCAY{Abiteboul\ \BBA\ Duschka}{Abiteboul\ \BBA\
  Duschka}{1998}]{abdus}
Abiteboul, S.\BBACOMMA\  \BBA\ Duschka, O.~M. \BBOP1998\BBCP.
\newblock \BBOQ Complexity of answering queries using materialized views\BBCQ\
\newblock In {\Bem Proc. of the Seventeenth ACM SIGACT-SIGMOD-SIGART Symposium
  on Principles of Database Systems ({PODS})}.

\bibitem[\protect\BCAY{Abiteboul, Hull,\ \BBA\ Vianu}{Abiteboul
  et~al.}{1995}]{ahv}
Abiteboul, S., Hull, R., \BBA\ Vianu, V. \BBOP1995\BBCP.
\newblock {\Bem {Foundations of Databases}}.
\newblock Addison-Wesley.

\bibitem[\protect\BCAY{Afrati, Li,\ \BBA\ Pavlaki}{Afrati
  et~al.}{2008}]{afratiarith}
Afrati, F., Li, C., \BBA\ Pavlaki, V. \BBOP2008\BBCP.
\newblock \BBOQ Data exchange in the presence of arithmetic comparisons\BBCQ\
\newblock In {\Bem Proc. of 11th International Conference on Extending Database
  Technology {(EDBT)}}.

\bibitem[\protect\BCAY{Amarilli, Benedikt, Bourhis,\ \BBA\ {Vanden
  Boom}}{Amarilli et~al.}{2016}]{ijcai16-lintrans}
Amarilli, A., Benedikt, M., Bourhis, P., \BBA\ {Vanden Boom}, M.
  \BBOP2016\BBCP.
\newblock \BBOQ Query answering with transitive and linear-ordered data\BBCQ\
\newblock In {\Bem Proc. of the Twenty-Fifth International Joint Conference on
  Artificial Intelligence ({IJCAI})}.

\bibitem[\protect\BCAY{Amarilli, Benedikt, Bourhis,\ \BBA\
  Vanden~Boom}{Amarilli et~al.}{2018}]{amarilli2018query}
Amarilli, A., Benedikt, M., Bourhis, P., \BBA\ Vanden~Boom, M. \BBOP2018\BBCP.
\newblock \BBOQ \href{https://www.jair.org/index.php/jair/article/view/11240}
  {Query Answering with Transitive and Linear-Ordered Data}\BBCQ\
\newblock {\Bem \mbox{\href{https://www.jair.org/}{JAIR}}}.

\bibitem[\protect\BCAY{Andr{\'e}ka, N{\'e}meti,\ \BBA\ van Benthem}{Andr{\'e}ka
  et~al.}{1998}]{andreka1998modal}
Andr{\'e}ka, H., N{\'e}meti, I., \BBA\ van Benthem, J. \BBOP1998\BBCP.
\newblock \BBOQ Modal languages and bounded fragments of predicate logic\BBCQ\
\newblock {\Bem J. Philosophical Logic}, {\Bem 27\/}(3).

\bibitem[\protect\BCAY{Artale, Ryzhikov,\ \BBA\ Kontchakov}{Artale
  et~al.}{2012}]{datatypes3}
Artale, A., Ryzhikov, V., \BBA\ Kontchakov, R. \BBOP2012\BBCP.
\newblock \BBOQ {DL}-{L}ite with attributes and datatypes\BBCQ\
\newblock In {\Bem Proc. of 20th European Conference on Artificial Intelligenc
  (ECAI)}.

\bibitem[\protect\BCAY{Baget, Lecl{\`e}re, Mugnier,\ \BBA\ Salvat}{Baget
  et~al.}{2009}]{baget2009extending}
Baget, J.-F., Lecl{\`e}re, M., Mugnier, M.-L., \BBA\ Salvat, E. \BBOP2009\BBCP.
\newblock \BBOQ Extending decidable cases for rules with existential
  variables\BBCQ\
\newblock In {\Bem Proc. of the 21st International Joint Conference on
  Artificial Intelligence {IJCAI}}.

\bibitem[\protect\BCAY{Baget, Mugnier, Rudolph,\ \BBA\ Thomazo}{Baget
  et~al.}{2011}]{bagetcomplexityfg}
Baget, J.-F., Mugnier, M.-L., Rudolph, S., \BBA\ Thomazo, M. \BBOP2011\BBCP.
\newblock \BBOQ Walking the complexity lines for generalized guarded
  existential rules\BBCQ\
\newblock In {\Bem Proc. of the 22nd International Joint Conference on
  Artificial Intelligence {(IJCAI)}}.

\bibitem[\protect\BCAY{Baget, Bienvenu, Mugnier,\ \BBA\ Rocher}{Baget
  et~al.}{2015}]{mugnier15}
Baget, J., Bienvenu, M., Mugnier, M., \BBA\ Rocher, S. \BBOP2015\BBCP.
\newblock \BBOQ Combining existential rules and transitivity: Next steps\BBCQ\
\newblock In {\Bem Proc. of the Twenty-Fourth International Joint Conference on
  Artificial Intelligence, {IJCAI}}.

\bibitem[\protect\BCAY{B{\'{a}}r{\'{a}}ny\ \BBA\
  Boja{\'{n}}czyk}{B{\'{a}}r{\'{a}}ny\ \BBA\
  Boja{\'{n}}czyk}{2012}]{vincemikolaj}
B{\'{a}}r{\'{a}}ny, V.\BBACOMMA\  \BBA\ Boja{\'{n}}czyk, M. \BBOP2012\BBCP.
\newblock \BBOQ Finite satisfiability for guarded fixpoint logic\BBCQ\
\newblock {\Bem IPL}, {\Bem 112\/}(10), 371--375.

\bibitem[\protect\BCAY{B{\'{a}}r{\'{a}}ny, ten Cate,\ \BBA\
  Otto}{B{\'{a}}r{\'{a}}ny et~al.}{2012}]{vldb12}
B{\'{a}}r{\'{a}}ny, V., ten Cate, B., \BBA\ Otto, M. \BBOP2012\BBCP.
\newblock \BBOQ Queries with guarded negation\BBCQ\
\newblock {\Bem {Proc. VLDB Endow.}}, {\Bem 5\/}(11), 1328--1339.

\bibitem[\protect\BCAY{B{\'a}r{\'a}ny, ten Cate,\ \BBA\
  Segoufin}{B{\'a}r{\'a}ny et~al.}{2011}]{gnficalp}
B{\'a}r{\'a}ny, V., ten Cate, B., \BBA\ Segoufin, L. \BBOP2011\BBCP.
\newblock \BBOQ Guarded negation\BBCQ\
\newblock In {\Bem Proc. of 38th International Colloquium on Automata Languages
  and Programming {(ICALP)}}.

\bibitem[\protect\BCAY{Benedikt, Bourhis,\ \BBA\ {Vanden Boom}}{Benedikt
  et~al.}{2016}]{lics16-gnfpup}
Benedikt, M., Bourhis, P., \BBA\ {Vanden Boom}, M. \BBOP2016\BBCP.
\newblock \BBOQ A step up in expressiveness of decidable fixpoint logics\BBCQ\
\newblock In {\Bem Proc. of the 31st Annual {ACM/IEEE} Symposium on Logic in
  Computer Science, {(LICS)}}.

\bibitem[\protect\BCAY{B{\"{o}}rger, Gr{\"{a}}del,\ \BBA\
  Gurevich}{B{\"{o}}rger et~al.}{1997}]{classicaldecision}
B{\"{o}}rger, E., Gr{\"{a}}del, E., \BBA\ Gurevich, Y. \BBOP1997\BBCP.
\newblock {\Bem The Classical Decision Problem}.
\newblock Perspectives in Mathematical Logic. Springer.

\bibitem[\protect\BCAY{Bourhis, Kr{\"{o}}tzsch,\ \BBA\ Rudolph}{Bourhis
  et~al.}{2015}]{BourhisKR15}
Bourhis, P., Kr{\"{o}}tzsch, M., \BBA\ Rudolph, S. \BBOP2015\BBCP.
\newblock \BBOQ Reasonable highly expressive query languages\BBCQ\
\newblock In {\Bem Proc. of the Twenty-Fourth International Joint Conference on
  Artificial Intelligence {(IJCAI)}}.

\bibitem[\protect\BCAY{Bourhis, Morak,\ \BBA\ Pieris}{Bourhis
  et~al.}{2013}]{bourhispieris}
Bourhis, P., Morak, M., \BBA\ Pieris, A. \BBOP2013\BBCP.
\newblock \BBOQ The impact of disjunction on query answering under
  guarded-based existential rules\BBCQ\
\newblock In {\Bem Proc. of the 23rd International Joint Conference on
  Artificial Intelligence, {(IJCAI)}}.

\bibitem[\protect\BCAY{Calvanese, {De Giacomo},\ \BBA\ Vardi}{Calvanese
  et~al.}{2005}]{CalvaneseGV05}
Calvanese, D., {De Giacomo}, G., \BBA\ Vardi, M.~Y. \BBOP2005\BBCP.
\newblock \BBOQ Decidable containment of recursive queries\BBCQ\
\newblock {\Bem Theor. Comput. Sci.}, {\Bem 336\/}(1), 33--56.

\bibitem[\protect\BCAY{Calvanese, Eiter,\ \BBA\ Ortiz}{Calvanese
  et~al.}{2009}]{calvanese2009regular}
Calvanese, D., Eiter, T., \BBA\ Ortiz, M. \BBOP2009\BBCP.
\newblock \BBOQ Regular path queries in expressive description logics with
  nominals\BBCQ\
\newblock In {\Bem Proc. of the 21st International Joint Conference on
  Artificial Intelligence {(IJCAI)}}.

\bibitem[\protect\BCAY{Calvanese, Lembo, Lenzerini,\ \BBA\ Rosati}{Calvanese
  et~al.}{2006}]{calvanese2006data}
Calvanese, D., Lembo, D., Lenzerini, M., \BBA\ Rosati, R. \BBOP2006\BBCP.
\newblock \BBOQ Data complexity of query answering in description logics\BBCQ\
\newblock In {\Bem Proc. of the Twenty-Fourth International Joint Conference on
  Artificial Intelligence {(IJCAI)}}.

\bibitem[\protect\BCAY{Ganzinger, Meyer,\ \BBA\ Veanes}{Ganzinger
  et~al.}{1999}]{undecidgf2}
Ganzinger, H., Meyer, C., \BBA\ Veanes, M. \BBOP1999\BBCP.
\newblock \BBOQ The two-variable guarded fragment with transitive
  relations\BBCQ\
\newblock In {\Bem Proc. of the 14th Annual {IEEE} Symposium on Logic in
  Computer Science {(LICS)}}.

\bibitem[\protect\BCAY{Glimm, Lutz, Horrocks,\ \BBA\ Sattler}{Glimm
  et~al.}{2008}]{glimmetal}
Glimm, B., Lutz, C., Horrocks, I., \BBA\ Sattler, U. \BBOP2008\BBCP.
\newblock \BBOQ Conjunctive query answering for the description logic
  {SHIQ}\BBCQ\
\newblock {\Bem J. Artif. Intell. Res.}, {\Bem 31}, 157--204.

\bibitem[\protect\BCAY{Gottlob\ \BBA\ Papadimitriou}{Gottlob\ \BBA\
  Papadimitriou}{2003}]{georgchristos}
Gottlob, G.\BBACOMMA\  \BBA\ Papadimitriou, C. \BBOP2003\BBCP.
\newblock \BBOQ On the complexity of single-rule {D}atalog queries\BBCQ\
\newblock {\Bem Inf. Comp.}, {\Bem 183}.

\bibitem[\protect\BCAY{Gottlob, Pieris,\ \BBA\ Tendera}{Gottlob
  et~al.}{2013}]{andreaslidia}
Gottlob, G., Pieris, A., \BBA\ Tendera, L. \BBOP2013\BBCP.
\newblock \BBOQ Querying the guarded fragment with transitivity\BBCQ\
\newblock In {\Bem Proc. of the 40th International Colloquium on Automata,
  Languages, and Programming {(ICALP)}}.

\bibitem[\protect\BCAY{Gr{\"{a}}del}{Gr{\"{a}}del}{2002}]{Gradel02}
Gr{\"{a}}del, E. \BBOP2002\BBCP.
\newblock \BBOQ Guarded fixed point logics and the monadic theory of countable
  trees\BBCQ\
\newblock {\Bem Theor. Comput. Sci.}, {\Bem 288\/}(1), 129--152.

\bibitem[\protect\BCAY{Gr{\"a}del\ \BBA\ Walukiewicz}{Gr{\"a}del\ \BBA\
  Walukiewicz}{1999}]{GradelW99}
Gr{\"a}del, E.\BBACOMMA\  \BBA\ Walukiewicz, I. \BBOP1999\BBCP.
\newblock \BBOQ Guarded fixed point logic\BBCQ\
\newblock In {\Bem Proc. of the 14th Annual {IEEE} Symposium on Logic in
  Computer Science {(LICS)}}.

\bibitem[\protect\BCAY{Guti{\'{e}}rrez{-}Basulto,
  Ib{\'{a}}{\~{n}}ez{-}Garc{\'{\i}}a, Kontchakov,\ \BBA\
  Kostylev}{Guti{\'{e}}rrez{-}Basulto et~al.}{2015}]{datatypes1}
Guti{\'{e}}rrez{-}Basulto, V., Ib{\'{a}}{\~{n}}ez{-}Garc{\'{\i}}a, Y.~A.,
  Kontchakov, R., \BBA\ Kostylev, E.~V. \BBOP2015\BBCP.
\newblock \BBOQ Queries with negation and inequalities over lightweight
  ontologies\BBCQ\
\newblock {\Bem J. Web Sem.}, {\Bem 35}, 184--202.

\bibitem[\protect\BCAY{Horrocks\ \BBA\ Sattler}{Horrocks\ \BBA\
  Sattler}{1999}]{oldhorrocks}
Horrocks, I.\BBACOMMA\  \BBA\ Sattler, U. \BBOP1999\BBCP.
\newblock \BBOQ A description logic with transitive and inverse roles and role
  hierarchies\BBCQ\
\newblock {\Bem Journal of Logic and Computation}, {\Bem 9\/}(3), 385--410.

\bibitem[\protect\BCAY{Kieronski}{Kieronski}{2011}]{kieronski2011decidability}
Kieronski, E. \BBOP2011\BBCP.
\newblock \BBOQ Decidability issues for two-variable logics with several linear
  orders\BBCQ\
\newblock In {\Bem Proc. of the 20th Annual Conference of the EACSL {(CSL)}}.

\bibitem[\protect\BCAY{Kiero{\'n}ski\ \BBA\ Tendera}{Kiero{\'n}ski\ \BBA\
  Tendera}{2007}]{kieronski2007finite}
Kiero{\'n}ski, E.\BBACOMMA\  \BBA\ Tendera, L. \BBOP2007\BBCP.
\newblock \BBOQ On finite satisfiability of the guarded fragment with
  equivalence or transitive guards\BBCQ\
\newblock In {\Bem Proc. of the International Conference on Logic for
  Programming, Artificial Intelligence, and Reasoning {(LPAR)}}.

\bibitem[\protect\BCAY{Kr{\"o}tzsch\ \BBA\ Rudolph}{Kr{\"o}tzsch\ \BBA\
  Rudolph}{2007}]{kroetzsch2007conjunctive}
Kr{\"o}tzsch, M.\BBACOMMA\  \BBA\ Rudolph, S. \BBOP2007\BBCP.
\newblock \BBOQ Conjunctive queries for $\mathcal{EL}$ with composition of
  roles\BBCQ\
\newblock In {\Bem Proc. of the 2007 International Workshop on Description
  Logics {(DL2007)}}.

\bibitem[\protect\BCAY{Libkin}{Libkin}{2004}]{fmt}
Libkin, L. \BBOP2004\BBCP.
\newblock {\Bem Elements of Finite Model Theory}.
\newblock Springer.

\bibitem[\protect\BCAY{L\"oding}{L\"oding}{2011}]{Loding-unpublished}
L\"oding, C. \BBOP2011\BBCP.
\newblock \BBOQ Automata on infinite trees\BBCQ\
\newblock
  \url{http://www.automata.rwth-aachen.de/~loeding/inf-tree-automata.pdf}.

\bibitem[\protect\BCAY{Ortiz, Rudolph,\ \BBA\ {\v{S}}imkus}{Ortiz
  et~al.}{2011}]{ortiz2011query}
Ortiz, M., Rudolph, S., \BBA\ {\v{S}}imkus, M. \BBOP2011\BBCP.
\newblock \BBOQ Query answering in the {H}orn fragments of the description
  logics {SHOIQ} and {SROIQ}\BBCQ\
\newblock In {\Bem Proc. of the 22nd International Joint Conference on
  Artificial Intelligence {(IJCAI)}}.

\bibitem[\protect\BCAY{Ortiz, Rudolph,\ \BBA\ \v{S}imkus}{Ortiz
  et~al.}{2010}]{ortiz2010query2}
Ortiz, M., Rudolph, S., \BBA\ \v{S}imkus, M. \BBOP2010\BBCP.
\newblock \BBOQ Query answering is undecidable in {DL}s with regular
  expressions, inverses, nominals and counting\BBCQ\
\newblock \BTR, Technische Universit\"at Wien.

\bibitem[\protect\BCAY{Ortiz\ \BBA\ Simkus}{Ortiz\ \BBA\
  Simkus}{2012}]{ortiz2012reasoning}
Ortiz, M.\BBACOMMA\  \BBA\ Simkus, M. \BBOP2012\BBCP.
\newblock \BBOQ Reasoning and query answering in description logics\BBCQ\
\newblock In {\Bem Proc. of the 8th International Summer School on Semantic
  Technologies for Advanced Query Answering {(Reasoning Web)}}.

\bibitem[\protect\BCAY{Ortiz de~la Fuente}{Ortiz de~la
  Fuente}{2010}]{ortiz2010query}
Ortiz de~la Fuente, M. \BBOP2010\BBCP.
\newblock {\Bem Query Answering in Expressive Description Logics}.
\newblock Ph.D.\ thesis, Technische Universit\"at Wien.

\bibitem[\protect\BCAY{Otto}{Otto}{2001}]{ottofo2order}
Otto, M. \BBOP2001\BBCP.
\newblock \BBOQ Two variable first-order logic over ordered domains\BBCQ\
\newblock {\Bem J. Symb. Log.}, {\Bem 66\/}(2), 685--702.

\bibitem[\protect\BCAY{Rosati}{Rosati}{2007}]{Rosati07}
Rosati, R. \BBOP2007\BBCP.
\newblock \BBOQ The limits of querying ontologies\BBCQ\
\newblock In {\Bem Proc. of the 11th International Conference on Database
  Theory {(ICDT)}}.

\bibitem[\protect\BCAY{Savkovic\ \BBA\ Calvanese}{Savkovic\ \BBA\
  Calvanese}{2012}]{datatypes2}
Savkovic, O.\BBACOMMA\  \BBA\ Calvanese, D. \BBOP2012\BBCP.
\newblock \BBOQ Introducing datatypes in {DL}-{L}ite\BBCQ\
\newblock In {\Bem Proc. of the 20th European Conference on Artificial
  Intelligence {(ECAI)}}.

\bibitem[\protect\BCAY{Stefanoni, Motik, Kr{\"{o}}tzsch,\ \BBA\
  Rudolph}{Stefanoni et~al.}{2014}]{stefanoni}
Stefanoni, G., Motik, B., Kr{\"{o}}tzsch, M., \BBA\ Rudolph, S. \BBOP2014\BBCP.
\newblock \BBOQ The complexity of answering conjunctive and navigational
  queries over {OWL} 2 {EL} knowledge bases\BBCQ\
\newblock {\Bem J. Artif. Intell. Res.}, {\Bem 51}, 645--705.

\bibitem[\protect\BCAY{Szpilrajn}{Szpilrajn}{1930}]{Szpilrajn30}
Szpilrajn, E. \BBOP1930\BBCP.
\newblock \BBOQ Sur l'extension de l'ordre partiel\BBCQ\
\newblock {\Bem Fundamenta Mathematicae}, {\Bem 16\/}(1), 386--389.

\bibitem[\protect\BCAY{Szwast\ \BBA\ Tendera}{Szwast\ \BBA\
  Tendera}{2004}]{gftgdecid}
Szwast, W.\BBACOMMA\  \BBA\ Tendera, L. \BBOP2004\BBCP.
\newblock \BBOQ The guarded fragment with transitive guards\BBCQ\
\newblock {\Bem Annals of Pure and Applied Logic}, {\Bem 128\/}(1–3), 227 --
  276.

\bibitem[\protect\BCAY{Thomas}{Thomas}{1997}]{Thomas97}
Thomas, W. \BBOP1997\BBCP.
\newblock \BBOQ {Languages, Automata, and Logic}\BBCQ\
\newblock In Rozenberg, G.\BBACOMMA\  \BBA\ Salomaa, A.\BEDS, {\Bem Handbook of
  Formal Languages}. Springer-Verlag.

\bibitem[\protect\BCAY{Vardi}{Vardi}{1998}]{Vardi98}
Vardi, M.~Y. \BBOP1998\BBCP.
\newblock \BBOQ Reasoning about the past with two-way automata\BBCQ\
\newblock In {\Bem Proc. of the 25th International Colloquium on Automata,
  Languages and Programming {(ICALP)}}.

\end{thebibliography}
\bibliographystyle{theapa}

\end{document}